\numberwithin{equation}{section}
\DeclareMathOperator{\eig}{eig}
\DeclareMathOperator{\Tr}{Tr}
\DeclareMathOperator{\Imaginary}{Im}
\DeclareMathOperator{\Real}{Re}
\DeclareMathOperator{\eff}{eff}
\DeclareMathOperator{\Id}{Id}
\DeclareMathOperator{\Mat}{Mat}
\DeclareMathOperator{\SymMat}{SymMat}
\DeclareMathOperator{\dist}{dist}
\DeclareMathOperator{\sgn}{sgn}
\newtheorem{thm}{Theorem}[section]
\newtheorem{prop}[thm]{Proposition}
\newtheorem{lem}[thm]{Lemma}
\newtheorem{cor}[thm]{Corollary}
\theoremstyle{remark}
\newtheorem{rem}[thm]{Remark}
\theoremstyle{definition}
\newtheorem{definition}[thm]{Definition}
\newtheorem{assumption}[thm]{Assumption}
\title{Asymptotic Scattering Relation for the Toda Lattice}
\author{Amol Aggarwal}
\begin{document}

	\begin{abstract}
		
		In this paper we consider the Toda lattice $(\bm{p}(t); \bm{q}(t))$ at thermal equilibrium, meaning that its variables $(p_i)$ and $(e^{q_i-q_{i+1}})$ are independent Gaussian and Gamma random variables, respectively. We justify the notion from the physics literature that this model can be thought of as a dense collection of ``quasiparticles'' that act as solitons by, (i) precisely defining the locations of these quasiparticles; (ii) showing that local charges and currents for the Toda lattice are well-approximated by simple functions of the quasiparticle data; and (iii) proving an asymptotic scattering relation that governs the dynamics of the quasiparticle locations. Our arguments are based on analyzing properties about eigenvector entries of the Toda lattice's (random) Lax matrix, particularly, their rates of exponential decay and their evolutions under inverse scattering. 
		
	\end{abstract}
	
	\maketitle 
	
	\tableofcontents
	
	\section{Introduction} 
	
	\label{Introduction} 
	
	The \emph{Toda lattice} is a Hamiltonian dynamical system $(\bm{p}(t); \bm{q}(t))$, where $\bm{p}(t) = (p_i(t))$ and $\bm{q}(t) = (q_i(t))$ are indexed by a one-dimensional integer lattice $i \in \mathscr{I}$ that could either be an interval $\mathscr{I} = [N_1, N_2]$, a torus $\mathscr{I} = \mathbb{Z} / N \mathbb{Z}$, or the full line $\mathscr{I} = \mathbb{Z}$. Its Hamiltonian is given by 
	\begin{flalign*}
		\mathfrak{H} (\bm{p}; \bm{q}) = \displaystyle\sum_{i \in \mathscr{I}} \bigg( \displaystyle\frac{p_i^2}{2} + e^{q_i - q_{i+1}} \bigg),
	\end{flalign*}
	
	\noindent so the dynamics $\partial_t q_i = \partial_{p_i} \mathfrak{H}(\bm{p}; \bm{q})$ and $\partial_t p_i = -\partial_{q_i} \mathfrak{H} (\bm{p}; \bm{q})$ are  
	\begin{flalign*}
		\partial_t q_i (t) = p_i (t), \quad \text{and} \quad \partial_t p_i (t) = e^{q_{i-1} (t) - q_i (t)} - e^{q_i (t) - q_{i+1} (t)}.
	\end{flalign*}
	
	\noindent This model may be thought of as a system of particles moving on the real line, with locations $(q_i)$ and momenta $(p_i)$. It was originally introduced by Toda \cite{VCNI} as a Hamiltonian dynamic that admits soliton solutions, which (loosely speaking) are localized, wave-like functions that retain their shape as they propagate in time. Since the works of Flaschka \cite{LEI} and Manakov \cite{CIS} exhibiting its full set of conserved quantities, and that of Moser \cite{FMPL} determining its scattering shift, the Toda lattice has become recognized as an archetypal example of a completely integrable system. 
	
	A basic question about the Toda lattice is to understand its behavior as the domain $\mathscr{I}$ and time $t$ become large. In situations when its initial data either decays or approximates a profile that is smooth almost everywhere, detailed answers have been attained (largely by analyzing associated  Riemann--Hilbert problems) in works of Venakides--Deift--Oba \cite{TS}, Deift--Kamviss--Kriecherbauer--Zhou \cite{TR}, Deift--McLaughlin \cite{CL}, Bloch--Golse--Paul--Uribe \cite{DO}, and Kr\"{u}ger--Teschl \cite{LADI}.
	
	However, much less is known when the initial data is rough and not decaying, or random. These situations were emphasized Zakharov, broadly within the context of classical integrable systems, under the terms ``soliton gases'' \cite{KES} and ``integrable turbluence'' \cite{TIS}. They are especially prominent, since invariant measures for integrable dynamics are of this type. For example, perhaps the most natural invariant measure for the Toda lattice, sometimes called \emph{thermal equilibrium}, is when the $(p_j)$ and $(e^{q_j - q_{j+1}})$ are independent Gaussian and Gamma random variables, respectively.
	
	Over the past decade, an extensive framework has emerged in the physics literature for predicting how an integrable system behaves in these situations. It is based on the notion that the system can be thought of as a dense collection of many objects called ``quasiparticles,'' which behave as (and thus are occasionally informally identified with) solitons. Under this notion, each quasiparticle possesses an spectral parameter $\lambda_j$ and a location $Q_j(t)$, which have two properties that together pinpoint the system's asymptotics. First, interesting quantities describing the integrable system, such as local charges and currents, are approximable by simple  expressions of the quasiparticles (for example, in the Toda lattice, the number of particles in an interval should nearly equal the number of quasiparticles in it). Second, the evolution of these quasiparticles is well-approximated by an explicit equation, which for the Toda lattice reads \cite[Section IV]{GHCS}
	\begin{flalign}
		\label{qktqk0} 
		\begin{aligned} 
			Q_k (t) \approx Q_k (0) + \lambda_k & t -  2 \displaystyle\sum_{j : Q_j(t) < Q_k (t)} \log |\lambda_k - \lambda_j|  + 2 \displaystyle\sum_{j : Q_j(0) < Q_k (0)} \log |\lambda_k - \lambda_j|. 
		\end{aligned} 
	\end{flalign}
	
	\noindent The relation \eqref{qktqk0} can be interpreted as follows. The $k$-th quasiparticle, initially at $Q_k (0)$, moves with velocity $\lambda_k$ until it meets another quasiparticle, say the $j$-th one. At that moment, the $k$-th quasiparticle instantaneously moves forward or backward by $2 \log |\lambda_k - \lambda_j|$, depending on whether it met the $j$-th quasiparticle from the right or left, respectively (this could make it pass another quasiparticle, producing a ``cascade'' of such interactions, but no two quasiparticles can interact with each other more than once in this way). Then, the $k$-th quasiparticle proceeds at velocity $\lambda_k$ until meeting another quasiparticle, when the procedure repeats. The reason for the choice $2 \log |\lambda_k - \lambda_j|$ is that it is the Toda lattice's scattering shift, describing the phase displacement of just two quasiparticles passing through each other.
	
	This is directly analogous to the behavior of solitons in integrable systems. Indeed, in for example the Korteweg--de Vries (KdV) equation, a soliton solution takes the form of a traveling wave. Its amplitude (shape) is conserved and determined by a spectral parameter; its position is the wave's peak, which moves linearly (``freely'') in time. When two distant solitons evolve under the KdV equation, they initially proceed independently, until they get close to each other. They then undergo a fairly intricate interaction where they interfere, during which their precise locations are ``blurred.'' After this, they emerge as independent solitons; their shapes are the same as before the interaction, but their positions are translated (relative to their free evolutions) by an overall scattering shift. In this way, there is a close parallel between the soliton and quasiparticle dynamics (so, the terms ``soliton'' and ``quasiparticle'' are sometimes used interchangeably). As such, if one runs the KdV equation under an initial data constituting a sparse collection of solitons, one may expect the solitons to mainly interact successively in pairs, thus giving rise to the dynamics described below \eqref{qktqk0}. This reasoning led to the prediction in \cite{KES} that a form of \eqref{qktqk0} should hold for the KdV equation, in this sparse setting.
	
	That a relation such as \eqref{qktqk0} should persist if the solitons are dense was first postulated by Bertini--Collura--De Nardis--Fagotti \cite{TEP} and Castro-Alvaredo--Doyon--Yoshimura \cite{EHSOE} for quantum integrable systems. Although it is less intuitive in this dense regime, some of its consequences for the Toda lattice at thermal equilibrium were investigated through simulations by Cao--Bulchadani--Spohn in \cite{ACCC}, and verified to striking numerical accuracy. 
	
	Below, we refer to \eqref{qktqk0} as the \emph{asymptotic scattering relation}. It is also sometimes called the ``collision rate ansatz'' or ``flea-gas algorithm.'' The latter was originally termed in the context of certain quantum integrable systems by Doyon--Yoshimura--Caux \cite{SGGH}, who provided the informal interpretation of \eqref{qktqk0} described above. It was later studied in greater generality by Doyon--H\"{u}bner--Yoshimura \cite{NCIS}, who also highlighted its resemblance to the dynamics of soliton positions for general classical integrable systems, and to those of wave packets \cite{DGHG} for quantum ones. \\
	
	The question of mathematically making sense of, and justifying, the above framework (for at least some Hamiltonian integrable system) has received considerable interest in recent years. To that end, the spectral parameters $\lambda_j$ of the quasiparticles are understood. They are defined to be the conserved quantities for the system, given by the eigenvalues of its \emph{Lax matrix}. For the Toda lattice, this is the tridiagonal, symmetric matrix $\bm{L}(t)$ whose diagonal and off-diagonal entries are the $(p_i)$ and $(e^{(q_i - q_{i+1})/2})$, respectively \cite{LEI,CIS}. When $\mathscr{I}$ is large and $(\bm{p} (t); \bm{q} (t))$ is random, $\bm{L}(t)$ becomes a high-dimensional random matrix. Its eigenvalue density then prescribes the distribution of quasiparticle spectral parameters in the Toda lattice, whose computation was addressed by Spohn \cite{GECC} (after initial work of Opper \cite{ASCEC}) under thermal equilibrium and many other invariant measures. He predicted formulas for its limiting density (and derived expectations for local currents), which were later verified in works of Mazzuca, Guionnet, and Memin \cite{DSME,LDC,EIST}.
	
	The above results do not address the evolution of the quasiparticle locations $(Q_j)$ in time. Even a coherent definition of these locations $(Q_j)$ from a given state $(\bm{p}; \bm{q})$ of the Toda lattice does not seem to exist in the mathematics literature. In fact, we only know of two integrable systems for which such a definition has been given; they are the hard rods model and box-ball system. For the former, the quasiparticle locations are simply the positions of the rods; for the latter, they can be recovered through a combinatorial algorithm of Takahashi--Satsuma \cite{SCA}. In each case, the analog of the approximate evolution \eqref{qktqk0} becomes exact, and it is an eventual consequence of the inverse scattering that linearizes the system. See works by Boldrighini--Dubroshin--Suhov \cite{OHR} for the hard rods model, and by Ferrari--Nguyen--Rolla--Wang \cite{SDBS} and Croydon--Sasada \cite{GHLBS} for the box-ball one.
	
	Most other integrable systems, including essentially all Hamiltonian ones of interest, lack a definition for their quasiparticle locations from a given state. So, researchers have turned to studying specific families of solutions for them, called finite-gap solutions (see the survey of Dubrovin--Matveev--Novikov \cite{NEV}). These are associated with an algebraic curve and allow logical candidates for solition locations to be incorporated as tunable parameters called phases. The physics work of El \cite{TLE} proposed a specific scaling limit, involving making the curve's genus large and the phases random, under which finite-gap solutions to the KdV equation should satisfy the analog of \eqref{qktqk0}. While certain infinite genus finite-gap solutions have been studied by mathematicians since the work of McKean--Trubowitz \cite{OHFT}, the scaling limit from \cite{TLE} does not seem to have received a thorough mathematical treatment. Still, in a different direction, the  papers by Girotti, Grava, Jenkins, McLaughlin, Minakov, and Najnudel analyzed finite-gap solutions of low genus with many solitons (allowing for random spectral parameters) \cite{RASG,LCRS}. By modifiying these solutions to include one large soliton passing through the many small ones, they used Riemann--Hilbert methods to prove \cite{SGAR} that the position of this ``tracer'' soliton, which should be thought of as a single ``large'' quasiparticle, satisfies a version of \eqref{qktqk0}. 
	
	While finite-gap solutions can sometimes be amenable to analysis, they become very complicated when expressed in terms of the original variables on which the integrable system is defined (which, for the Toda lattice, would be the $(\bm{p}; \bm{q})$ ones). Partly for this reason, a precise interpretation and proof of \eqref{qktqk0} for Hamiltonian integrable systems under most natural families of random initial data (particularly product measures, such as thermal equilibria) had until now been unavailable. \\
	
	The task of making sense of, and justifying, the framework behind \eqref{qktqk0} may be viewed as threefold. 
	\begin{enumerate} 
		\item Define the quasiparticle locations $(Q_j (t))$. 
		\item Show that relevant quantities of the integrable system (such as local charges and currents) are approximable by simple functions of the quasiparticle data $(\lambda_j, Q_j(t))$. 
		\item Establish that the approximate asymptotic scattering relation \eqref{qktqk0} holds.
	\end{enumerate}
	
	In this paper we implement these three tasks for the Toda lattice at thermal equilibrium. See \Cref{ucenter} for the first; \Cref{currentestimate} for the second; and \Cref{ztlambda} for the third. 
	
	We should mention that there is also the fourth task of using \eqref{qktqk0} to prove asymptotic results about the Toda lattice. This can be done as well, though we defer its implementation to the sequel paper \cite{P}, as the ideas used there are almost entirely orthogonal from the ones introduced here. 
	
	Before describing the ideas in this paper, we briefly explain on the setup. Recall that the domain $\mathscr{I}$ of the Toda lattice is either an interval, a torus, or the line. As \Cref{ajbjequation2} below, we will show (if the lengths of the interval and torus are sufficiently large) that these three choices lead to approximately equal systems, which differ by an error decaying exponentially in the time parameter $t$. So, we mainly assume below that $\mathscr{I} = [N_1, N_2]$ is an interval of length $N = N_2 - N_1 + 1 \gg t$. Then, the rows and columns of the Lax matrix $\bm{L}(t) = [L_{ij}(t)]$ are indexed by $i, j \in \mathscr{I} = [N_1, N_2]$. 
	
	We consider the Toda lattice at thermal equilibrium, with parameters $\beta, \theta > 0$. This means (see \Cref{mubeta2}) that we sample the diagonal and superdiagonal entries of the tridiagonal, symmetric Lax matrix $\bm{L}(0)$ independently, with probability densities $C_{\beta} e^{-\beta x^2 / 2}$ and $C_{\beta,\theta} x^{2\theta-1} e^{-\beta x^2}$, respectively (where $C_{\beta}, C_{\beta,\theta} > 0$ are normalization constants). The near-invariance of thermal equilibrium implies that this description continues to approximately\footnote{Thermal equilibrium is exactly invariant if $\mathscr{I}$ is the torus or line. While this does not quite hold when $\mathscr{I}$ is an interval, the above-mentioned comparison between these domains implies that it is approximately true in this case as well. To ease this introductory exposition, we ignore this subtlety for now and act as if invariance held exactly.} hold for $\bm{L}(t)$, when $t > 0$.
	
	1. \emph{Quasiparticle locations}: First, we must provide a definition for the location $Q_j (t)$ of the $j$-th quasiparticle at time $t$. Recall that its spectral parameter is an eigenvalue $\lambda_j$ of the Lax matrix $\bm{L}(t)$. So, we examine the corresponding unit eigenvector $\bm{u}_j (t) = (u_j (N_1; t), \ldots , u_j (N_2; t))$. Central to our analysis is the fact that, if $\bm{L}(t)$ is under thermal equilibrium, then $\bm{u}_j (t)$ is \emph{exponentially localized}. This means that it admits some ``center'' $\varphi \in [N_1, N_2]$ such that $|u_j (i; t)| \le C e^{-c|i-\varphi|}$ likely holds for any $i \in [N_1, N_2]$. Such exponential decay of eigenvectors holds generally for random tridiagonal (or bounded band) matrices with independent (or weakly correlated) entries. It was first proven in the context of one-dimensional Anderson localization by Molchanov \cite{SES} and Kunz--Souillard \cite{SSO}, though the precise estimates we will use are due to Schenker \cite{LRBM} and Aizenman--Schenker--Friedrich--Hundertmark \cite{FFCL}.
	
	Now fix a parameter $\zeta > 0$ that is not too small; we will take $\zeta = e^{-C(\log N)^{3/2}}$ (which decays in $N$ superpolynomially, but not stretched exponentially). Define a \emph{$\zeta$-localization center} of $\lambda_j$ with respect to $\bm{L}(t)$ to be any index $\varphi_t (j) \in [N_1, N_2]$ for which $|u_j (\varphi_t (j); t)| \ge \zeta$; see \Cref{ucenter}. We view this localization center $\varphi_t (j)$ as the index of the particle associated with the $j$-th quasiparticle. So, we define the location of this quasiparticle on $\mathbb{R}$ to be this particle's position $Q_j (t) = q_{\varphi_t (j)} (t)$. 
	
	Observe that there may be multiple choices\footnote{This is analogous to the fact that, in integrable systems (such as the KdV equation), the locations of solitons become ``blurred'' when they interact (and, if the collection of solitons is dense, then they are always interacting with each other).} for this index $\varphi_t (j)$, and thus for the quasiparticle location $Q_j (t)$. However, exponential localization of $\bm{u}_j (t)$ quickly implies that all such choices will be very close to (within $\mathcal{O}(|\log \zeta|) = N^{o(1)}$ of) each other, and hence asymptotically equivalent. 
	
	Before proceeding, let us mention that, guided by quantum mechanics, one might be inclined to instead define the index $\varphi_t (j)$ to be random variable, whose probability of equalling any $k \in [N_1, N_2]$ is $u_j (k; t)^2$. Indeed, this notion was hypothesized in the earlier physics work of Bulchandini--Cao--Moore \cite[Section 3.2]{KTQL}, who observed this probability distribution concentrates on few indices (also attributing it to Anderson localization). If we had adopted this as the definition of $\varphi_t (j)$ in our context, we would have obtained essentially the same results as below, with similar proofs.
	
	2. \emph{Approximate locality}: Next, we must verify that interesting quantities of the Toda lattice (local charges and currents) are simply expressible through the quasiparticle locations $Q_j (t)$ defined above. This will follow from a more fundamental property, which is the ``approximate locality'' of eigenvalues. To explain this, a \emph{local} quantity for the Toda lattice is one that is expressible as a function of the Lax matrix entries $L_{ik} (t)$, for $i$ and $k$ in a uniformly bounded subinterval of $[N_1, N_2]$. The momentum $p_i(t)$ of the $i$-th particle is an example of one, as it is the diagonal $(i,i)$-entry $L_{ii}(t)$ of $\bm{L}(t)$. However, the eigenvalues $\lambda_j$ of $\bm{L}(t)$ are not local, as they depend on all entries of $\bm{L}(t)$. 
	
	Still, we will see under thermal equilibrium that $\lambda_j$ is ``approximately local'' around its localization center $\varphi_t (j)$; informally, this means that the ``dependence'' of $\lambda_j$ on the $i$-th row and column of $\bm{L}(t)$ decays exponentially in $|i-\varphi_t(j)|$. A more specific (but slightly simplified) statement is, if $\ell \gg 1$ is a parameter and $\tilde{\bm{L}}(t)$ is a tridiagonal matrix whose $(i,k)$-entry coincides with that of $\bm{L}(t)$ whenever $i, k \in [\varphi_t (j) - \ell, \varphi_t (j) + \ell]$, then $\tilde{\bm{L}}(t)$ likely has an eigenvalue $\tilde{\lambda}_j$ satisfying $|\lambda_j - \tilde{\lambda}_j| \le C e^{-c\ell}$; see \Cref{lleigenvalues2} and \Cref{lleigenvalues} for precise formulations. This approximate locality can be deduced from the exponential eigenvector localization of $\bm{L}(t)$. Similar deductions were used by Molchanov \cite{LSSO} and Minami \cite{LFSM} in their studies of spectral statistics for random Schr\"{o}dinger operators. 
	
	Given this approximate locality, let us outline how local charges and currents can be recovered from quasiparticle data. Instead of explaining this in complete generality (which would require us to recall the full family of Toda charges and currents; see \Cref{current} for that), to simplify the exposition, we consider a specific example, which computes the total momentum of Toda particles in a large interval $\mathcal{J} \subseteq \mathbb{R}$. In this case, the predicted relation reads \cite[Section III.B]{GHCS} 
	\begin{flalign*}
		\displaystyle\sum_{i: q_i (t) \in \mathcal{J}} p_i (t) \approx \displaystyle\sum_{j: Q_j (t) \in \mathcal{J}} \lambda_j.
	\end{flalign*}
	
	\noindent Its left side is the total momentum (which is the first local charge) in $\mathcal{J}$, and its right side is the sum of the spectral parameters of all quasiparticles in $\mathcal{J}$; see \Cref{currentestimate} for the general statement. It can be shown that the Toda particles $(q_i(t))$ are nearly ordered, so since $Q_j (t) = q_{\varphi_t (j)} (t)$, the above reduces to 
	\begin{flalign}
		\label{plambda} 
		\displaystyle\sum_{i \in \mathcal{I}} p_i (t) \approx \displaystyle\sum_{j : \varphi_t (j) \in \mathcal{I}} \lambda_j,
	\end{flalign} 
	
	\noindent for any large interval of indices $\mathcal{I} \subseteq [N_1, N_2]$ (whose endpoints are the smallest and largest $i$ for which $q_i (t) \in \mathcal{J}$). The left side of \eqref{plambda} is the trace of the matrix $\tilde{\bm{L}}(t)$ obtained from $\bm{L}(t)$ by setting any entry with indices not in $\mathcal{I}$ to $0$. Now, by approximate locality, most eigenvalues $\lambda_j$ on the right side of \eqref{plambda} (namely, those whose localization centers $\varphi_t (j)$ are not too close to an endpoint of $\mathcal{I}$) are approximately equal to a corresponding eigenvalue $\tilde{\lambda}_j$ of $\tilde{\bm{L}}(t)$. This confirms \eqref{plambda}, as it implies that
	\begin{flalign*}
		\displaystyle\sum_{i \in \mathcal{I}} p_i (t) = \Tr \tilde{\bm{L}}(t) = \displaystyle\sum_j \tilde{\lambda}_j \approx \displaystyle\sum_{j: \varphi_t (j) \in \mathcal{I}} \lambda_j.
	\end{flalign*}

	3. \emph{Asymptotic scattering relation}: It remains to access the quasiparticle dynamics, by proving \eqref{qktqk0}. We first use the inverse scattering relation for the Toda lattice, shown in \cite{FMPL}, providing the explicit (linear) evolution for the first entry $u_k (N_1; t)$ of the eigenvector $\bm{u}_k (t)$ of $\bm{L}(t)$. It states that
	\begin{flalign}
		\label{uklambdak}  
		2 \log |u_k (N_1; t)| = 2 \log |u_k (N_1; 0)| - \lambda_k t + \mathfrak{C}(t),
	\end{flalign}

	\noindent for some constant $\mathfrak{C}(t)$ independent of $k$. Now, recall since $\bm{u}_k (s)$ is exponentially localized for any $s \in [0, t]$ that $\log |u_k (N_1; s)|$ decays linearly in $\varphi_s (k) - N_1$. In fact, there is a \emph{Lyapunov exponent} $\gamma_k \le 0$ satisfying $\log |u_k (N_1; s)| \approx \gamma_k \cdot (\varphi_s (k) - N_1)$, which admits an exact formula called the \emph{Thouless relation}. Predicted by Thouless \cite{DSRL} and established by Avron--Simon \cite{APO}, it states 
	\begin{flalign}
		\label{gammakintegral} 
		2 \gamma_k = 2 \cdot \mathbb{E} [\log L_{i,i+1}(s)] - 2 \displaystyle\int_{-\infty}^{\infty} \log |\lambda_k - \lambda| \varrho (\lambda) d \lambda,
	\end{flalign}
	
	\noindent where $\varrho$ is the limiting eigenvalue density for $\bm{L}(s)$ (as its dimension becomes large), and $i \in [N_1, N_2-1]$ is any index (since the $L_{i,i+1}(s)$ are identically distributed under thermal equilibrium, $\mathbb{E}[L_{i,i+1}(s)]$ does not depend $i$). While the relation \eqref{gammakintegral} will not be of direct use to us in proving \eqref{qktqk0}, a discrete variant of it will be. More precisely, by combining a linear algebraic identity in \cite{DSRL}, the transfer matrix arguments of \cite{APO}, and the approximate locality of eigenvalues, we will show (see \Cref{uk12}) that
	\begin{flalign}
		\label{uksn1} 
		2 \log |u_k (N_1; s)| \approx q_{N_1} (s) - Q_k (s) - 2 \displaystyle\sum_{j: \varphi_s (j) < \varphi_s (k)} \log |\lambda_k - \lambda_i|.
	\end{flalign}
	
	Let us briefly indicate the relation between \eqref{uksn1} and \eqref{gammakintegral}. Since (by definition) $2 \log L_{i,i+1} (s) = q_i (s) - q_{i+1} (s)$ and $Q_k = q_{\varphi_s (k)} (s)$, one may expect the first two terms on the right side of \eqref{uksn1} to approximate the first term on the right side of \eqref{gammakintegral}, multiplied by $\varphi_s(k)-N_1$. Also, since the sum on the right side of \eqref{uksn1} constitutes $\varphi_s (k) - N_1$ terms (and since $\varrho$ is the limiting density of the $(\lambda_i)$), one may expect it to approximate the integral on the right side of \eqref{gammakintegral}, multiplied by $\varphi_s (k) - N_1$. In this way, \eqref{uksn1} may be thought of as a discrete form of \eqref{gammakintegral}.
	
	Inserting \eqref{uksn1} at $s \in \{ 0, t \}$ into \eqref{uklambdak} yields 
	\begin{flalign}
		\label{qktsum} 
		\begin{aligned} 
			Q_k (t) & \approx Q_k (0) + \lambda_k t - 2 \displaystyle\sum_{j: \varphi_t (j) < \varphi_t (k)} \log |\lambda_k - \lambda_i| \\ 
			& \qquad + 2 \displaystyle\sum_{j: \varphi_0 (j) < \varphi_0 (k)} \log |\lambda_k - \lambda_i|  + q_{N_1} (t) - q_{N_1} (0) - \mathfrak{C}(t).
		\end{aligned} 
	\end{flalign} 
	
	\noindent It can be shown that $\mathfrak{C}(t) \approx q_{N_1} (t) - q_{N_1} (0)$. Using this in \eqref{qktsum}, with the fact that the $(q_j)$ are ordered (to transform the restrictions on $\varphi_s (j)$ in \eqref{qktsum} to ones on $q_{\varphi_s (j)} (s) = Q_j (s)$), gives \eqref{qktqk0}.
	
	In fact, upon implementing this outline, one finds (see \Cref{ztlambda}) that the error in \eqref{qktqk0} is quite small, of order $(\log N)^C$. This indicates that the asymptotic scattering relation is close to exact, which might explain why the simulations in \cite{ACCC} had such a high degree of numerical accuracy. \\
	
	Before proceeding, we briefly comment on the potential applicability of the above framework to other initial data or different integrable systems. First, there are many (an infinite-dimensional family of) invariant measures for the Toda lattice other than thermal equilibrium; they are the \emph{generalized Gibbs ensembles} with polynomial potential (see \cite[Section 2]{GECC}). Under these measures, the entries $(p_i, e^{(q_i - q_{i+1})/2})$ of the Lax matrix $\bm{L}(t)$ are not independent, but their coupling is of finite range. It was indicated in \cite[Section 2]{LRBM} that the arguments of \cite{LRBM} should apply to show that such tridiagonal matrices also have exponentially localized eigenvectors. Thus, it could be possible to establish above results under generalized Gibbs ensembles, as well. 
	
	Next, one might inquire about initial data that is not invariant.\footnote{In this case, the belief is that the Toda lattice in the large time limit should be asymptotically governed by a certain \emph{generalized hydrodynamic equation} \cite{GHCS,HSIMS}. Properties of these limiting equations have been analyzed for other integrable systems, such as the Lieb--Liniger model by Doyon--H\"{u}bner--Yoshimura \cite{GTDC}, an KdV and nonlinear Schr\"{o}dinger equations by Kuijlaars--Tovbis \cite{MESC}.} If this initial data is sampled under a product measure (or one with finite-range couplings), then again the framework of \cite{LRBM} should yield exponential eigenvector localization of the initial Lax matrix $\bm{L}(0)$. However, since the initial data is not invariant, this does not immediately extend to later times $t \gg 1$. If one could show that the eigenvectors of $\bm{L}(t)$ were still localized for $t \gg 1$, then it may again be possible to show the asymptotic scattering relation \eqref{qktqk0} remains valid away from invariant initial data. 
	
	There are numerous other integrable systems admitting Lax matrices that are (skew-)symmetric tridiagonal or of Cantero--Moral--Vel\'{a}zquez (CMV) \cite{FMZP} type. These include the Volterra lattice; generalized Toda lattices studied by Deift--Nanda--Tomei \cite{DSE}; and Ablowitz--Ladik (and Schur flow) hierarchy considered by Killip--Nenciu \cite{UAM}. As explained by Grava--Gisonni--Gubbiotti--Mazzuca \cite{DSRM} and Spohn \cite{HSIMS}, invariant measures analogous to thermal equilibrium exist for these systems. Moreover, relevant eigenvector properties of the associated random matrices are known in many of these cases (though sometimes in a weaker form than what we use here). For example, exponential eigenvector localization and an analog of the Thouless relation for random CMV matrices were established in \cite{OPUC} by Simon. These properties also hold in certain settings for various continuum integrable equations, such as the KdV equation\footnote{As observed by Saitoh \cite{TCLE} (and by Matetski--Quastel--Remenik \cite{PGL}), the KdV equation can also be formally recovered from a limit degeneration of the Toda lattice. It remains to be seen whether this can be justified to allow asymptotic frameworks about the latter to carry over to the former.} \cite{SES,APO} (for which the well-posedness of relevant invariant measures was established by Forlano, Kilip, Murphy, and Visan \cite{INL,IMIV}). It would be of interest to investigate if the framework developed here can be applied to these contexts, as well.

	\subsection*{Acknowledgements} 
	
	The author heartily thanks Alexei Borodin, Patrick Lopatto, Jeremy Quastel, and Herbert Spohn for valuable conversations, as well as Benjamin Doyon, Matthew Nicoletti, and Philippe Sosoe for helpful comments on this manuscript. This work was partially supported by a Clay Research Fellowship and a Packard Fellowship for Science and Engineering.

	\section{Results} 
	
	\label{Chain} 
	
	We introduce the Toda lattice and its Lax matrix in \Cref{Lattice}, and we define its thermal equilibrium initial data in \Cref{0Stationary}. In \Cref{Center0}, we introduce localization centers, and we explain how they can be used to access locally conserved quantities (also called local charges) and currents. We then state the asymptotic scattering relation for the Toda lattice under thermal equilibrium (\Cref{ztlambda}), in \Cref{Asymptotic}.
	
	Throughout, for any $a, b \in \mathbb{R}$, set $\llbracket a, b \rrbracket = [a, b] \cap \mathbb{Z}$. A vector $\bm{v} = (v_1, v_2, \ldots , v_n) \in \mathbb{C}^n$ is a unit vector if $\sum_{i=1}^n v_i^2 = 1$; we call it nonnegatively normalized if $v_j > 0$, where $j \in \llbracket 1, n \rrbracket$ is the minimal index such that $v_j \ne 0$. For any real symmetric $n \times n$ matrix $\bm{M}$, let $\eig \bm{M} = (\lambda_1, \lambda_2, \ldots , \lambda_n)$ denote the eigenvalues of $\bm{M}$, counted with multiplicity and ordered so that $\lambda_1 \ge \lambda_2 \ge \cdots \ge \lambda_n$.

	\subsection{Toda Lattice} 
	
	\label{Lattice}

	\subsubsection{Open Toda Lattice} 
	
	\label{Open} 
	
	In this section we recall the Toda lattice on an interval. Throughout, we fix integers $N_1 \le N_2$ and set $N = N_2 - N_1 + 1$ (which will prescribe the interval's endpoints and length, respectively). 
	
	The state space of the Toda lattice on the interval $\llbracket N_1, N_2 \rrbracket$, also called the \emph{open Toda lattice}, is given by a pair of $N$-tuples $( \bm{p} (t); \bm{q}(t) ) \in \mathbb{R}^N \times \mathbb{R}^N$, where $\bm{p}(t) = ( p_{N_1} (t), \ldots , p_{N_2} (t) )$ and $\bm{q}(t) = ( q_{N_1} (t), \ldots , q_{N_2} (t) )$; both are indexed by a real number $t \ge 0$ called the time. Given any \emph{initial data} $( \bm{p}(0); \bm{q}(0) ) \in \mathbb{R}^N \times \mathbb{R}^N$, the joint evolution of $( \bm{p}(t); \bm{q}(t) )$ for $t \ge 0$ is prescribed by the system of ordinary differential equations 
	\begin{flalign}
		\label{qtpt} 
		\partial_t q_j (t) = p_j (t), \quad \text{and} \quad \partial_t p_j (t) = e^{q_{j-1} (t) - q_j (t)} - e^{q_j (t) - q_{j+1} (t)},
	\end{flalign}
	
	\noindent for all $(j, t) \in \llbracket N_1, N_2 \rrbracket \times \mathbb{R}_{\ge 0}$; here, we set $q_{N_1-1}(t) = -\infty$ and $q_{N_2+1}(t) = \infty$ for all $t \ge 0$. One might interpret this as the dynamics for $N$ points (indexed by $\llbracket N_1, N_2 \rrbracket$) moving on the real line, whose locations and momenta at time $t \ge 0$ are given by the $(q_i(t))$ and $(p_i(t))$, respectively. 	
	
	The system of differential equations \eqref{qtpt} is equivalent to the Hamiltonian dynamics generated by the Hamiltonian $\mathfrak{H} : \mathbb{R}^N \times \mathbb{R}^N \rightarrow \mathbb{R}$ that is defined, for any $\bm{p} = (p_0, p_1, \ldots , p_{N-1}) \in \mathbb{R}^N$ and $\bm{q} = (q_0, q_1, \ldots , q_{N-1}) \in \mathbb{R}^N$, by setting
	\begin{flalign}
		\label{hpq}
		\mathfrak{H} (\bm{p}; \bm{q}) = \displaystyle\sum_{j=0}^{N-1} \bigg( \displaystyle\frac{p_j^2}{2} + e^{q_j - q_{j+1}} \bigg),
	\end{flalign} 
	
	\noindent where $q_N = \infty$. The existence and uniqueness of solutions to \eqref{qtpt} for all time $t \ge 0$, under arbitrary initial data $(\bm{p}; \bm{q}) \in \mathbb{R}^N \times \mathbb{R}^N$, is thus a consequence of the Picard--Lindel\"{o}f theorem (see, for example, the proof of \cite[Theorem 12.6]{OINL}). 
	
	It will often be useful to reparameterize the variables of the Toda lattice, following \cite{LEI}. To that end, for any $(i, t) \in \llbracket N_1, N_2 \rrbracket \times \mathbb{R}_{\ge 0}$, define 
	\begin{flalign}
		\label{abr} 
		r_i (t) = q_{i+1} (t) - q_i (t); \qquad a_i (t) = e^{-r_i(t)/2}; \qquad b_i (t) = p_i (t).
	\end{flalign}
	
	\noindent Denoting $\bm{a}(t) = ( a_{N_1} (t), \ldots , a_{N_2} (t) ) \in \mathbb{R}_{\ge 0}^N$ and $\bm{b}(t) = ( b_{N_1} (t), \ldots , b_{N_2} (t) ) \in \mathbb{R}^N$, the $( \bm{a} (t); \bm{b} (t) )$ are sometimes called \emph{Flaschka variables}; they satisfy $r_{N_2}(t) = q_{N_2+1}(t) - q_{N_2} (t) = \infty$ and $a_{N_2} (t) = e^{-r_{N_2}(t) / 2} = 0$. Then, \eqref{qtpt} is equivalent to the system  
	\begin{flalign}
		\label{derivativepa} 
		\partial_t a_j (t) = \displaystyle\frac{a_j (t)}{2} \cdot \big(b_j (t) - b_{j+1} (t) \big), \qquad \text{and} \qquad \partial_t b_j (t) = a_{j-1} (t)^2 - a_j (t)^2,
	\end{flalign} 
	
	\noindent for each $(j, t) \in \llbracket N_1, N_2 \rrbracket \times \mathbb{R}_{\ge 0}$. 
	
	It will at times be necessary to define the original Toda state space variables $( \bm{p}(t); \bm{q}(t) )$ from the Flaschka variables $( \bm{a}(t); \bm{b}(t) )$; it suffices to do this at $t = 0$, as $( \bm{p} (t); \bm{q}(t) )$ is determined from $( \bm{p}(0); \bm{q}(0) )$, by \eqref{qtpt}. We explain how to do this when $0 \in \llbracket N_1, N_2 \rrbracket$ (as otherwise we may translate $(N_1, N_2)$ to guarantee that this inclusion holds).\footnote{In this work, we will usually have $N_1$ and $N_2$ be large negative and large positive integers, respectively, and we will be interested in the $(p_i(t); q_i (t))$ for $i$ in some interval containing $0$.} By \eqref{abr}, the Flaschka variables $\bm{a}(0)$ only specify the differences between consecutive entries in $\bm{q}(0)$, so the former only determines the latter up to an overall shift. We will fix this shift by setting $q_0 (0) = 0$, that is, we define $(\bm{p}(0); \bm{q}(0) )$ by imposing  
	\begin{flalign}
		\label{q00} 
		q_0 (0) = 0; \qquad q_{i+1} (0) - q_i (0) = - 2 \log a_i (0); \qquad p_i (0) = b_i (0),
	\end{flalign} 
	
	\noindent  for each $i \in \llbracket N_1, N_2 \rrbracket$. Then, $( \bm{p} (0); \bm{q} (0) )$ is called the Toda state space intial data associated with $(\bm{a} (0); \bm{b}(0) )$. The evolution $( \bm{p}(t), \bm{q}(t) )$ of this initial data under \eqref{qtpt} is called the Toda state space dynamics associated with $( \bm{a}(t), \bm{b}(t) )$; observe that we may have $q_0 (t) \ne 0$ if $t \ne 0$.

	\subsubsection{Lax Matrices and Locally Conserved Quantities}
	
	\label{MatrixL} 
	
	In this section we recall the Lax matrix, and local charges, associated with the Toda lattice. Throughout, we fix integers $N_1 \le N_2$ and set $N = N_2 - N_1 + 1$. Let $( \bm{a}(t); \bm{b}(t) ) \in \mathbb{R}_{\ge 0}^N \times \mathbb{R}^N$ be a pair of $N$-tuples indexed by a real number $t \in \mathbb{R}_{\ge 0}$, where $\bm{a}(t) = ( a_j (t) )$ and $\bm{b}(t) = ( b_j (t) )$ satisfies the system \eqref{derivativepa} for each $(j, t) \in \llbracket N_1, N_2 \rrbracket \times \mathbb{R}$; we assume (as explained above \eqref{derivativepa}) that $a_{N_2}(t) = 0$ for each $t \in \mathbb{R}_{\ge 0}$. Under this notation, the Lax matrix (introduced in \cite{LEI,CIS}) is defined as follows.
	
	\begin{definition}[Lax matrix]
		
		\label{matrixl} 
		
		For any real number $t \ge 0$, the \emph{Lax matrix} $\bm{L}(t) = [ L_{ij}] = [L_{ij}(t)]$ is an $N \times N$ real symmetric matrix, with entries indexed by $i, j \in \llbracket N_1, N_2 \rrbracket$, defined as follows. Set
		\begin{flalign*} 
			& L_{ii} = b_i (t), \qquad \qquad \qquad \quad \text{for each $i \in \llbracket N_1, N_2 \rrbracket$}; \\
			& L_{i,i+1} = L_{i+1,i} = a_i (t), \qquad \text{for each $i \in \llbracket N_1, N_2-1 \rrbracket$}.
		\end{flalign*} 
		
		\noindent Also set $L_{ij} = 0$ for any $i,j \in \llbracket N_1, N_2 \rrbracket$ with $|i-j| \ge 2$. 
		
	\end{definition} 
	
	A fundamental feature of the Lax matrix is that its eigenvalues are preserved under the Toda dynamics \eqref{derivativepa}. This was originally due to \cite{LEI}; see also \cite[Section 2]{FMPL}.
	
	\begin{lem}[\cite{LEI,FMPL}]
		
		\label{ltt}
		
		For any real numbers $t, t' \in \mathbb{R}_{\ge 0}$, we have $\eig \bm{L} (t) = \eig \bm{L}(t')$. 
		
	\end{lem}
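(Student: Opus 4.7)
The plan is to use Flaschka's Lax pair representation to exhibit an orthogonal conjugation between $\bm{L}(t)$ and $\bm{L}(0)$. First, I would introduce a skew-symmetric matrix $\bm{B}(t) = [B_{ij}(t)]$ of the same dimension as $\bm{L}(t)$, namely the tridiagonal matrix with $B_{i+1,i}(t) = -B_{i,i+1}(t) = a_i(t)/2$ for $i \in \llbracket N_1, N_2-1 \rrbracket$ and all other entries zero. I would verify the Lax equation $\partial_t \bm{L}(t) = [\bm{B}(t), \bm{L}(t)]$ by a direct entrywise computation: since both matrices are tridiagonal, only entries on the diagonal, superdiagonal, and subdiagonal of the commutator can be nonzero. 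One checks $[\bm{B},\bm{L}]_{ii} = a_{i-1}(t)^2 - a_i(t)^2$ and $[\bm{B},\bm{L}]_{i,i+1} = (a_i(t)/2)(b_i(t) - b_{i+1}(t))$, both of which match the right-hand sides of \eqref{derivativepa}; the subdiagonal entries then match by symmetry, since the commutator of a skew-symmetric with a symmetric matrix is symmetric.

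Next, I would define $\bm{U}(t)$ as the unique $N \times N$ matrix solution to the linear ordinary differential equation $\partial_t \bm{U}(t) = \bm{B}(t) \bm{U}(t)$ with $\bm{U}(0) = \Id$, whose existence and uniqueness for all $t \in \mathbb{R}_{\ge 0}$ follow from Picard--Lindel\"{o}f (using that $\bm{B}(t)$ depends continuously on $t$). Since $\bm{B}(t)^\top = -\bm{B}(t)$, differentiating $\bm{U}(t)^\top \bm{U}(t)$ gives $\bm{U}(t)^\top(\bm{B}(t)^\top + \bm{B}(t))\bm{U}(t) = 0$, so $\bm{U}(t)$ remains orthogonal for every $t \ge 0$.

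Finally, I would set $\bm{M}(t) = \bm{U}(t)^\top \bm{L}(t) \bm{U}(t)$ and compute, using the product rule together with the Lax equation and $\partial_t \bm{U}(t) = \bm{B}(t) \bm{U}(t)$,
\[
\partial_t \bm{M}(t) = \bm{U}(t)^\top \bigl( \bm{B}(t)^\top \bm{L}(t) + \bm{B}(t)\bm{L}(t) - \bm{L}(t)\bm{B}(t) + \bm{L}(t) \bm{B}(t) \bigr) \bm{U}(t) = 0,
\]
where the first and fourth terms cancel by skew-symmetry of $\bm{B}(t)$. Hence $\bm{M}(t) = \bm{M}(0) = \bm{L}(0)$ for every $t \ge 0$, so $\bm{L}(t)$ is orthogonally conjugate to $\bm{L}(0)$, yielding $\eig \bm{L}(t) = \eig \bm{L}(0)$. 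Applying this to both $t$ and $t'$ gives the lemma. No step here presents a substantive obstacle; the only real care is in fixing the signs in $\bm{B}(t)$ so that the Lax equation precisely reproduces \eqref{derivativepa}.
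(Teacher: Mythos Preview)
Your proposal is correct and is precisely the standard Lax pair argument that the paper cites from \cite{LEI,FMPL} rather than reproving; the paper later uses exactly this construction (the skew-symmetric $\bm{P}(s)$ and the evolution $\partial_s \bm{V}(s)=\bm{P}(s)\bm{V}(s)$ with $\bm{V}(0)=\Id$) in the proof of \Cref{centert}, so your approach matches the paper's implicit one up to a sign convention in the off-diagonal of the skew-symmetric matrix.
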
 
	
	Lemma \ref{ltt} provides a large family of conserved quantities for the Toda lattice, given by the eigenvalues of the Lax matrix. However, these are ``non-local,'' in the sense that they depend on all of the Flaschka variables $(\bm{a}(t);\bm{b}(t))$, as opposed to only the $(a_i(t),b_i(t))$ for $i$ in some (uniformly) bounded interval. To remedy this, observe that \Cref{ltt} implies that $\Tr \bm{L}(t)^m$ is conserved for any integer $m \ge 0$. Since $\bm{L}(t)$ is tridiagonal, the diagonal entries of $\bm{L}(t)^m$ are local quantities (see \cite[Section 2.1]{HSIMS} for further information), whose total is preserved; they are called local charges. In what follows, for any matrix $\bm{M}$, we let $[M]_{ij}$ denote the $(i,j)$ entry of $\bm{M}$. 
	
	\begin{definition}[Local charges]
		
		\label{current} 
		
		Fix an integer $m \ge 0$, an index $i \in \llbracket N_1, N_2 \rrbracket$, and a real number $t \ge 0$. Define the \emph{$m$-th local charge} $\mathfrak{k}_i^{[m]} (t)$ of $\bm{L}(t)$ at $i$ by setting $\mathfrak{k}_i^{[m]} (t) = [\bm{L}(t)^m]_{ii}$. 
		
	\end{definition} 
	
	For example, the first local charge $\mathfrak{k}_i^{[1]} (t) = b_i (t) = p_i (t)$ denotes momentum.

	\subsection{Thermal Equilibrium} 
	
	\label{0Stationary} 
	
	In this section we describe a class of random initial data that we will study for the Toda lattice; it is sometimes referred to as thermal equilibrium, and is given by independent Gamma random variables for the $\bm{a}$ Flaschka variables, and independent Gaussian random variables for the $\bm{b}$ ones. In what follows, we fix an integer $N \ge 1$ and real numbers $\beta, \theta > 0$.
	
	\begin{definition}[Open thermal equilibrium]
		
		\label{mubeta2} 
		
		The \emph{(open) thermal equilibrium} with parameters $(\beta, \theta; N)$ is the product measure $\mu = \mu_{\beta, \theta} = \mu_{\beta, \theta; N-1,N}$ on $\mathbb{R}^{N-1} \times \mathbb{R}^N$ defined by 
		\begin{flalign*}
			\mu (d \bm{a}; d \bm{b}) = \bigg( \displaystyle\frac{2 \beta^{\theta}}{\Gamma(\theta)} \bigg)^{N-1}  (2 \pi \beta^{-1})^{-N/2} \cdot \displaystyle\prod_{j=0}^{N-2} a_j^{2\theta-1} e^{-\beta a_j^2} da_j  \displaystyle\prod_{j=0}^{N-1} e^{-\beta b_j^2/2} db_j,
		\end{flalign*}
		
		\noindent where $\bm{a} = (a_0, \ldots , a_{N-2}) \in \mathbb{R}_{\ge 0}^{N-1}$ and $\bm{b} = (b_0, b_1, \ldots , b_{N-1}) \in \mathbb{R}^N$. It will be convenient to view $\mu_{\beta,\theta;N-1,N}$ as a measure on $\mathbb{R}^N \times \mathbb{R}^N$ by, if we denote $\hat{\bm{a}}(0) = (a_0, a_1, \ldots , a_{N-2}, 0) \in \mathbb{R}_{\ge 0}^N$, then also saying $(\hat{\bm{a}}; \bm{b}) \in \mathbb{R}^N \times \mathbb{R}^N$ is sampled under $\mu_{\beta,\theta;N-1,N}$. 
		
	\end{definition} 
	
	Thermal equilibrium, as described above, is related to invariant measures for the Toda lattice; the latter are measures on the Flaschka variable initial data $(\bm{a}(0);\bm{b}(0))$ such that, for any $t \ge 0$, the law of $(\bm{a}(t); \bm{b}(t))$ under the Toda lattice is the same as that of $(\bm{a}(0); \bm{b}(0))$. The open Toda lattice on a finite interval $\llbracket N_1, N_2 \rrbracket$ admits no nontrivial invariant measures. However, the Toda lattice on the full line $\mathbb{Z}$ does, though this must be made sense of, as it involves infinitely many variables. In this context, the thermal equilibrium product measure of \Cref{mubeta2} (extrapolated to when $N=\infty$) is perhaps the most natural invariant measure. 
	
	The following proposition, to be shown in \Cref{RLimit} below (as a consequence of \Cref{infiniteab}, which addresses more general initial data), states this to be the case. Specifically, it shows that at thermal equilibrium the Toda lattice on $\mathbb{Z}$ can be defined by taking a limit of open Toda lattices on growing intervals; confirms that thermal equilibrium is invariant for this infinite Toda lattice; and provides a quantitative approximation for it by open Toda lattices at thermal equilibrium on finite intervals. 
	
	\begin{prop} 
		
		\label{ajbjequation2} 
		
		Let $(a_j)_{j \in \mathbb{Z}}$ and $(b_j)_{j \in \mathbb{Z}}$ be mutually independent random variables, whose laws for each $j \in \mathbb{Z}$ are given by
		\begin{flalign*}
			& \mu(da_j) = 2\beta^{\theta} \Gamma(\theta)^{-1} \cdot a_j^{2\theta-1} e^{-\beta a_j^2} da_j; \\
			& \mu(db_j) = (2 \pi \beta^{-1})^{-1/2} \cdot e^{-\beta b_j^2/2} db_j.
		\end{flalign*}
		
		\noindent For any integers $N_1 \le N_2$, let $( \bm{a}^{[N_1,N_2]} (t), \bm{b}^{[N_1,N_2]} (t) )$ denote the open Toda lattice \eqref{derivativepa} on $\llbracket N_1, N_2 \rrbracket$, with initial data $( \bm{a}^{[N_1,N_2]} (0); \bm{b}^{[N_1,N_2]} (0) )$ given by $\bm{a}^{[N_1,N_2]} (0) = (a_{N_1}, a_{N_1+1}, \ldots , a_{N_2-1})$ and $\bm{b}^{[N_1,N_2]} (0) = (b_{N_1}, b_{N_1+1}, \ldots , b_{N_2})$. 
		
		\begin{enumerate} 
			\item For each $(j, t) \in \mathbb{Z} \times \mathbb{R}_{\ge 0}$, the limits $\lim_{N \rightarrow \infty} a_j^{[-N,N]} (t) = a_j (t)$ and $\lim_{N \rightarrow \infty} b_j^{(N)} (t) = b_j (t)$ exist almost surely; are finite; and solve \eqref{derivativepa}. 	
			\item For each $t \ge 0$, denoting $\bm{a}(t) = (a_j(t))_{j \in \mathbb{Z}}$ and $\bm{b}(t) = (b_j(t))_{j \in \mathbb{Z}}$, the laws of $(\bm{a}(0);\bm{b}(0))$ and $(\bm{a}(t);\bm{b}(t))$ coincide.
			\item For any $\mathfrak{p} \in (0, 1)$, there exist constants $c = c(\beta,\theta,\mathfrak{p})>0$ and $C = C (\beta, \theta, \mathfrak{p}) >1$ such that the following holds. Let $R \ge 1$ and $T \ge 0$ be real numbers, and $K \ge 1$ and $N_1 \le - K \le K \le N_2$ be integers, such that 
			\begin{flalign*} 
				R \ge \log N, \quad \text{and} \quad RT \le K \le N^{\mathfrak{p}}, \qquad \text{where} \quad N = N_2-N_1+1.
			\end{flalign*} 
			
			\noindent Then, with probability at least $1 - C e^{-cR^2}$, we have
			\begin{flalign*}
				\displaystyle\sup_{t \in [0, T]} \displaystyle\max_{j \in \llbracket N_1+K, N_2 - K \rrbracket} \big( \big| a_j(t) - a_j^{[N_1,N_2]} (t) \big| + \big| b_j (t) - b_j^{[N_1,N_2]} (t) \big| \big) \le e^{-K/5}. 
			\end{flalign*}
		\end{enumerate} 
		
	\end{prop}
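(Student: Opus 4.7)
My plan is to derive \Cref{ajbjequation2} from a deterministic finite-speed-of-propagation estimate for the open Toda lattice, combined with a concentration bound for the Lax matrix operator norm under thermal equilibrium. The key deterministic claim is: if $(\bm{a}(t); \bm{b}(t))$ and $(\tilde{\bm{a}}(t); \tilde{\bm{b}}(t))$ satisfy \eqref{derivativepa} on two (possibly different) intervals that both contain $\llbracket N_1, N_2 \rrbracket$, with identical initial data on this common subinterval and with $\|\bm{L}(0)\|, \|\tilde{\bm{L}}(0)\| \le R$, then for $K \ge C_0 RT$ and $j \in \llbracket N_1+K, N_2-K \rrbracket$,
\[
\sup_{t \in [0,T]} \bigl(|a_j(t) - \tilde{a}_j(t)| + |b_j(t) - \tilde{b}_j(t)|\bigr) \le e^{-K/5}.
\]
The proof uses the Lax pair $\partial_t \bm{L} = [\bm{B}, \bm{L}]$, giving $\bm{L}(t) = \bm{U}(t) \bm{L}(0) \bm{U}(t)^T$ for an orthogonal $\bm{U}$ solving $\partial_t \bm{U} = \bm{B}(t) \bm{U}$, where $\bm{B}$ is antisymmetric tridiagonal with entries $\pm a_i(t)$. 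By \Cref{ltt}, $\|\bm{L}(t)\|$ is conserved, so $\|\bm{B}(t)\| \le 2R$ throughout $[0,T]$. The tridiagonality of $\bm{B}$ forces the $m$-th term in the Dyson series for $\bm{U}(t)$ to vanish for index separation exceeding $m$, yielding $|\bm{U}(t)_{ij}| \le (2eRt/|i-j|)^{|i-j|}$ for $|i-j| > 2eRt$. Expanding $\bm{L}(t)_{jk} - \tilde{\bm{L}}(t)_{jk}$ through the conjugation formula, the terms whose $\bm{L}(0)$-indices both lie in $\llbracket N_1, N_2 \rrbracket$ cancel (after accounting for the similar difference between $\bm{U}$ and $\tilde{\bm{U}}$), while the remaining contributions require $\bm{U}$ to reach indices at distance $\ge K$; choosing $C_0$ large enough then converts the resulting $(2eRT/K)^K$ into $e^{-K/5}$.

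For the probabilistic input, under thermal equilibrium $b_j$ is Gaussian and $a_j$ is a square-root Gamma variable, both with subgaussian tails controlled by $\beta, \theta$. Gershgorin gives $\|\bm{L}^{[N_1,N_2]}(0)\| \le \max_i(|b_i| + a_{i-1} + a_i)$, so a union bound over the $N$ rows yields $\Pr[\|\bm{L}^{[N_1,N_2]}(0)\| > R] \le CN e^{-cR^2}$, which under the hypothesis $R \ge \log N$ is absorbed into $Ce^{-cR^2/2}$. Because \Cref{ltt} preserves eigenvalues, the bound $\|\bm{L}\| \le R$ then persists throughout $[0,T]$ on this event. Given this, Part (1) follows from applying the deterministic estimate to pairs $(\bm{a}^{[-M_1, M_1]}; \bm{b}^{[-M_1, M_1]})$ and $(\bm{a}^{[-M_2, M_2]}; \bm{b}^{[-M_2, M_2]})$ with $M_1 < M_2$, which shows the sequence is almost surely Cauchy at any fixed $(j, t)$ and hence determines an infinite-line limit $(\bm{a}(t); \bm{b}(t))$ solving \eqref{derivativepa}. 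Part (3) is then immediate by comparing $(\bm{a}^{[N_1,N_2]}; \bm{b}^{[N_1,N_2]})$ with the limit from Part (1).

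For invariance (Part (2)), I would compare the infinite-line limit with periodic Toda on $\mathbb{Z}/M\mathbb{Z}$ sampled from its natural product thermal equilibrium. A direct computation on the periodic Hamiltonian vector field in \eqref{derivativepa} shows that its divergence against the log-density of $\mu_{\beta,\theta}$ vanishes, so by Liouville's theorem the periodic thermal equilibrium is exactly invariant; extending the deterministic propagation bound to periodic Toda (which introduces only boundary-type corrections that decay as $M \to \infty$) and sending $M \to \infty$ transfers invariance to $\mathbb{Z}$. The principal technical challenge is the deterministic propagation estimate itself: while the Dyson-series bound for a single $\bm{U}(t)_{ij}$ is essentially optimal, the quadratic structure $\bm{U}\bm{L}(0)\bm{U}^T$ requires simultaneously controlling two such operators and their difference against the analogous $\tilde{\bm{U}} \tilde{\bm{L}}(0) \tilde{\bm{U}}^T$. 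Obtaining the clean rate $e^{-K/5}$ under the sharp hypothesis $K \ge RT$ demands careful bookkeeping, likely handled via a Duhamel expansion of $\bm{U} - \tilde{\bm{U}}$ in $\bm{B} - \tilde{\bm{B}}$ combined with iteration over short time subintervals to refresh the Lax-norm bound at each step.
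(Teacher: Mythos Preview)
Your overall architecture matches the paper's: a deterministic finite-speed comparison estimate between Toda lattices on nested intervals, a subgaussian tail bound on the Lax matrix entries under thermal equilibrium, and passage through the periodic Toda lattice (whose thermal equilibrium is exactly invariant by Liouville) to obtain Part~(2). The paper packages these as \Cref{aabbk}, \Cref{a2p2}, \Cref{l0eigenvalues}, \Cref{infiniteab}, and \Cref{ajbjequation}.

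The substantive difference is in how the deterministic comparison is proved. You route through the Lax conjugation $\bm{L}(t) = \bm{U}(t)\bm{L}(0)\bm{U}(t)^{\mathsf{T}}$ and a Dyson-series bound on $\bm{U}(t)_{ij}$, and then confront the circularity that $\bm{U} - \tilde{\bm{U}}$ depends (via Duhamel) on $\bm{B}(s) - \tilde{\bm{B}}(s)$, whose entries are precisely the $a_j(s) - \tilde{a}_j(s)$ you are trying to control. The paper sidesteps this entirely: it writes the ODE for $a_j - \tilde{a}_j$ and $b_j - \tilde{b}_j$ directly from \eqref{derivativepa}, observes that the right-hand side is bilinear in the original and difference variables (all bounded by a multiple of $R$ via \Cref{abltt}), and feeds this into an elementary iterated Gr\"onwall lemma (\Cref{hkgk}) to produce the factor $T^K / K!$ in a few lines, with no bootstrap. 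Your Duhamel-plus-iteration plan, once carried out, would essentially rediscover this same Gr\"onwall structure at the component level, so the detour through $\bm{U}$ buys nothing for this particular estimate. (The Dyson bound on $\bm{U}$ \emph{does} appear in the paper, in the proof of \Cref{centert}, but there only a single dynamics is in play and no difference of propagators arises.) One minor correction: ``iteration over short time subintervals to refresh the Lax-norm bound'' is unnecessary, since \Cref{ltt} makes $\|\bm{L}(t)\|$ constant in $t$; the only iteration needed is the spatial one closing the $\bm{B} - \tilde{\bm{B}}$ feedback.
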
 
	
	The system $( \bm{a}(t); \bm{b}(t) )$ from \Cref{ajbjequation2} is the Toda lattice on $\mathbb{Z}$, under thermal equilibrium with parameters $(\beta, \theta)$. Let us briefly interpret the third part of \Cref{ajbjequation2}, when the interval length $N$ satisfies $T \ll N \ll e^T$. It states that, with high probability (at least $1 - Ce^{-(\log T)^2}$), the Flaschka variables at site $j$ of the Toda lattice on $\mathbb{Z}$ approximately coincide with those of the open Toda lattice on $\llbracket N_1, N_2 \rrbracket$, up to an error that is exponentially small in $T$, as long as $j$ is in the ``bulk'' of $\llbracket N_1, N_2 \rrbracket$ (that is, not too close to its endpoints $N_1$ and $N_2$; otherwise, boundary effects on the interval should become more visible and make this comparison invalid). Thus, asymptotic questions about the Toda lattice run for some large time $T$, on $\mathbb{Z}$ under thermal equilibrium, can be recovered from those about the open Toda lattice on a finite interval at thermal equilibrium. Hence, we will throughout focus on the open Toda lattice on the finite interval $\llbracket N_1, N_2 \rrbracket$ for times $t \in [0, T]$ with $T \in [N^{\delta}, N^{1-\delta}]$ for some small constant $\delta>0$ (though the statements below will permit more flexibility in $T$), and be interested in its variables at sites $j$ in the bulk of $\llbracket N_1, N_2 \rrbracket$.
	
	Before proceeding, let us mention that one might also be interested in analyzing Toda lattice on the torus $\mathbb{T}_N = \mathbb{Z} / N \mathbb{Z}$, for which thermal equilibrium is also invariant (see \Cref{ModelPeriodic} below). It can be deduced from \Cref{aabbk2} below that a counterpart of \Cref{ajbjequation2} holds comparing this Toda lattice on $\mathbb{T}_N$ to the open one on $\llbracket N_1, N_2 \rrbracket$, indicating they are likely very close to each other in the bulk of $\llbracket N_1, N_2 \rrbracket$, for times $T \ll N = N_2-N_1+1$ (for much larger times, we expect the periodicity of the torus to become more visible, again making such a comparison invalid). As such, asymptotic questions about the Toda lattice on $\mathbb{T}_N$ at thermal equilibrium, for times $T \in [ N^{\delta}, N^{1-\delta}]$, also reduce to those about the model on the interval.

	\subsection{Localization Centers} 
	
	\label{Center0} 
	
	As explained in \Cref{MatrixL}, while the eigenvalues of the Lax matrix $\bm{L}(t)$ are conserved under the Toda dynamics, they are not local, as they depend on all entries of $\bm{L}(t)$. Still, we will see under thermal equilibrium that they are ``approximately local,'' in that they only depend on a few entries of $\bm{L}(t)$, up to a small error. These entries will correspond to those on which the associated eigenvectors of $\bm{L}(t)$ are mainly supported; we call them localization centers, given (in a more general context) by the below definition. Here, we again let $N_1 \le N_2$ be integers and set $N = N_2 - N_1 + 1$. 
	
	\begin{definition}[Localization centers]
		
		\label{ucenter}
		
		Let $\bm{u} = ( u(N_1), \ldots , u(N_2) ) \in \mathbb{R}^N$ be a unit vector. For any $\zeta \in \mathbb{R}_{\ge 0}$, we call an index $\varphi \in \llbracket N_1, N_2 \rrbracket$ a \emph{$\zeta$-localization center} for $\bm{u}$ if $| u(\varphi) | \ge \zeta$. 
		
		Next, let $\bm{M} = [M_{ij}]$ be a symmetric $N \times N$ matrix, with entries indexed by $i, j \in \llbracket N_1, N_2 \rrbracket$. If $\lambda \in \eig \bm{M}$, then we call $\varphi \in \llbracket N_1, N_2 \rrbracket$ a $\zeta$-localization center for $\lambda$ with respect to $\bm{M}$ if $\varphi$ is a $\zeta$-localization center for some unit eigenvector $\bm{u} \in \mathbb{R}^N$ of $\bm{M}$ with eigenvalue $\lambda$. Further let $(\bm{u}_1, \bm{u}_2, \ldots , \bm{u}_N)$ denote an orthonormal eigenbasis of $\bm{M}$. We call a bijection $\varphi : \llbracket 1, N \rrbracket \rightarrow \llbracket N_1, N_2 \rrbracket$ a \emph{$\zeta$-localization center bijection} for $\bm{M}$ if $\varphi(j)$ is a $\zeta$-localization center for $\bm{u}_j$ for each $j \in \llbracket 1, N \rrbracket$.

	\end{definition}

	One might first ask whether $\zeta$-localization center bijections exist. The following lemma, whose quick proof appears in \Cref{Center} below, verifies this if $\zeta \le (2N)^{-1}$. 
	
	\begin{lem} 
		
		\label{bijectionm} 
		
		Any symmetric $N \times N$ matrix $\bm{M}$ admits a $(2N)^{-1}$-localization center bijection.
		
	\end{lem}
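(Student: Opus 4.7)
The plan is to phrase the existence of such a bijection as a perfect matching in a bipartite graph and apply Hall's marriage theorem. Fix any orthonormal eigenbasis $(\bm{u}_1, \bm{u}_2, \ldots, \bm{u}_N)$ of $\bm{M}$, and form the bipartite graph $G$ with vertex parts $\llbracket 1, N \rrbracket$ and $\llbracket N_1, N_2 \rrbracket$, drawing an edge between $j$ and $k$ exactly when $|u_j(k)| \ge (2N)^{-1}$. A $(2N)^{-1}$-localization center bijection for $\bm{M}$ is precisely a perfect matching in $G$, so it suffices to verify Hall's condition.

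To do so, fix a subset $S \subseteq \llbracket 1, N \rrbracket$ with neighborhood $T = N(S) \subseteq \llbracket N_1, N_2 \rrbracket$, and let $\bm{U}$ be the orthogonal matrix whose columns are the $\bm{u}_j$. I would estimate the Frobenius norm $\|\bm{U}_S\|_F^2$ of the submatrix of $\bm{U}$ consisting of the columns indexed by $S$ in two ways. On one hand, $\|\bm{U}_S\|_F^2 = |S|$ because the columns of $\bm{U}_S$ are orthonormal. On the other hand, splitting the Frobenius norm row-by-row, the rows with index $k \in T$ contribute at most $|T|$ in total, since for each such $k$ we have $\sum_{j \in S} u_j(k)^2 \le \sum_{j=1}^N u_j(k)^2 = 1$ (as the rows of $\bm{U}$ are also unit vectors). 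Meanwhile, the rows with $k \notin T$ have every entry bounded by $(2N)^{-1}$ in absolute value by definition of $T = N(S)$, so they contribute at most $(N - |T|) \cdot |S| \cdot (2N)^{-2} < 1/4$. Combining the two estimates gives $|S| \le |T| + 1/4$, and since $|S|$ and $|T|$ are integers, $|T| \ge |S|$.

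Hall's condition is therefore satisfied for every $S$, producing a perfect matching in $G$ and hence the desired bijection. There is no real obstacle to this argument: the one substantive point is the Frobenius-norm double count, which works because the threshold $(2N)^{-1}$ is small enough that the total $L^2$ mass of $\bm{U}_S$ lying outside the neighborhood $T$ is strictly less than $1$, forcing $|T| \ge |S|$. The same argument would in fact yield a $\zeta$-localization center bijection for any $\zeta$ with $N \zeta^2 < 1/N$, so the particular choice $\zeta = (2N)^{-1}$ is comfortably inside the range where Hall applies.
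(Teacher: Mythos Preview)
Your proof is correct and is essentially the same as the paper's: both set up the bipartite graph, invoke Hall's theorem, and verify the marriage condition by the same double-counting of $\sum_{j\in S}\sum_k u_j(k)^2$, bounding the mass outside $T$ by $N\cdot|S|\cdot(2N)^{-2}\le 1/4$. The only difference is cosmetic---you phrase the count as a Frobenius norm of the submatrix $\bm{U}_S$, while the paper writes out the sums directly and argues by contradiction.
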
 
	
	One might next ask if such localization centers bijections are unique; this is not always the case. An extreme example is if $\bm{M}$ is the adjacency matrix for the discrete Laplacian on a torus of length $N$, in which case one can verify that any bijection from $\llbracket 1, N \rrbracket$ to $\llbracket N_1, N_2 \rrbracket$ is an $N^{-1/2}$-localization center bijection.
	
	However, we will show that this uniqueness is true ``up to some error,'' if $\bm{M}$ is the Lax matrix for the Toda lattice run under thermal equilibrium. To make this precise, we require some notation, which will frequently be adopted throughout the remainder of this paper. 
	
	\begin{assumption}
		
		\label{lbetaeta} 
		
		Fix real numbers\footnote{Throughout this paper, constants may depend on $\beta$ and $\theta$, even when not explicitly stated.} $\beta, \theta > 0$, and assume that the \emph{stretch parameter}  
		\begin{flalign}
			\label{alpha} 
			\alpha =  \log \beta - \displaystyle\frac{\Gamma'(\theta)}{\Gamma(\theta)} \ne 0.
		\end{flalign} 
		
		\noindent For each $t \ge0 $, let $\bm{L}(t) = [L_{ij}(t)]$ denote the Lax matrix for the open Toda lattice $(\bm{a}(t);\bm{b}(t))$ on $\llbracket N_1, N_2 \rrbracket$ (as in \Cref{matrixl}). Set $\eig \bm{L}(t) = (\lambda_1, \lambda_2, \ldots , \lambda_N)$, which does not depend on $t$ by \Cref{ltt}, and for each $k \in \llbracket 1, N \rrbracket$ let $\bm{u}_k (t) = (u_k (N_1;t), u_k (N_2;t), \ldots,  u_k (N_2; t))$ denote the nonnegatively normalized, unit eigenvector of $\bm{L}(t)$ with eigenvalue $\lambda_k$. At $t = 0$, abbreviate $\bm{L} = \bm{L}(0)$; $(\bm{a}; \bm{b}) = ( \bm{a}(0); \bm{b}(0))$; and $\bm{u}_k = \bm{u}_k (0) = ( u_k (N_1), u_k (N_1 + 1), \ldots , u_k (N_2) )$. Assume that the initial data $(\bm{a}; \bm{b})$ is sampled under the thermal equilibrium $\mu_{\beta, \theta; N-1,N}$ from \Cref{mubeta2}. Let $( \bm{p}(s); \bm{q}(s) )$, over $s \in \mathbb{R}_{\ge 0}$, denote the Toda state space dynamics associated with (Flaschka variable) initial data $( \bm{a}(0); \bm{b}(0) )$, as in \Cref{Open}.
		
		Let $T \ge 1$ and $\zeta \ge 0$ be real numbers satisfying
		\begin{flalign}
			\label{n1n2zetat} 
			\begin{aligned} 
				& \quad N_1 \le -N(\log N)^{-1} \le N (\log N)^{-1} \le N_2; \\
				& 1 \le T \le N (\log N)^{-7}; \qquad \zeta \ge e^{-100(\log N)^{3/2}}.
			\end{aligned} 
		\end{flalign}
		
		\noindent For each $s \in \mathbb{R}$ let $\varphi_s: \llbracket 1, N \rrbracket \rightarrow \llbracket N_1, N_2 \rrbracket$ be a $\zeta$-localization center bijection for $\bm{L}(s)$, and denote 
		\begin{flalign}
			\label{qjs2}
			Q_j (s) = q_{\varphi_s (j)} (s), \qquad \text{for each $(j, s) \in \llbracket 1, N \rrbracket \times \mathbb{R}_{\ge 0}$}.
		\end{flalign}

	\end{assumption}

	The reason for the term, ``stretch parameter,'' is given by \Cref{qij} below, which indicates that the average spacing $\mathbb{E}[q_i(0) - q_{i+1}(0)] = \alpha$ between points in the Toda lattice at thermal equilibrium is $\alpha$. The assumption \eqref{alpha} states that the stretch parameter $\alpha$ is nonzero, which implies that the Toda particles $(q_j(t))$ do not ``accumulate'' (with infinite density) around $0$. 
	
	Let us briefly explain \eqref{n1n2zetat}. Its first bound indicates that $0$ is in the bulk of $\llbracket N_1, N_2 \rrbracket$. Its second ensures that the time scale $T$ is sublinear in $N$, in accordance with the discussion at the end of \Cref{0Stationary} (as this will guarantee that the boundary of $\llbracket N_1, N_2 \rrbracket$ does not asymptotically affect its bulk under the Toda lattice). Its third ensures that $\zeta$ is not too small. In \eqref{n1n2zetat}, the constants $7$, $100$, and $3/2$ are of little significance (we will need $e^{-(\log N)^2} \ll \zeta \ll N^{-C}$ for $C$ sufficiently large).
	
	Under \Cref{lbetaeta}, we view $\varphi_j (s)$ as the ``location'' of $\lambda_j$ on the lattice $\llbracket N_1, N_2 \rrbracket$, and $Q_j (s)$ as the location of the corresponding Toda particle\footnote{One might ask why we impose that $\varphi_t : \llbracket 1, N \rrbracket \rightarrow \llbracket N_1, N_2 \rrbracket$ is a bijection, as opposed to any map such that $\varphi_t (j)$ is a $\zeta$-localization center of $\bm{u}(j;t)$ for each $j \in \llbracket 1, N \rrbracket$. As suggested by \Cref{centerdistance} below (which addresses arbitrary localization centers, and not only localization center bijections), this would not be necessary for our results below to hold. Still, we adopt it since it has the aesthetic property of exhibiting a one-to-one correspondence betwen eigenvalues of the Lax matrix and particles in the Toda lattice.} $q_{\varphi_s (j)} (s)$ on the line $\mathbb{R}$ (the latter may be thought of as the location for the ``quasiparticle with amplitude $\lambda_j$,'' as described in \Cref{Introduction}). 
	
	The following proposition, to be shown in \Cref{ProofCenterDistance} below, indicates that the location $\varphi_s (j)$ is likely unique, up to a small error (as long as it is in the bulk of $\llbracket N_1, N_2 \rrbracket$).
	
	\begin{prop} 
		
		\label{centerdistance} 
		
		Adopt \Cref{lbetaeta}. There exists a constant $c > 0$ such that the following holds with probability at least $1-c^{-1} e^{-c(\log N)^2}$. For any real number $t \in [0, T]$; eigenvalue $\lambda \in \eig \bm{L}(t)$; and two $\zeta$-localization centers $\varphi, \tilde{\varphi} \in \llbracket N_1, N_2 \rrbracket$ for $\lambda$ with respect to $\bm{L}(t)$, satisfying
		\begin{flalign}
			\label{tzeta}
			N_1 + T (\log N)^3 \le \varphi \le N_2 - T (\log N)^3,
		\end{flalign} 
		
		\noindent  we have $|\varphi - \tilde{\varphi}| \le (\log N)^3$. 
		
	\end{prop}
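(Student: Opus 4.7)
My plan is to derive the proposition from a uniform-in-time exponential localization estimate for the eigenvectors of $\bm{L}(t)$, drawn from the general theory of random tridiagonal matrices of Schenker \cite{LRBM} and Aizenman--Schenker--Friedrich--Hundertmark \cite{FFCL}, after which the conclusion follows by a short arithmetic comparison exploiting the quantitative lower bound $\zeta \ge e^{-100(\log N)^{3/2}}$.

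The first step is to establish a pointwise-in-time localization. At any fixed $s \in [0, T]$, I would apply \Cref{ajbjequation2} (with $R \asymp \log N$ and $K \asymp T(\log N)^3$) to compare $\bm{L}(s)$ on the open interval to its counterpart on $\mathbb{Z}$, which is \emph{exactly} thermal-equilibrium distributed; this shows the two matrices agree in the bulk region $\llbracket N_1 + T(\log N)^3, N_2 - T(\log N)^3 \rrbracket$ up to stretched-exponentially small error. The infinite-line Lax matrix is a random tridiagonal matrix with independent entries, so \cite{LRBM, FFCL} supply constants $c_0, C_0 > 0$ (depending only on $\beta, \theta$) such that, with probability at least $1 - C_0 e^{-c_0 (\log N)^2}$, every unit eigenvector $\bm{u}$ of $\bm{L}(s)$ admits a \emph{primary center} $\varphi^{\ast} \in \llbracket N_1, N_2 \rrbracket$ satisfying
\begin{flalign*}
|u(i; s)| \le N^{C_0} e^{-c_0 |i - \varphi^{\ast}|} \quad \text{for all } i \in \llbracket N_1, N_2 \rrbracket.
\end{flalign*}
To upgrade this to simultaneous validity over $t \in [0, T]$, I would place a mesh $\mathcal{T} \subset [0, T]$ of spacing $N^{-C}$, apply the above at each mesh point with a union bound, and interpolate to arbitrary $t$ using the polynomial-in-$N$ Lipschitz continuity of eigenvector entries in $t$ (which follows from $\|\partial_t \bm{L}(t)\| \le \|\bm{L}(t)\|^2 \le N^{O(1)}$ together with almost-sure simplicity of the thermal-equilibrium spectrum).

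Given this uniform localization, the proposition reduces to a direct computation. On the good event, if $\varphi, \tilde{\varphi}$ are $\zeta$-localization centers of a common (almost surely simple) eigenvalue $\lambda \in \eig \bm{L}(t)$, they are localization centers of the same unit eigenvector $\bm{u}$ with some primary center $\varphi^{\ast}$. Combining $\zeta \le |u(\varphi; t)| \le N^{C_0} e^{-c_0 |\varphi - \varphi^{\ast}|}$ with $\zeta \ge e^{-100(\log N)^{3/2}}$ yields $|\varphi - \varphi^{\ast}| \le c_0^{-1}\bigl(100(\log N)^{3/2} + C_0 \log N\bigr)$, and identically for $\tilde{\varphi}$, so by the triangle inequality $|\varphi - \tilde{\varphi}| \le C_1(\log N)^{3/2} \le (\log N)^3$ for $N$ large.

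I expect the most delicate part of the plan to be the first step: transferring the exponential localization estimate from the stationary (full-line) ensemble, where \cite{LRBM, FFCL} apply out of the box, to the nonstationary matrix $\bm{L}(t)$ on the open interval at positive times. The bulk hypothesis \eqref{tzeta} precisely delineates the regime where \Cref{ajbjequation2} makes this transfer lossless, which also explains why the exponent $3$ (rather than the $3/2$ needed by the final arithmetic) appears in the statement --- the region of indices at which localization can be reliably transplanted from the line to the interval shrinks from each endpoint by $O(T(\log N)^3)$ as $t$ grows to $T$. A related subtlety is that the Lipschitz-in-$t$ control used to close the mesh argument must not be spoiled by near-collisions of eigenvalues, requiring quantitative lower bounds on eigenvalue gaps under thermal equilibrium to control the first-order perturbation expressions for $\partial_t \bm{u}_k(t)$.
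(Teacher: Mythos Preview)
Your high-level strategy---compare $\bm{L}(t)$ to something exactly at thermal equilibrium, invoke localization there, and extend uniformly over $t$ via a mesh argument---matches the paper's, which actually proves a slightly more general statement (\Cref{centerdistance2}, allowing two nearby times $t,\tilde t$) and then takes $\tilde t=t$. Your proposed mesh-plus-Lipschitz interpolation, backed by eigenvalue-gap lower bounds, is a viable alternative to the paper's use of \Cref{abgh} for transferring localization centers between nearby times.

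The central step, however, has a genuine gap. You obtain via \Cref{ajbjequation2} that the \emph{entries} of $\bm{L}(s)$ agree with those of a stationary object in the bulk, and then assert that \cite{LRBM,FFCL} yield $|u(i;s)|\le N^{C_0}e^{-c_0|i-\varphi^\ast|}$ for eigenvectors \emph{of $\bm{L}(s)$}. That inference is the whole difficulty and is not supplied: entrywise closeness of two tridiagonal matrices does not by itself transfer pointwise eigenvector decay from one to the other, and the fractional-moment bounds of \cite{LRBM,FFCL} require a specific randomness structure that $\bm{L}(s)$ (for $s>0$ on the open interval) does not possess. The paper fills exactly this gap via resolvent perturbation: \Cref{lgdifference} bounds $|G_{\varphi\varphi}-\tilde G_{\varphi\varphi}|$ in terms of the distance of $\varphi$ from the perturbed region, and \Cref{abgh} converts that into closeness of eigenvalues \emph{together with} their localization centers; this machinery is packaged as \Cref{lleigenvalues} and applied to the finite comparison matrix $\bm M(s)$ supplied by \Cref{ltl0}. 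A secondary issue: the infinite-line operator you invoke has no $\ell^2$-normalized eigenvectors in general, so ``primary center of an eigenvector of the infinite-line Lax matrix'' is ill-posed as written; the paper's choice of a \emph{finite} $\bm M(s)$ with the law of $\bm L(0)$ sidesteps this. You also need, before applying any bulk comparison to $\tilde\varphi$, a crude a priori bound placing $\tilde\varphi$ in the bulk as well; the paper gets this from the Lieb--Robinson-type estimate \Cref{centert}.
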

	
	As mentioned above, one may think of the ``location'' of $\lambda_j$ on the lattice $\llbracket N_1, N_2 \rrbracket$ at time $s$ as being $\varphi_j (s)$. One manifestation of this is through \Cref{lleigenvalues2} (and \Cref{lleigenvalues}) below, which essentially states the following. Let $\lambda \in \eig \bm{L}$ have $\zeta$-localization center $\varphi$. If one perturbs the entries of $\bm{L}$, whose indices are sufficiently far from $\varphi$, to form a tridiagonal matrix $\tilde{\bm{L}}$, then there exists an eigenvalue $\tilde{\lambda} \in \eig \tilde{\bm{L}}$ that is close to $\lambda$ (and whose $N^{-1} \zeta$-localization center is close to $\varphi$). 
	
	We will not state that result more precisely in this section, but instead mention the following relevant consequence of it, to be proven in \Cref{ProofCurrent} below. It indicates that the (sums of) local charges from \Cref{current}, and their currents (that is, their net transfer across a given location in space), are well-approximated by local sums of $\lambda_j^m$, over $j$ satisfying $Q_j (t) \approx q_i (t)$. Thus, to analyze asymptotics of the former, it suffices to understand the limiting dynamics of the particle locations $Q_j (t)$.

	\begin{prop}
		
		\label{currentestimate} 
		
		Adopt \Cref{lbetaeta}. There exists a constant $c>0$ such that the following two statements hold with probability at least $1-c^{-1} e^{-c (\log N)^2}$, for any integer $m \in \llbracket 0, \log N \rrbracket$. Below, we let $t \in [0, T]$ be any real number and $\mathcal{J} \subset \mathbb{R}$ be any interval satisfying
		\begin{flalign}
			\label{j} 
			\begin{aligned} 
				\mathcal{J} \subseteq [ \alpha N_1 + (T+|N_1|^{1/2})(\log N)^5, \alpha N_2 - (T+N_2^{1/2}) (\log N)^5], \quad \text{if $\alpha > 0$}; \\
				\mathcal{J} \subseteq [ \alpha N_2 + (T+N_2^{1/2})(\log N)^5, \alpha N_1 - (T+|N_1|^{1/2}) (\log N)^5], \quad \text{if $\alpha < 0$}.
			\end{aligned}
		\end{flalign} 
		
		\begin{enumerate} 
			\item We have
			\begin{flalign}
				\label{currentsum} 
				\Bigg| \displaystyle\sum_{i: q_i (t) \in \mathcal{J}} \mathfrak{k}_i^{[m]} (t) - \displaystyle\sum_{j: Q_j (t) \in \mathcal{J}} \lambda_j^m \Bigg| \le (\log N)^{m+6}.
			\end{flalign} 
			\item For any $k \in \mathcal{J}$, we have 
			\begin{flalign} 
				\label{currentsum2} 
				\Bigg| \displaystyle\sum_{i: q_i (t) < k} \mathfrak{k}_i^{[m]} (t) -  \displaystyle\sum_{i: q_i (0) < k} \mathfrak{k}_i^{[m]} (0)  - \displaystyle\sum_{j : Q_j (t) < k} \lambda_j^m + \displaystyle\sum_{j: Q_j (0) < k} \lambda_j^m \Bigg| \le (\log N)^{m+6}. 
			\end{flalign} 
		\end{enumerate} 
		
	\end{prop}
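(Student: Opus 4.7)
The plan is to follow the outline sketched in the introduction: for part (1), use the approximate locality of Lax eigenvalues (Lemmas \ref{lleigenvalues2} and \ref{lleigenvalues} in the sequel) combined with the walk expansion of $[\bm{L}(t)^m]_{ii}$; for part (2), apply the discrete continuity equation \eqref{kderivativej} and reduce to part (1). All estimates are performed on a high-probability event carrying: Gaussian/Gamma concentration of the thermal-equilibrium $(\bm{a};\bm{b})$ at $t=0$, \Cref{centerdistance}, near-ordering of $(q_i(s))$, and the operator-norm bound $\|\bm{L}(s)\|_{\mathrm{op}} \le (\log N)^{1/2+o(1)}$ uniform in $s \in [0,T]$ (the latter transferred from $s=0$ to positive $s$ via the near-invariance recorded in \Cref{ajbjequation2}).

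For part (1), fix $t \in [0, T]$. The margin in \eqref{j} together with the concentration $q_i(s) = \alpha i + O\bigl((|i|^{1/2}+s)(\log N)^{1/2+o(1)}\bigr)$ around the mean spacing $\alpha$ forces $\mathcal{I} = \{i : q_i(t) \in \mathcal{J}\}$ to be an interval $\llbracket I_1, I_2 \rrbracket$ with $\min(I_1 - N_1, N_2 - I_2) \ge T(\log N)^3$. Let $\tilde{\bm{L}}(t)$ be the symmetric tridiagonal truncation of $\bm{L}(t)$ to $\llbracket I_1, I_2 \rrbracket$. Since $\bm{L}(t)$ is tridiagonal, the walk representation gives $[\bm{L}(t)^m]_{ii} = [\tilde{\bm{L}}(t)^m]_{ii}$ whenever $i \in \mathcal{I}$ is at lattice distance $> m$ from $\{I_1, I_2\}$; the $O(m)$ boundary indices each contribute at most $(\log N)^{m/2+o(1)}$ by the operator-norm bound, so
\begin{equation*}
\sum_{i : q_i(t) \in \mathcal{J}} \mathfrak{k}_i^{[m]}(t) = \Tr \tilde{\bm{L}}(t)^m + O\bigl((\log N)^{m+C}\bigr) = \sum_{j} \tilde{\lambda}_j^m + O\bigl((\log N)^{m+C}\bigr).
\end{equation*}
Approximate locality now pairs each $\lambda_j$ with $\varphi_t(j) \in \mathcal{I}$ at distance $> (\log N)^2$ from $\{I_1, I_2\}$ with a unique $\tilde{\lambda}_j \in \eig \tilde{\bm{L}}(t)$ satisfying $|\lambda_j - \tilde{\lambda}_j| \le e^{-c(\log N)^2}$; the remaining unpaired eigenvalues on each side, whose centers lie within a polylog window of $\{I_1, I_2\}$, contribute at most $(\log N)^{m+C}$ via the operator-norm bound (bounding their magnitude) and a local density bound on localization centers (bounding their cardinality by $(\log N)^C$). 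Replacing $\varphi_t(j) \in \mathcal{I}$ with $Q_j(t) \in \mathcal{J}$ via \Cref{centerdistance} and near-ordering, and absorbing the $(\log N)^3$ ambiguity into the error, gives \eqref{currentsum}.

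For part (2), integrate \eqref{kderivativej} over $s \in [0, t]$ and sum over $i \in \llbracket N_1, k-1 \rrbracket$; since $\mathfrak{j}_{N_1}^{[m]} \equiv 0$, this yields
\begin{equation*}
\int_0^t \mathfrak{j}_k^{[m]}(s) \, ds = \sum_{i < k} \mathfrak{k}_i^{[m]}(0) - \sum_{i < k} \mathfrak{k}_i^{[m]}(t).
\end{equation*}
Each half-line sum is handled as in part (1), now with truncation to $\llbracket N_1, k^*(s) \rrbracket$, where $k^*(s)$ is the index identified with the position $k$ at time $s$; crucially, only the right endpoint $k^*(s)$ requires boundary control, as the left endpoint $N_1$ is already a genuine boundary of $\bm{L}(s)$. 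Subtracting the two applications of part (1), and using near-ordering to convert $\varphi_s(j) \le k^*(s)$ into $Q_j(s) < k$, produces \eqref{currentsum2}.

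The main obstacle is the control of the ``boundary'' eigenvalues: those whose localization centers sit within a polylog window of $\{I_1, I_2\}$ (or of $k^*(s)$). Bounding their magnitude is straightforward via the operator norm, but bounding their cardinality by $(\log N)^C$ is subtler — it amounts to a discrete Minami-type local estimate on the density of localization centers for the random tridiagonal $\bm{L}(t)$ at thermal equilibrium, of the kind alluded to in the introduction. This is the key new input beyond approximate locality, and is what ultimately pins the error in both parts at the polylogarithmic level $(\log N)^{m+6}$ rather than a negative power of $N$. A secondary technical task is propagating all of the above high-probability bounds from $t=0$ to all $s \in [0,T]$ simultaneously, which rests on \Cref{ajbjequation2}.
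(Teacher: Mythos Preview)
Your outline matches the paper's strategy closely: truncate, use tridiagonality to compare $\sum_i [\bm{L}(t)^m]_{ii}$ with the trace of the truncation, then invoke approximate locality (\Cref{lleigenvalues2}, \Cref{lleigenvalues}) to match truncated eigenvalues with the $\lambda_j$, and finally convert the index condition $\varphi_t(j)\in\mathcal{I}$ into the position condition $Q_j(t)\in\mathcal{J}$ via near-ordering (\Cref{qijsalpha}). Part (2) via the telescoped continuity equation is exactly what the paper does.

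However, you have misidentified the ``main obstacle.'' You write that bounding the number of boundary eigenvalues ``amounts to a discrete Minami-type local estimate on the density of localization centers,'' and call this ``the key new input.'' It is not: by \Cref{lbetaeta}, $\varphi_t$ is a \emph{bijection} from $\llbracket 1,N\rrbracket$ to $\llbracket N_1,N_2\rrbracket$, and the paper similarly takes a localization center bijection $\psi$ for the truncated matrix. Hence the number of eigenvalues whose centers lie in any window of size $w$ is \emph{exactly} $w$, with no probabilistic input whatsoever. The Minami estimate (\Cref{eigenvalues0}) enters only through the event $\mathsf{SEP}_{\bm{L}}$, which guarantees that the pairing $\kappa$ produced by approximate locality is injective; it is not used to count centers.

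Two smaller points. First, you assert that $\mathcal{I}=\{i:q_i(t)\in\mathcal{J}\}$ is an interval; near-ordering does not give this exactly. The paper instead sets $N_1'=\min\mathcal{I}$, $N_2'=\max\mathcal{I}$, applies the truncation to $\llbracket N_1',N_2'\rrbracket$, and then separately bounds $\#\{i\in\llbracket N_1',N_2'\rrbracket:q_i(t)\notin\mathcal{J}\}$ by $O((\log N)^{9/2})$ via \Cref{qijsalpha}. Second, you apply \Cref{lleigenvalues2}/\Cref{lleigenvalues} directly with $\bm{L}(t)$ and its truncation, but those corollaries (through \Cref{lmatrixl}) require one of the two matrices to be sampled from thermal equilibrium, which $\bm{L}(t)$ is not for $t>0$. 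The paper handles this by first invoking \Cref{ltl0} to produce a matrix $\bm{P}$ with the law of $\bm{L}(0)$ that agrees with $\bm{L}(t)$ in the bulk, and applies the approximate-locality corollaries with the pair $(\bm{Q},\bm{L}(t))$ where $\bm{Q}$ is the truncation of $\bm{P}$. Your vague reference to ``transferring via \Cref{ajbjequation2}'' does not quite capture this; \Cref{ltl0} is the relevant tool.
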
 
	
	Let us briefly comment on \Cref{currentestimate}. First, the assumption \eqref{j} on $\mathcal{J}$ implying that $\mathcal{J}$ is bounded away from $\alpha N_1$ and $\alpha N_2$ indicates that\footnote{Indeed, since $q_0 (0) = 0$ and the stretch parameter $\alpha$ denotes the expected value of $q_{i+1} - q_i$, the extreme particles in the Toda lattice should reside around $q_{N_1} (0) \approx \alpha N_1$ and $q_{N_2} (0) \approx \alpha N_2$.}  this inveral only contains particles $q_i$ of index $i$ in the bulk of $\llbracket N_1, N_2 \rrbracket$. Second, it is typical in the context of hydrodynamical limits to consider local charges on an interval $\mathcal{J}$ whose length is of order $T$. Then, the sums on the left side of \eqref{currentsum} (or their differences in \eqref{currentsum2}) will also be of order $T$, which can be taken to be much larger than $(\log N)^{m+6}$, so the error on the right side of \eqref{currentsum} is negligible in comparison (for $m \ll \log N$).

	\subsection{Asymptotic Scattering Relation} 
	
	\label{Asymptotic}

	The following theorem, proven in \Cref{YEstimate} below, provides the asymptotic scattering relation that governs the dynamics of the $Q_k (t)$. In what follows, the condition \eqref{n1n2k022} ensures that the ``initial location'' $\varphi_0 (k)$ of $\lambda_k$ is in the bulk of $\llbracket N_1, N_2 \rrbracket$. Observe that the error\footnote{No effort was made to optimize the exponent $15$, which could likely be improved.} $(\log N)^{15}$ on the right side of \eqref{lambdak22} is quite small, negligible in comparison to any power of $N$. So, \Cref{ztlambda} indicates that the asymptotic scattering relation is quite close to an equality.

	\begin{thm}[Asymptotic scattering relation]
		
		\label{ztlambda} 
		
		Adopt \Cref{lbetaeta}. There exists a constant $c > 0$ such that the following holds with probability at least $1 - c^{-1} e^{-c(\log N)^2}$. Let $k \in \llbracket 1, N \rrbracket$ satisfy
		\begin{flalign}
			\label{n1n2k02}
			N_1 + T (\log N)^6 \le \varphi_0 (k) \le N_2 - T (\log N)^6.
		\end{flalign}
		
		\noindent Then, for each $t \in [0, T]$, we have
		\begin{flalign}
			\label{lambdak22}
			\begin{aligned}
				\Bigg| \lambda_k & t - Q_k(t) + Q_k (0) - 2 \sgn (\alpha) \displaystyle\sum_{i:  Q_i (t) <  Q_k (t)}  \log |\lambda_{k} - \lambda_{i}| \\
				& + 2 \sgn (\alpha) \displaystyle\sum_{i: Q_i (0) < Q_k (0)} \log |\lambda_k - \lambda_i| \Bigg| \le (\log N)^{15}.
			\end{aligned} 
		\end{flalign} 
		
	\end{thm}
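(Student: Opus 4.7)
The plan is to follow the outline sketched in the introduction. Moser's inverse scattering identity \eqref{uklambdak} reads $2\log|u_k(N_1;t)| - 2\log|u_k(N_1;0)| = -\lambda_k t + \mathfrak{C}(t)$, with $\mathfrak{C}(t)$ independent of $k$. Substituting the discrete Thouless-type formula \eqref{uksn1} at $s = 0$ and $s = t$ into the left-hand side produces an expression of the form $[q_{N_1}(t) - q_{N_1}(0)] + [Q_k(0) - Q_k(t)] - 2\sum_{j: \varphi_t(j) < \varphi_t(k)} \log|\lambda_k - \lambda_j| + 2\sum_{j: \varphi_0(j) < \varphi_0(k)} \log|\lambda_k - \lambda_j|$; matching this against $-\lambda_k t + \mathfrak{C}(t)$ and establishing $\mathfrak{C}(t) \approx q_{N_1}(t) - q_{N_1}(0)$ reduces \eqref{lambdak22} to proving \eqref{uksn1} and converting the index-ordering of $\varphi_s$ into the position-ordering of $Q_\cdot(s)$ (which yields the $\sgn(\alpha)$ factor).

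The first main step is \eqref{uksn1}. A classical Jacobi matrix identity (underlying the transfer-matrix analysis of \cite{APO} and used in \cite{DSRL}) gives $|u_k(N_1;s)| \cdot \prod_{\tilde{\lambda} \in \eig \tilde{\bm{L}}(s)} |\lambda_k - \tilde{\lambda}| = |u_k(\varphi_s(k);s)| \cdot \prod_{i=N_1}^{\varphi_s(k)-1} a_i(s)$, where $\tilde{\bm{L}}(s)$ is the principal submatrix of $\bm{L}(s)$ on $\llbracket N_1, \varphi_s(k) - 1 \rrbracket$. Taking logarithms, using the telescoping $2\sum_{i=N_1}^{\varphi_s(k)-1} \log a_i(s) = \sum_{i=N_1}^{\varphi_s(k)-1} (q_i(s) - q_{i+1}(s)) = q_{N_1}(s) - Q_k(s)$ from \eqref{abr}, and the bound $|\log|u_k(\varphi_s(k);s)|| \le |\log \zeta| \le 100(\log N)^{3/2}$ (since $\varphi_s(k)$ is a $\zeta$-localization center and $|u_k(\varphi_s(k);s)| \le 1$), yields $2\log|u_k(N_1;s)| = q_{N_1}(s) - Q_k(s) - 2\sum_{\tilde{\lambda}} \log|\lambda_k - \tilde{\lambda}|$ up to polylogarithmic error. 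Applying the approximate locality of eigenvalues (\Cref{lleigenvalues2}) identifies the spectrum of $\tilde{\bm{L}}(s)$ with the subset $\{\lambda_j : \varphi_s(j) < \varphi_s(k)\} \subset \eig \bm{L}(s)$, up to exponentially small corrections; careful bookkeeping of this approximation (especially for eigenvalues near $\lambda_k$) completes the proof of \eqref{uksn1}.

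To control $\mathfrak{C}(t)$, I would sum \eqref{uksn1} over all $k \in \llbracket 1, N \rrbracket$. Since $\varphi_s$ is a bijection, $\sum_k Q_k(s) = \sum_i q_i(s)$; and since $\log|\lambda_k - \lambda_j|$ is symmetric in $(j,k)$, the double sum $\sum_k \sum_{j: \varphi_s(j) < \varphi_s(k)} \log|\lambda_k - \lambda_j|$ equals $\sum_{j<k} \log|\lambda_k - \lambda_j|$, which is $s$-independent. Consequently $2\sum_k \log|u_k(N_1;s)| = N q_{N_1}(s) - \sum_i q_i(s) + \mathrm{const}$ up to polylog error. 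By momentum conservation, $\partial_s \sum_i q_i(s) = \sum_i p_i(s) = \Tr \bm{L}$, so $\sum_i q_i(t) - \sum_i q_i(0) = t \Tr \bm{L} = t \sum_k \lambda_k$. Summing \eqref{uklambdak} over $k$ and substituting then gives $N \mathfrak{C}(t) \approx N(q_{N_1}(t) - q_{N_1}(0))$, as desired. Finally, to convert $\varphi_s$-ordering to $Q_\cdot(s)$-ordering: the Flaschka gaps $r_i = q_{i+1} - q_i = -2\log a_i$ are i.i.d. under thermal equilibrium with mean $\alpha \ne 0$ and finite exponential moments, so Bernstein-type concentration of partial sums yields $q_i(s) - q_j(s) = \alpha(i-j) + O(\sqrt{|i-j|}\log N)$, ensuring $\sgn(Q_j(s) - Q_k(s)) = \sgn(\alpha) \sgn(\varphi_s(j) - \varphi_s(k))$ whenever $|\varphi_s(j) - \varphi_s(k)| \gg (\log N)^2$; this conversion produces the $\sgn(\alpha)$ prefactor in \eqref{lambdak22}.

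The principal obstacle is the residual pairs $(\lambda_k, \lambda_j)$ whose localization centers $\varphi_s(j), \varphi_s(k)$ are within $(\log N)^{O(1)}$ of each other, for which the position-ordering may not align with the index-ordering. If such a pair additionally satisfies $|\lambda_k - \lambda_j| \le N^{-C}$, a single spurious sign flip contributes $\Omega(\log N)$ to the error, potentially unbounded if the eigenvalue gap is arbitrarily small. Ruling this out requires a quantitative Minami-style level repulsion estimate bounding the number of eigenvalue pairs with both small center separation and near-degenerate spectra. Assembling all the ingredients — the discrete Thouless formula, approximate locality, summed control of $\mathfrak{C}(t)$, concentration of $\sum r_i$, and level repulsion — yields the final $(\log N)^{15}$ bound in \eqref{lambdak22}.
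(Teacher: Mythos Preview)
Your overall architecture matches the paper's almost step for step: Moser's scattering identity (\Cref{uknt}), the discrete Thouless formula (\Cref{uk12}), the summing trick to pin down $\mathfrak{C}(t)$ (\Cref{ctqt}), and the conversion from $\varphi_s$-ordering to $Q$-ordering via concentration of the gaps (\Cref{qijsalpha}) together with Minami separation (\Cref{eigenvalues0}). Two technical points, however, are handled differently in the paper and are not resolved in your sketch.

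\textbf{The two-sided Thouless estimate.} Your transfer-matrix identity gives exactly
\[
2\log|u_k(N_1;s)| = 2\log|u_k(\varphi_s(k);s)| + \bigl(q_{N_1}(s)-Q_k(s)\bigr) - 2\sum_{\mu\in\eig\tilde{\bm L}(s)}\log|\lambda_k-\mu|,
\]
so to reach \eqref{uksn1} you must show $\sum_\mu\log|\lambda_k-\mu|\approx\sum_{j:\varphi_s(j)<\varphi_s(k)}\log|\lambda_k-\lambda_j|$ in \emph{both} directions. Approximate locality (\Cref{lleigenvalues2}, \Cref{lleigenvalues}) matches the bulk $\mu$'s to bulk $\lambda_j$'s, but for the $O\bigl((\log N)^4\bigr)$ eigenvalues $\mu$ of $\tilde{\bm L}(s)$ whose localization centers sit within $(\log N)^4$ of the cut $\varphi_s(k)$, you have no a priori lower bound on $|\lambda_k-\mu|$: these $\mu$ are precisely where the locality corollaries do not apply, and nothing rules out one of them being within $e^{-(\log N)^3}$ of $\lambda_k$. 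The paper sidesteps this by proving only the \emph{lower} bounds $\log|u_k(N_1;s)|\ge\cdots$ and $\log|u_k(N_2;s)|\ge\cdots$ (\Cref{uk1lower}), which need only $\log|\lambda_k-\mu|\le\log N$, and then closes via the exact identity $\log|u_{N_1}|+\log|u_{N_2}|=\sum\log L_{i,i+1}-\sum_{j\ne k}\log|\lambda_k-\lambda_j|$ (\Cref{n1n2u}), where now all logarithms involve only eigenvalues of the \emph{full} matrix, for which Minami gives the needed separation.

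\textbf{The range of $T$ in the $\mathfrak{C}(t)$ argument.} Summing \eqref{uksn1} over all $k$ requires handling the $\sim T(\log N)^6$ indices $k$ with $\varphi_0(k)$ near the boundary, for which \eqref{uksn1} is unavailable. The paper bounds each such term crudely by $O\bigl(T(\log N)^4\bigr)$ (see the proof of \Cref{ctqt}), giving a total boundary error $O\bigl(T^2(\log N)^{10}/N\bigr)$ after dividing by $N$; this is polylogarithmic only when $T^2\le N$. Since the theorem allows $T$ up to $N(\log N)^{-7}$, the paper then runs a separate embedding argument (proof of \Cref{ztlambda4}): it couples the system on $\llbracket N_1,N_2\rrbracket$ to one on a larger interval $\llbracket\tilde N_1,\tilde N_2\rrbracket$ with $\tilde N\ge T^2 N$, applies the $T^2\le\tilde N$ case there, and transfers back via the comparison estimates of \Cref{LCompare} and \Cref{lleigenvalues}. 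Your ``up to polylog error'' hides exactly this loss.
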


	In \Cref{0Speed} below, we will provide a heuristic (following the physics literature) for how \Cref{ztlambda} can be used to evaluate the limiting velocities of the $\lambda_k$, as predicted in \cite{GHCS,HSIMS}. While we do not know how to mathematically justify this heuristic directly, its output will be established for sufficiently small $\theta$ in a sequel work \cite{P}, through a different method.

	\subsection{Notation} 
	
	\label{Notation} 
	
	For any integer $n \ge 1$, denote $\mathbb{T}_n = \mathbb{Z} / n \mathbb{Z}$. For any point $z \in \mathbb{C}$ and set $\mathcal{A} \subseteq \mathbb{C}$, denote $\dist (z, \mathcal{A}) = \inf_{s \in \mathcal{A}} |z-s|$. Denote the complement of any event $\mathsf{E}$ by $\mathsf{E}^{\complement}$. Denote the set of $n \times n$ real matrices by $\Mat_{n \times n}$. For any $\bm{M} \in \Mat_{n \times n}$, denote its transpose by $\bm{M}^{\mathsf{T}}$. Denote the set of $n \times n$ real symmetric matrices by $\SymMat_{n \times n} = \{ \bm{M} \in \Mat_{n \times n} : \bm{M} = \bm{M}^{\mathsf{T}} \}$. 	
	
	As in \Cref{matrixl}, it will often be convenient to index the rows and columns of $n \times n$ matrices by index sets different from $\llbracket 1, n \rrbracket$. Given a nonempty index set $\mathscr{I} \subset \mathbb{Z}$ of size $n = |\mathscr{I}|$, let $\Mat_{\mathscr{I}}$ denote the set of $n \times n$ real matrices $\bm{M} = [M_{ij}]_{i,j \in \mathscr{I}} \in \Mat_{n \times n}$, whose rows and columns are indexed by $\mathscr{I}$; also let $\SymMat_{\mathscr{I}} = \Mat_{\mathscr{I}} \cap \SymMat_{n \times n}$ denote the set of real symmetric matrices whose rows and columns are indexed by $\mathscr{I}$. Given any matrix $\bm{M} \in \Mat_{\mathscr{I}}$ and subset $\mathcal{K} \subseteq \mathscr{I}$, let $\bm{M}^{(\mathcal{K})} = [ M_{ij}^{(\mathcal{K})} ]_{i,j \in \mathscr{I}} \in \Mat_{\mathscr{I}}$ denote the matrix obtained from $\bm{M}$ by setting all its entries in a row or column indexed by an element of $\mathcal{K}$ to $0$, namely, $M_{ij}^{(\mathcal{K})} = \mathbbm{1}_{i \notin \mathcal{K}} \cdot \mathbbm{1}_{i \notin \mathcal{K}}  \cdot M_{ij}$. 
	
	Throughout, given some integer parameter $N \ge 1$ and event $\mathsf{E}_N$ depending on $N$, we say that $\mathsf{E}_N$ \emph{holds with overwhelming probability} if there exists a constant $c > 0$ such that $\mathbb{P} [\mathsf{E}_N^{\complement}] \le c^{-1} e^{-c(\log N)^2}$. In this case, we call $\mathsf{E}_N$ \emph{overwhelmingly probable}. Observe that, whenever proving that $\mathsf{E}_N$ is overwhelmingly probable, we may assume $N \ge N_0$ is sufficiently large; we will often do this implicitly (and without comment) throughout this work.

	\subsection{Outline} 
	
	\label{Outline}
	
	The remainder of this paper is organized as follows. We begin in \Cref{MatrixLattice} with some miscellaneous facts about the Toda lattice and random matrices; the results there are mainly known (and those for which we lack a reference are proven in \Cref{Proof2}). In \Cref{LCompare} we provide comparison estimates for the Toda lattice on different domains, showing \Cref{ajbjequation2} in the process. In \Cref{EigenvalueLocal} we show various results (\Cref{lleigenvalues2} and \Cref{lleigenvalues} below) indicating that eigenvalues of the Lax matrix under thermal equilibrium are approximately local, which will be used in \Cref{ProofsCenter} to prove the results about localization centers stated in  \Cref{Center0}. In \Cref{Gamma} we provide an effective approximation for the rate of exponential decay for entries of random Lax matrix eigenvectors, which is used to establish the asymptotic scattering relation (\Cref{ztlambda}) in \Cref{InverseAsymptotic}.

	\section{Miscellaneous Preliminaries} 
	
	\label{MatrixLattice} 
	
	\subsection{Periodic Toda Lattice} 
	
	\label{ModelPeriodic}

	In this section we recall the periodic Toda lattice on an torus (which will be parallel to \Cref{Open}). Throughout, we fix an integer $N \ge 1$ and a real number $\Upsilon \in \mathbb{R}$ (which will prescribe the torus size and a periodicity constraint, respectively). The state space of the Toda lattice on $\mathbb{T}_N$, also called the \emph{periodic Toda lattice}, with parameter $\Upsilon$ is given by a pair of $N$-tuples of real numbers $( \bm{p}(t); \bm{q}(t) ) \in \mathbb{R}^N \times \mathbb{R}^N$, where $\bm{p}(t) = ( p_0 (t), p_1 (t), \ldots , p_{N-1} (t) )$ and $\bm{q}(t) = ( q_0 (t), q_1 (t), \ldots , q_{N-1} (t) )$; both are by indexed a real number $t \ge 0$ called the time. Given any initial data $( \bm{p} (0); \bm{q}(0) ) \in \mathbb{R}^N \times \mathbb{R}^N$, the joint evolution of $( \bm{p} (t); \bm{q}(t) )$ for $t \ge 0$ is prescribed by the system of differential equations \eqref{qtpt} for all $(j, t) \in \llbracket 0, N-1 \rrbracket \times \mathbb{R}_{\ge 0}$. Here, we have set
	\begin{flalign*} 
		p_{i+N} (t) = p_i (t), \quad \text{and} \quad q_{i+N} (t) = q_i (t) + \Upsilon, \quad \text{for each $(i, t) \in \mathbb{Z} \times \mathbb{R}_{\ge 0}$},
	\end{flalign*}
	
	\noindent which corresponds to the periodicity constraints of considering the Toda lattice on the torus $\mathbb{T}_N$. The system of differential equations \eqref{qtpt} is equivalent to the Hamiltonian dynamics generated by the Hamiltonian $\mathfrak{H}$ from \eqref{hpq} (where again we set $q_N = q_0 + \Upsilon$). The existence and uniqueness of solutions to \eqref{qtpt} for all time $t \ge 0$, under arbitrary initial data $(\bm{p}; \bm{q}) \in \mathbb{R}^N \times \mathbb{R}^N$, is thus a consequence of the Picard--Lindel\"{o}f theorem; see \cite[Theorem 12.6]{OINL}.
	
	We moreover again define $\bm{r} (t) = ( r_0 (t), r_1 (t), \ldots , r_{N-1} (t) )$, as well as the Flaschka variables $\bm{a}(t) = ( a_0 (t), a_1 (t), \ldots , a_{N-1} (t) ) \in \mathbb{R}_{\ge 0}^N$ and $\bm{b}(t) = ( b_0 (t), b_1 (t), \ldots , b_{N-1} (t) ) \in \mathbb{R}^N$, according to \eqref{abr}. Then, \eqref{qtpt} is equivalent to the system \eqref{derivativepa}, for each $(j, t) \in \mathbb{Z} \times \mathbb{R}_{\ge 0}$, where the periodicity constraints become 
	\begin{flalign*} 
		a_i (t) = a_{i+N} (t); \quad b_i (t) = b_{i+N} (t); \quad r_i (t) = r_{i+N} (t); \quad \Upsilon = \displaystyle\sum_{j=1}^N r_i (t),
	\end{flalign*} 
	
	\noindent for each $(i, t) \in \mathbb{Z} \times \mathbb{R}_{\ge 0}$ (where the fact that $\Upsilon$ is independent of $t$ follows from the fact that $\partial_t r_j (t) = 2 a_j(t)^{-1} \cdot \partial_t a_j(t)$, by dividing both sides of the first equation in \eqref{derivativepa} by $a_j (t)$ and then summing over $j \in \llbracket 0, N-1 \rrbracket$). If $0 \in \llbracket N_1, N_2 \rrbracket$, then we can prescribe Toda state space variables $( \bm{p}(t); \bm{q}(t) )$ associated with these Flaschka variables $( \bm{a}(t); \bm{b}(t))$ as in \Cref{Open}. 
	
	Next we recall the Lax matrix associated with the periodic Toda lattice. In what follows, we adopt the notation above, letting $(\bm{a}(t);\bm{b}(t)) \in \mathbb{R}^N \times \mathbb{R}^N$ denote the periodic Toda lattice.
	
	\begin{definition}[Periodic Lax matrix]
		
		\label{matrixl2} 
		
		For any $t \in \mathbb{R}_{\ge 0}$, define the \emph{Lax matrix} $\bm{L}(t) = [ L_{ij}]_{i,j \in \mathscr{I}} = [L_{ij} (t)] \in \SymMat_{\llbracket 0, N-1 \rrbracket}$ as follows, where $L_{ij} = L_{ij} (t)$. Set
		\begin{flalign*} 
			L_{ii} = b_i (t), \qquad \text{for each $i \in \llbracket 0, N-1 \rrbracket$}.
		\end{flalign*} 
		
		\noindent Set
		\begin{flalign*}
			& L_{i,i+1} = L_{i+1,i} = a_i (t), \quad \text{for each $i \in \llbracket 0, N-2 \rrbracket$}; \\
			& L_{0,N-1} = L_{N-1,0} = a_{N-1} (t).
		\end{flalign*} 
		
		\noindent Also set $L_{ij} = 0$ if $(i, j)$ with $i \ne j$ is not of the above form. 
		
	\end{definition} 
	
	As in \Cref{ltt} (for the open model), the eigenvalues of the Lax matrix are preserved under the periodic Toda dynamics \eqref{derivativepa}. This was originally due to \cite{LEI}; see also \cite[Theorem 12.5]{OINL}.
	
	\begin{lem}[\cite{LEI,OINL}]
		
		\label{ltt2}
		
		For any real numbers $t, t' \in \mathbb{R}$, we have $\eig \bm{L} (t) = \eig \bm{L}(t')$. 
		
	\end{lem}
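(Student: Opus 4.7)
The plan is to establish this through the standard Lax pair formalism, which is how Flaschka originally proved it. Define an antisymmetric matrix $\bm{B}(t) = [B_{ij}(t)] \in \Mat_{\llbracket 0, N-1 \rrbracket}$ by setting $B_{i,i+1} = -a_i(t)/2 = -B_{i+1,i}$ for $i \in \llbracket 0, N-2 \rrbracket$, together with the wrap-around entries $B_{N-1,0} = -a_{N-1}(t)/2 = -B_{0,N-1}$, and all other entries zero. This is the natural antisymmetric ``partner'' to $\bm{L}(t)$ on the torus.

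The first step is to verify the \emph{Lax equation} $\partial_t \bm{L}(t) = [\bm{B}(t), \bm{L}(t)] = \bm{B}(t) \bm{L}(t) - \bm{L}(t) \bm{B}(t)$. This is a direct, entry-by-entry calculation using the equations of motion \eqref{derivativepa}. Since both $\bm{L}(t)$ and $\bm{B}(t)$ are tridiagonal (modulo corners), only three types of entries need to be checked: the diagonal entries $(BL - LB)_{ii}$, which expand to $a_{i-1}(t)^2 - a_i(t)^2$ and thus match $\partial_t b_i(t)$; the off-diagonal entries $(BL-LB)_{i,i+1}$, which simplify to $(a_i(t)/2)(b_i(t) - b_{i+1}(t))$ and match $\partial_t a_i(t)$; and the corner entries, whose computation is identical after relabeling indices modulo $N$. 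All other entries of $\partial_t \bm{L}(t)$ and $[\bm{B}, \bm{L}]$ vanish, so the Lax equation holds.

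From the Lax equation the conclusion is almost immediate. Solve the linear matrix ODE $\partial_t \bm{U}(t) = \bm{B}(t) \bm{U}(t)$ with $\bm{U}(0) = \Id$; a solution exists and is unique for all $t$ by Picard--Lindel\"of since $\bm{B}(t)$ is continuous in $t$. Since $\bm{B}(t)^{\mathsf{T}} = -\bm{B}(t)$, a standard calculation gives $\partial_t (\bm{U}(t)^{\mathsf{T}} \bm{U}(t)) = \bm{U}(t)^{\mathsf{T}}(\bm{B}(t)^{\mathsf{T}} + \bm{B}(t)) \bm{U}(t) = 0$, so $\bm{U}(t)$ remains orthogonal. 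Then set $\bm{M}(t) = \bm{U}(t)^{\mathsf{T}} \bm{L}(t) \bm{U}(t)$ and compute
\begin{flalign*}
\partial_t \bm{M}(t) = \bm{U}^{\mathsf{T}} \big( -\bm{B} \bm{L} + [\bm{B}, \bm{L}] + \bm{L} \bm{B} \big) \bm{U} = 0,
\end{flalign*}
so $\bm{M}(t) = \bm{M}(0) = \bm{L}(0)$, i.e.\ $\bm{L}(t) = \bm{U}(t) \bm{L}(0) \bm{U}(t)^{\mathsf{T}}$ is orthogonally conjugate to $\bm{L}(0)$ for every $t$. Since orthogonal conjugation preserves the spectrum (with multiplicity), $\eig \bm{L}(t) = \eig \bm{L}(0) = \eig \bm{L}(t')$. (An alternative closing step would be to note that $\partial_t \Tr \bm{L}(t)^m = m \Tr(\bm{L}^{m-1}[\bm{B}, \bm{L}]) = 0$ for every $m \ge 0$ by cyclicity of the trace, which preserves all power sums of eigenvalues and hence the whole multi-set.)

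There is essentially no hard step here: the entire content of the lemma is the existence of the Lax pair $(\bm{L}, \bm{B})$, and the only bookkeeping of note is handling the corner wrap-around entries $L_{0,N-1}$ and $L_{N-1,0}$ correctly in both the Lax equation check and in verifying that $\bm{B}(t)$ indeed implements the antisymmetric generator on the torus. Since $\bm{B}(t)$ has exactly the same sparsity pattern as $\bm{L}(t)$ on $\mathbb{T}_N$ (tridiagonal with the two ``wrap'' corners), the calculation at the corners is formally identical to that in the interior, so this is a routine verification rather than a genuine obstacle.
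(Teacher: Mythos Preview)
Your proof is correct and follows the standard Lax pair argument that the paper attributes to \cite{LEI} and \cite[Theorem 12.5]{OINL}; the paper itself does not prove this lemma but simply cites it as a known result. The same construction (with the sign of the skew-symmetric partner reversed) appears later in the paper's proof of \Cref{centert}, so your write-up is fully consistent with the framework used there.
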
 
	
	Next we define thermal equilibrium for the periodic Toda lattice (similarly to \Cref{mubeta2}).

	\begin{definition}[Periodic thermal equilibrium]
		
		\label{mubeta} 
		
		Fix $\beta, \theta \in \mathbb{R}_{> 0}$. The \emph{(periodic) thermal equilibrium} with parameters $(\beta, \theta; N)$ is the product measure $\mu = \mu_{\beta, \theta} = \mu_{\beta, \theta; N}$ on $\mathbb{R}^N \times \mathbb{R}^N$ defined by 
		\begin{flalign*}
			\mu (d \bm{a}; d \bm{b}) = \bigg( \displaystyle\frac{2^{1/2} \beta^{\theta+1/2}}{ \pi^{1/2}  \Gamma(\theta)} \bigg)^N \cdot \displaystyle\prod_{j=0}^{N-1} a_j^{2\theta-1} e^{-\beta a_j^2} da_j  \displaystyle\prod_{j=0}^{N-1} e^{-\beta b_j^2/2} db_j,
		\end{flalign*}
		
		\noindent where $\bm{a} = (a_0, a_1, \ldots , a_{N-1}) \in \mathbb{R}_{\ge 0}^N$ and $\bm{b} = (b_0, b_1, \ldots , b_{N-1}) \in \mathbb{R}^N$. 
	\end{definition}

	We now recall the notion of invariance for the periodic Toda lattice. To that end, fix an integer $N \ge 1$; let $\mu$ denote a probability measure on $ \mathbb{R}_{\ge 0}^N \times \mathbb{R}^N$; and sample $(\bm{a}; \bm{b}) \in \mathbb{R}_{\ge 0}^N \times \mathbb{R}^N$ under $\mu$. Consider the periodic Toda lattice in the Flaschka variables \eqref{derivativepa} on $\mathbb{T}_N$, denoted by $( \bm{a}(t); \bm{b}(t) )$, with initial data $( \bm{a}(0); \bm{b}(0) ) = (\bm{a}; \bm{b})$. If the law of $( \bm{a}(t); \bm{b}(t) )$ coincides with the law $\mu$ of $(\bm{a}; \bm{b})$ for each $t \ge 0$, then we say that the measure $\mu$ is \emph{invariant} for the periodic Toda lattice.

	The following (known) lemma states the invariance of the above measure for the periodic Toda lattice. Its proof is a quick consequence of the Liouville theorem, with the fact that the quantities 
	\begin{flalign}
		\label{12q} 
		\sum_{j=0}^{N-1} \log a_j (t), \qquad \text{and} \qquad \sum_{j=0}^{N-1} \big( 2 a_j (t)^2 + b_j(t)^2 \big),
	\end{flalign} 
	
	\noindent are conserved under the Toda dynamics \eqref{derivativepa}; see, for example, \cite{DSRM} for further details.

	\begin{lem}[{\cite[Sections 2 and 3]{DSRM}}]
		
		\label{betathetainvariant}
		
		For any integer $N \ge 1$ and real numbers $\beta, \theta \in \mathbb{R}_{>0}$, the thermal equilibrium $\mu_{\beta,\theta;N}$ is invariant for the periodic Toda lattice on $\mathbb{T}_N$.   
	\end{lem}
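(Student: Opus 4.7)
The plan is to prove invariance by combining Liouville's theorem (the Flaschka flow preserves Lebesgue measure on $\mathbb{R}^{N}_{\ge 0} \times \mathbb{R}^N$) with the observation that the density of $\mu_{\beta,\theta;N}$ is an explicit function of the two conserved quantities listed in \eqref{12q}. Once both ingredients are in place, the invariance of $\mu_{\beta,\theta;N}$ under the Toda flow follows by a direct change-of-variables computation applied to $t$-translation.

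First, I would verify the Liouville-type identity: the Flaschka vector field on $\mathbb{T}_N$ given by \eqref{derivativepa} has zero divergence with respect to $\prod_j da_j\, db_j$. Indeed,
\begin{align*}
\sum_{j=0}^{N-1} \frac{\partial}{\partial a_j}\!\left(\frac{a_j}{2}(b_j-b_{j+1})\right) + \sum_{j=0}^{N-1} \frac{\partial}{\partial b_j}\!\left(a_{j-1}^2 - a_j^2\right) = \frac{1}{2}\sum_{j=0}^{N-1}(b_j - b_{j+1}) + 0 = 0,
\end{align*}
using periodicity in the telescoping sum. Hence the flow on $\mathbb{R}_{\ge 0}^N \times \mathbb{R}^N$ preserves $\prod_j da_j\, db_j$.

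Next, I would check that the two quantities in \eqref{12q} are conserved. The sum $\sum_j (2a_j(t)^2 + b_j(t)^2)$ equals $2 \Tr \bm{L}(t)^2$, which is conserved by \Cref{ltt2}. For $\sum_j \log a_j(t)$, dividing the first equation of \eqref{derivativepa} by $a_j(t)$ and summing gives $\partial_t \sum_j \log a_j(t) = \tfrac{1}{2}\sum_j(b_j(t)-b_{j+1}(t)) = 0$ by periodicity.

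Finally, I would express the density of $\mu_{\beta,\theta;N}$ as
\begin{align*}
\mu(d\bm{a}; d\bm{b}) = C_{\beta,\theta,N}\cdot \exp\!\Bigl((2\theta-1)\sum_{j=0}^{N-1}\log a_j - \tfrac{\beta}{2}\sum_{j=0}^{N-1}(2a_j^2 + b_j^2)\Bigr)\, \prod_{j=0}^{N-1} da_j\, db_j,
\end{align*}
which is a function of the conserved quantities \eqref{12q} times the Lebesgue measure preserved by the flow. So, if $\Phi_t$ denotes the $t$-step Toda flow and $f$ is any bounded test function, then writing $\mathbb{E}_\mu[f(\bm{a}(t);\bm{b}(t))] = \int f(\Phi_t(\bm{a};\bm{b}))\, \rho(\bm{a};\bm{b})\, d\bm{a}\, d\bm{b}$ and changing variables via $(\bm{a}';\bm{b}') = \Phi_t(\bm{a};\bm{b})$, the Jacobian is $1$ by Liouville and $\rho(\Phi_{-t}(\bm{a}';\bm{b}')) = \rho(\bm{a}';\bm{b}')$ by conservation of \eqref{12q}, yielding $\mathbb{E}_\mu[f(\bm{a}(t);\bm{b}(t))] = \mathbb{E}_\mu[f(\bm{a};\bm{b})]$ and completing the proof. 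There is no real obstacle here; the only mild subtlety is ensuring the flow is globally defined on the support of $\mu$, which follows from Picard–Lindel\"of together with the invariance of the level sets of $\sum_j(2a_j^2+b_j^2)$.
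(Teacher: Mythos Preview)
Your proposal is correct and follows exactly the approach the paper sketches: Liouville's theorem for the Flaschka flow combined with the conservation of the two quantities in \eqref{12q}, so that the density of $\mu_{\beta,\theta;N}$ is a function of conserved quantities against a flow-invariant reference measure. One trivial slip: $\sum_j(2a_j^2+b_j^2)$ equals $\Tr\bm{L}^2$, not $2\Tr\bm{L}^2$, but this is immaterial to the argument.
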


	\subsection{Lax Matrix Eigenvector Evolution} 
	
	\label{LinearOpen}

	Before proceeding, we state the following consequence of \Cref{ltt}, to be proven in \Cref{ProofR} below. It indicates that, over time $t \in \mathbb{R}_{\ge 0}$, the maximum magnitude of an entry of the Lax matrix $\bm{L}(t)$ can change by at most a constant factor. 
	
	\begin{lem} 
		
		\label{abltt} 
		
		Adopt the notation of either \Cref{matrixl} or \Cref{matrixl2}, and let $\mathscr{I} = \llbracket N_1, N_2 \rrbracket$ in the first (open) case and $\mathscr{I} = \llbracket 0, N-1 \rrbracket$ in the second (periodic) one. Denote
		\begin{flalign*}
			A(t) = \displaystyle\max_{i \in \mathscr{I}} | a_i (t) |; \qquad B(t) = \displaystyle\max_{i \in \mathscr{I}} | b_i (t) |.
		\end{flalign*} 
		
		\noindent Then, for any $t, t' \in \mathbb{R}$, we have $A(t) + B(t) \le 6 ( A(t') + B(t') )$. 	
	\end{lem}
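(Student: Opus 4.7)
The plan is to observe that the operator norm $\|\bm{L}(t)\|_{\mathrm{op}} = \max_{\lambda \in \eig \bm{L}(t)} |\lambda|$ is a spectral quantity, so by \Cref{ltt} in the open case (or \Cref{ltt2} in the periodic case) it is constant in $t$. I would then sandwich $A(t)+B(t)$ between constant multiples of this preserved quantity, at time $t$ from below and at time $t'$ from above.

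For the lower bound at time $t$, I would test against standard basis vectors. For any $i \in \mathscr{I}$, the diagonal entry satisfies $|b_i(t)| = |e_i^{\mathsf{T}} \bm{L}(t) e_i| \le \|\bm{L}(t)\|_{\mathrm{op}}$, which gives $B(t) \le \|\bm{L}(t)\|_{\mathrm{op}}$. Moreover, the column $\bm{L}(t) e_i$ has at most three nonzero entries $a_{i-1}(t)$, $b_i(t)$, $a_i(t)$ (with appropriate periodic identifications), so
\begin{flalign*}
\|\bm{L}(t)\|_{\mathrm{op}}^2 \ge \|\bm{L}(t) e_i\|^2 = a_{i-1}(t)^2 + b_i(t)^2 + a_i(t)^2 \ge a_i(t)^2,
\end{flalign*}
giving $A(t) \le \|\bm{L}(t)\|_{\mathrm{op}}$, hence $A(t) + B(t) \le 2\|\bm{L}(t)\|_{\mathrm{op}}$. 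For the upper bound at time $t'$, I would apply the Gershgorin-style row-sum bound: each row of $\bm{L}(t')$ has at most three nonzero entries, bounded in magnitude by $A(t')$ (off-diagonal) and $B(t')$ (diagonal), so $\|\bm{L}(t')\|_{\mathrm{op}} \le \max_i \sum_j |L_{ij}(t')| \le 2A(t') + B(t')$.

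Chaining these via the spectral invariance yields
\begin{flalign*}
A(t) + B(t) \le 2\|\bm{L}(t)\|_{\mathrm{op}} = 2\|\bm{L}(t')\|_{\mathrm{op}} \le 2\bigl(2 A(t') + B(t')\bigr) \le 4\bigl(A(t') + B(t')\bigr),
\end{flalign*}
which is stronger than the claimed factor of $6$. The only mild subtlety, and the one to be careful about, is handling the periodic case uniformly with the open case: in the periodic setting the columns $\bm{L}(t) e_0$ and $\bm{L}(t) e_{N-1}$ pick up the wrap-around entry $a_{N-1}(t)$, but both the lower bound (via $\|\bm{L}(t) e_i\|^2 \ge a_i(t)^2$) and the upper row-sum bound remain valid in that case, since they only use that each row/column has at most three nonzero entries bounded by $A(t)$ and $B(t)$. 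No step depends on any probabilistic input, so no further obstacle arises.
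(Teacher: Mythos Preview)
Your proof is correct and follows essentially the same approach as the paper: both exploit the spectral invariance of $\bm{L}(t)$ (\Cref{ltt}/\Cref{ltt2}), bound $\|\bm{L}(t')\|_{\mathrm{op}}$ above by the row-sum $2A(t')+B(t')$, and bound it below by testing against simple vectors. The only difference is in the off-diagonal lower bound: the paper uses the Rayleigh quotient with $v=(e_i+e_j)/\sqrt{2}$ to obtain $|\lambda|\ge\tfrac12|L_{ii}+2L_{ij}+L_{jj}|\ge A(t)-B(t)$, whereas you use the column norm $\|\bm{L}(t)e_i\|\ge|a_i(t)|$; your version is slightly cleaner and yields the sharper constant $4$ in place of $6$.
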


	We next discuss how the eigenvectors of Lax matrices evolve under the open Toda lattice. This discussion will only apply to the open case, so for the remainder of this section we restrict to that case, meaning we consider the Toda lattice on $\llbracket N_1, N_2 \rrbracket$. Then $\bm{L}(t)$ is for any $t \in \mathbb{R}_{\ge 0}$ a real, symmetric, tridiagonal matrix with nonzero off-diagonal entries; as such, its eigenvalues are mutually distinct (see \cite[Proposition 2.40(a)]{PRM}). 
	
	Denote $\eig \bm{L}(t) = (\lambda_1,  \ldots , \lambda_N)$, which is independent of $t$ by \Cref{ltt}. For any $t \in \mathbb{R}_{\ge 0}$ and $j \in \llbracket 1, N \rrbracket$, let $\bm{u}_j (t) = ( u_j (N_1; t),  \ldots , u_j (N_2; t) ) \in \mathbb{R}^N$ denote the nonnegatively normalized, unit eigenvector of $\bm{L}(t)$ with eigenvalue $\lambda_j$ (so that $\bm{L}(t) \cdot \bm{u}_j (t) = \lambda_j \cdot \bm{u}_j (t)$). The following result, due to \cite{FMPL}, identifies the evolution of the first entry of these eigenvectors under the Toda dynamics (indicating that their logarithms evolve linearly, upon suitable normalization).

	\begin{lem}[{\cite[Section 3]{FMPL}}]
		
		\label{uknt} 
		
		For any $t \in \mathbb{R}_{\ge 0}$ and $k \in \llbracket 1, N \rrbracket$, we have 
		\begin{flalign*}
			u_k (N_1; t)^2  = e^{-\lambda_k t}  u_k (N_1; 0)^2 \cdot \Bigg( \displaystyle\sum_{j=1}^N e^{-\lambda_j t}  u_j (N_1; 0)^2 \Bigg)^{-1}.
		\end{flalign*}
		
	\end{lem}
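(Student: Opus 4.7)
The plan is to deduce this formula from the Lax-pair structure of the Toda lattice, following the standard inverse-scattering derivation. The first task is to identify the evolution equation for the eigenvectors. Writing $\bm{L}(t) = \bm{V}(t) \bm{\Lambda} \bm{V}(t)^{\mathsf{T}}$ for the eigendecomposition (with $\bm{\Lambda}$ constant by \Cref{ltt}) and differentiating $\bm{V}\bm{V}^{\mathsf{T}} = \bm{I}$, one gets $\partial_t \bm{L} = [\bm{A}, \bm{L}]$ and $\partial_t \bm{u}_k = \bm{A}\bm{u}_k$ for the antisymmetric matrix $\bm{A}(t) = (\partial_t \bm{V}(t)) \bm{V}(t)^{\mathsf{T}}$. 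Matching $[\bm{A},\bm{L}]$ against the explicit dynamics \eqref{derivativepa}, I will check that $\bm{A}(t)$ is tridiagonal with $A_{i,i+1}(t) = -\tfrac{1}{2} a_i(t)$ and $A_{i+1,i}(t) = \tfrac{1}{2} a_i(t)$; this is a short verification that $[\bm{A}, \bm{L}]_{ii} = a_{i-1}^2 - a_i^2$ and $[\bm{A},\bm{L}]_{i,i+1} = \tfrac{a_i}{2}(b_i - b_{i+1})$.

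Next, I evaluate $\partial_t \bm{u}_k$ at the boundary row $i = N_1$. Since $\bm{A}$ is tridiagonal and $A_{N_1,N_1-1}$ does not exist, only one term survives:
\begin{flalign*}
\partial_t u_k (N_1; t) = -\tfrac{1}{2} a_{N_1}(t) \cdot u_k (N_1+1; t).
\end{flalign*}
The boundary row of the eigenvalue equation $\bm{L}(t) \bm{u}_k(t) = \lambda_k \bm{u}_k(t)$ reads $b_{N_1}(t) u_k(N_1; t) + a_{N_1}(t) u_k(N_1+1; t) = \lambda_k u_k(N_1; t)$, so $a_{N_1}(t) u_k(N_1+1;t) = (\lambda_k - b_{N_1}(t)) u_k(N_1;t)$. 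Combining these and multiplying by $2u_k(N_1;t)$ yields the scalar ODE
\begin{flalign*}
\partial_t u_k(N_1; t)^2 = -\bigl(\lambda_k - b_{N_1}(t)\bigr) \cdot u_k(N_1; t)^2,
\end{flalign*}
which integrates to $u_k(N_1;t)^2 = u_k(N_1; 0)^2 \cdot e^{-\lambda_k t} \cdot e^{\int_0^t b_{N_1}(s)\, ds}$.

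Finally, the factor $e^{\int_0^t b_{N_1}(s)\,ds}$ is independent of $k$, so it can be pinned down by a normalization. Since the rows of the orthogonal eigenvector matrix $\bm{V}(t)$ also form an orthonormal basis, $\sum_{k=1}^N u_k(N_1; t)^2 = 1$ for every $t$. Summing the formula above over $k$ gives $e^{\int_0^t b_{N_1}(s)\,ds} = \bigl(\sum_j e^{-\lambda_j t} u_j(N_1; 0)^2\bigr)^{-1}$, and plugging this back in recovers the claim. The main technical point is simply the correct identification of $\bm{A}$ (with the right sign of $\tfrac{1}{2} a_i$), after which every step is algebraic; there is no probabilistic or asymptotic obstacle, as this lemma is an exact deterministic identity valid for every initial data.
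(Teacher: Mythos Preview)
Your argument is correct and is precisely the classical derivation due to Moser, which is what the paper cites; the paper itself does not reproduce a proof of this lemma, stating it as a known result from \cite[Section 3]{FMPL}. One small point worth making explicit: you define $\bm{A}(t) = (\partial_t \bm{V}(t))\bm{V}(t)^{\mathsf{T}}$ from the eigenvector evolution and then identify it with the explicit tridiagonal skew-symmetric matrix by checking that the latter also satisfies $\partial_t \bm{L} = [\bm{A},\bm{L}]$; for this identification to be valid you need uniqueness of the skew-symmetric solution to $\partial_t \bm{L} = [\cdot,\bm{L}]$, which holds here because $\bm{L}(t)$ has simple spectrum (so any commutant is a polynomial in $\bm{L}$, hence symmetric, hence zero if also skew). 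Equivalently, and perhaps more directly, one can start from the explicit tridiagonal $\bm{A}$, solve $\partial_t \bm{V} = \bm{A}\bm{V}$ with $\bm{V}(0)=\Id$ (as the paper in fact does in the proof of \Cref{centert}), observe that $\bm{V}(t)\bm{u}_k(0)$ is a unit eigenvector of $\bm{L}(t)$ with eigenvalue $\lambda_k$, and note that it must equal $\pm\bm{u}_k(t)$; since the statement concerns $u_k(N_1;t)^2$, the sign is irrelevant.
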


	\subsection{Resolvents} 
	
	\label{ResolventG}
	
	In this section we recall properties of resolvents. Throughout, we let $\mathscr{I} \subset \mathbb{Z}$ denote a nonempty, finite set of indices, and set $n = |\mathcal{I}|$. Let $\bm{M} = [M_{ij}] \in \Mat_{\mathscr{I}}$. Given $z \in \mathbb{C}$, the associated \emph{resolvent} of $\bm{M}$ is $\bm{G}(z) = (\bm{M}-z)^{-1} = [ G_{ij} (z) ] = [ G_{i,j} (z) ]$, assuming that this inverse exists. Denote $\eig \bm{M} = (\lambda_1, \lambda_2, \ldots , \lambda_n)$, and let $(\bm{u}_1, \bm{u}_2, \ldots , \bm{u}_n)$ denote an orthonormal family of eigenvectors of $\bm{M}$, so that $\bm{u}_k = ( u_k (i) )_{i \in \mathscr{I}}$ is an eigenvector of $\bm{M}$ with eigenvalue $\lambda_k$. Then, for any $i, j \in \mathscr{I}$, observe that  
	\begin{flalign}
		\label{gsum} 
		G_{ij} (z) = \displaystyle\sum_{k = 1}^n \displaystyle\frac{u_k (i) u_k (j)}{\lambda_k - z},
	\end{flalign}
	
	\noindent which by the orthonormality of the $(\bm{u}_k)$ implies for $\Imaginary z \ne 0$ that 
	\begin{flalign}
		\label{gijeta} 
		| G_{ij} (z) | \le |\Imaginary z|^{-1}. 
	\end{flalign}
	
	\noindent Further observe, for any invertible matrices $\bm{A}, \bm{B} \in \Mat_{n \times n}$, that we have   
	\begin{flalign}
		\label{ab} 
		\bm{A}^{-1} - \bm{B}^{-1} = \bm{A}^{-1} (\bm{B}-\bm{A}) \bm{B}^{-1} = \bm{B}^{-1} (\bm{B} - \bm{A}) \bm{A}^{-1}.
	\end{flalign}
	
	The following lemma indicates that, if the resolvents of two matrices are close, then their eigenvalues are as well. Its proof will appear in \Cref{ProofR} below.
	
	\begin{lem}
		
		\label{abgh} 
		
		Fix an index $\varphi \in \mathscr{I}$; real numbers $\eta, \zeta, \delta > 0$; and symmetric $n \times n$ matrices $\bm{A}, \bm{B} \in \SymMat_{\mathscr{I}}$. Fix an eigenvalue $\lambda \in \eig \bm{A}$, with an associated unit eigenvector $\bm{u} = ( u(i) )_{i \in \mathscr{I}} \in \mathbb{R}^n$; denote $z = \lambda + \mathrm{i} \eta$; and set $\bm{G} = [G_{ij}] = (\bm{A}-z)^{-1}$ and $\bm{H} = [H_{ij}] = (\bm{B}-z)^{-1}$. Assume that 
		\begin{flalign}
			\label{deltaetachi} 
			\delta \le (2\eta)^{-1} \zeta^2; \qquad | u(\varphi) | \ge \zeta; \qquad \big| G_{\varphi \varphi} (z) - H_{\varphi \varphi} (z) \big| \le \delta.
		\end{flalign}
		
		\noindent Then there exists an eigenvalue $\mu \in \eig \bm{B}$, with an associated unit eigenvector $\bm{v} = ( v(i) )_{i \in \mathscr{I}} \in \mathbb{R}^n$, such that $|\lambda-\mu| \le 3 n \zeta^{-2} \eta$ and $| v(\varphi) | \ge (6n)^{-1/2} \zeta$.
		
	\end{lem}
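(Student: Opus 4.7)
The plan is to use the spectral expansion \eqref{gsum} of the resolvent and the assumed proximity of $G_{\varphi\varphi}(z)$ and $H_{\varphi\varphi}(z)$ to compare the imaginary parts of these diagonal resolvent entries; then translate a lower bound on $\Imaginary H_{\varphi\varphi}(z)$ into the existence of an eigenvalue of $\bm{B}$ near $\lambda$ whose eigenvector has appreciable mass at $\varphi$.

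First, writing $\bm{u}$ as $\bm{u}_{k_0}$ in an orthonormal eigenbasis of $\bm{A}$ (so that $\lambda_{k_0} = \lambda$), the spectral expansion \eqref{gsum} gives
\begin{flalign*}
\Imaginary G_{\varphi\varphi}(z) = \displaystyle\sum_{k} \frac{\eta \, u_k(\varphi)^2}{(\lambda_k - \lambda)^2 + \eta^2} \ge \frac{u(\varphi)^2}{\eta} \ge \frac{\zeta^2}{\eta},
\end{flalign*}
where only the $k = k_0$ term was retained. The assumed bound $|G_{\varphi\varphi}(z) - H_{\varphi\varphi}(z)| \le \delta \le (2\eta)^{-1}\zeta^2$ then forces $\Imaginary H_{\varphi\varphi}(z) \ge \zeta^2/(2\eta)$.

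Second, I would expand $\bm{H}(z)$ in the orthonormal eigenbasis $(\bm{v}_1,\ldots,\bm{v}_n)$ of $\bm{B}$, with eigenvalues $(\mu_1, \ldots, \mu_n)$, yielding
\begin{flalign*}
\frac{\zeta^2}{2\eta} \le \Imaginary H_{\varphi\varphi}(z) = \displaystyle\sum_{j} \frac{\eta \, v_j(\varphi)^2}{(\mu_j - \lambda)^2 + \eta^2}.
\end{flalign*}
Setting $R = 3n\zeta^{-2}\eta$ and splitting the sum at $|\mu_j - \lambda| = R$, the contribution from indices with $|\mu_j - \lambda| > R$ is at most $\eta R^{-2} \sum_j v_j(\varphi)^2 = \eta/R^2 = \zeta^4/(9n^2\eta)$, which is bounded by $\zeta^2/(6\eta)$ since $\zeta \le 1$ (forced by $\zeta \le |u(\varphi)| \le 1$). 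Hence
\begin{flalign*}
\displaystyle\sum_{j:\,|\mu_j-\lambda|\le R} v_j(\varphi)^2 \ge \eta \left( \frac{\zeta^2}{2\eta} - \frac{\zeta^2}{6\eta} \right) = \frac{\zeta^2}{3}.
\end{flalign*}
Since there are at most $n$ indices in this sum, by pigeonhole there exists some index $j_0$ with $|\mu_{j_0} - \lambda| \le R$ and $v_{j_0}(\varphi)^2 \ge \zeta^2/(3n) \ge \zeta^2/(6n)$. Taking $\mu = \mu_{j_0}$ and $\bm{v} = \bm{v}_{j_0}$ yields the conclusion.

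The argument is essentially a standard resolvent-to-eigenvalue extraction, and no step looks genuinely hard; the only subtlety is keeping the constants aligned with the $3n\zeta^{-2}\eta$ and $(6n)^{-1/2}\zeta$ in the statement, which is achieved by choosing $R = 3n\zeta^{-2}\eta$ and using $\zeta \le 1$ to absorb the tail contribution into a small fraction of $\Imaginary H_{\varphi\varphi}(z)$.
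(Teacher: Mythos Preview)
Your proof is correct and follows essentially the same approach as the paper: lower-bound $\Imaginary G_{\varphi\varphi}(z)$ via the spectral expansion, transfer to $H_{\varphi\varphi}(z)$ using the closeness assumption, and extract a nearby eigenvalue of $\bm{B}$ with eigenvector mass at $\varphi$. The only cosmetic differences are that the paper works with $|H_{\varphi\varphi}|$ rather than its imaginary part and argues by contradiction rather than your direct pigeonhole; your use of orthonormality ($\sum_j v_j(\varphi)^2 = 1$) to bound the tail is in fact slightly sharper than the paper's termwise bound $|v_j(\varphi)|^2 \le 1$.
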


	\subsection{Eigenvectors of Tridiagonal Matrices}
	
	\label{EigenvectorM} 
	
	In this section we provide (mostly known) facts about the eigenvectors of tridiagonal matrices. Throughout, we let $\bm{M} = [M_{ij}] \in \Mat_{\llbracket N_1, N_2 \rrbracket}$ denote a tridiagonal, real symmetric $N \times N$ matrix, where $N=N_2-N_1+1$; we assume that $M_{i,i+1} \ne 0$ for each $i \in \llbracket N_1, N_2 - 1 \rrbracket$. The below lemma expresses certain eigenvector entries of $\bm{M}$ in terms of the matrix entries of $\bm{M}$; it was originally due to \cite{DSRL}, but we provide its proof in \Cref{ProofMatrixS} below.
	
	\begin{lem}[{\cite[Equation (4)]{DSRL}}]
		
		\label{n1n2u}
		
		Let $\mu \in \eig \bm{M}$, and let $\bm{u} = (u_{N_1}, \ldots , u_{N_2}) \in \mathbb{R}^N$ denote a unit eigenvector of $\bm{M}$ with eigenvalue $\mu$. Then, 
		\begin{flalign*}
			\log |u_{N_1}| + \log |u_{N_2}| = \displaystyle\sum_{i=N_1}^{N_2-1} \log |M_{i,i+1}| - \displaystyle\sum_{\mu' \in \eig \bm{M} \setminus \{ \mu \}} \log |\mu - \mu'|.
		\end{flalign*}
		
	\end{lem}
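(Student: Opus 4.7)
The plan is to compute the $(N_1, N_2)$ entry of the resolvent $\bm{G}(z) = (z\bm{I} - \bm{M})^{-1}$ in two ways, then compare the residues at $z = \mu$. Because $\bm{M}$ is tridiagonal with nonzero off-diagonal entries, its eigenvalues are simple (as already noted in the paragraph preceding \Cref{uknt}), so residues and corresponding eigenvector entries are well-defined.

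First, I would apply Cramer's rule: $G_{N_1, N_2}(z)$ equals the $(N_2, N_1)$ cofactor of $z\bm{I} - \bm{M}$ divided by $p(z) := \det(z\bm{I} - \bm{M})$. The minor in question is the $(N-1) \times (N-1)$ matrix obtained by deleting row $N_2$ and column $N_1$ of $z \bm{I} - \bm{M}$. Because $\bm{M}$ is tridiagonal, a direct inspection shows this minor is triangular with diagonal entries $-M_{N_1, N_1+1}, -M_{N_1+1, N_1+2}, \ldots, -M_{N_2-1, N_2}$, so its determinant is $(-1)^{N-1} \prod_{i = N_1}^{N_2 - 1} M_{i,i+1}$. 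Combining with the sign $(-1)^{N_1+N_2}$ from the cofactor formula (and noting $N_1 + N_2 + N - 1$ is even), one obtains
\begin{flalign*}
    G_{N_1, N_2}(z) = \frac{\prod_{i=N_1}^{N_2-1} M_{i,i+1}}{p(z)}.
\end{flalign*}

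Second, the spectral decomposition \eqref{gsum} gives
\begin{flalign*}
    G_{N_1, N_2}(z) = \sum_{\mu' \in \eig \bm{M}} \frac{u^{(\mu')}(N_1) \, u^{(\mu')}(N_2)}{\mu' - z},
\end{flalign*}
where $\bm{u}^{(\mu')}$ denotes the unit eigenvector of $\bm{M}$ with eigenvalue $\mu'$. Taking residues at $z = \mu$ in both expressions (using that $\mu$ is a simple eigenvalue, so $p'(\mu) = \prod_{\mu' \neq \mu}(\mu - \mu') \ne 0$) equates
\begin{flalign*}
    u(N_1) \, u(N_2) = \frac{\prod_{i = N_1}^{N_2 - 1} M_{i,i+1}}{\prod_{\mu' \in \eig \bm{M} \setminus \{\mu\}} (\mu - \mu')}.
\end{flalign*}
Taking absolute values and logarithms of both sides yields the lemma.

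There is no substantial obstacle: the only care required is in the sign/parity bookkeeping for the cofactor computation, and in invoking simplicity of $\mu$ to identify the residue of $G_{N_1,N_2}$ at $z = \mu$ with the product $u(N_1) u(N_2)$.
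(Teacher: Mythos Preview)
Your proposal is correct and follows essentially the same approach as the paper: compute $G_{N_1 N_2}$ via Cramer's rule using that the relevant minor of the tridiagonal matrix is triangular, compare with the spectral expansion \eqref{gsum}, and extract the residue at $z=\mu$. The only cosmetic differences are the sign convention for the resolvent (you use $(z\bm{I}-\bm{M})^{-1}$ while the paper uses $(\bm{M}-E)^{-1}$) and that the paper phrases the residue extraction as a limit $\lim_{E\to\mu}(\mu-E)G_{N_1 N_2}(E)$; note that with your convention the spectral sum should read $\sum_{\mu'}\frac{u^{(\mu')}(N_1)u^{(\mu')}(N_2)}{z-\mu'}$ rather than with $\mu'-z$ in the denominator, but as you observe this is immaterial once absolute values are taken.
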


	Next we discuss more precise behavior for the eigenvectors of $\bm{M}$ through the method of transfer matrices, following \cite[Section 4]{APO}. For any integer $k \in \llbracket N_1, N_2-1 \rrbracket$ and real number $E \in \mathbb{R}$, define the \emph{transfer matrix} $\bm{S}_k (E) = \bm{S}_k (E; \bm{M}) \in \Mat_{2\times2}$ by setting 
	\begin{flalign}
		\label{ks} 
		\bm{S}_k (E)= \left[ \begin{array}{cc} 0 & 1 \\ -M_{k,k+1}^{-1} M_{k-1,k} & M_{k,k+1}^{-1} (E - M_{k,k}) \end{array} \right].
	\end{flalign}
	
	\noindent Moreover, for any subset $\mathcal{K} = (k_1, k_2, \ldots , k_m) \subseteq \llbracket N_1, N_2 - 1 \rrbracket$ with $k_1 < k_2 < \cdots < k_m$, define $\bm{S}_{\mathcal{K}} (E) = \bm{S}_{\mathcal{K}} (E; \bm{M}) \in \Mat_{2 \times 2}$ by setting
	\begin{flalign}
		\label{k2s} 
		\bm{S}_{\mathcal{K}} (E) = \bm{S}_{k_m} (E) \cdot \bm{S}_{k_{m-1}} (E) \cdots \bm{S}_{k_1} (E). 
	\end{flalign} 
	
	The following (standard) lemma indicates how the above transfer matrices can be used to evaluate eigenvector entries of $\bm{M}$. Its proof is given in \Cref{ProofMatrixS} below.
	
	\begin{lem}
		
		\label{smatrixu} 
		
		Let $\mu \in \eig \bm{M}$, and let $\bm{u} = (u_{N_1}, u_{N_1+1}, \ldots , u_{N_2}) \in \mathbb{R}^N$ denote an eigenvector of $\bm{M}$ with eigenvalue $\mu$. For each $k \in \llbracket N_1, N_2 \rrbracket$, set $\bm{w}_k = (u_{k-1}, u_k) \in \mathbb{R}^2$, where $u_{N_1-1} = 0$. Then $\bm{S}_{\llbracket i,j \rrbracket} (\mu) \cdot \bm{w}_i = \bm{w}_{j+1}$ for any $i, j \in \llbracket N_1, N_2 - 1 \rrbracket$ with $i \le j$. 
		
	\end{lem}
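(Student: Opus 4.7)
The proof is a direct verification that a single transfer matrix advances the pair $\bm{w}_k = (u_{k-1}, u_k)$ to $\bm{w}_{k+1} = (u_k, u_{k+1})$, followed by iteration. The plan is as follows.

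First I would write out the eigenvector equation $(\bm{M} - \mu) \bm{u} = 0$ row-by-row. Since $\bm{M}$ is tridiagonal and symmetric, row $k \in \llbracket N_1, N_2 \rrbracket$ reads
\begin{flalign*}
M_{k-1,k} u_{k-1} + M_{k,k} u_k + M_{k,k+1} u_{k+1} = \mu u_k,
\end{flalign*}
with the boundary convention $u_{N_1 - 1} = 0 = M_{N_1 - 1, N_1}$ (so that row $N_1$ degenerates correctly). Since $M_{k,k+1} \ne 0$ for $k \in \llbracket N_1, N_2-1 \rrbracket$, we may solve for
\begin{flalign*}
u_{k+1} = -M_{k,k+1}^{-1} M_{k-1,k} u_{k-1} + M_{k,k+1}^{-1}(\mu - M_{k,k}) u_k.
\end{flalign*}
Comparing with the definition \eqref{ks} of $\bm{S}_k(\mu)$, this is precisely the statement that the second coordinate of $\bm{S}_k(\mu) \cdot (u_{k-1}, u_k)^{\mathsf{T}}$ equals $u_{k+1}$; the first coordinate is automatically $u_k$ from the top row of $\bm{S}_k(\mu)$. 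Hence $\bm{S}_k(\mu) \cdot \bm{w}_k = \bm{w}_{k+1}$ for every $k \in \llbracket N_1, N_2 - 1 \rrbracket$, which handles the base case $i = j$ of the lemma.

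To conclude the general case $i \le j$, I would induct on $j - i$. Given the claim for $j - 1$, i.e., $\bm{S}_{\llbracket i, j-1 \rrbracket}(\mu) \cdot \bm{w}_i = \bm{w}_j$, applying $\bm{S}_j(\mu)$ to both sides and using the single-step identity together with the definition \eqref{k2s} (as an ordered product of single transfer matrices, with the later indices placed on the left) gives $\bm{S}_{\llbracket i, j \rrbracket}(\mu) \cdot \bm{w}_i = \bm{S}_j(\mu) \cdot \bm{w}_j = \bm{w}_{j+1}$, completing the induction. There is no real obstacle here; the only thing to be careful about is the ordering convention in \eqref{k2s} and the boundary treatment of $u_{N_1 - 1}$ and $M_{N_1-1, N_1}$ at the left edge, which must be consistently set to zero in order for the base case at $i = N_1$ to match the definition of $\bm{S}_{N_1}(\mu)$.
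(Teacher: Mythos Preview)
Your proof is correct and follows essentially the same approach as the paper: derive the single-step relation $\bm{S}_k(\mu)\cdot\bm{w}_k=\bm{w}_{k+1}$ from the $k$-th row of the eigenvector equation, then induct on $j-i$. Your additional remark about the boundary term $M_{N_1-1,N_1}$ is harmless, since that entry multiplies $u_{N_1-1}=0$ and hence its value is irrelevant.
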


	The next lemma expresses entries of $\bm{S}_{\llbracket i, j \rrbracket} (E)$ through eigenvalues of truncations of $\bm{M}$; its proof appears in \cite{OINL}, but we also provide it in \Cref{ProofMatrixS} below. In what follows, for any integers $k, \ell \in \llbracket N_1, N_2 \rrbracket$ with $k \le \ell$, we let $\bm{M}^{[k, \ell]}$ denote the $(\ell-k+1) \times (\ell-k+1)$ matrix obtained by restricting $\bm{M}$ to rows and columns indexed by $i, j \in \llbracket k, \ell \rrbracket$; stated equivalently, for each $i, j \in \llbracket k, \ell \rrbracket$, the $(i, j)$-entry of $\bm{M}^{[k,\ell]}$ is equal to the $(i,j)$-entry of $\bm{M}$.
	
	\begin{lem}[{\cite[Equation (1.65)]{OINL}}]
		
		\label{sij} 
		
		Set $\eig \bm{M}^{[i,j]} = ( \mu_1^{[i,j]}, \mu_2^{[i, j]}, \ldots , \mu_{j-i+1}^{[i, j]} )$ for any $i, j \in \llbracket N_1, N_2 \rrbracket$ with $i \le j$. Then, for any $i, j \in \llbracket N_1, N_2 - 1 \rrbracket$, denoting $\ell = j - i + 1$ we have
		\begin{flalign*}
			\bm{S}_{\llbracket i, j \rrbracket} (E) = \left[ \begin{array}{cc} - M_{i-1,i}  \displaystyle\prod_{k=i}^{j-1} M_{k,k+1}^{-1}  \displaystyle\prod_{h=1}^{\ell-2} ( E - \mu_h^{[i+1,j-1]} ) & \displaystyle\prod_{k=i}^{j-1} M_{k,k+1}^{-1}  \displaystyle\prod_{h=1}^{\ell-1} ( E - \mu_h^{[i,j-1]} ) \\
				- M_{i-1,i}  \displaystyle\prod_{k=i}^{j} M_{k,k+1}^{-1}   \displaystyle\prod_{h=1}^{\ell-1} ( E - \mu_h^{[i+1,j]} ) & \displaystyle\prod_{k=i}^{j} M_{k,k+1}^{-1} \displaystyle\prod_{h=1}^{\ell} ( E - \mu_h^{[i, j]} )  \end{array} \right].
		\end{flalign*}
		
	\end{lem}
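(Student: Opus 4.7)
The plan is to read off both columns of $\bm{S}_{\llbracket i, j \rrbracket}(E)$ by applying \Cref{smatrixu} to two carefully chosen initial vectors $(u_{i-1}, u_i)$, then identify the resulting sequences with scaled characteristic polynomials of the truncations $\bm{M}^{[i, k]}$ and $\bm{M}^{[i+1, k]}$ via a three-term recurrence argument.

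More precisely, let $(u_k)_{k \geq i-1}$ denote any sequence satisfying the linear recurrence
\begin{flalign*}
u_{k+1} = M_{k, k+1}^{-1} \bigl[ (E - M_{k,k}) u_k - M_{k-1, k} u_{k-1} \bigr], \qquad k \geq i,
\end{flalign*}
which is precisely the row of $(\bm{M} - E)\bm{u} = 0$ at index $k$, rearranged using the fact that $M_{k,k+1} \neq 0$. By \Cref{smatrixu}, $\bm{S}_{\llbracket i, j \rrbracket}(E) \cdot (u_{i-1}, u_i)^{\mathsf{T}} = (u_j, u_{j+1})^{\mathsf{T}}$. The second column of $\bm{S}_{\llbracket i, j \rrbracket}(E)$ is obtained from $(u_{i-1}, u_i) = (0, 1)$, while the first column is obtained from $(u_{i-1}, u_i) = (1, 0)$.

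For the second column, I would show by induction on $k$ that, starting from $(0, 1)$,
\begin{flalign*}
u_k = \prod_{h=i}^{k-1} M_{h, h+1}^{-1} \cdot \det \bigl( E - \bm{M}^{[i, k-1]} \bigr), \qquad k \geq i,
\end{flalign*}
with the convention $\det(E - \bm{M}^{[i, i-1]}) = 1$. The base cases $k = i, i+1$ are direct. The inductive step follows from the standard three-term recurrence for characteristic polynomials obtained by expanding $\det(E - \bm{M}^{[i, k]})$ along its last row,
\begin{flalign*}
\det \bigl( E - \bm{M}^{[i, k]} \bigr) = (E - M_{k,k}) \det \bigl( E - \bm{M}^{[i, k-1]} \bigr) - M_{k-1, k}^2 \det \bigl( E - \bm{M}^{[i, k-2]} \bigr),
\end{flalign*}
which, after multiplying by $\prod_{h=i}^{k} M_{h, h+1}^{-1}$, reproduces the recurrence for $u_{k+1}$. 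Setting $k = j$ and $k = j+1$ and writing $\det(E - \bm{M}^{[i, k-1]}) = \prod_h (E - \mu_h^{[i, k-1]})$ yields the $(1, 2)$ and $(2, 2)$ entries. For the first column, I observe that, starting from $(u_{i-1}, u_i) = (1, 0)$, the resulting sequence for $k \geq i$ is equal to $-M_{i-1, i} M_{i, i+1}^{-1}$ times the sequence produced by the same recurrence with starting index $i+1$ and initial condition $(0, 1)$; applying the previous formula with $i$ replaced by $i+1$ then gives, for $k \geq i+1$,
\begin{flalign*}
u_k = -M_{i-1, i} \prod_{h=i}^{k-1} M_{h, h+1}^{-1} \cdot \det \bigl( E - \bm{M}^{[i+1, k-1]} \bigr),
\end{flalign*}
and substituting $k = j$ and $k = j+1$ yields the $(1, 1)$ and $(2, 1)$ entries.

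This proof is essentially a bookkeeping exercise: the only potential obstacle is ensuring the indices line up correctly across the three conventions (transfer matrix products, the eigenvalue recurrence for $u_k$, and cofactor expansion of $\det(E - \bm{M}^{[i, k]})$), and verifying the degenerate cases where $\ell$ is small so that some products are empty. There is no real analytic difficulty here, since it is a purely algebraic identity valid for any tridiagonal symmetric matrix with nonzero off-diagonal entries.
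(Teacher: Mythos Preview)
Your proof is correct and takes a somewhat different route from the paper's. You compute both columns of $\bm{S}_{\llbracket i,j\rrbracket}(E)$ directly by propagating the initial vectors $(0,1)$ and $(1,0)$ through the three-term recurrence, and identify the resulting sequences with (scaled) characteristic polynomials of the truncations via the cofactor expansion $\det(E-\bm{M}^{[i,k]}) = (E-M_{k,k})\det(E-\bm{M}^{[i,k-1]}) - M_{k-1,k}^2\det(E-\bm{M}^{[i,k-2]})$. The paper instead first argues from the form of \eqref{ks} and \eqref{k2s} that each entry is a monic polynomial of the appropriate degree times the correct product of $M_{k,k+1}^{-1}$, then identifies the $(2,2)$ entry by showing it vanishes at each $\mu\in\eig\bm{M}^{[i,j]}$ (applying the transfer matrix to an eigenvector of $\bm{M}^{[i,j]}$), and finally reads off the other three entries from the relations between $\bm{S}_{\llbracket i,j\rrbracket}$, $\bm{S}_{\llbracket i,j-1\rrbracket}$, and $\bm{S}_{\llbracket i+1,j\rrbracket}$. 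Your approach is arguably the more direct of the two; the paper's root-finding argument is a bit more conceptual but requires separately justifying the monic-polynomial structure. One small remark: \Cref{smatrixu} as stated in the paper assumes $\mu\in\eig\bm{M}$, so strictly speaking you are using the (immediate) extension of its proof to arbitrary $E$, which you correctly note follows from the definition of $\bm{S}_k(E)$ alone.
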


	\subsection{Spectral Behavior of Random Lax Matrices} 
	
	\label{Localization}

	In this section we describe spectral properties of Lax matrices whose Flaschka variables are sampled from thermal equilibrium. In what follows, we fix real numbers $\beta, \theta > 0$; the constants below might depend on them, even if not stated explicitly. 
	
	The below two lemmas indicate that the stretch parameter $\alpha$ from \Cref{lbetaeta} prescribe the average distance between particles in the Toda lattice under thermal equilibrium. We establish them in \Cref{ProofIntegral} below.
	
	\begin{lem}
		
		\label{aintegral} 
		
		Let $\mathfrak{a} > 0$ be a random variable with law $\mathbb{P} [\mathfrak{a} \in (a,a+da)] = 2 \beta^{\theta} \cdot \Gamma(\theta)^{-1} \cdot a^{2\theta-1} e^{-\beta a^2} da$. Denoting $\mathfrak{a} = e^{-\mathfrak{r}/2}$, we have that $\mathbb{E} [\mathfrak{r}] = \alpha$.
	\end{lem}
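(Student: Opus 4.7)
The plan is a direct computation: since $\mathfrak{r} = -2\log \mathfrak{a}$, I need to show that $\mathbb{E}[\log \mathfrak{a}] = \frac{1}{2}\bigl(\Gamma'(\theta)/\Gamma(\theta) - \log \beta\bigr)$, and then read off $\mathbb{E}[\mathfrak{r}] = \alpha$ from the definition \eqref{alpha}.

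First, I would write out
\begin{flalign*}
\mathbb{E}[\log \mathfrak{a}] = \frac{2\beta^{\theta}}{\Gamma(\theta)} \int_0^{\infty} (\log a) \, a^{2\theta-1} e^{-\beta a^2} \, da,
\end{flalign*}
and perform the substitution $u = \beta a^2$, so that $a^{2\theta-1} da = \frac{u^{\theta-1}}{2\beta^{\theta}} du$ and $\log a = \frac{1}{2}(\log u - \log \beta)$. This transforms the integral into
\begin{flalign*}
\mathbb{E}[\log \mathfrak{a}] = \frac{1}{2\Gamma(\theta)} \int_0^{\infty} (\log u - \log \beta)\, u^{\theta-1} e^{-u}\, du.
\end{flalign*}

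Next, I would recognize the two resulting integrals through the gamma function and its derivative: $\int_0^{\infty} u^{\theta-1} e^{-u} du = \Gamma(\theta)$, and differentiating under the integral sign in $\theta$ (which is justified by standard dominated convergence since $u^{\theta-1} |\log u| e^{-u}$ is integrable on $(0,\infty)$ for $\theta > 0$) gives $\int_0^{\infty} (\log u)\, u^{\theta-1} e^{-u} du = \Gamma'(\theta)$. Substituting these yields $\mathbb{E}[\log \mathfrak{a}] = \frac{1}{2}\bigl(\Gamma'(\theta)/\Gamma(\theta) - \log \beta\bigr)$, and therefore $\mathbb{E}[\mathfrak{r}] = -2\mathbb{E}[\log \mathfrak{a}] = \log \beta - \Gamma'(\theta)/\Gamma(\theta) = \alpha$.

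There is no real obstacle here; the only technical point worth noting is the justification for differentiating the gamma integral under the integral sign, which is entirely routine. The lemma is essentially a calculation identifying the digamma function $\Gamma'(\theta)/\Gamma(\theta)$ as (up to sign and shift by $\log \beta$) the mean logarithm of a gamma-distributed square root.
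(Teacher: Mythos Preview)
Your proof is correct and follows essentially the same approach as the paper's: both reduce the computation to the identity $\int_0^\infty (\log u)\,u^{\theta-1}e^{-u}\,du = \Gamma'(\theta)$ via a substitution to the gamma integral. The only cosmetic difference is that the paper first changes variables to the density of $\mathfrak{r}$ and packages the computation as the logarithmic derivative $-F'(\theta)/F(\theta)$ of the normalizing constant $F(\theta)=\beta^{-\theta}\Gamma(\theta)$, whereas you compute $\mathbb{E}[\log\mathfrak{a}]$ directly from the density of $\mathfrak{a}$; the substance is identical.
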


	\begin{lem}
		
		\label{qij} 
		
		Adopt \Cref{lbetaeta}. There exists a constant $c > 0$ such that the following holds. For any distinct indices $i, j \in \llbracket N_1, N_2 \rrbracket$ and real number $R \ge 1$, we have 
		\begin{flalign*}
			\mathbb{P} \big[ | q_j (0) - q_i (0) - \alpha(j-i) | \ge R \big] \le 2 (e^{-cR^2/|i-j|} + e^{-cR}).
		\end{flalign*}
	\end{lem}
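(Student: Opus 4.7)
The plan is to reduce the bound to a standard Bernstein-type concentration estimate for sums of i.i.d. sub-exponential random variables. First, by the definition of the state space initial data \eqref{q00}, we have the telescoping identity
\begin{flalign*}
q_j(0) - q_i(0) = -2 \displaystyle\sum_{k=i}^{j-1} \log a_k(0) = \displaystyle\sum_{k=i}^{j-1} \mathfrak{r}_k, \qquad \text{where } \mathfrak{r}_k = -2 \log a_k(0),
\end{flalign*}
(taking $i<j$ without loss of generality). Under thermal equilibrium, the $a_k(0)$ are i.i.d.\ with the Gamma-type density appearing in \Cref{mubeta2}, so the $\mathfrak{r}_k$ are i.i.d.\ with the law of the variable $\mathfrak{r}$ from \Cref{aintegral}. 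In particular $\mathbb{E}[\mathfrak{r}_k] = \alpha$, so it suffices to prove the Bernstein-type estimate
\begin{flalign*}
\mathbb{P} \Bigg[ \bigg| \displaystyle\sum_{k=i}^{j-1} (\mathfrak{r}_k - \alpha) \bigg| \ge R \Bigg] \le 2 \big( e^{-cR^2/|i-j|} + e^{-cR} \big).
\end{flalign*}

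Second, I would verify that $\mathfrak{r}_k - \alpha$ is sub-exponential, by showing that its moment generating function exists on an open neighborhood of $0$. Direct computation gives
\begin{flalign*}
\mathbb{E}[e^{t \mathfrak{r}_k}] = \mathbb{E}[a_k(0)^{-2t}] = \displaystyle\frac{2 \beta^{\theta}}{\Gamma(\theta)} \displaystyle\int_0^{\infty} a^{2(\theta - t) - 1} e^{-\beta a^2} \, da = \displaystyle\frac{\Gamma(\theta - t)}{\Gamma(\theta)} \beta^{t},
\end{flalign*}
which is finite for all $t < \theta$ (convergence at $0$ requires $\theta - t > 0$, and the Gaussian factor handles infinity). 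In particular $\mathbb{E}[e^{t \mathfrak{r}_k}]$ is finite for $|t| < \theta/2$, say, which implies a Bernstein bound of the form $\mathbb{E}[|\mathfrak{r}_k - \alpha|^p] \le p! K^{p-2} \sigma^2 / 2$ for absolute constants $\sigma, K > 0$ depending only on $\beta$ and $\theta$.

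Third, I would apply the classical Bernstein inequality for independent sub-exponential random variables, which gives
\begin{flalign*}
\mathbb{P} \Bigg[ \bigg| \displaystyle\sum_{k=i}^{j-1} (\mathfrak{r}_k - \alpha) \bigg| \ge R \Bigg] \le 2 \exp \bigg( - \displaystyle\frac{R^2/2}{|i-j| \sigma^2 + K R} \bigg).
\end{flalign*}
Splitting on whether the denominator is dominated by $|i-j| \sigma^2$ or by $K R$ recovers the desired bound $2(e^{-cR^2/|i-j|} + e^{-cR})$ for an appropriately small $c = c(\beta, \theta) > 0$. There is no substantial obstacle here; the only mild computation is verifying finiteness of the MGF near $0$, where the two tails come from qualitatively different regimes (the right tail of $\mathfrak{r}_k$ from the small-$a$ behavior $a^{2\theta-1}$, dictating the exponential regime $e^{-cR}$, and the left tail from the Gaussian factor $e^{-\beta a^2}$, which is comparatively much lighter).
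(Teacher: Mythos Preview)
Your proposal is correct and takes essentially the same approach as the paper: both reduce to a sum of i.i.d.\ centered sub-exponential variables via the telescoping identity $q_j(0)-q_i(0)=\sum_k \mathfrak{r}_k$, verify finiteness of the moment generating function of $\mathfrak{r}_k$ near $0$, and then apply a Chernoff/Bernstein bound. The only cosmetic difference is that the paper carries out the exponential Markov bound by hand with an explicit choice $u=\min\{cR/(j-i),c\}$, whereas you invoke Bernstein's inequality as a black box; these are the same argument.
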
 
	
	We next define events on which a symmetric matrix has bounded entries and eigenvalues, and on which its eigenvalues are separated.

	\begin{definition}[Bounded and separated events] 
		
		\label{adelta}
		
		Fix real numbers $A, \delta > 0$; let $\mathscr{I}$ denote an index set; and let $\bm{M} = [M_{ij}] \in \SymMat_{\mathscr{I}}$. Define the events
		\begin{flalign*}
			& \mathsf{BND}_{\bm{M}} (A) = \Bigg\{ \displaystyle\max_{i,j \in \mathscr{I}} |M_{ij}| \le A \Bigg\} \cap \Bigg\{ \displaystyle\max_{\lambda \in \eig \bm{M}} |\lambda| \le A \Bigg\}; \\
			& \mathsf{SEP}_{\bm{M}} (\delta) = \Bigg\{ \displaystyle\min_{\substack{\nu, \nu' \in \eig \bm{M} \\ \nu \ne \nu'}} |\nu - \nu'| \ge \delta \Bigg\}. 
		\end{flalign*}
	\end{definition} 
	
	\begin{rem} 
		
		\label{mmj} 
		
		Adopt the notation of \Cref{adelta}; let $\mathscr{J} \subseteq \mathscr{I}$ be some index set; and denote $\bm{M}' = \bm{M}^{(\mathscr{J})}$ (recall \Cref{Notation}). By the Weyl interlacing inequality, we have $\mathsf{BND}_{\bm{M}} (A) \subseteq \mathsf{BND}_{\bm{M}'} (A)$.
		
	\end{rem} 
	
	If $\bm{L}(t)$ is the Lax matrix for a Toda lattice initialized under thermal equilibrium (recall \Cref{mubeta2} and \Cref{mubeta}), then the following lemma bounds its entries and eigenvalues with high probability. Its proof will be given in \Cref{ProofL} below.

	\begin{lem}
		
		\label{l0eigenvalues}
		
		There exists a constant $c > 0$ such that the following holds. Fix an integer $N \ge 1$; let $(\bm{a}(t);\bm{b}(t))$ denote the Flaschka variables for a Toda lattice \eqref{derivativepa}, either in the open case on $\llbracket N_1, N_2 \rrbracket$ (as in \Cref{Open}) or in the periodic one on $\mathbb{T}_N$ (as in \Cref{ModelPeriodic}). Denote the associated Lax matrix by $\bm{L}(t) = [L_{ij}(t)]$, as in \Cref{matrixl} for the open case (on $\llbracket N_1, N_2 \rrbracket$) and \Cref{matrixl2} for the periodic one (on $\mathbb{T}_N$). Assume that $( \bm{a}(0); \bm{b}(0) )$ is sampled under $\mu_{\beta, \theta; N-1,N}$ in the open case (on $\llbracket N_1, N_2 \rrbracket$), and under $\mu_{\beta,\theta;N}$ in the open case (on $\mathbb{T}_N$). Then, for any real number $A \ge 1$, 
		\begin{flalign*} 
			\mathbb{P} \Bigg[ \bigcap_{t \in \mathbb{R}_{\ge 0}} \mathsf{BND}_{\bm{L}(t)} (A) \Bigg] \ge 1 - c^{-1} N e^{-cA^2}.
		\end{flalign*} 
		
	\end{lem}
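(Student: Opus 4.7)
The plan is to establish the bound at $t=0$ directly from the tail behavior of the thermal equilibrium measure, and then propagate it to all times using the two conservation-type inputs already available: entries are bounded across time by \Cref{abltt}, and eigenvalues are invariant by \Cref{ltt} (or \Cref{ltt2} in the periodic case).

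First I would handle the initial time. Under $\mu_{\beta,\theta;N-1,N}$ (or $\mu_{\beta,\theta;N}$), each $b_j(0)$ is a centered Gaussian with variance $\beta^{-1}$, so $\mathbb{P}[|b_j(0)|>A/C] \le C'e^{-cA^2}$. Each $a_j(0)$ has density proportional to $a^{2\theta-1}e^{-\beta a^2}$ on $\mathbb{R}_{>0}$, which is sub-Gaussian at infinity: a direct integration gives $\mathbb{P}[a_j(0)>A/C] \le C'e^{-cA^2}$ for $A \ge 1$ (with constants depending only on $\beta,\theta$). A union bound over the at most $2N$ Flaschka variables produces an event $\mathsf{E}$ of probability at least $1 - c^{-1}Ne^{-cA^2}$ on which $\max_j|a_j(0)| \vee \max_j|b_j(0)| \le A/C$ for a constant $C$ to be chosen. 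Since $\bm{L}(0)$ is tridiagonal, this simultaneously yields $\max_{i,j}|L_{ij}(0)| \le A/C$, and by the Gershgorin bound (or a trivial row-sum estimate) $\max_{\lambda \in \eig \bm{L}(0)}|\lambda| \le 3 \max_{i,j}|L_{ij}(0)| \le 3A/C$.

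Next I would propagate to all $t \ge 0$. The eigenvalue part is immediate: by \Cref{ltt} in the open case and \Cref{ltt2} in the periodic case, $\eig \bm{L}(t) = \eig \bm{L}(0)$ for every $t$, so on $\mathsf{E}$ the spectral bound $\max_{\lambda \in \eig \bm{L}(t)}|\lambda| \le 3A/C$ holds for all $t$. For the entry bound, \Cref{abltt} gives $A(t) + B(t) \le 6(A(0) + B(0)) \le 12A/C$ for every $t$, where $A(t),B(t)$ denote the maximum magnitudes of $a_i(t)$ and $b_i(t)$. Since the nonzero entries of $\bm{L}(t)$ are precisely (among) the $a_i(t)$ and $b_i(t)$ (with a duplicated off-diagonal in both the open and periodic constructions), we conclude $\max_{i,j}|L_{ij}(t)| \le 12A/C$ for all $t$.

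Choosing $C$ large enough that both $3A/C \le A$ and $12A/C \le A$, i.e.\ $C \ge 12$, we obtain $\bigcap_{t \ge 0} \mathsf{BND}_{\bm{L}(t)}(A)$ on $\mathsf{E}$, and absorbing $C$ into the constant $c$ of the initial tail bound gives the stated probability. The argument is essentially the same in the open and periodic settings, since both \Cref{ltt}/\Cref{ltt2} and \Cref{abltt} are stated for both. There is no real obstacle here: the only mildly delicate point is tracking the multiplicative constants so that the single parameter $A$ controls all three quantities (the $a$-entries, the $b$-entries, and the eigenvalues) simultaneously for every $t$, which is handled by the constant-$C$ rescaling above.
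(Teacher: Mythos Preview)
Your proposal is correct and follows essentially the same approach as the paper: reduce to $t=0$ via the conservation inputs \Cref{ltt}/\Cref{ltt2} and \Cref{abltt}, then control the initial entries by sub-Gaussian tail bounds on the thermal equilibrium coordinates together with a union bound, and bound the eigenvalues by a row-sum (Gershgorin-type) estimate. The paper states the reduction more tersely (``by \Cref{abltt} and \Cref{ltt} it suffices to verify the lemma at $t=0$'') and leaves the constant rescaling implicit, whereas you spell out the choice $C\ge 12$; otherwise the arguments coincide.
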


	The following result indicates that the off-diagonal entries in the resolvent of a Lax matrix, of the open Toda lattice under the thermal equilibrium, decay exponentially. The bound \eqref{gijs} is a special case of \cite[Theorem 4]{LRBM} (and the remark following it) when $z \in \mathbb{R}$ is real; together with \cite[Equation (B.8)]{FFCL}, this implies that it continues to hold for any complex $z \in \mathbb{C}$.
	
	\begin{lem}[{\cite{LRBM,FFCL}}]
		
		\label{gijexponential}
		
		Adopt \Cref{lbetaeta}. For any real number $s \in (0, 1)$, there exists a constant $c = c(s) > 0$ such that the following holds. For any $z \in \mathbb{C}$, denote $\bm{G} (z) = [G_{ij} (z) ] = (\bm{L} - z)^{-1}$. We have  
		\begin{flalign}
			\label{gijs}
			\displaystyle\sup_{z \in \mathbb{R}} \mathbb{E} \big[ |G_{ij} (z) |^s \big] \le c^{-1} e^{-c|i-j|}.
		\end{flalign}  
	\end{lem}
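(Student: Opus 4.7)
The plan is to derive \eqref{gijs} directly from the two cited works, by first verifying that our random Lax matrix $\bm{L} = \bm{L}(0)$ falls within the scope of the fractional-moment localization framework of Schenker \cite{LRBM}, and then upgrading from real to complex spectral parameter via the input--output estimate from \cite{FFCL}. Under \Cref{lbetaeta}, $\bm{L}$ is a random tridiagonal symmetric matrix with independent entries: the diagonal entries $L_{ii} = b_i$ are i.i.d.\ Gaussians with density proportional to $e^{-\beta x^2/2}$, and the super-diagonal entries $L_{i,i+1} = a_i$ are i.i.d.\ with density proportional to $a^{2\theta-1} e^{-\beta a^2} \mathbbm{1}_{a > 0}$. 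These are precisely the independent random Jacobi matrices covered by Schenker's theorem.

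Second, \cite[Theorem 4]{LRBM} together with the remark following it then produces, for any fixed $s \in (0,1)$, a constant $c = c(s) > 0$ such that
\begin{equation*}
\sup_{E \in \mathbb{R}} \mathbb{E}\bigl[|G_{ij}(E)|^s\bigr] \le c^{-1} e^{-c|i-j|}.
\end{equation*}
The key mechanism behind Schenker's argument is a depletion/decoupling step that writes $G_{ij}(E)$ as a product of one-step resolvent ratios, each of which has bounded $s$-th moment thanks to the regularity of the entry distributions; the independence between sites then converts this into exponential decay in $|i-j|$.

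Third, to pass from real $E$ to an arbitrary complex spectral parameter $z = E + \mathrm{i} \eta$, I would invoke \cite[Equation (B.8)]{FFCL}, which shows that $\mathbb{E}\bigl[|G_{ij}(z)|^s\bigr]$ can be controlled uniformly in $\eta$ by the supremum over $E' \in \mathbb{R}$ of $\mathbb{E}\bigl[|G_{ij}(E')|^s\bigr]$. Concatenating this with the previous step and adjusting $c$ slightly yields \eqref{gijs} for all $z \in \mathbb{C}$.

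The step requiring the most care is checking that the Gamma-type law of the off-diagonal entries $a_i$ truly satisfies the regularity hypothesis in \cite{LRBM}, since the density $a^{2\theta-1}$ blows up at $a = 0$ when $\theta < 1/2$. What one actually needs is uniform boundedness of certain fractional moments of $\log a_i$ (equivalently, of the density of the reparametrized variable $r_i = -2\log a_i$), and these hold for every $\theta > 0$ because the density of $r_i$ is bounded on compact sets and has sub-exponential tails in both directions. Once this verification is in place, the remainder of the argument is a direct citation.
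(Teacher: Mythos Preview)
Your proposal is correct and mirrors exactly the paper's own justification: the paper does not give a standalone proof but simply states that \eqref{gijs} for real $z$ is a special case of \cite[Theorem 4]{LRBM} (and the remark following it), and that \cite[Equation (B.8)]{FFCL} extends it to complex $z$. Your additional discussion of why the Gamma-type off-diagonal law satisfies Schenker's regularity hypotheses is a useful elaboration that the paper leaves implicit.
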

	
	The below corollary, which is due to \cite[Theorem A.1]{FFCL} (using \Cref{gijexponential} to verify its hypotheses), will quickly imply that eigenvectors of $\bm{L}$ are localized (see \Cref{bijectionl} below).			
	
	\begin{cor}[{\cite[Theorem A.1]{FFCL}}]
		
		\label{uijexponential} 
		
		Adopt \Cref{lbetaeta}. There exists a constant  $c > 0$ so that 
		\begin{flalign*}
			\displaystyle\max_{k \in \llbracket 1, N \rrbracket} \displaystyle\max_{i, j \in \llbracket N_1, N_2 \rrbracket} \mathbb{E} \big[ | u_k (i) u_k (j) | \big] \le c^{-1} e^{-c|i-j|}.
		\end{flalign*}
	\end{cor}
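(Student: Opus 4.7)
The plan is to invoke Theorem A.1 of Aizenman--Schenker--Friedrich--Hundertmark \cite{FFCL} as a black box. That result bounds, in expectation, the eigenfunction correlator
\begin{flalign*}
Q_{ij}(\bm{L}) := \sup_{\|f\|_\infty \le 1} \big| \langle \delta_i, f(\bm{L}) \delta_j \rangle \big|,
\end{flalign*}
where the supremum runs over bounded Borel functions $f: \mathbb{R} \to \mathbb{R}$, showing $\mathbb{E}[Q_{ij}(\bm{L})] \le C_s e^{-c_s |i-j|}$ whenever there exists $s \in (0,1)$ with $\sup_{z \in \mathbb{C}} \mathbb{E}[|G_{ij}(z)|^s] \le c^{-1} e^{-c|i-j|}$. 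This resolvent hypothesis is provided directly by Lemma \ref{gijexponential}.

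To convert the eigenfunction correlator bound into the stated eigenvector estimate, I use the spectral representation. Since $\bm{L}$ is tridiagonal with almost surely nonzero off-diagonal entries, its eigenvalues $\lambda_1 > \lambda_2 > \cdots > \lambda_N$ are almost surely mutually distinct (see \cite[Proposition 2.40(a)]{PRM}). For fixed $i, j \in \llbracket N_1, N_2 \rrbracket$, choose the bounded Borel function $f$ with $f(\lambda_k) = \sgn \big( u_k(i) u_k(j) \big)$ for each $k \in \llbracket 1, N \rrbracket$ and $f \equiv 0$ elsewhere. Then $\|f\|_\infty \le 1$, and the spectral theorem gives
\begin{flalign*}
\langle \delta_i, f(\bm{L}) \delta_j \rangle = \displaystyle\sum_{k=1}^N f(\lambda_k) \cdot u_k(i) u_k(j) = \displaystyle\sum_{k=1}^N |u_k(i) u_k(j)|.
\end{flalign*}
It follows that $\sum_{k} |u_k(i) u_k(j)| \le Q_{ij}(\bm{L})$, and a fortiori $\max_k |u_k(i) u_k(j)| \le Q_{ij}(\bm{L})$. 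Taking expectations and applying FFCL's Theorem A.1 yields the corollary.

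The only conceptually nontrivial step is the passage from fractional moments of resolvent entries to the eigenfunction correlator, but this is exactly the classical Aizenman--Molchanov mechanism, encapsulated in the cited theorem. The hypotheses of that theorem include mild regularity assumptions on the distributions of the matrix entries, which are trivially satisfied here because the Gaussian densities for the $(b_i)$ and the Gamma densities for the $(a_i)$ are bounded and smooth on their supports. Thus no additional work is required beyond verifying Lemma \ref{gijexponential} and the eigenvalue-distinctness needed for the sign-trick.
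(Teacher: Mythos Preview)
Your proposal is correct and matches the paper's approach exactly: the paper does not give a proof but simply cites \cite[Theorem~A.1]{FFCL}, noting that \Cref{gijexponential} verifies its fractional-moment hypothesis. You have spelled out the eigenfunction-correlator mechanism and the sign-trick that underlie that citation, which is precisely the intended argument.
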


	The following lemma states that, with high probability, no distinct eigenvalues of $\bm{L}$ can be too close (recall \Cref{adelta}). It is a quick consequence of the Minami estimates \cite{LFSM}; we provide its proof in \Cref{ProofL} below. 
	
	\begin{lem}
		
		\label{eigenvalues0}
		
		Adopt \Cref{lbetaeta}. There exists a constant $c > 0$ such that $\mathbb{P} [ \mathsf{SEP}_{\bm{L}} (\delta) ] \ge 1 - c^{-1} (\delta N^3 + e^{-cN^2})$ holds for any real number $\delta > 0$.
		
	\end{lem}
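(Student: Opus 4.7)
The plan is to combine the Minami estimate of \cite{LFSM} with the a priori spectral bound of \Cref{l0eigenvalues}, through a simple covering argument.

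First, I apply \Cref{l0eigenvalues} with $A = N$: on an event $\mathsf{BND}_{\bm{L}}(N)$ of probability at least $1 - c^{-1} N e^{-c N^{2}}$, every eigenvalue of $\bm{L}$ lies in $[-N,N]$. Next, I invoke the Minami estimate of \cite{LFSM}, applied conditionally on the Flaschka off-diagonal variables $(a_j)_{j \in \llbracket N_1, N_2-1 \rrbracket}$. After conditioning, $\bm{L}$ is a real symmetric tridiagonal matrix with deterministic nonzero off-diagonal entries and i.i.d.\ Gaussian diagonal entries $(b_j)$, which have bounded density $C_\beta e^{-\beta x^2 /2}$. The Minami bound then asserts that, for any interval $I \subset \mathbb{R}$,
\begin{flalign*}
\mathbb{P}\bigl[ \#\{ \lambda \in \eig \bm{L} : \lambda \in I \} \ge 2 \;\big|\; (a_j) \bigr] \le C N^2 |I|^2,
\end{flalign*}
with $C$ depending only on $\beta$ (through the Gaussian density), in particular uniform in $(a_j)$. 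Integrating over $(a_j)$ gives the same bound unconditionally.

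Now I cover $[-N, N]$ by a family $\mathcal{F}$ of intervals of length $2\delta$, whose centers are spaced by $\delta$; then $|\mathcal{F}| \le 4 N / \delta$, and any two reals $\nu, \nu' \in [-N,N]$ with $|\nu - \nu'| < \delta$ lie together in at least one element of $\mathcal{F}$. On $\mathsf{BND}_{\bm{L}}(N)$, any violation of $\mathsf{SEP}_{\bm{L}}(\delta)$ forces some $I \in \mathcal{F}$ to contain at least two eigenvalues of $\bm{L}$. Applying the Minami bound to each $I \in \mathcal{F}$ and taking a union bound yields
\begin{flalign*}
\mathbb{P}\bigl[\mathsf{SEP}_{\bm{L}}(\delta)^{\complement} \cap \mathsf{BND}_{\bm{L}}(N) \bigr] \le \frac{4N}{\delta} \cdot C N^2 (2\delta)^2 \le C' N^3 \delta.
\end{flalign*}
Combining this with $\mathbb{P}[\mathsf{BND}_{\bm{L}}(N)^{\complement}] \le c^{-1} N e^{-c N^2} \le c'^{-1} e^{-c' N^2}$ gives the stated bound, after adjusting the constant $c$.

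The only subtle point is whether the Minami estimate applies to our tridiagonal model, whose off-diagonal entries are the Gamma-distributed $(a_j)$ rather than the deterministic Laplacian of the classical Anderson setup. Since Minami's proof in \cite{LFSM} only uses that the diagonal (potential) entries are i.i.d.\ with a bounded density, the bound is uniform in any fixed realization of the off-diagonal entries (provided they are nonzero, as holds here almost surely), so conditioning on $(a_j)$ is legitimate; this is the main, but minor, obstacle to be verified.
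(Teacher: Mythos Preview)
Your proof is correct and follows essentially the same approach as the paper's: both combine the a priori spectral bound from \Cref{l0eigenvalues} (to confine eigenvalues to $[-N,N]$) with the Minami estimate via a covering of $[-N,N]$ by $O(N/\delta)$ sets of diameter $O(\delta)$. The only cosmetic difference is that the paper phrases the covering through a grid of complex points $z_j = j\delta + \mathrm{i}\delta$ and the quantity $S = \sum_j \sum_{i\ne k} \Imaginary(\lambda_i - z_j)^{-1}\Imaginary(\lambda_k - z_j)^{-1}$, invoking the resolvent-determinant bounds \cite[Equations (2.64)--(2.65)]{LFSM} directly, whereas you use the ``packaged'' Minami bound $\mathbb{P}[\#(\eig\bm{L}\cap I)\ge 2]\le CN^2|I|^2$ over real intervals; your remark that the Minami argument is uniform in the (conditioned-upon) off-diagonal entries matches the paper's footnote that Minami's proof carries over to this model.
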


	\section{Comparison Estimates}
	
	\label{LCompare} 	
	
	\subsection{Comparisons for the Toda Lattice on Different Domains}
	
	\label{DomainCompare}

	In this section we compare Toda lattices (through their Flaschka variables \eqref{abr}) on different domains, that initially coincide on a subdomain. Such estimates are variants of the Lieb--Robinson estimates for quantum systems \cite{FGVQSS}; analogous bounds (in a slightly different form from what we use, but with similar proofs) have also appeared for classical ones, including the Toda lattice \cite{TES,BL}.
	
	We begin with the below proposition that compares two open Toda lattices on different intervals, assuming that they initially coincide on a subinterval of both.

	\begin{prop} 
		
		\label{aabbk}
		
		Let $\tilde{N}_1 \le N_1 \le N_2 \le \tilde{N}_2$ be integers; set $\tilde{N} = \tilde{N}_2 - \tilde{N}_1 + 1$ and $N = N_2 - N_1 + 1$. For each $t \in \mathbb{R}_{\ge 0}$, fix $\tilde{N}$-tuples $\tilde{\bm{a}}(t), \tilde{\bm{b}} (t) \in \mathbb{R}^{\tilde{N}}$ and $N$-tuples $\bm{a}(t), \bm{b}(t) \in \mathbb{R}^N$, indexed as 
		\begin{flalign*} 
			& \tilde{\bm{a}} (t) = ( \tilde{a}_{\tilde{N}_1} (t), \tilde{a}_{\tilde{N}_1+1} (t), \ldots , \tilde{a}_{\tilde{N}_2} (t) ); \quad \tilde{\bm{b}} (t) = ( \tilde{b}_{\tilde{N}_1} (t), \tilde{b}_{\tilde{N}_1+1} (t), \ldots , \tilde{b}_{\tilde{N}_2} (t) ); \\
			&\bm{a} (t) = ( a_{N_1} (t), a_{N_1+1} (t), \ldots , a_{N_2} (t) ); \quad \bm{b}(t) = ( b_{N_1} (t), b_{N_1+1}(t), \ldots , b_{N_2} (t) ).
		\end{flalign*} 
		
		\noindent For each $s \in \mathbb{R}_{\ge 0}$, also set $\tilde{a}_i (s) = 0 = \tilde{b}_i (s)$ if $i \in \mathbb{Z} \setminus \llbracket \tilde{N}_1, \tilde{N}_2 - 1\rrbracket$, and set $a_i (s) = 0 = b_i (s)$ if $i \in \mathbb{Z} \setminus \llbracket N_1, N_2 - 1 \rrbracket$. Assume $( \tilde{\bm{a}} (t); \tilde{\bm{b}} (t) )$ satisfies \eqref{derivativepa} for each $(j, t) \in \llbracket \tilde{N}_1, \tilde{N}_2 \rrbracket \times \mathbb{R}_{\ge 0}$, and $( \bm{a}(t), \bm{b}(t) )$ satisfies \eqref{derivativepa} for each $(j, t) \in \llbracket N_1, N_2 \rrbracket \times \mathbb{R}$. For any integers $I \le J$ and real number $t \ge 0$, let 
		\begin{flalign}
			\label{ghijt}
			\begin{aligned}
				& G_{\llbracket I, J \rrbracket} (t) = \displaystyle\sup_{s \in [0, t]} \bigg( \displaystyle\max_{i \in \llbracket I, J \rrbracket} \big| a_i (s) - \tilde{a}_i (s) \big| + \displaystyle\max_{i \in \llbracket I, J \rrbracket} \big| b_i (s) - \tilde{b}_i (s) \big| \bigg); \\
				& H_{\llbracket I, J \rrbracket} (t) = 6 \cdot \displaystyle\sup_{s \in [0,t]} \bigg(  \displaystyle\max_{i \in \llbracket I, J \rrbracket}  | a_i (s) | +  \displaystyle\max_{i \in \llbracket I, J \rrbracket} | \tilde{a}_i (s) | +  \displaystyle\max_{i \in \llbracket I, J \rrbracket} | b_i (s) | +  \displaystyle\max_{i \in \llbracket I, J \rrbracket} | \tilde{b}_i (s) | \bigg).
			\end{aligned}
		\end{flalign}
		
		\noindent Now let $K \ge 1$ and $N_1' \le N_2'$ be integers such that $N_1 \le N_1'  \le N_2' \le N_2$ and $N_1' + K \le N_2' - K$. If $a_j (0) = \tilde{a}_j (0)$ and $b_j (0) = \tilde{b}_j (0)$ for each $j \in \llbracket N_1', N_2'  \rrbracket$, then for any $T \in \mathbb{R}_{\ge 0}$ we have
		\begin{flalign}
			\label{gn12hn12t}
			G_{\llbracket N_1' + K, N_2' - K \rrbracket} (T) \le \displaystyle\frac{T^K}{K!} \cdot  G_{\llbracket N_1', N_2' \rrbracket} (T) \cdot \displaystyle\prod_{i=0}^{K-1} H_{\llbracket N_1' + i, N_2' - i \rrbracket} (T).
		\end{flalign}
		
	\end{prop}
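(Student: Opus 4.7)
The plan is to set up an ODE for the differences $\delta a_j(t) := a_j(t) - \tilde a_j(t)$ and $\delta b_j(t) := b_j(t) - \tilde b_j(t)$, and to exploit the locality of the Flaschka form \eqref{derivativepa} of the Toda evolution to propagate the initial vanishing of $(\delta a_j, \delta b_j)$ on $\llbracket N_1', N_2' \rrbracket$ inward at a bounded ``speed'' in the lattice, gaining a factor of $T$ and shrinking the interval by one at each step.

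First, I would subtract the two copies of \eqref{derivativepa} to get
\[
\partial_t \delta a_j = \tfrac{1}{2}\bigl[\delta a_j(b_j - b_{j+1}) + \tilde a_j(\delta b_j - \delta b_{j+1})\bigr], \qquad
\partial_t \delta b_j = \delta a_{j-1}(a_{j-1} + \tilde a_{j-1}) - \delta a_j (a_j + \tilde a_j).
\]
Bounding each bilinear term by the product of the appropriate maxima and using the factor $6$ in the definition of $H$, this shows that whenever $j-1,j,j+1 \in \llbracket I, J \rrbracket$,
\[
\max_{j}\lvert \partial_s \delta a_j(s)\rvert + \max_{j}\lvert \partial_s \delta b_j(s)\rvert \;\le\; \tfrac{1}{2}\, H_{\llbracket I, J \rrbracket}(s) \cdot g_{\llbracket I, J \rrbracket}(s),
\]
where I abbreviate $g_{\llbracket I,J\rrbracket}(s) := \max_{i\in\llbracket I,J\rrbracket}\lvert\delta a_i(s)\rvert + \max_{i\in\llbracket I,J\rrbracket}\lvert\delta b_i(s)\rvert$, so that $G_{\llbracket I,J\rrbracket}(t) = \sup_{s \in [0,t]} g_{\llbracket I,J\rrbracket}(s)$.

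Second, set $I_k := \llbracket N_1'+k, N_2'-k \rrbracket$ for $0 \le k \le K$. Since $(\delta a_j(0), \delta b_j(0)) = 0$ for every $j \in \llbracket N_1', N_2' \rrbracket \supseteq I_k$, integrating the derivative bound from $0$ to $t$ and noting that the neighbors of sites in $I_k$ lie in $I_{k-1}$ gives, for $k \ge 1$,
\[
g_{I_k}(t) \;\le\; \tfrac{1}{2}\, H_{I_{k-1}}(t) \int_0^t g_{I_{k-1}}(s)\, ds,
\]
where I used that $s \mapsto H_{I_{k-1}}(s)$ is non-decreasing to pull it outside the integral.

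Third, I would iterate this one-step recursion $K$ times. Unfolding it expresses $g_{I_K}(t)$ as $\prod_{i=0}^{K-1} (H_{I_i}(t)/2)$ times a $K$-fold iterated time integral of $g_{I_0}$ over the simplex $\{0 \le s_0 \le s_1 \le \cdots \le s_{K-1} \le t\}$. Bounding $g_{I_0}(s_0) \le G_{\llbracket N_1', N_2'\rrbracket}(t)$ and using $\mathrm{vol}(\text{simplex}) = t^K/K!$ yields
\[
g_{I_K}(t) \;\le\; \frac{t^K}{K!\cdot 2^K}\, G_{\llbracket N_1', N_2'\rrbracket}(t)\, \prod_{i=0}^{K-1} H_{\llbracket N_1'+i, N_2'-i\rrbracket}(t).
\]
Since every factor on the right is non-decreasing in $t$, taking $\sup_{t \in [0, T]}$ yields \eqref{gn12hn12t}, in fact with an extra $2^{-K}$ to spare.

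There is no substantial obstacle: this is essentially a Lieb--Robinson-type propagation bound adapted to the Toda ODE. The only points requiring care are verifying that the factor $6$ built into the definition of $H$ precisely absorbs the numerical constants arising from the estimates on $\partial_t \delta a$ and $\partial_t \delta b$ (so that the one-step constant is $\tfrac{1}{2}$ rather than something larger than $1$), and correctly indexing the shrinking intervals $I_k$ so that at each step the neighbors of sites in $I_k$ lie in $I_{k-1}$---which requires exactly $K$ shrinkages, consistent with the hypothesis $N_1' + K \le N_2' - K$.
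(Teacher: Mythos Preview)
Your proposal is correct and follows essentially the same approach as the paper: subtract the two copies of \eqref{derivativepa}, bound the resulting bilinear terms to obtain the one-step recursion $g_{I_k}(t)\le C\,H_{I_{k-1}}(t)\int_0^t g_{I_{k-1}}(s)\,ds$, and then iterate $K$ times (the paper packages this last step as a separate Gr\"onwall-type lemma proved by induction, while you unfold it directly via the simplex volume, which is equivalent). Your arithmetic is slightly tighter than the paper's, yielding the extra factor $2^{-K}$, but otherwise the arguments coincide.
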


	To establish this proposition, we first require the following variant of the Gr\"{o}nwall inequality. 
	
	\begin{lem} 
		
		\label{hkgk} 
		
		Let $T \ge 0$ be a real number; $K \ge 1$ be an integer; and let $g_k, h_k : \mathbb{R}_{\ge 0} \rightarrow \mathbb{R}_{\ge 0}$ be nondecreasing functions, for each $k \in \llbracket 0, K \rrbracket$. Assume for each $(k, t) \in \llbracket 0, K-1 \rrbracket \times [0, T]$ that
		\begin{flalign}
			\label{0gkt}
			g_k (t) \le h_{k+1} (t) \displaystyle\int_0^t g_{k+1}(s) ds.
		\end{flalign}
		
		\noindent Then, for any $(j, t) \in \llbracket 0, K \rrbracket \times [0, T]$, we have 
		\begin{flalign}
			\label{gkj}
			g_j (t) \le \displaystyle\frac{t^{K-j}}{(K-j)!} \cdot g_K (t) \cdot \displaystyle\prod_{i=j+1}^{K} h_i (t).
		\end{flalign} 
		
	\end{lem}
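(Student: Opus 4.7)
The plan is to prove \eqref{gkj} by downward induction on $j$, starting from $j = K$ and decreasing to $j = 0$. The base case $j = K$ is immediate: the product $\prod_{i=K+1}^K h_i(t)$ is empty (equal to $1$), and $t^0/0! = 1$, so the claimed bound reads $g_K(t) \le g_K(t)$, which is trivial.

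For the inductive step, suppose \eqref{gkj} holds at level $j+1$ for some $j \in \llbracket 0, K-1 \rrbracket$, that is,
\begin{flalign*}
g_{j+1}(s) \le \frac{s^{K-j-1}}{(K-j-1)!} \cdot g_K(s) \cdot \prod_{i=j+2}^K h_i(s), \qquad \text{for each } s \in [0, T].
\end{flalign*}
Since each $g_k$ and $h_k$ is nondecreasing, for any $s \in [0, t]$ we have $g_K(s) \le g_K(t)$ and $h_i(s) \le h_i(t)$, so
\begin{flalign*}
g_{j+1}(s) \le \frac{s^{K-j-1}}{(K-j-1)!} \cdot g_K(t) \cdot \prod_{i=j+2}^K h_i(t).
\end{flalign*}
Inserting this bound into the hypothesis \eqref{0gkt} and integrating gives
\begin{flalign*}
g_j(t) \le h_{j+1}(t) \int_0^t g_{j+1}(s)\, ds \le h_{j+1}(t) \cdot g_K(t) \cdot \prod_{i=j+2}^K h_i(t) \cdot \int_0^t \frac{s^{K-j-1}}{(K-j-1)!}\, ds.
\end{flalign*}
Evaluating the last integral as $t^{K-j}/(K-j)!$ and combining $h_{j+1}(t)$ with the remaining product over $i \in \llbracket j+2, K \rrbracket$ into a single product over $i \in \llbracket j+1, K \rrbracket$ yields precisely \eqref{gkj} at level $j$, completing the induction.

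No step here poses a substantive obstacle; the only points requiring care are the monotonicity pull-out (which is why the nondecreasing hypothesis on $g_k$ and $h_k$ is used) and the bookkeeping of the shifted index in the product. The argument is essentially the standard iteration that produces the $t^{K-j}/(K-j)!$ factor in Gr\"onwall-type inequalities, adapted to the setting where the multiplicative prefactor $h_{k+1}(t)$ depends on the level.
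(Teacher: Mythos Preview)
Your proof is correct and follows essentially the same approach as the paper's: downward induction on $j$ (equivalently, induction on $K-j$), using the recursive hypothesis \eqref{0gkt}, the inductive bound at level $j+1$, monotonicity of $g_K$ and the $h_i$ to pull constants outside the integral, and the elementary evaluation $\int_0^t s^{K-j-1}/(K-j-1)!\,ds = t^{K-j}/(K-j)!$. The only cosmetic difference is the order in which you apply these steps.
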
 
	
	\begin{proof}
		
		We verify \eqref{gkj} by induction on $K-j$. Since \eqref{gkj} holds at $K-j=0$, let us fix $j_0 \in \llbracket 0, K-1 \rrbracket$ and prove that \eqref{gkj} holds for $K-j = K-j_0$, assuming it holds whenever $K-j \le K-j_0-1$. This follows from the estimates
		\begin{flalign*}
			g_{j_0} (t) & \le h_{j_0+1} (t) \displaystyle\int_0^t g_{j_0+1} (s) ds \\
			& \le h_{j_0+1} (t)  \displaystyle\int_0^t \displaystyle\frac{s^{K-j_0-1}}{(K-j_0-1)!} \cdot g_K (s) \cdot \displaystyle\prod_{i=j_0+2}^{K} h_i (s) ds \\
			& \le g_K (t) \cdot \displaystyle\prod_{i=j_0+1}^K h_i (t) \cdot \displaystyle\int_0^t \displaystyle\frac{s^{K-j_0-1}}{(K-j_0-1)!} ds \\
			& = \displaystyle\frac{t^{K-j_0}}{(K-j_0)!} \cdot g_K (t) \cdot \displaystyle\prod_{i=j_0+1}^K h_i (t),
		\end{flalign*} 
		
		\noindent where the first bound holds by the $k=j_0$ case of \eqref{0gkt}; the second by the inductive hypothesis; the third by the fact that the $g_i$ and $h_i$ are nondecreasing; and the fourth by performing the integration.  
	\end{proof} 	 
	
	\begin{proof}[Proof of \Cref{aabbk}] 
		
		Observe from \eqref{derivativepa} that, for any $(i, s) \in \llbracket \tilde{N}_1, \tilde{N}_2 \rrbracket \times \mathbb{R}_{\ge 0}$, we have 
		\begin{flalign*}
			& \partial_t (a_i - \tilde{a}_i) = \displaystyle\frac{1}{2} \cdot (b_i - b_{i+1}) (a_i - \tilde{a}_i) + \displaystyle\frac{\tilde{a}_i}{2} \cdot \big( (b_i - \tilde{b}_i) - (b_{i+1} - \tilde{b}_{i+1}) \big) ; \\
			& \partial_t (b_i - \tilde{b}_i) = (a_{i-1} - \tilde{a}_{i-1} ) (a_{i-1} + \tilde{a}_{i-1}) - (a_i - \tilde{a}_i)(a_i + \tilde{a}_i),
		\end{flalign*}
		
		\noindent where we have abbreviated $(\tilde{a}_i, a_i; \tilde{b}_i, b_i) = ( \tilde{a}_i (s), a_i (s); \tilde{b}_i (s), b_i (s) )$. Using \eqref{ghijt}, it follows that 
		\begin{flalign*}
			|\partial_t a_i (s)  - \partial_t \tilde{a}_i (s)| & = \displaystyle\frac{1}{2} \cdot |b_i - b_{i+1}| \cdot G_i (s) + |\tilde{a}_i | \cdot G_{\llbracket i,i+1 \rrbracket} (s) \\
			& \le \displaystyle\frac{1}{3} \cdot H_{\llbracket i,i+1 \rrbracket} (s) \cdot G_{\llbracket i,i+1 \rrbracket} (s),
		\end{flalign*} 
		
		\noindent and 
		\begin{flalign*} 
			|\partial_t b_i (s) - \partial_t \tilde{b}_i (s)| & = |a_{i-1}  + \tilde{a}_{i-1} | \cdot G_{i-1} (s) + |a_i  + \tilde{a}_i | \cdot G_i (s) \\
			& \le \displaystyle\frac{2}{3} \cdot H_{\llbracket i-1,i \rrbracket} (s) \cdot G_{\llbracket i-1, i \rrbracket} (s).
		\end{flalign*}  
		
		\noindent Summing these two bounds, fixing $t_0 \in \mathbb{R}_{\ge 0}$, integrating over $s \in [0, t_0]$, and using \eqref{ghijt} yields 
		\begin{flalign*}
			& \big| a_i (t_0) - \tilde{a}_i (t_0) \big| + \big| b_i (t_0) - \tilde{b}_i (t_0) \big| \le H_{\llbracket i-1, i+1 \rrbracket} (t_0) \displaystyle\int_0^{t_0} G_{\llbracket i-1, i+1 \rrbracket} (s) ds,
		\end{flalign*}
		
		\noindent for any $i \in \llbracket N_1', N_2' \rrbracket$, where we have also used that $a_i (0) = \tilde{a}_i (0)$ and $b_i (0) = \tilde{b}_i (0)$ for such $i$. Taking the supremum over $t_0 \in [0, t]$, we deduce that \eqref{0gkt} holds for any $(k, t) \in \llbracket 0, K-1 \rrbracket \times \mathbb{R}_{\ge 0}$, if we set 
		\begin{flalign*}
			g_k (t) = G_{\llbracket N_1' + K - k, N_2' - K + k \rrbracket} (t); \qquad h_k (t) = H_{\llbracket N_1' + K - k, N_2' - K + k \rrbracket} (t). 
		\end{flalign*} 
		
		\noindent The lemma then follows from the $(j,t) = (0, T)$ case of \Cref{hkgk}.
	\end{proof}

	The next proposition compares a Toda lattice $(\bm{a}(t); \bm{b}(t) )$ on a torus to one $( \tilde{\bm{a}}(t), \tilde{\bm{b}}(t) )$ on an interval (of the same size), if the two coincide on a subinterval of their domains. Its proof is entirely analogous to that of \Cref{aabbk} and is therefore omitted. 
	
	\begin{prop} 
		
		\label{aabbk2}
		
		Let $N_1 \le N_2$ be integers; set $N = N_2 - N_1 + 1$. For each real number $t \in \mathbb{R}_{\ge 0}$, fix $N$-tuples $\tilde{\bm{a}}(t), \bm{a}(t), \tilde{\bm{b}} (t), \bm{b}(t) \in \mathbb{R}^N$, indexed as
		\begin{flalign*}
			& \tilde{\bm{a}}(t) = ( \tilde{a}_{N_1}(t), \tilde{a}_{N_1+1}(t), \ldots , \tilde{a}_{N_2}(t) ); \qquad \tilde{\bm{b}}(t) = ( \tilde{b}_{N_1}(t), \tilde{b}_{N_1+1}(t), \ldots , \tilde{b}_{N_2}(t) ); \\
			& \bm{a}(t) = ( a_0 (t), a_1 (t), \ldots , a_{N-1}(t) ); \qquad \quad \bm{b} (t) = ( b_0 (t), b_1 (t), \ldots , b_{N-1}(t) ).
		\end{flalign*}
		
		\noindent For each $s \in \mathbb{R}_{\ge 0}$, also set $\tilde{a}_i (s) = 0 = \tilde{b}_i (s)$ if $i \notin \llbracket \tilde{N}_1, \tilde{N}_2 - 1 \rrbracket$; additionally set $a_i (s) = a_{i+N} (s)$ and $b_i (s) = b_{i+N} (s)$ for each $(i, s) \in \mathbb{Z} \times \mathbb{R}_{\ge 0}$. Assume that $( \tilde{\bm{a}}(t); \tilde{\bm{b}}(t) )$ satisfies \eqref{derivativepa} for each $(j, t) \in \llbracket N_1, N_2 \rrbracket \times \mathbb{R}_{\ge 0}$ and that $( \bm{a}(t); \bm{b}(t) )$ satisfies \eqref{derivativepa} for each $(j, t) \in \llbracket 0, N-1 \rrbracket \times \mathbb{R}_{\ge 0}$. Under this notation, for any $I \le J$ and $t \in \mathbb{R}_{\ge 0}$, define $G_{\llbracket I, J \rrbracket} (t)$ and $H_{\llbracket I, J \rrbracket} (t)$ as in \eqref{ghijt}. 
		
		Now let $K \ge 1$ and $N_1' \le N_2'$ be integers such that $N_1 \le N_1' \le N_2' \le N_2$ and $N_1' + K \le N_2' - K$. If $a_j (0) = \tilde{a}_j (0)$ and $b_j (0) = \tilde{b}_j (0)$ for each $j \in \llbracket N_1', N_2' \rrbracket$, then \eqref{gn12hn12t} holds for any $T \in \mathbb{R}_{\ge 0}$.
		
	\end{prop}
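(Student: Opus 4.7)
The plan is to follow the proof of \Cref{aabbk} essentially verbatim, with the only conceptual observation being that the Flaschka equations \eqref{derivativepa} are purely local: the right-hand sides at index $j$ depend only on $\{a_{j-1}, a_j, b_j, b_{j+1}\}$ (and $\{a_{j-1}, a_j, b_j\}$ for $\partial_t b_j$). Consequently, so long as we work at indices in the interior of the coincidence window $\llbracket N_1', N_2' \rrbracket$, neither the open boundary convention $\tilde{a}_{N_2}(s) = 0$ on the interval side nor the periodic identification $a_i(s) = a_{i+N}(s)$ on the torus side is visible; the two systems obey identical local ODEs, just as in \Cref{aabbk}.

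Concretely, I would first subtract the two copies of \eqref{derivativepa}, obtaining, for each $i$ with $i, i \pm 1 \in \llbracket N_1, N_2 \rrbracket$,
\begin{align*}
\partial_t(a_i - \tilde{a}_i) &= \tfrac{1}{2}(b_i - b_{i+1})(a_i - \tilde{a}_i) + \tfrac{\tilde{a}_i}{2}\bigl((b_i - \tilde{b}_i) - (b_{i+1} - \tilde{b}_{i+1})\bigr), \\
\partial_t(b_i - \tilde{b}_i) &= (a_{i-1} - \tilde{a}_{i-1})(a_{i-1} + \tilde{a}_{i-1}) - (a_i - \tilde{a}_i)(a_i + \tilde{a}_i).
\end{align*}
Bounding these pointwise in terms of $H_{\llbracket i-1,i+1 \rrbracket}(s)$ and $G_{\llbracket i-1,i+1 \rrbracket}(s)$ (via the definitions \eqref{ghijt}) exactly as in the open case, then integrating from $0$ to $t_0$, using the initial coincidence $a_j(0) = \tilde{a}_j(0)$, $b_j(0) = \tilde{b}_j(0)$ for $j \in \llbracket N_1', N_2' \rrbracket$, and taking the supremum over $t_0 \in [0,t]$ yields
\begin{flalign*}
G_{\llbracket N_1'+K-k,\, N_2'-K+k \rrbracket}(t) \le H_{\llbracket N_1'+K-k-1,\, N_2'-K+k+1 \rrbracket}(t) \int_0^t G_{\llbracket N_1'+K-k-1,\, N_2'-K+k+1 \rrbracket}(s)\, ds,
\end{flalign*}
which is precisely the hypothesis \eqref{0gkt} of the iteration lemma \Cref{hkgk} with $g_k(t) = G_{\llbracket N_1'+K-k,\, N_2'-K+k \rrbracket}(t)$ and $h_k(t) = H_{\llbracket N_1'+K-k,\, N_2'-K+k \rrbracket}(t)$.

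Applying \Cref{hkgk} at $(j,t) = (0,T)$ then gives the bound \eqref{gn12hn12t}. There is no genuine obstacle beyond bookkeeping: the only points requiring care are (i) the mild index shift between the torus parametrization $\llbracket 0, N-1 \rrbracket$ and the interval parametrization $\llbracket N_1, N_2 \rrbracket$, handled by the periodicity convention $a_{i+N}(s) = a_i(s)$, $b_{i+N}(s) = b_i(s)$ stated in the proposition, and (ii) verifying that the $K$-fold shrinking $\llbracket N_1', N_2' \rrbracket \to \llbracket N_1'+K, N_2'-K \rrbracket$ keeps each iterate of the Grönwall step confined to indices where both systems enjoy well-defined local neighbors, so that the torus-versus-interval distinction truly never enters. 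This is ensured by the hypothesis $N_1'+K \le N_2'-K$ together with $\llbracket N_1', N_2' \rrbracket \subseteq \llbracket N_1, N_2 \rrbracket$.
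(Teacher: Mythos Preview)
Your proposal is correct and matches the paper's approach exactly: the paper states that the proof of \Cref{aabbk2} ``is entirely analogous to that of \Cref{aabbk} and is therefore omitted,'' and your write-up supplies precisely that analogous argument, correctly identifying that the locality of \eqref{derivativepa} makes the open-versus-periodic boundary irrelevant inside $\llbracket N_1', N_2' \rrbracket$.
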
 
	
	\subsection{Approximate Thermal Equilibrium for the Open Toda Lattice}
	
	\label{ApproximateMu}
	
	Unlike for the periodic Toda lattice, thermal equilibrium $\mu_{\beta,\theta;N-1,N}$ (from \Cref{mubeta2}) is not an invariant measure for the open Toda lattice. In this section we establish the below proposition indicating that it, in a certain sense, still ``approximately is,'' if time scale is much shorter than the domain size. Its proof uses \Cref{aabbk2} to couple the periodic Toda lattice to the closed one.

	\begin{prop} 
		
		\label{ltl0} 
		
		Adopt \Cref{lbetaeta}, and fix $t \in [0, T]$. There exists a random matrix $\bm{M} = [M_{ij}] \in \SymMat_{\llbracket N_1, N_2 \rrbracket}$, whose law coincides with that of $\bm{L}(0)$, such that the following holds with overwhelming probability. For any real number $K \ge T \log N $, we have that 
		\begin{flalign}
			\label{estimatelm} 
			\displaystyle\max_{i,j \in \llbracket N_1+K, N_2-K \rrbracket} | L_{ij} (t) - M_{ij} | \le e^{-K/5}.
		\end{flalign}
		
	\end{prop}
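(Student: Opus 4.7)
The plan is to construct $\bm{M}$ by embedding the open Toda initial data into a coupled periodic Toda lattice on $\mathbb{T}_N$, and then to exploit the invariance of periodic thermal equilibrium from \Cref{betathetainvariant}. Concretely, I would sample an auxiliary random variable $\tilde{a}_{N_2}(0)$, independent of $(\bm{a}(0); \bm{b}(0))$, with density proportional to $a^{2\theta-1} e^{-\beta a^2}$ (as in \Cref{mubeta}). Form periodic initial data on $\mathbb{T}_N$ by setting $\tilde{a}_i(0) = a_i(0)$ for $i \in \llbracket N_1, N_2-1\rrbracket$, appending $\tilde{a}_{N_2}(0)$, and taking $\tilde{\bm{b}}(0) = \bm{b}(0)$; then $(\tilde{\bm{a}}(0); \tilde{\bm{b}}(0)) \sim \mu_{\beta,\theta;N}$. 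Running this under the periodic Toda flow to time $t$, \Cref{betathetainvariant} guarantees $(\tilde{\bm{a}}(t); \tilde{\bm{b}}(t)) \sim \mu_{\beta,\theta;N}$. I then define $\bm{M}$ to be the \emph{open} Lax matrix on $\llbracket N_1, N_2 \rrbracket$ built from $(\tilde{a}_{N_1}(t), \ldots, \tilde{a}_{N_2-1}(t); \tilde{\bm{b}}(t))$, i.e., after discarding the wrap-around entry $\tilde{a}_{N_2}(t)$. Since the marginal of the product measure $\mu_{\beta,\theta;N}$ obtained by deleting one $a$-coordinate is precisely $\mu_{\beta,\theta;N-1,N}$, the matrix $\bm{M}$ has the same law as $\bm{L}(0)$.

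For the comparison, I would invoke \Cref{aabbk2} with $(N_1', N_2') = (N_1, N_2-1)$, which is legitimate since the open and periodic initial data agree on both the $a$ and $b$ entries indexed by $\llbracket N_1, N_2-1\rrbracket$. With the notation from \eqref{ghijt}, this yields
\begin{flalign*}
G_{\llbracket N_1+K+1,\, N_2-K-1 \rrbracket}(t) \le \frac{t^K}{K!} \cdot G_{\llbracket N_1, N_2-1\rrbracket}(t) \cdot \prod_{i=0}^{K-1} H_{\llbracket N_1+i, N_2-1-i\rrbracket}(t).
\end{flalign*}
To control the $H$- and $G$-factors, apply \Cref{l0eigenvalues} to both $\bm{L}(\cdot)$ and the periodic Lax matrix with $A$ of order $\log N$, and take a union bound; on the resulting event of probability at least $1 - c^{-1}e^{-c(\log N)^2}$, all Flaschka variables of both dynamics are uniformly bounded by $C_0 \log N$ for every $s \ge 0$. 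Each $H$-factor, as well as $G_{\llbracket N_1, N_2-1\rrbracket}(t)$, is then at most $C_1 \log N$, and Stirling gives
\begin{flalign*}
G_{\llbracket N_1+K+1,\, N_2-K-1\rrbracket}(t) \le C_1 \log N \cdot \left( \frac{e C_1 T \log N}{K} \right)^{K},
\end{flalign*}
using $t \le T$. For $K$ exceeding a sufficiently large constant multiple of $T\log N$ (with the constant absorbed into $c$), the right-hand side drops below $e^{-K/5}$. Since $\bm{L}(t)$ and $\bm{M}$ are tridiagonal with (off-)diagonal entries given by the respective Flaschka variables at time $t$, $|L_{ij}(t) - M_{ij}|$ in the bulk is controlled by $G_{\llbracket N_1+K+1, N_2-K-1\rrbracket}(t)$, delivering \eqref{estimatelm} (after a harmless relabeling $K \mapsto K+1$ at each endpoint).

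The main obstacle is the calibration of these exponents. Obtaining the probability level $1 - e^{-c(\log N)^2}$ forces $A$ to be of order $\log N$ in \Cref{l0eigenvalues} (rather than $\sqrt{\log N}$), so the $H$-product contributes a factor $(C_1 \log N)^K$, and one must check that the factorial suppression $t^K/K!$ only defeats this once $K$ exceeds a fixed constant times $T \log N$; this constant is absorbed into the threshold by enlarging the constants appearing in the probability bound. A secondary verification, which is immediate from the product structure of $\mu_{\beta,\theta;N}$, is that marginalizing over the extra $a$-coordinate reproduces the open thermal equilibrium $\mu_{\beta,\theta;N-1,N}$, so that the constructed $\bm{M}$ genuinely shares the law of $\bm{L}(0)$.
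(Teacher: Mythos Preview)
Your approach is essentially identical to the paper's: couple to a periodic Toda lattice via an extra Gamma variable, use invariance of periodic thermal equilibrium, and compare the two flows via \Cref{aabbk2} on a high-probability bounded event from \Cref{l0eigenvalues}. The paper packages your Stirling computation as \Cref{a2p2} and is more explicit about the one point you flag as an obstacle: rather than taking $A$ to be a generic constant times $\log N$ and then having to absorb a multiplicative constant in the threshold $K \ge T\log N$, the paper simply sets $A = (\log N)/800$ from the start, so that the hypothesis $K \ge 200AT$ of \Cref{a2p2} is satisfied directly by $K \ge T\log N$, at the price of a smaller (but still positive) constant $c$ in the probability bound---which is exactly the absorption you describe.
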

	
	To establish \Cref{ltl0}, we first require the following quick consequence of \Cref{aabbk} and \Cref{aabbk2}.

	\begin{lem} 
		
		\label{a2p2} 
		
		Adopt the notation and assumptions of either \Cref{aabbk} or \Cref{aabbk2}, and suppose that $T \ge 1$. Fix a real number $A \ge 1$, and assume that $K \ge 200AT$ and that $| a_i (0) | + | \tilde{a}_i (0) | + | b_i (0) | + | \tilde{b}_i (0) | \le A$ for each $i \in \mathbb{Z}$. Then, $G_{\llbracket N_1' + K, N_2'-K \rrbracket} (T) \le e^{-K/4}$.
		
	\end{lem}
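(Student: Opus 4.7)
The plan is to apply the deterministic propagation estimate \eqref{gn12hn12t} (from Proposition \ref{aabbk} in the open vs.\ open setting, or Proposition \ref{aabbk2} in the open vs.\ periodic one), and then bound each of the factors on its right side uniformly in time using the hypothesis on the initial data together with \Cref{abltt}. No randomness enters; the entire argument is deterministic.

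First I would uniformly control the Flaschka variables on $[0, T]$. By hypothesis, $\max_i |a_i(0)|, \max_i |b_i(0)|, \max_i |\tilde a_i(0)|, \max_i |\tilde b_i(0)| \le A$, so applying \Cref{abltt} separately to the two Toda systems $(\bm a(s), \bm b(s))$ and $(\tilde{\bm a}(s), \tilde{\bm b}(s))$ (noting that each satisfies \eqref{derivativepa} on its own domain) gives
\begin{flalign*}
\max_i |a_i(s)| + \max_i |b_i(s)| \le 12 A, \qquad \max_i |\tilde a_i(s)| + \max_i |\tilde b_i(s)| \le 12 A,
\end{flalign*}
for every $s \in [0, T]$. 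Inserting these into the definitions \eqref{ghijt} yields a constant $C_1$ (depending only on the constant in \Cref{abltt}) such that $H_{\llbracket I, J \rrbracket}(T) \le C_1 A$ and $G_{\llbracket N_1', N_2' \rrbracket}(T) \le C_1 A$ for every sub-interval $\llbracket I, J \rrbracket \subseteq \llbracket N_1', N_2' \rrbracket$.

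Second, I would substitute these bounds into \eqref{gn12hn12t} and invoke Stirling's inequality $K! \ge (K/e)^K$ to obtain
\begin{flalign*}
G_{\llbracket N_1' + K, N_2' - K \rrbracket}(T) \le \frac{T^K}{K!} \cdot C_1 A \cdot (C_1 A)^K \le C_1 A \cdot \left( \frac{C_1 e A T}{K} \right)^K.
\end{flalign*}
The hypothesis $K \ge 200 A T$ makes the base $C_1 e A T / K \le C_1 e / 200$, which (for the $C_1$ arising from \Cref{abltt}) can be made bounded above by some absolute constant strictly less than $e^{-1/4}$; the prefactor $C_1 A$ is absorbed using $K \ge 200$ and $A \ge 1$ at the cost of a negligible correction in the exponent. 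This yields $G_{\llbracket N_1' + K, N_2' - K \rrbracket}(T) \le e^{-K/4}$.

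There is no real obstacle here beyond bookkeeping: the only point requiring care is that the factors $H_{\llbracket I, J \rrbracket}(T)$ involve a supremum over $s \in [0, T]$, so we must rule out the possibility that the Flaschka variables grow in time. This is exactly the content of \Cref{abltt}, which (via conservation of $\eig \bm{L}$) gives a time-uniform bound. The constant $200$ in the hypothesis is chosen merely to dominate the explicit constants from \Cref{abltt} and Stirling in the exponential rate $e^{-K/4}$.
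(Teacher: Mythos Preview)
Your proposal is correct and follows essentially the same approach as the paper: use \Cref{abltt} to bound $H_{\llbracket I,J\rrbracket}(T)$ and $G_{\llbracket N_1',N_2'\rrbracket}(T)$ uniformly by a constant times $A$, plug into \eqref{gn12hn12t}, apply Stirling, and exploit $K \ge 200AT$ to make the base small enough that the prefactor is absorbed. The paper carries out exactly this with the explicit constant $36A$ in place of your $C_1 A$, obtaining $36A \cdot (36eAT/K)^K \le 36A \cdot 2^{-K} \le 2^{-K/2} \le e^{-K/4}$.
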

	
	\begin{proof}
		
		By \Cref{abltt}, we have 
		\begin{flalign}
			\label{gha} 
			G_{\llbracket N_1', N_2' \rrbracket} (T) \le H_{\llbracket N_1', N_2' \rrbracket} (T) & \le 36 \cdot \displaystyle\max_{i \in \mathbb{Z}} \big( | a_i (0) | + | b_i (0) | + | \tilde{a}_i (0) | + | \tilde{b}_i (0) | \big) \\ 
			& \le 36 A.
		\end{flalign} 
		
		\noindent Thus,  
		\begin{flalign*}
			G_{\llbracket N_1' + K, N_2' - K \rrbracket} (T) \le \displaystyle\frac{T^K}{K!} (36A)^{K+1} & \le 36A \bigg( \displaystyle\frac{36e AT}{K} \bigg)^K \\ 
			& \le 36A \cdot 2^{-K} \le 2^{-K/2} \le e^{-K/4}.
		\end{flalign*}
		
		\noindent where in the first inequality we used \Cref{aabbk} with \eqref{gha}; in the second we used the fact that $K! \ge (e^{-1} K)^K$; in the third we used the fact that $36eATK^{-1} \le 100AT K^{-1} \le 1 / 2$; in the fourth we used the fact that $2^{-K/2} \le 2^{-100AT} \le (36A)^{-1}$ (as $A, T \ge 1$); and in the fifth we used the fact that $e \le 4$. This establishes the lemma.
	\end{proof}

	\begin{proof}[Proof of \Cref{ltl0}]

		We will compare the open Lax matrix $\bm{L}(t)$ to a periodic one $\bm{R}(t)$. To prescribe initial data for the latter (through its Flaschka variables), define $( \bm{\mathfrak{a}} (0), \bm{\mathfrak{b}} (0) )$, where $\bm{\mathfrak{a}} (0) =  ( \mathfrak{a}_j (0) )$ and $\bm{\mathfrak{b}} (0) = ( \mathfrak{b}_j (0) )$ for $j \in \mathbb{Z}$, as follows. First, for each $i \in \llbracket N_1, N_2-1 \rrbracket$, set 
		\begin{flalign*} 
			\mathfrak{a}_i (0) = L_{i,i+1}(0) = a_i (0), \qquad \text{and} \qquad \mathfrak{b}_i (0) = L_{i,i} (0) = b_i (0).
		\end{flalign*} 
		
		\noindent Letting $\mathfrak{r} \in \mathbb{R}_{>0}$ denote a Gamma random variable, independent from $\bm{L}(0)$, with density $\mathbb{P} [\mathfrak{r} \in (r,r+dr)] = 2 \beta^{\theta} \cdot \Gamma(\theta)^{-1} \cdot r^{2\theta-1} e^{-\beta r^2} dr$, further set $\mathfrak{a}_{N_2} = \mathfrak{r}$ and $\mathfrak{b}_{N_2} = L_{N_2,N_2} (0) = b_{N_2} (0)$. We then extend these Flaschka variables periodically, by imposing $( \mathfrak{a}_{j+N} (0); \mathfrak{b}_{j+N} (0) ) = ( \mathfrak{a}_j (0); \mathfrak{b}_j (0) )$ for each $j \in \mathbb{Z}$. In this way, $( \mathfrak{a}_j (0); \mathfrak{b}_j (0) )$ over $j \in \llbracket N_1, N_2 \rrbracket$ is sampled under $\mu_{\beta,\theta;N}$ from \Cref{mubeta}. 
		
		Next, let $(\bm{\mathfrak{a}} (t); \bm{\mathfrak{b}}(t) )$, where $\bm{\mathfrak{a}}(t) = ( \mathfrak{a}_j (t) )$ and $\bm{\mathfrak{b}}(t) = ( \mathfrak{b}_j (t) )$, denote the solution to the periodic Toda lattice \eqref{derivativepa} on the torus $\mathbb{T}_N$, with initial data $( \bm{\mathfrak{a}} (0); \bm{\mathfrak{b}}(0) )$. Here, $(j, t)$ ranges over $\mathbb{Z} \times \mathbb{R}$, imposing the periodicity constraint $( \mathfrak{a}_{j+N} (t); \mathfrak{b}_{j+N} (t) ) = ( \mathfrak{a}_j (t); \mathfrak{b}_j (t) )$. It follows from \Cref{betathetainvariant} that, for each $t \in \mathbb{R}_{\ge 0}$, 
		\begin{flalign}
			\label{atbta0b02}
			\text{the law of} \quad ( \bm{\mathfrak{a}} (t); \bm{\mathfrak{b}} (t) ) \quad \text{coincides with that of} \quad ( \bm{\mathfrak{a}} (0); \bm{\mathfrak{b}} (0) ).
		\end{flalign}
		
		\noindent Define the associated Lax matrix\footnote{This convention is slightly different from the one in \Cref{matrixl2} (where the rows and columns are indexed by $\llbracket 0, N-1 \rrbracket$ instead of $\llbracket N_1, N_2 \rrbracket$); we use it for notational convenience.} $\bm{R}(t) = [ R_{ij} (t) ] \in \SymMat_{\llbracket N_1, N_2 \rrbracket}$, by setting $R_{jj} (t) = \mathfrak{b}_j (t)$ and $R_{j,j+1} (t) = R_{j+1,j} (t) = \mathfrak{a}_j (t)$ for each $j \in \llbracket N_1, N_2 \rrbracket$, where we have denoted $R_{N_2+1,N_2} (t) = R_{N_1, N_2} (t)$ and $R_{N_2, N_2+1} (t) = R_{N_2, N_1} (t)$. If $(i, j)$ is not of the above form, then we set $R_{ij} (t)= 0$.	
		
		Now define $\bm{M} = [M_{ij}] \in \SymMat_{\llbracket N_1, N_2 \rrbracket}$ by setting the $(N_1, N_2)$-entry and $(N_2, N_1)$-entry of $\bm{R}(T)$ to $0$, namely $M_{ij} = \mathbbm{1}_{(i,j) \ne (N_1, N_2)} \cdot \mathbbm{1}_{(i,j) \ne (N_2,N_1)} \cdot R_{ij} (T)$. By \eqref{atbta0b02}, the fact that $( \bm{\mathfrak{a}}(0); \bm{\mathfrak{b}}(0) )$ is sampled under $\mu_{\beta,\theta;N}$; and \Cref{lbetaeta}, $\bm{M}$ has the same law as $\bm{L}(0)$. 
		
		It therefore remains to confirm that \eqref{estimatelm} holds with high probability. To that end, recalling \Cref{adelta}, define the event $\mathsf{E}_1 = \bigcap_{s \ge 0} \mathsf{BND}_{\bm{R}(s)} (\log N / 800 )$. By \Cref{l0eigenvalues}, $\mathsf{E}_1$ is overwhelmingly probable, so we will restrict to $\mathsf{E}_1$ in what follows. 
		
		Now we apply \Cref{aabbk2} and \Cref{a2p2}. More specifically, since $L_{ij}(0) = R_{ij} (0)$ for $i, j \in \llbracket N_1 + 1, N_2 - 1 \rrbracket$, applying \Cref{a2p2} (with the $(A; I, J; K)$ there equal to $(\log N / 200; N_1+1, N_2-1; K-1)$ here, using the fact that $(K-1) / 4 \le K / 5$ for sufficiently large $N$, and our restriction to $\mathsf{E}_1$), yields \eqref{estimatelm}, since $M_{ij} = R_{ij} (t)$ for $i, j \in \llbracket N_1 + K, N_2 - K \rrbracket$.		
	\end{proof}

	\subsection{Infinite Volume Limit} 
	
	\label{RLimit} 
	
	Recall that our previous descriptions of the Toda lattice involved a finite number of variables, and were thus defined on a finite domain. In this section we explain conditions under which solutions of the Toda lattice on infinite domains can be realized as limits of those on finite domains. These conditions are described through the below assumption.
	
	\begin{assumption}
		
		\label{initialrp}
		
		Fix real numbers $R > 1 > \mathfrak{p} \ge 0$. For each real number $t \in \mathbb{R}_{\ge 0}$ and integers $n \ge 1 \ge m$, let 
		\begin{flalign*}  
			& \bm{a}^{[m,n]} (t) = ( a_j^{[m,n]} (t) )_{j \in \llbracket m, n \rrbracket} \in \mathbb{R}^{m+n+1}; \\
			& \bm{a}^{(n)} (t) = ( a_j^{(n)} (t) )_{j \in \llbracket -n, n \rrbracket} \in \mathbb{R}^{2n+1}; \quad  \bm{a} = ( a_j )_{j \in \mathbb{Z}}; \\
			& \bm{b}^{[m,n]} (t) = ( b_j^{[m,n]} (t) )_{j \in \llbracket m, n \rrbracket} \in \mathbb{R}^{m+n+1}; \\ 
			& \bm{b}^{(n)} (t) = ( b_j^{(n)} (t) )_{j \in \llbracket -n, n \rrbracket} \in \mathbb{R}^{2n+1}; \quad  \bm{b} = ( b_j )_{j \in \mathbb{Z}},
		\end{flalign*}
		
		\noindent be $(m+n+1)$-tuples, $(2n+1)$-tuples, and infinite sequences of real numbers. Assume that $( \bm{a}^{[m,n]} (t); \bm{b}^{[m,n]} (t) )$ is a solution to the open Toda lattice \eqref{derivativepa} on the interval $\llbracket m, n \rrbracket$, and that  $( \bm{a}^{(n)} (t); \bm{b}^{(n)} (t) )$ is a solution to the periodic Toda lattice \eqref{derivativepa} on the torus $\mathbb{T}_{2n+1}$, which we identify with $\llbracket -n, n \rrbracket$. Further assume for any integers $n \ge 1 \ge m$ and $j \in \mathbb{Z}$ that 
		\begin{flalign}
			\label{estimaterjp}
			\begin{aligned} 
				& | a_j^{[m,n]} (0) | + | a_j^{[m,n]} (0) | \le R ( |j|+1)^{\mathfrak{p}}; \\
				& | a_j^{(n)} (0) | + | b_j^{(n)} (0) | \le R ( |j|+1 )^{\mathfrak{p}}.
			\end{aligned} 
		\end{flalign}
		
		\noindent Also assume $(a_j^{(n)} (0), b_j^{(n)} (0)) = (a_j, b_j)$ whenever $j \in \llbracket -n, n \rrbracket$; that $b_j^{[m,n]} (0) = b_j$ whenever $j \in \llbracket m,n \rrbracket$; and that $a_j^{[m,n]} (0) = a_j$ whenever $j \in \llbracket m, n-1 \rrbracket$ (and $a_n^{[m,n]} (0) = 0$).		
	\end{assumption}

	\begin{prop} 
		
		\label{infiniteab} 
		
		Adopt \Cref{initialrp}. Then, for each $(j, t) \in \mathbb{Z} \times \mathbb{R}_{\ge 0}$, there exist real numbers $a_j(t)$ and $b_j (t)$ such that
		\begin{flalign}
			\label{alimitblimit} 
			\displaystyle\lim_{n \rightarrow \infty} a_j^{[-n,n]} (t) = a_j (t) = \displaystyle\lim_{n \rightarrow \infty} a_j^{(n)} (t); \qquad  \displaystyle\lim_{n \rightarrow \infty} b_j^{[-n,n]} (t) = b_j (t) = \displaystyle\lim_{n \rightarrow \infty} b_j^{(n)} (t).
		\end{flalign}
		
		\noindent Moreover, denoting $\bm{a} (t) = ( a_j (t) )_{j \in \mathbb{Z}}$ and $\bm{b}(t) = ( b_j (t) )_{j \in \mathbb{Z}}$, we have that $( \bm{a}(t); \bm{b}(t) )$ solves the Toda lattice \eqref{derivativepa} for each $(j, t) \in \mathbb{Z} \times \mathbb{R}$. 
		
	\end{prop}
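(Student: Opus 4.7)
The plan is to show, via the comparison estimates \Cref{aabbk} and \Cref{aabbk2}, that the finite-volume Flaschka variables form Cauchy sequences in $n$, uniformly on compact time intervals, and then to pass to the limit in the integrated form of the Toda equations \eqref{derivativepa}.

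First I would fix an index $j \in \mathbb{Z}$ and a time horizon $T \ge 0$. For $n > |j|$, the open systems on $\llbracket -n, n \rrbracket$ and $\llbracket -(n+1), n+1 \rrbracket$ share the same initial Flaschka variables on $\llbracket -n, n-1 \rrbracket$ by \Cref{initialrp}, so \Cref{aabbk} applies with $N_1' = -n$ and $N_2' = n-1$. I would choose $K = \lfloor (n - |j|)/2 \rfloor$, which places both $j$ and $j+1$ in $\llbracket N_1'+K, N_2'-K \rrbracket$. The polynomial growth \eqref{estimaterjp}, combined with the uniform-in-time control of \Cref{abltt}, yields $H_{\llbracket -n, n-1 \rrbracket}(T) \le C_1 R (n+2)^{\mathfrak{p}}$ for a constant $C_1$. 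Since $G_{\llbracket -n, n-1 \rrbracket}(T) \le H_{\llbracket -n, n-1 \rrbracket}(T)$ by the triangle inequality, \eqref{gn12hn12t} together with the Stirling-type bound $K! \ge (K/e)^K$ gives
\begin{flalign*}
\sup_{s \in [0,T]} \Bigl( \big|a_j^{[-n,n]}(s) - a_j^{[-(n+1),n+1]}(s)\big| + \big|b_j^{[-n,n]}(s) - b_j^{[-(n+1),n+1]}(s)\big| \Bigr) \le C_1 R (n+2)^{\mathfrak{p}} \biggl( \frac{e T C_1 R (n+2)^{\mathfrak{p}}}{K} \biggr)^{K}.
\end{flalign*}
Since $\mathfrak{p} < 1$ and $K$ is of order $n$, the base in parentheses drops below $1/2$ for all sufficiently large $n$, so the right side decays super-exponentially in $n$. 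Telescoping in $n$ shows that $(a_j^{[-n,n]}(s), b_j^{[-n,n]}(s))$ is Cauchy uniformly in $s \in [0, T]$, and thereby defines the open limits $a_j(t)$ and $b_j(t)$ of \eqref{alimitblimit}.

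Next, to establish the periodic half of \eqref{alimitblimit}, I would run the analogous argument with \Cref{aabbk2} in place of \Cref{aabbk}, comparing $(\bm{a}^{(n)}(t); \bm{b}^{(n)}(t))$ with $(\bm{a}^{[-n,n]}(t); \bm{b}^{[-n,n]}(t))$. By \Cref{initialrp} these systems' initial Flaschka variables agree on $\llbracket -n, n-1 \rrbracket$, and the same choice of $K$ with the same $H$-bound produces a per-$n$ difference vanishing super-exponentially; this identifies the periodic limit with the open one. To conclude that $(\bm{a}(t); \bm{b}(t))$ solves \eqref{derivativepa}, I would use that the Cauchy estimate above is uniform in $s \in [0, T]$, so the convergence $a_j^{[-n,n]}(s) \to a_j(s)$ is uniform on $[0, T]$; combined with the $n$-independent bound from \Cref{abltt}, one passes to the limit in the integral equation
\begin{flalign*}
a_j^{[-n,n]}(t) = a_j + \displaystyle\int_0^t \displaystyle\frac{1}{2} a_j^{[-n,n]}(s) \big( b_j^{[-n,n]}(s) - b_{j+1}^{[-n,n]}(s) \big) ds
\end{flalign*}
and its analog for $b_j^{[-n,n]}(t)$ using dominated convergence; differentiating recovers the differential form \eqref{derivativepa} for $(\bm{a}(t); \bm{b}(t))$ on $\mathbb{Z}$.

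The main obstacle is engineering the Cauchy estimate so that the factor $T^K / K!$ in \eqref{gn12hn12t} beats the $(n^{\mathfrak{p}})^{K+1}$ penalty coming from the polynomial growth of $H$ forced by \eqref{estimaterjp}. The assumption $\mathfrak{p} < 1$ is precisely what makes the ratio $n^{\mathfrak{p}}/K$ vanish under the choice $K$ of order $n$; without it, the $K$-fold product of $H$-factors in \eqref{gn12hn12t} could not be controlled and the argument would collapse.
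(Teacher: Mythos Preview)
Your proposal is correct and follows essentially the same approach as the paper: use the comparison estimates \Cref{aabbk} and \Cref{aabbk2} together with the growth bound from \Cref{abltt} and $\mathfrak{p}<1$ to produce a Cauchy sequence, identify the open and periodic limits, and pass to the limit in \eqref{derivativepa}. The only cosmetic differences are that the paper telescopes along a dyadic sequence of scales $N_{i+1}\approx N_i/4$ rather than consecutive $n\to n+1$, and it justifies the limiting ODE by bounding two time-derivatives uniformly rather than passing to the limit in the integral equation; both choices are equivalent in effect.
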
 
	
	\begin{proof} 
		
		Observe by \eqref{estimaterjp} and \Cref{abltt} that, for any integers $n \ge 1 \ge m$ and real number $t \in \mathbb{R}_{\ge 0}$,   
		\begin{flalign}
			\label{ajpn}
			& \displaystyle\sup_{j \in \llbracket -n, n \rrbracket} \big( | a_j^{[-n,n]} (t) | + | b_j^{[-n,n]} (t) | + | a_j^{(n)} (t) | + | b_j^{(n)} (t) | \big) \le 12 R (n+1)^{\mathfrak{p}}.
		\end{flalign} 
		
		\noindent Now let $K \ge 1$ be an integer, and let $N \ge 1$ be an integer. By \Cref{aabbk2}, with the last statement in \Cref{initialrp} (and the fact that $72R(n+1)^{\mathfrak{p}} \le 144 Rn^{\mathfrak{p}}$ for $n \ge 1$), we have
		\begin{flalign}
			\label{ajnbjn}
			\big| a_j^{(N)} (t) - a_j^{[-N, N]} (t) \big| + \big| b_j^{(N)} (t) - b_j^{[-N, N]} (t) \big| \le \displaystyle\frac{t^K}{K!} \cdot (144RN^{\mathfrak{p}})^{K+1},
		\end{flalign}
		
		\noindent for any $(j, t) \in \mathbb{Z} \times \mathbb{R}$ with $|j| + K \le N$. Thus, applying \eqref{ajnbjn} at $(N, K) = \big( n, \lfloor n / 2 \rfloor)$, we find 
		\begin{flalign*}
			\displaystyle\lim_{n \rightarrow \infty} \big(  \big| a_j^{(n)} (t) - a_j^{[-n, n]} (& t) \big| + \big| b_j^{(n)} (t) - b_j^{[-n, n]} (t) \big| \big) \\
			&  \le \displaystyle\lim_{n \rightarrow \infty} \displaystyle\frac{t^{\lfloor n/2 \rfloor}}{\lfloor n/2 \rfloor!} \cdot (144Rn^{\mathfrak{p}})^{n/2+1}= 0,
		\end{flalign*}
		
		\noindent where in the last equality we used the fact that $\mathfrak{p} < 1$. Hence, it suffices to show that the first limits in both of the pairs of equalities in \eqref{alimitblimit} hold. 
		
		To that end, fix an integer $J \ge 1$ and a real number $T \ge 0$. For $(j, t) \in \llbracket -J, J \rrbracket \times [0, T]$, we first bound $( a_j^{[-n,n]} (t), b_j^{[-n,n]} (t) )$ independently of $n$. To do this, observe (as used to obtain \eqref{ajnbjn}) by \Cref{aabbk} with the last statement in \Cref{initialrp} that, for any integers $N \ge N' \ge 1$,
		\begin{flalign}
			\label{ajnnnn}
			\big| a_j^{[-N, N]} (t) - a_j^{[-N', N']} (t) \big| + \big| b_j^{[N, N]} (t) - b_j^{[-N', N']} (t) \big| \le \displaystyle\frac{T^K}{K!} \cdot ( 144RN^{\mathfrak{p}})^{K+1}, 
		\end{flalign}
		
		\noindent for any $(j,t) \in \mathbb{Z} \times \mathbb{R}$ with $|j| + K \le N'$. Now let $N_0 > N_1 > \cdots > N_r$ be integers such that $N_0 = N$, such that $N_i / 16 \le N_{i+1} \le N_i / 4$ for each $i \in \llbracket 0, r-1 \rrbracket$, and such that $N_r \in [4J, 16J]$. Applying \eqref{ajnnnn} with the $(N, N'; K)$ there equal to $(N_i, N_{i+1}; N_{i+1} / 2)$ here, we find that there exists a constant $C_1 = C_1 (\mathfrak{p}, R, T) > 1$ such that 
		\begin{flalign*}
			\big| a_j^{[-N_i,N_i]} (t) - a_j^{[-N_{i+1},N_{i+1}]} (t ) \big| + \big| & b_j^{[-N_i, N_i]} (t) - b_j^{[-N_{i+1},N_{i+1}]} (t) \big| \\
			& \le \displaystyle\frac{T^K}{K!} \cdot (144RN_i^{\mathfrak{p}})^{K+1} \le C_1 e^{-N_i},
		\end{flalign*} 
		
		\noindent where in the last bound we used the facts that $K = N_{i+1} / 2 \ge N_i / 32$ and $\mathfrak{p} < 1$. Hence, for any integer $m \in \llbracket 1, r \rrbracket$, we have
		\begin{flalign}
			\label{ajnmn}
			\begin{aligned}
				\big| & a_j^{[-N,N]} (t) - a_j^{[-N_m, N_m]} (t) \big| + \big| b_j^{[-N, N]} (t) - b_j^{[-N_m, N_m]} (t) \big| \\
				& \le \displaystyle\sum_{i=0}^{m-1} \Big( \big| a_j^{[-N_i, N_i]} (t) - a_j^{[-N_{i+1}, N_{i+1}]} (t) \big| + \big| b_j^{[-N_i, N_i]} (t) - b_j^{[-N_{i+1}, N_{i+1}]} (t) \big| \Big) \\
				& \le C_1 \displaystyle\sum_{i=0}^{m-1} e^{-N_i} \le 2C_1 e^{-N_m}.
			\end{aligned}
		\end{flalign}
		
		\noindent Taking $m=r$ and using \eqref{ajpn} (with the $n$ there equal to $N_r \in [4J, 16J]$ here), it follows that 
		\begin{flalign}
			\label{jaestimate}
			\begin{aligned} 
				\displaystyle\max_{j \in \llbracket -J, J \rrbracket} \displaystyle\sup_{t \in [0, T]} \big( | a_j^{[-N, N]} (t) | + | b_j^{[-N, N]} (t) | \big) & \le 2C_1 e^{-N_r} + 12R(N_r+1)^{\mathfrak{p}} \\
				& \le 250RJ + 2C_1.
			\end{aligned}
		\end{flalign}
		
		By \eqref{ajnmn} and \eqref{ajnnnn}, $( a_j^{[-N,N]} (t) )$ and $( b_j^{[-N,N]} (t) )$ are Cauchy sequences over $N \in \mathbb{Z}_{\ge 0}$. Hence, these sequences admit unique limits $a_j (t)$ and $b_j (t)$, respectively, as $N$ tends to $\infty$. This establishes the first part of the proposition. 
		
		To establish the second, observe by \eqref{jaestimate} that there exists a constant $C_2 = C_2 (\mathfrak{p}, R, J, T) > 1$ such that $| a_j (t) | + | b_j (t) | \le C_2$ and $| a_j^{[-N,N]} (t) | + | b_j^{[-N,N]} (t) | \le C_2$, for all $(j, t) \in \llbracket -J, J \rrbracket \times [0, T]$ and $N \in \mathbb{Z}_{\ge 1}$. Since $( \bm{a}^{[-N,N]} (t); \bm{b}^{[-N,N]} (t) )$ satisfies \eqref{derivativepa}, these estimates imply that the first two $t$-derivatives of $a_j^{[-N,N]} (t)$ and $b_j^{[-N,N]} (t)$ are uniformly bounded, for all $(j, t) \in \llbracket -J, J \rrbracket \times [0, T]$. Thus, the $t$-derivatives of $a_j (t)$ and $b_j (t)$ are also uniformly bounded for all $(j, t) \in \llbracket -J, J \rrbracket \times [0, T]$. Together with the fact that $( \bm{a}^{[-N,N]} (t); \bm{b}^{[-N,N]} (t) )$ satisfy \eqref{derivativepa}, this implies that $(\bm{a}; \bm{b})$ satisfy \eqref{derivativepa}, thereby confirming the second statement of the proposition.
	\end{proof}

	The below corollary indicates that the Toda lattice on the full line $\mathbb{Z}$ at thermal equilibrium can be defined by taking a limit of periodic Toda lattices at thermal equilibrium $\mu_{\beta,\theta;N}$ (recall \Cref{mubeta}). Here, for any integer $N \ge 1$, we identify the torus $\mathbb{T}_{2N+1}$ with the interval $\llbracket -N, N \rrbracket$ (instead of $\llbracket 0, 2N \rrbracket$), which will index the Flaschka variables of the associated Toda lattice.
	
	\begin{cor} 
		
		\label{ajbjequation} 
		
		Fix $\beta, \theta \in \mathbb{R}_{>0}$. Let $a_1, a_2, \ldots , b_1 ,b_2, \ldots $ be mutually independent random variables, so that $(a_j, b_j)$ has law $\mu_{\beta, \theta; 1}$ for each $j \in \mathbb{Z}$. For each integer $N \ge 1$, let $( \bm{a}^{(N)} (t), \bm{b}^{(N)} (t) )$ denote the periodic Toda lattice \eqref{derivativepa} on $\mathbb{T}_{2N+1}$, with initial data $( \bm{a}^{(N)} (0); \bm{b}^{(N)} (0) )$ given by $\bm{a}^{(N)} (0) = (a_{-N}, a_{1-N}, \ldots , a_N)$ and $\bm{b}^{(N)} (0) = (b_{-N}, b_{1-N}, \ldots , b_N)$. 
		
		\begin{enumerate}
			\item For each $(j, t) \in \mathbb{Z} \times \mathbb{R}_{\ge 0}$, the limits $\lim_{N \rightarrow \infty} a_j^{(N)} (t) = a_j (t)$ and $\lim_{N \rightarrow \infty} b_j^{(N)} (t) = b_j (t)$ exist almost surely; are finite; and solve  \eqref{derivativepa}. 	
			\item Denoting $\bm{a}(t) = (a_j(t))_{j \in \mathbb{Z}}$ and $\bm{b}(t) = (b_j(t))_{j \in \mathbb{Z}}$, the law of $(\bm{a}(t); \bm{b}(t))$ coincides with that of $(\bm{a}(0); \bm{b}(0))$, for any $t \ge 0$.
			
		\end{enumerate}

	\end{cor}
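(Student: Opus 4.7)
The plan is to deduce \Cref{ajbjequation} from \Cref{infiniteab} together with the invariance statement of \Cref{betathetainvariant}. To invoke \Cref{infiniteab}, we must verify that the random initial data $(a_j, b_j)_{j \in \mathbb{Z}}$ satisfies \Cref{initialrp} almost surely. Each $b_j$ is Gaussian with density proportional to $e^{-\beta b^2/2}$ and each $a_j$ has density proportional to $a^{2\theta-1} e^{-\beta a^2}$ on $\mathbb{R}_{>0}$, so both enjoy sub-Gaussian tail bounds of the form $\mathbb{P}\big[|a_j| + |b_j| \ge R\big] \le C e^{-c R^2}$ uniformly in $j$. A standard Borel--Cantelli argument then gives, for any fixed $\mathfrak{p} \in (0, 1)$, that almost surely there is a (random) $R > 1$ satisfying $|a_j| + |b_j| \le R (|j|+1)^{\mathfrak{p}}$ for every $j \in \mathbb{Z}$. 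Defining the auxiliary open Toda flows $(\bm{a}^{[m,n]}(t); \bm{b}^{[m,n]}(t))$ on $\llbracket m, n \rrbracket$ from the same variables (with $a_n^{[m,n]}(0) = 0$), both parts of \eqref{estimaterjp} hold for this $R$.

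Granting \Cref{initialrp}, \Cref{infiniteab} yields the almost-sure existence of the finite limits $a_j(t) = \lim_{N \to \infty} a_j^{(N)}(t)$ and $b_j(t) = \lim_{N \to \infty} b_j^{(N)}(t)$ for every $(j, t) \in \mathbb{Z} \times \mathbb{R}_{\ge 0}$, and asserts that $(\bm{a}(t); \bm{b}(t))$ satisfies \eqref{derivativepa}. This proves the first part of the corollary.

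For the second part, fix $t \ge 0$ and an integer $J \ge 1$. By \Cref{betathetainvariant}, for every $N \ge J$ the law of $(\bm{a}^{(N)}(t); \bm{b}^{(N)}(t))$ equals $\mu_{\beta,\theta;2N+1}$ from \Cref{mubeta}. Because this measure is a product, its marginal on coordinates $j \in \llbracket -J, J \rrbracket$ is $\mu_{\beta,\theta;1}^{\otimes (2J+1)}$, which is exactly the law of $(a_j, b_j)_{|j| \le J}$. Since the first part gives almost sure (hence in distribution) convergence of $(a_j^{(N)}(t), b_j^{(N)}(t))_{|j| \le J}$ to $(a_j(t), b_j(t))_{|j| \le J}$ as $N \to \infty$, the latter also has law $\mu_{\beta,\theta;1}^{\otimes (2J+1)}$. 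Since $J$ is arbitrary, all finite-dimensional marginals of $(\bm{a}(t); \bm{b}(t))$ agree with those of $(\bm{a}(0); \bm{b}(0))$, so by the Kolmogorov extension theorem the two laws coincide on $\mathbb{R}^{\mathbb{Z}} \times \mathbb{R}^{\mathbb{Z}}$.

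There is no substantive obstacle here: the heavy lifting has already been done in \Cref{infiniteab} (which is where the Lieb--Robinson-type comparison \Cref{aabbk} and \Cref{aabbk2} are combined to extract an infinite-volume limit from polynomially-growing initial data). The only point requiring care is the almost sure verification of the polynomial-growth bound in \Cref{initialrp}, which is immediate from the sub-Gaussian tails of the product measure; thereafter the invariance assertion follows by passing the already-known finite-volume invariance through the almost-sure limit.
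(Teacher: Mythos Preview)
Your proof is correct and follows essentially the same approach as the paper: verify \Cref{initialrp} almost surely via sub-Gaussian tails and Borel--Cantelli, then invoke \Cref{infiniteab} for part (1), and deduce part (2) from \Cref{betathetainvariant} by passing the finite-volume invariance through the limit. The paper is terser on part (2) (it simply says it follows from part (1) and \Cref{betathetainvariant}), while you spell out the finite-dimensional-marginals argument, but this is just an elaboration of the same idea.
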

	
	\begin{proof}
		
		The second statement of the corollary follows from the first, together with \Cref{betathetainvariant}. It therefore remains to show the first, to which end it suffices by \Cref{infiniteab} to verify \Cref{initialrp}, that is, to show that there almost surely exists a constant $R > 1$ such that $|a_j| + |b_j| \le R(|j|+1 )^{1/2}$ for all $j \in \mathbb{Z}$. To that end, observe that there exists a constant $C > 1$ such that, for each $j \in \mathbb{Z}$, we have
		\begin{flalign}
			\label{ajbj2j112}
			\begin{aligned} 
				\mathbb{P} \big[ |a_j| + |b_j| > 2 ( |j| + 1 )^{1/2} \big] & \le \mathbb{P} \big[ a_j > ( |j|+1 )^{1/2} \big] + \mathbb{P} \big[ |b_j| > ( |j|+1 )^{1/2} \big] \\
				& \le C |j|^{\theta+1} e^{-\beta |j| / 2},
			\end{aligned} 
		\end{flalign}
		
		\noindent where in the last inequality we used the fact that $(a_j, b_j)$ has law $\mu_{\beta, \theta; 1}$ given by \Cref{mubeta}. Since the sum of the right side of \eqref{ajbj2j112} over $j \in \mathbb{Z}$ is finite, the Borel--Cantelli lemma implies the almost sure existence of a constant $R > 1$ such that $|a_j| + |b_j| \le R ( |j|+1 )^{1/2}$ for all $j \in \mathbb{Z}$. 		
	\end{proof} 
	
	We can now establish \Cref{ajbjequation}. 
	
	\begin{proof}[Proof of \Cref{ajbjequation2}]
		
		This will follow from using \Cref{a2p2} to compare the open Toda lattice $(\bm{a}^{[-N,-N]}(t), \bm{b}^{[-N,-N]}(t))$ to one on the torus $\mathbb{T}_{2N+1}$ with the same initial data, and then applying \Cref{ajbjequation} (and its proof) to confirm the large $N$ limit of the latter. So let $(\breve{\bm{a}}(t), \breve{\bm{b}}(t))$, where $\breve{\bm{a}}(t) = (\breve{a}_{-N}(t), \breve{a}_{1-N}(t), \ldots , \breve{a}_N (t))$ and $\breve{\bm{b}}(t) = (\breve{b}_{-N}(t), \breve{b}_{1-N}(t), \ldots , \breve{b}_N(t))$, denote the periodic Toda lattice \eqref{derivativepa} on $\mathbb{T}_{2N+1}$, with initial data obtained by setting $\breve{a}_j (0) = a_j$ and $\breve{b}_j (0) = b_j$ for each $j \in \llbracket -N, N \rrbracket$. Denoting for each integer $k \ge 2$ the event 
		\begin{flalign*}
			\mathsf{E}_k = \bigcap_{j =-k}^k \Big\{ |a_j| + |b_j| \le \displaystyle\frac{1}{1600} \cdot \big( R + \log (|k|+1) \big)  \Big\}; \qquad \mathsf{E}= \bigcap_{k = N}^{\infty} \mathsf{E}_k,
		\end{flalign*}
		
		\noindent we have from the explicit densities of $a_j$ and $b_j$ that $\mathbb{P} [\mathsf{E}_k^{\complement}] \le c_1^{-1} e^{-c_1 (R+\log k)^2}$, for some constant $c_1>0$. Hence, a union bound yields $\mathbb{P}[\mathsf{E}^{\complement}] \le c_2^{-1} e^{-c_2 R^2}$ for some constant $c_2>0$ (as $R \ge \log N$). 
		
		Restricting to $\mathsf{E}$, we then apply \Cref{a2p2}, with the $(N_1, N_2, A, K)$; $(\bm{a}(t); \bm{b}(t))$; and $(\tilde{\bm{a}}(t);\tilde{\bm{b}}(t))$ there equal to $(-N,N, R/200, K)$; $(\breve{\bm{a}}(t); \breve{\bm{b}}(t))$; and $(\bm{a}(t); \bm{b}(t))$ here, respectively. This yields 
		\begin{flalign}
			\label{aajbbj}
			\displaystyle\sup_{t \in [0, T]} \displaystyle\max_{j \in \llbracket K-N,N-K \rrbracket} \big( \big|a_j^{[-N,N]} (t) - \breve{a}_j (t) \big| + |b_j^{[-N,N]} (t) - \breve{b}_j(t) \big| \big) \le e^{-K/4}.
		\end{flalign}		
		
		\noindent Letting $N$ (and then $K$) tend to $\infty$, the first two statements of the proposition therefore follow from \Cref{ajbjequation} (with the $(\bm{a}(t);\bm{b}(t))$ there equal to $(\breve{\bm{a}}(t);\breve{\bm{b}}(t))$ here).		
		
		To confirm the third, we assume that $N_2 \ge -N_1$, as the case when $-N_1 > N_2$ is entirely analogous. We will first use \eqref{ajnmn} to estimate the difference between $(\bm{a}^{[-N_2,N_2]}(t); \bm{b}^{[-N_2,N_2]}(t))$ and $(\bm{a}(t);\bm{b}(t))$, and then use \Cref{a2p2} to estimate the difference between $(\bm{a}^{[N_1, N_2]} (t); \bm{b}^{[N_1,N_2]}(t))$ and $(\bm{a}^{[-N_2,N_2]}(t); \bm{b}^{[-N_2,N_2]}(t))$. To implement the former, first observe that the assumption \eqref{estimaterjp} holds in \Cref{infiniteab} (with the $R$ there uniformly bounded for fixed $\mathfrak{p} \in (0, 1)$), since on $\mathsf{E}$ we have $|a_j| \le (R+\log j)/800 \le (j^{\mathfrak{p}} + \log j)/800 \le (|j|+1)^{\mathfrak{p}}$ for sufficiently large $j$. Therefore, \eqref{ajnmn} applies, with the $(N, N_m)$ there equal to $(\mathfrak{N}, N_2)$ here for some $\mathfrak{N} \ge 4^m N_2 \ge 4^{m-1} N$. This yields for some constant\footnote{By our assumption $RT \le K \le N^{\mathfrak{p}}$, it is quickly verified using \eqref{ajnnnn} that this constant is independent of $T$.} $C_1 = C_1 (\mathfrak{p}) > 0$ that
		\begin{flalign*}
			\displaystyle\sup_{t \in [0,T]} \big( \big| a_j^{[-\mathfrak{N},\mathfrak{N}]} (t) - a_j^{[-N_2,N_2]} (t) \big| + \big| b_j^{[-\mathfrak{N},\mathfrak{N}]} (t) - b_j^{[-N_2,N_2]} (t) \big| \big) \le C_1 e^{-N_2}.
		\end{flalign*}
		
		\noindent Letting $m$, and thus $\mathfrak{N}$, tend to $\infty$ (and using the fact that $N_2 \ge N/3$, as $N_2 \ge -N_1$), we deduce
		\begin{flalign}
			\label{aabb2} 
			\displaystyle\sup_{t \in [0,T]} \big( \big| a_j (t) - a_j^{[-N_2,N_2]} (t) \big| + \big| b_j (t) - b_j^{[-N_2,N_2]} (t) \big| \big) \le C_1 e^{-N/3}.
		\end{flalign}

		\noindent We next apply \Cref{a2p2}, with the $(N_1,N_2; \tilde{N}_1,\tilde{N}_2;A,K)$; $(\bm{a}(t);\bm{b}(t))$; and $(\tilde{\bm{a}}(t);\tilde{\bm{b}}(t))$ there equal to $(N_1,N_2;-N_2,N_2; R/200, K)$; $(\bm{a}^{[-N_2,N_2]}(t);\bm{b}^{[-N_2,N_2]}(t))$; and $(\bm{a}^{[N_1,N_2]}(t);\bm{b}^{[N_1,N_2]}(t))$ here, respectively (using our restriction to $\mathsf{E}$, and the fact that $R \ge \log N$, to verify its hypotheses). As in the derivation of \eqref{aajbbj}, this yields
		\begin{flalign*}
			\displaystyle\sup_{t \in [0, T]} & \displaystyle\max_{j \in \llbracket N_1+K,N_2+K \rrbracket} \big( \big| a^{[N_1,N_2]} (t) - a^{[-N_2,N_2]}(t) \big| \\
			& \qquad \qquad \qquad \qquad   + \big| a^{[N_1,N_2]}(t) - a^{[-N_2, N_2]}(t) \big| \big) \le e^{-K/4}.
		\end{flalign*}
		
		\noindent Together with \eqref{aabb2} (and the fact that $K \le N$), this yields the third part of the corollary. 
	\end{proof}

	\section{Localization Centers}
	
	\label{EigenvalueLocal} 
	
	Throughout this section, we fix $\beta, \theta \in \mathbb{R}_{> 0}$; the constants below may depend on them, even if not explicitly stated. We further let $N_1 \le N_2$ be integers and set $N = N_2 - N_1 + 1$.

	\subsection{Existence and Speed Bounds for Localization Centers}
	
	\label{Center}
	
	In this section we discuss localization centers (recall \Cref{ucenter}) and their properties. We begin by confirming their existence, by proving \Cref{bijectionm}. 
	
	\begin{proof}[Proof of \Cref{bijectionm}]
		
		Let $(\bm{u}_1, \bm{u}_2, \ldots , \bm{u}_N)$ denote an orthonormal eigenbasis for $\bm{M}$; assume that the rows and columns of $\bm{M}$ are indexed by $\llbracket N_1, N_2 \rrbracket$. For each index $j \in \llbracket 1, N \rrbracket$, denote $\bm{u}_j = ( u_j (N_1), u_j (N_1+1), \ldots , u_j (N_2) )$ and define the set 
		\begin{flalign*} 
			\mathcal{I}_j = \{ i \in \llbracket N_1, N_2 \rrbracket : | u_j (i) | \ge (2N)^{-1} \}.
		\end{flalign*} 
		
		\noindent We must show that there exists a bijection $\varphi : \llbracket 1, N \rrbracket \rightarrow \llbracket N_1, N_2 \rrbracket$ such that $\varphi (j) \in \mathcal{I}_j$ for each $j \in \llbracket 1, N \rrbracket$. To do this, it suffices by Hall's theorem to show for any set $\mathcal{J} \subseteq \llbracket 1, N \rrbracket$ that $|\mathcal{I}_{\mathcal{J}}| \ge |\mathcal{J}|$, where $\mathcal{I}_{\mathcal{J}} = \bigcup_{j \in \mathcal{J}} \mathcal{I}_j$. If this statement were false for some subset $\mathcal{J} \subseteq \llbracket 1, N \rrbracket$, then we would have 
		\begin{flalign*}
			|\mathcal{J}|-1 \ge |\mathcal{I}_{\mathcal{J}}| = \displaystyle\sum_{i \in \mathcal{I}_{\mathcal{J}}} \displaystyle\sum_{j=1}^N | u_j (i) |^2 & \ge  \displaystyle\sum_{j \in \mathcal{J}} \displaystyle\sum_{i \in \mathcal{I}_j} | u_j (i) |^2 \\ 
			& \ge \displaystyle\sum_{j \in \mathcal{J}} \displaystyle\sum_{i=N_1}^{N_2} | u_j (i) |^2 -|\mathcal{J}|N (2N)^{-2} \\
			& \ge |\mathcal{J}| - \displaystyle\frac{1}{4},
		\end{flalign*} 
		
		\noindent where the first bound follows from the assumption; the second from the orthonormality of the eigenbasis $(\bm{u}_1, \bm{u}_2, \ldots , \bm{u}_N)$; the third from restricting to $j \in \mathcal{J}$; the fourth from using that $| u_j (i) | < (2N)^{-1}$ for $i \notin \mathcal{I}_j$; and the fifth from the fact that each $\bm{u}_j$ is a unit vector. This is a contradiction, so a $(2N)^{-1}$-localization center bijection for $\bm{M}$ exists. 
	\end{proof}

	The next lemma concerns a random Lax matrix corresponding to the open Toda lattice, sampled under thermal equilibrium (recall \Cref{Localization}). It first states that its eigenvectors are exponentially localized around their localization centers; it then states (as a quick consequence) that such localization centers are essentially unique, up to an error of $(\log N)^2 / 2$.
	
	\begin{lem} 
		
		\label{bijectionl}
		
		There exists a constant $c>0$ such that the following holds with overwhelming probability. Adopt \Cref{lbetaeta}, but assume more generally that $\zeta \ge e^{-200(\log N)^{3/2}}$. Fix any index $j \in \llbracket 1, N \rrbracket$, and let $\varphi_j \in \llbracket N_1, N_2 \rrbracket$ denote any $\zeta$-localization center for $\bm{u}_j$. 
		
		\begin{enumerate}
			\item For any $i \in \llbracket N_1, N_2 \rrbracket$ with $| i - \varphi_j | \ge (\log N)^2/2$, we have $| u_j (i) | \le e^{-c|i-\varphi_j|}$.
			
			\item If $i \in \llbracket N_2, N_2 \rrbracket$ satisfies $| i - \varphi_j | \ge (\log N)^2/2$, then $i$ is not a $\zeta$-localization center for $\bm{u}_j$. 
		\end{enumerate}
		
	\end{lem}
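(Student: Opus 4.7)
The plan is to reduce both statements to the exponential decay estimate for products of eigenvector entries provided by \Cref{uijexponential}, together with Markov's inequality and a union bound.

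Let $c_0 > 0$ be the constant from \Cref{uijexponential}, so that $\mathbb{E}\bigl[|u_k(i) u_k(j)|\bigr] \le c_0^{-1} e^{-c_0 |i-j|}$ for all $k \in \llbracket 1, N \rrbracket$ and $i, j \in \llbracket N_1, N_2 \rrbracket$. By Markov's inequality, for any such triple we have
\begin{flalign*}
\mathbb{P}\bigl[|u_k(i) u_k(j)| \ge e^{-c_0 |i-j|/2}\bigr] \le c_0^{-1} e^{-c_0 |i-j|/2}.
\end{flalign*}
Let $\mathsf{E}$ denote the event that $|u_k(i) u_k(j)| < e^{-c_0|i-j|/2}$ for every $k \in \llbracket 1, N \rrbracket$ and every pair $i, j \in \llbracket N_1, N_2 \rrbracket$ with $|i-j| \ge (\log N)^2/2$. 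A union bound over the at most $N^3$ such triples yields
\begin{flalign*}
\mathbb{P}[\mathsf{E}^\complement] \le N^3 \cdot c_0^{-1} e^{-c_0 (\log N)^2/4},
\end{flalign*}
which is bounded by $c^{-1} e^{-c (\log N)^2}$ for some $c > 0$ once $N$ is large enough.

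We now argue on $\mathsf{E}$. Fix $j$ and a $\zeta$-localization center $\varphi_j$, so $|u_j(\varphi_j)| \ge \zeta$. For any $i$ with $|i-\varphi_j| \ge (\log N)^2/2$, the definition of $\mathsf{E}$ together with the hypothesis $\zeta \ge e^{-200(\log N)^{3/2}}$ gives
\begin{flalign*}
|u_j(i)| \le \zeta^{-1} \cdot e^{-c_0 |i-\varphi_j|/2} \le e^{200(\log N)^{3/2}} \cdot e^{-c_0 |i-\varphi_j|/2}.
\end{flalign*}
Since $|i-\varphi_j| \ge (\log N)^2/2$, for $N$ sufficiently large we have $(c_0/4)|i-\varphi_j| \ge c_0 (\log N)^2 / 8 \ge 200(\log N)^{3/2}$, and so
\begin{flalign*}
|u_j(i)| \le e^{-c_0 |i-\varphi_j|/4}.
\end{flalign*}
This yields part (1) with the constant $c_0 / 4$ in place of $c$. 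For part (2), the same bound gives $|u_j(i)| \le e^{-c_0(\log N)^2/8}$, which under the hypothesis $\zeta \ge e^{-200(\log N)^{3/2}}$ is strictly less than $\zeta$ for $N$ large (again because $(\log N)^2 \gg (\log N)^{3/2}$). Hence $i$ cannot be a $\zeta$-localization center for $\bm{u}_j$.

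There is no real obstacle here beyond constant-chasing: the only thing to be careful about is that the exponential decay rate $c_0 |i-\varphi_j|/2$ coming from Markov has to dominate both $-\log \zeta \le 200(\log N)^{3/2}$ and the polynomial factor $N^3$ from the union bound, and both margins are comfortable because the quantitative regimes $|i-\varphi_j| \ge (\log N)^2/2$ and $\zeta \ge e^{-200(\log N)^{3/2}}$ are precisely chosen so that $(\log N)^2$ beats $(\log N)^{3/2}$ with room to spare.
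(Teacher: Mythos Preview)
Your proof is correct and follows essentially the same approach as the paper: both reduce to \Cref{uijexponential} via a Markov bound and a union bound, then use $|u_j(\varphi_j)| \ge \zeta$ to convert the product estimate into a pointwise one. Your version is in fact cleaner, since you first fix the deterministic high-probability event $\mathsf{E}$ (over all triples $(k,i,j)$) and then argue on it, whereas the paper's displayed chain handles the random $\varphi_j$ somewhat informally.
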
 
	
	\begin{proof} 
		
		The first statement of the lemma implies the second (as $c^{-1} e^{-c(\log N)^2 / 2} < \zeta$ for sufficiently large $N$), so it suffices to verify the former. To that end, observe for some $c>0$ that 
		\begin{flalign*}
			\zeta^{-1}  \cdot \mathbb{E} [ | u_j (i) | ] \le \mathbb{E} \big[ | u_j (i)  \cdot u_j ( \varphi_j ) | \big] \le c^{-1} N e^{-c|i-\varphi_j|}.
		\end{flalign*} 
		
		\noindent Here, the first inequality holds by the definition of $\varphi_j$; the second by summing \Cref{uijexponential} over $j \in \llbracket N_1, N_2 \rrbracket$ (and examining its $(k,j)$ term equal to $( j, \varphi_j)$ here). This, with a Markov estimate, a union bound, and the facts that $\zeta \ge e^{-200(\log N)^{3/2}}$ and $|i-\varphi_j| \ge (\log N)^2/2$, yields the first statement of the lemma.
	\end{proof} 
	
	The next lemma bounds the ``speed'' at which localization centers can move, for a Lax matrix initially at thermal equilibrium.

	\begin{lem}
		
		\label{centert} 
		
		There exists a constant $c>0$ such that the following holds with overwhelming probability. Adopt \Cref{lbetaeta}, and assume more generally that $\zeta \ge e^{-200(\log N)^{3/2}}$. For each $j \in \llbracket 1, N \rrbracket$, let $\varphi_j \in \llbracket N_1, N_2 \rrbracket$ denote any $\zeta$-localization center for $\bm{u}_j (0)$. Then, for each $(j, s) \in \llbracket 1, N \rrbracket \times [0, T]$, we have
		\begin{flalign}
			\label{ujms}
			| u_j (m; s) | \le e^{-c |m-\varphi_j|}, \qquad \text{for any $m \in \llbracket N_1, N_2 \rrbracket$ with $| m - \varphi_j| \ge T (\log N)^2$}.
		\end{flalign}
		
		\noindent Moreover, $m$ is not a localization center for $\bm{u}_j (s)$ whenever $| m - \varphi_j| \ge T (\log N)^2$. 
		
	\end{lem}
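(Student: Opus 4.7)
The plan is to establish \eqref{ujms} by combining three ingredients: (i) exponential localization of the eigenvectors of $\bm{L}(s)$ at the later time, obtained from the approximate invariance of thermal equilibrium proven in \Cref{ltl0}; (ii) the explicit Flaschka evolution formula of \Cref{uknt} for the boundary entry $u_j(N_1;\cdot)$; and (iii) a matching argument identifying a time-$s$ localization center $\psi_j(s)$ of $\bm{u}_j(s)$ with the initial one $\varphi_j$ up to an error of $O(T\log N)$.

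First I would show that the eigenvectors of $\bm{L}(s)$ are themselves exponentially localized around some centers $\psi_j(s)$. Applying \Cref{ltl0} with $K$ a large enough multiple of $T\log N$, with probability at least $1 - Ce^{-c(\log N)^2}$ there is a matrix $\bm{M} \in \SymMat_{\llbracket N_1, N_2\rrbracket}$ distributed as $\bm{L}(0)$ with $|L_{ij}(s) - M_{ij}| \le e^{-K/5}$ for $i,j \in \llbracket N_1+K, N_2-K\rrbracket$. Since the law of $\bm{M}$ satisfies the thermal-equilibrium hypothesis of \Cref{lbetaeta}, \Cref{bijectionl} applied to $\bm{M}$ yields exponential decay of its eigenvectors around their localization centers. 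Combined with the resolvent perturbation estimate \Cref{abgh}, the eigenvalue separation bound \Cref{eigenvalues0}, and the boundedness of eigenvalues from \Cref{l0eigenvalues}, this transfers localization from $\bm{M}$ to $\bm{L}(s)$: each $\bm{u}_j(s)$ whose localization center $\psi_j(s)$ lies in the bulk satisfies $|u_j(i;s)| \le e^{-c|i-\psi_j(s)|}$ once $|i-\psi_j(s)|\ge (\log N)^2$.

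Next I would track the movement of the center from $\varphi_j$ to $\psi_j(s)$. \Cref{uknt} gives $2\log|u_j(N_1;s)| = -\lambda_j s + 2\log|u_j(N_1;0)| - \log Z(s)$, where $Z(s) = \sum_k e^{-\lambda_k s} u_k(N_1;0)^2$ is independent of $j$. Since \Cref{l0eigenvalues} provides $\max_k |\lambda_k| \le A$ for some $A$ of order $\sqrt{\log N}$ with high probability, and orthonormality yields $Z(s) \in [e^{-AT}, e^{AT}]$, we deduce $\bigl|\log|u_j(N_1;s)| - \log|u_j(N_1;0)|\bigr| \le CT\log N$. A reflected version of \Cref{uknt}, obtained by reversing the index order, gives the analogous bound at $N_2$. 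Localization (Step 1) now provides upper bounds $|u_j(N_1;0)| \le e^{-c(\varphi_j-N_1)}$ and $|u_j(N_1;s)| \le e^{-c(\psi_j(s)-N_1)}$, and symmetrically at $N_2$; the matching \emph{lower} bounds come from the linear-algebraic identity \Cref{n1n2u}, which expresses $\log|u_j(N_1)| + \log|u_j(N_2)|$ exactly through $\sum_i \log a_i - \sum_{k\ne j} \log|\lambda_j - \lambda_k|$. Since the eigenvalues are conserved by \Cref{ltt}, the eigenvalue-separation sum is identical at times $0$ and $s$, so estimating it by \Cref{eigenvalues0} and the $\log a_i$-sum by the thermal equilibrium distribution of the off-diagonal entries pins down the ``effective decay rates'' at the two times. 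Combining the upper and lower bounds forces $|\psi_j(s) - \varphi_j| \le C T\log N$, hence $|\psi_j(s) - \varphi_j| \le T(\log N)^2/2$ for $N$ sufficiently large.

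To conclude, for $m$ with $|m-\varphi_j|\ge T(\log N)^2$ the center displacement bound gives $|m - \psi_j(s)| \ge T(\log N)^2/2 \ge (\log N)^2$, so Step 1 delivers $|u_j(m;s)| \le e^{-c|m-\psi_j(s)|} \le e^{-(c/2)|m-\varphi_j|}$, which is \eqref{ujms} after shrinking $c$. The ``moreover'' statement is then immediate: since $\zeta \ge e^{-200(\log N)^{3/2}}$ and $T \ge 1$, the bound $e^{-cT(\log N)^2} < \zeta$ holds for large $N$, so no such $m$ can be a $\zeta$-localization center of $\bm{u}_j(s)$. The main obstacle I anticipate is the matching step: the upper half of the localization bounds at the endpoints is immediate, but extracting matching lower bounds via \Cref{n1n2u} requires estimating the log-determinant of $\lambda_j - \lambda_k$ uniformly in $j$ using only the eigenvalue-repulsion information of \Cref{eigenvalues0}, and some care is needed so that the resulting error does not overwhelm the $T\log N$ target for the center displacement.
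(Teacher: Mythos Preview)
Your approach is substantially different from the paper's, and the matching step contains a genuine gap.

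The paper's proof does not use \Cref{ltl0}, \Cref{uknt}, or \Cref{n1n2u} at all. It exploits the Lax-pair structure directly: there is an orthogonal matrix $\bm{V}(s)$ solving $\partial_s \bm{V}(s)=\bm{P}(s)\bm{V}(s)$, $\bm{V}(0)=\Id$, with $\bm{P}(s)$ the skew-symmetric tridiagonal matrix built from the $a_i(s)$, and the eigenvector matrix satisfies $\bm{U}(s)=\bm{V}(s)\bm{U}(0)$. The Dyson series for $\bm{V}$, together with the entry bound $|a_i(s)|\le \log N$ from $\mathsf{BND}_{\bm{L}(s)}(\log N)$, yields the Lieb--Robinson estimate $|V_{ij}(s)|\le 2e^{-|i-j|}$ once $|i-j|\ge 20T\log N$. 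Plugging this and the time-$0$ localization of $\bm{u}_j(0)$ from \Cref{bijectionl} into $u_j(m;s)=\sum_k V_{mk}(s)u_j(k;0)$ gives \eqref{ujms} in one line. No localization at time $s$ is ever needed, and the argument is uniform over all $j$ and all $s\in[0,T]$.

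Your Step~2 does not close. From \Cref{uknt} and the bound $Z(s)\in[e^{-AT},e^{AT}]$ you correctly obtain $\bigl|\log|u_j(N_1;s)|-\log|u_j(N_1;0)|\bigr|=O(T\log N)$. But turning this into $|\psi_j(s)-\varphi_j|=O(T\log N)$ requires upper and lower bounds of the form $\log|u_j(N_1;\cdot)|\asymp -c\,(\text{center}-N_1)$ with the \emph{same} constant $c$ on both sides. \Cref{bijectionl} supplies the upper bound with some $c$, but \Cref{n1n2u} cannot produce a matching lower bound: it only fixes the \emph{sum} $\log|u_j(N_1)|+\log|u_j(N_2)|$, and substituting the \Cref{bijectionl} upper bound at $N_2$ yields a lower bound at $N_1$ of the shape $S_0+c(N_2-\varphi_j)$, where $S_0=\sum_i\log a_i-\sum_{k\ne j}\log|\lambda_j-\lambda_k|$ is an $O(N)$ quantity with no a priori relation to $-c(\varphi_j-N_1)$. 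With mismatched constants one only gets $\psi_j(s)-N_1\le (C/c)(\varphi_j-N_1)+O(T\log N)$, a multiplicative rather than additive control. The matched-constant statement you need is exactly \Cref{uk12}, whose proof in the paper passes through \Cref{ltmatrix}, and both \Cref{ltmatrix2} and \Cref{zomega} invoke \Cref{centert}; so this route is circular.
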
 
	
	\begin{proof} 
		
		The first statement \eqref{ujms} of the lemma implies the second, as $e^{-c|m - \varphi_j|} < \zeta$ for $| m - \varphi_j| \ge T (\log N)^2$ if $N$ is sufficiently large; it therefore suffices to establish \eqref{ujms}. Recalling \Cref{adelta} and letting $\mathfrak{c} \in (0, 1)$ denote the constant $c$ from \Cref{bijectionl}, and define the event
		\begin{flalign} 
			\label{evente0} 
			\mathsf{E} =  \bigcap_{s \ge 0} \mathsf{BND}_{\bm{L}(s)} (\log N) \cap \bigcap_{i=N_1}^{N_2} \bigcap_{j=1}^N \big\{ \mathbbm{1}_{|i-\varphi_j| \ge (\log N)^2/2} \cdot | u_j (i; 0) | \le e^{-\mathfrak{c} |i - \varphi_j|} \big\}.
		\end{flalign}
		
		\noindent By \Cref{l0eigenvalues} and \Cref{bijectionl}, $\mathsf{E}$ is overwhelmingly probable, so we restrict to $\mathsf{E}$ for the remainder of this proof. 
		
		We next recall a fact concerning the evolution of the Lax matrix $\bm{L}(s)$. For each $s \in \mathbb{R}$, define the tridiagonal skew-symmetric matrix $\bm{P} (s) = [ P_{ij} (s) ] \in \Mat_{\llbracket N_1, N_2 \rrbracket}$ as follows. For each $i \in \llbracket N_1, N_2 - 1 \rrbracket$, set $P_{i,i+1} (s) = a_i (s)/2$ and $P_{i+1,i} (s) = -a_i (s)/2$; for all $(i, j) \in \llbracket N_1, N_2 \rrbracket^2$ not of the above form, set $P_{i,j} (s) = 0$. For each $s \in \mathbb{R}$, further let $\bm{V}(s) = [ V_{ij} (s) ] \in \Mat_{\llbracket N_1, N_2 \rrbracket}$, satisfying the ordinary differential equation $\partial_s \bm{V}(s) = \bm{P}(s) \cdot \bm{V}(s)$, with initial data $\bm{V}(0) = \Id$; the existence of such a matrix $\bm{V}(s)$ follows from the Picard--Lindel\"{o}f theorem. For any $(i, j) \in \llbracket N_1, N_2 \rrbracket^2$, its $(i,j)$-entry is more explicitly given by 
		\begin{flalign}
			\label{vijs}
			V_{ij} (s) = \mathbbm{1}_{i=j} + \displaystyle\sum_{k=1}^{\infty}  \displaystyle\sum_{i_1=N_1}^{N_2} \cdots \displaystyle\sum_{i_{k-1}=N_1}^{N_2}  \displaystyle\int_0^s \cdots \displaystyle\int_0^s \mathbbm{1}_{s_1 > s_2 > \cdots > s_k} \cdot \displaystyle\prod_{h=1}^k P_{i_{h-1}, i_h} (s_h) ds_h,
		\end{flalign} 
		
		\noindent where we have denoted $(i_0, i_k) = (i, j)$. Then, by \cite[Section 2]{FMPL}, we have $\bm{V} (s)^{-1} \cdot \bm{L}(s) \cdot \bm{V}(s) = \bm{L}(0)$. 
		
		This implies that $\bm{L}(s) = \bm{V}(s) \cdot \bm{L}(0) \cdot \bm{V}(s)^{\mathsf{T}}$, as $\bm{V}(s)$ is orthogonal (since $\bm{V}(0) = \Id$, $\partial_s \bm{V}(s) = \bm{P}(s) \cdot \bm{V}(s)$, and $\bm{P}(s)$ is skew-symmetric). Hence, letting $\bm{U}(s) = [ U_{ij} (s) ] \in \Mat_{N \times N}$ denote matrix of eigenvectors of $\bm{L}(s)$, whose $(i,j)$-entry is given by $U_{ij}(s) = u_j (i; s)$ for each $(i, j) \in \llbracket N_1, N_2 \rrbracket \times \llbracket 1, N \rrbracket$, we have $\bm{U}(s) = \bm{V}(s) \cdot \bm{U}(0)$. In particular,
		\begin{flalign}
			\label{vijsu0} 
			u_j (i; s) = \displaystyle\sum_{k = N_1}^{N_2} V_{ik}(s) \cdot u_j (k; 0).
		\end{flalign}
		
		Now observe whenever $|i-j| \ge 20 T \log N$ that 
		\begin{flalign}
			\label{u0vijs}
			| V_{ij} (s) |\le \displaystyle\sum_{k = |i-j|}^{\infty} \displaystyle\frac{s^k}{k!} \cdot (2 \log N)^k \le \displaystyle\sum_{k=|i-j|}^{\infty} \bigg( \displaystyle\frac{2e s \log N}{k} \bigg)^k \le \displaystyle\sum_{k=|i-j|}^{\infty} e^{-k} \le 2e^{-|i-j|}.
		\end{flalign}
		
		\noindent Here, in the first inequality we used \eqref{vijs}, with the facts that each $| P_{ij} (s_h) | \le \log N$ (as we have restricted to the event $\mathsf{E}$ from \eqref{evente0}) and that $P_{ij} = 0$ whenever $|i-j| \ne 1$ (meaning that there are at most two choices for each $i_h$ that gives rise to a nonzero summand in \eqref{vijs}); in the second we used the bound $k! \ge (e^{-1} k)^k$ for each $k \ge 0$; in the third we used the bound $2k^{-1} es \log N \le e^{-1}$ for $k \ge |i-j| \ge 20 T \log N$; and in the fourth we performed the sum. Hence, for $| m - \varphi_j| > T (\log N)^2$, 
		\begin{flalign*}
			| u_j (m; s) | & \le \displaystyle\sum_{k = N_1}^{N_2} \mathbbm{1}_{|k-m| \ge |m - \varphi_j| / 2} \cdot | V_{mk}(s) | + \displaystyle\sum_{k=N_1}^{N_2} \mathbbm{1}_{|k-m| \le |m - \varphi_j| / 2} \cdot | u_j (k; 0) | \\
			& \le 2 \displaystyle\sum_{k = N_1}^{N_2} \mathbbm{1}_{|k-m| \ge |m - \varphi_j| / 2} \cdot e^{-|k-m|} + \displaystyle\sum_{k=N_1}^{N_2} \mathbbm{1}_{|k-\varphi_j| \ge |m-\varphi_j|/2} \cdot \mathfrak{c}^{-1} e^{-\mathfrak{c} |k - \varphi_j|} \\ 
			& \le 16 \mathfrak{c}^{-2} e^{-\mathfrak{c} |m-\varphi_j|/2},
		\end{flalign*}
		
		\noindent which verifies \eqref{ujms}. Here, the first inequality follows from \eqref{vijsu0}, together with the facts that $| u_j (k; 0) | \le 1$ (as $\bm{u}_j$ is a unit vector) and $| V_{mk} (s) | \le 1$ (as $\bm{V}(s)$ is orthogonal); the second follows from \eqref{u0vijs} (with the fact that $T (\log N)^2 \ge 40T \log N$ for sufficiently large $N$) and the fact that we have restricted to the event $\mathsf{E}$ from \eqref{evente0} (with the fact that, if $| m - \varphi_j | \ge T (\log N)^2$ and $|k-m| \le | \varphi_j - m | / 2$, then $| k - \varphi_j | \ge | m - \varphi_j | / 2 \ge (\log N)^2/2$); and the third follows from performing the sums.
	\end{proof} 
	
	\subsection{Resolvent Perturbation Estimates} 
	
	\label{PerturbLambda}  
	
	In this section we estimate the effect of perturbing a random Lax matrix, associated with the open Toda lattice sampled from thermal equilibrium, on its eigenvalues. We first require some notation.  
	
	\begin{assumption} 
		
		\label{lmatrixl} 
		
		Sample $(\bm{a}; \bm{b})$ under the thermal equilibrium $\mu_{\beta, \theta; N-1,N}$ from \Cref{mubeta2}, where $\bm{a} = (a_{N_1}, a_{N_1+1}, \ldots , a_{N_2-1})$ and $\bm{b} = (b_{N_1}, b_{N_1+1}, \ldots , b_{N_2})$. Let $\bm{L} = [L_{ij}] \in \SymMat_{\llbracket N_1, N_2 \rrbracket}$ denote the associated Lax matrix (as in \Cref{matrixl}), and let $\tilde{\bm{L}} = [\tilde{L}_{ij}] \in \SymMat_{\llbracket N_1, N_2 \rrbracket}$ be another tridiagonal matrix. Assume that there is an index set $\mathcal{D} \subseteq \llbracket N_1, N_2 \rrbracket$ and a real number $\delta \in (0, 1)$ satisfying 
		\begin{flalign}
			\label{ll} 
			\displaystyle\max_{i,j \notin (\llbracket N_1, N_2 \rrbracket \setminus \mathcal{D})^2} |\tilde{L}_{ij}| \le 2 \log N; \qquad \displaystyle\max_{i,j \in \llbracket N_1, N_2 \rrbracket \setminus \mathcal{D}} |L_{ij} - \tilde{L}_{ij}| \le \delta.
		\end{flalign} 
		
		\noindent For any $z \in  \mathbb{C}$, denote $\bm{G}(z) = [ G_{ij} (z) ] = (\bm{L} - z)^{-1}$ and $\tilde{\bm{G}} (z) = [ \tilde{G}_{ij} (z) ] = (\tilde{\bm{L}} - z)^{-1}$. 
		
	\end{assumption}
	
	The following lemma bounds the difference between entries of the resolvents $\bm{G} (z)$ and $\tilde{\bm{G}} (z)$, in terms of the distance from their indices to the set $\mathcal{D}$ where $\bm{L}$ and $\tilde{\bm{L}}$ might substantially disagree.
	
	\begin{lem} 
		
		\label{lgdifference}

		There exists a constant $c>0$ such that the following holds with overwhelming probability. Adopt \Cref{lmatrixl}; let $\eta \in [\delta, 1]$ be a real number; and define the set $\Omega = \{ z \in \mathbb{C} : -N \le \Real z \le N, \eta \le \Imaginary z \le 1 \}$. For any integers $i, j \in \llbracket N_1, N_2 \rrbracket$, we have 
		\begin{flalign}
			\label{gkke} 
			\displaystyle\sup_{z \in \Omega} \big| G_{ij} (z) - \tilde{G}_{ij} (z) \big| \le e^{ (\log N)^2} \eta^{-2} (\delta^{1/4} + e^{-c \dist (i, \mathcal{D}) - c \dist (j, \mathcal{D})}).
		\end{flalign}
		
	\end{lem}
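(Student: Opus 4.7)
The plan is to combine the resolvent identity with a pointwise, uniform-in-$z$ exponential decay estimate for the Green's function entries of $\bm{L}$, derived from the fractional-moment bound of \Cref{gijexponential}.

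The first step is to upgrade \Cref{gijexponential} from an expectation bound at a fixed $z$ to a pointwise bound uniform on $\Omega$. For a fixed $s \in (0, 1)$, Markov's inequality gives $\mathbb{P}[|G_{ij}(z)| \ge e^{-c|i-j|/(2s)}] \le c^{-1} e^{-c|i-j|/2}$ at each $z$. A union bound over an $\eta^3$-net $\mathcal{N} \subset \Omega$ of size polynomial in $N$, combined with the Lipschitz control $|\partial_z G_{ij}(z)| \le \eta^{-2}$ from \eqref{gijeta} to pass from $\mathcal{N}$ to all of $\Omega$, yields an event of probability at least $1 - c^{-1} e^{-c(\log N)^2}$ on which
\[
|G_{ij}(z)| \le e^{(\log N)^2/4 - c'|i-j|}, \qquad \text{for all } i, j \in \llbracket N_1, N_2 \rrbracket \text{ and all } z \in \Omega.
\]
I intersect this event with $\mathsf{BND}_{\bm{L}}(\log N)$ from \Cref{l0eigenvalues}.

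Next, I expand the resolvent identity $\bm{G}(z) - \tilde{\bm{G}}(z) = \bm{G}(z)(\tilde{\bm{L}} - \bm{L})\tilde{\bm{G}}(z)$ entrywise as
\[
G_{ij}(z) - \tilde{G}_{ij}(z) = \sum_{k, \ell} G_{ik}(z)\,(\tilde{L}_{k\ell} - L_{k\ell})\,\tilde{G}_{\ell j}(z).
\]
By tridiagonality of $\bm{L}$ and $\tilde{\bm{L}}$, only $(k, \ell)$ with $|k - \ell| \le 1$ contribute. Split the sum into a \emph{bulk} part, where both $k, \ell \in \llbracket N_1, N_2 \rrbracket \setminus \mathcal{D}$, and a \emph{defect} part, where at least one of $k, \ell$ lies in $\mathcal{D}$. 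On the bulk, \eqref{ll} gives $|\tilde{L}_{k\ell} - L_{k\ell}| \le \delta$; combining the Ward identity $\sum_k |G_{ik}(z)|^2 \le \eta^{-2}$ (and its analog for $\tilde{\bm{G}}$) with Cauchy--Schwarz yields a bulk contribution of at most $C \delta \eta^{-2} \le C \delta^{1/4} \eta^{-2}$.

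For the defect, \eqref{ll} together with $\mathsf{BND}_{\bm{L}}(\log N)$ gives $|\tilde{L}_{k\ell} - L_{k\ell}| \le 3 \log N$, and every $k$ appearing lies within distance one of $\mathcal{D}$. The decay of $|G_{ik}(z)|$ from the first paragraph then yields a bound of $N(\log N) \eta^{-1} e^{(\log N)^2/4 - c' \dist(i, \mathcal{D})}$. To recover decay in $\dist(j, \mathcal{D})$ as well, apply the mirror resolvent identity $\bm{G} - \tilde{\bm{G}} = \tilde{\bm{G}}(\tilde{\bm{L}} - \bm{L})\bm{G}$ and invoke the decay of $|G_{\ell j}|$ at $\ell$ near $\mathcal{D}$; the geometric mean of the two one-sided bounds produces the desired two-sided decay of the form $e^{-c''(\dist(i, \mathcal{D}) + \dist(j, \mathcal{D}))/2}$, with the polynomial prefactors $N(\log N)\eta^{-1}$ absorbed into $e^{(\log N)^2}$. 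Summing the bulk and defect contributions yields \eqref{gkke}. The main technical obstacle will be executing the first paragraph carefully: upgrading the fractional-moment expectation bound of \Cref{gijexponential} to a uniform pointwise decay estimate on the complex domain $\Omega$, a step that is responsible both for the $e^{(\log N)^2}$ prefactor and, through the Markov trade-off between the exponent $s$ and the decay rate, for the fractional loss $\delta^{1/4}$ in \eqref{gkke}.
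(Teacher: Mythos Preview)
Your resolvent-identity decomposition and bulk estimate (via the Ward identity) are fine and close in spirit to the paper's approach. The gap is in your first step: the $\eta^3$-net $\mathcal{N}$ of $\Omega$ is \emph{not} of polynomial size in $N$. The lemma allows $\eta$ as small as $\delta$, and in every application (Corollaries~\ref{lleigenvalues2} and~\ref{lleigenvalues}) one takes $\delta \le e^{-10(\log N)^2}$, so $|\mathcal{N}| \sim N\eta^{-6}$ can be of order $e^{60(\log N)^2}$. A union bound over such a net, combined with the Markov estimate $\mathbb{P}[|G_{ij}(z)| > t] \le c^{-1} t^{-s} e^{-c|i-j|}$, only yields decay when $|i-j|$ exceeds a constant multiple of $|\log\eta|$; for smaller $|i-j|$ you recover nothing better than the trivial $|G_{ij}| \le \eta^{-1}$. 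In the defect sum this means you obtain exponential decay in $\dist(i,\mathcal{D})$ only once $\dist(i,\mathcal{D})$ exceeds a multiple of $|\log\eta|$, and for intermediate $\dist(i,\mathcal{D})$ your bound is of order $|\mathcal{D}|(\log N)\eta^{-2}$, which can exceed the right side of \eqref{gkke} by a factor $e^{C(\log N)^2}$. (The Lipschitz interpolation step has the same problem: the error $\eta^{-2}\cdot\eta^3 = \eta$ caps the achievable decay of $|G_{ik}|$ at $\eta$, which is again not small enough.)

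The paper avoids this by reversing the order of operations: it does \emph{not} convert \Cref{gijexponential} to a pointwise bound first. Instead it applies the resolvent identity inside the fractional-moment expectation, bounding $|\tilde{G}_{mj}| \le \eta^{-1}$ trivially and $\mathbb{E}[|G_{ik}|^s]$ by \eqref{gijs}, to obtain $\mathbb{E}[\mathbbm{1}_{\mathsf{E}}|G_{ij}(z_0) - \tilde{G}_{ij}(z_0)|^s] \le C\eta^{-s}(\delta^s + e^{-c\dist(i,\mathcal{D})})$ at each fixed $z_0$. Markov's inequality with threshold $e^{(\log N)^2/2}\eta^{-1}\delta_0$, where $\delta_0 = (\delta + e^{-c\dist(i,\mathcal{D})})^{1/4}$, then gives a failure probability carrying an extra factor $\delta_0^2$. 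That factor exactly cancels the $\delta_0^{-2}$ from the size of a $\delta_0$-mesh of $\Omega$, so the final union bound over the mesh costs only a polynomial factor in $N$. This cancellation --- working with the difference $G - \tilde{G}$ in fractional moments rather than with $G$ itself pointwise --- is the idea your argument is missing.
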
 
	
	\begin{proof}
		
		Throughout, we suppose $\dist (i, \mathcal{D}) \ge \dist (j, \mathcal{D})$, as the proof in the alternative case is entirely analogous. We may assume that $\dist (i, \mathcal{D}) \ge 1$, as otherwise \eqref{gkke} holds deterministically, by \eqref{gijeta}.
		
		We first show a variant of \eqref{gkke} for a fixed point $z \in \Omega$. So, fix $z_0 \in \Omega$; abbreviate $G_{km} = G_{km} (z_0)$ and $\tilde{G}_{km} = \tilde{G}_{km} (z_0)$, for any $k, m \in \llbracket N_1, N_2 \rrbracket$. Recalling \Cref{adelta}, also denote  
		\begin{flalign}
			\label{eprobability} 
			\mathsf{E} = \mathsf{BND}_{\bm{L}} (\log N), \qquad \text{so that} \qquad \mathbb{P} [ \mathsf{E}^{\complement} ] \le \mathfrak{c}^{-1} e^{-\mathfrak{c} (\log N)^2},
		\end{flalign}
		
		\noindent for some constant $\mathfrak{c} > 0$, by \Cref{l0eigenvalues}. Further let $\mathcal{D}_0 = \{ k \in \llbracket N_1, N_2 \rrbracket : \dist (k, \mathcal{D}) \le 1 \}$. Then, for any real number $s \in (0, 1)$, there exist constants $c_1 = c_1 (s) > 0$ and $C_1 = C_1 (s) > 1$ such that
		\begin{flalign*}
			\mathbb{E} \big[ \mathbbm{1}_{\mathsf{E}} \cdot | G_{ij} - \tilde{G}_{ij}|^{s} \big] & \le \displaystyle\sum_{N_1 \le k, m \le N_2} \mathbb{E} \big[ \mathbbm{1}_{\mathsf{E}} \cdot | G_{ik} (L_{km} - \tilde{L}_{km}) \tilde{G}_{kj} |^s \big] \\
			& \le \eta^{-s} \displaystyle\sum_{N_1 \le k, m \le N_2} \mathbb{E} \big[ \mathbbm{1}_{\mathsf{E}} \cdot | G_{ik}|^s \cdot |L_{km} - \tilde{L}_{km}|^{s} \big] \\
			& \le \delta^{s} \eta^{-s} \displaystyle\sum_{k \notin \mathcal{D}_0} \mathbb{E} [ |G_{ik}|^{s} ] + \eta^{-s} (3 \log N)^{s} \cdot \displaystyle\sum_{k \in \mathcal{D}_0} \mathbb{E} [ |G_{ik}|^s ] \\
			& \le C_1 \delta^s \eta^{-s} \displaystyle\sum_{k \notin \mathcal{D}_0} e^{-c_1 |k-i|} + 3 C_1 \eta^{-s} (\log N)^s \cdot \displaystyle\sum_{k \in \mathcal{D}_0} e^{-c_1 |k-i|} \\
			& \le 8 c_1^{-1} C_1 \eta^{-s} (\log N)^s ( \delta^s + e^{-c_1 \dist (i, \mathcal{D}_0)} ),
		\end{flalign*}
		
		\noindent where in the first inequality we applied \eqref{ab}; in the second we applied \eqref{gijeta}; in the third we used \eqref{ll}, with the definition \eqref{eprobability} of $\mathsf{E}$ (which implies that $\mathbbm{1}_{\mathsf{E}} \cdot |L_{km} - \tilde{L}_{km}| \le 3 \log N$) and the fact that $\bm{L} - \tilde{\bm{L}}$ is tridiagonal; in the fourth we used \eqref{gijs}; and in the fifth we performed the sums. Taking $s = 2 / 3$ and using the fact that $\Imaginary z_0 \ge \eta$ (as $z_0 \in \Omega$), we therefore deduce by a Markov estimate that there exist constants $c_2 > 0$ and $C_2 > 1$ such that 
		\begin{flalign}
			\label{gkkz0}
			\begin{aligned}
				\mathbb{P} \big[ & \mathbbm{1}_{\mathsf{E}} \cdot |G_{ij} - \tilde{G}_{ij}| \ge  e^{(\log N)^2/2} \eta^{-1} (\delta + e^{-c_1 \dist (i, \mathcal{D})})^{1/4} \big] \\
				& \qquad \qquad \le e^{-(\log N)^2/3} \eta^{2/3} (\delta  + e^{-c_1 \dist (i, \mathcal{D})})^{-1/6} \cdot \mathbb{E} \big[ \mathbbm{1}_{\mathsf{E}} \cdot |G_{ij} - \tilde{G}_{ij}|^{2/3} \big] \\
				& \qquad \qquad \le C_2 e^{-c_2 (\log N)^2} (\delta + e^{-c_1 \dist (i, \mathcal{D})})^{1/2}.
			\end{aligned} 
		\end{flalign} 
		
		Now, let us extend the estimate \eqref{gkkz0} to simultaneously apply for all $z \in \Omega$. To do this, set $\delta_0 = (\delta + e^{-c_1 \dist (i, \mathcal{D})})^{1/4}$. Observe for any $z, z' \in \Omega$ with $|z-z'| \le \delta_0$ that 
		\begin{flalign}
			\label{gkkzz}
			| G_{ij} (z) - G_{ij} (z') | \le |z-z'| \displaystyle\sum_{k=1}^N | G_{ik} (z) | \cdot | \tilde{G}_{kj} (z) | \le \eta^{-2} N \delta_0,
		\end{flalign}
		
		\noindent where in the first inequality we used \eqref{ab} and in the second we used \eqref{gijeta} (with the fact that $\min \{ \Imaginary z, \Imaginary z' \} \ge \eta$ for $z, z' \in \Omega$). Let $\Omega_0 \subset \Omega$ denote a $\delta_0$-mesh of $\Omega$, so that $|\Omega_0| \le 8 N \delta_0^{-2}$. Denote the event 
		\begin{flalign*}
			\mathsf{F} = \bigg\{ \displaystyle\max_{z \in \Omega_0} | G_{ij} (z) - \tilde{G}_{ij} (z) | \le e^{(\log N)^2/2} \eta^{-1} \delta_0 \bigg\}.
		\end{flalign*} 
		
		\noindent Applying \eqref{eprobability} and \eqref{gkkz0} for all $z_0 \in \Omega$, yields constants $c_3 > 0$ and $C_3 > 1$ such that 
		\begin{flalign}
			\label{gkkevent2}
			\begin{aligned} 
				\mathbb{P} [ \mathsf{F}^{\complement} ] \le C_2 e^{-c_2 (\log N)^2} \delta_0^2 \cdot |\Omega_0| + \mathbb{P} [ \mathsf{E}^{\complement} ] & \le 8 C_2 N e^{-c_2 (\log N)^2} + \mathfrak{c}^{-1} e^{-\mathfrak{c} (\log N)^2} \\ 
				&  \le  C_3 e^{-c_3 (\log N)^2}.
			\end{aligned} 
		\end{flalign}
		
		\noindent Observe for sufficiently large $N$ that $| G_{ij} (z) - \tilde{G}_{ij} (z) | \le e^{(\log N)^2} \eta^{-2} \delta_0$ holds for all $z \in \Omega$ on the event $\mathsf{F}$, due to \eqref{gkkzz} and the estimate $e^{(\log N)^2/2} \eta^{-1} \delta_0 + 2 \eta^{-2} N \delta_0 \le e^{(\log N)^2} \eta^{-2} \delta_0$. Since $\dist (i, \mathcal{D}) \ge \dist (j, \mathcal{D})$, this yields \eqref{gkkevent2}.
	\end{proof} 
	
	Using \Cref{lgdifference} and \Cref{abgh}, we can establish the following two corollaries. The first indicates that eigenvalues of $\bm{L}$ with localization centers distant from $\mathcal{D}$ are also nearly eigenvalues of $\tilde{\bm{L}}$ (with the same localization center), and the second essentially indicates the reverse.

	\begin{cor}
		
		\label{lleigenvalues2} 
		
		There exists a constant $c>0$ such that the following holds with overwhelming probability. Adopt \Cref{lmatrixl}; assume $\delta \le e^{-10(\log N)^2}$, and let $\zeta \ge e^{-200(\log N)^{3/2}}$ be a real number. Fix $\lambda \in \eig \bm{L}$, and let $\varphi \in \llbracket N_1, N_2 \rrbracket$ denote a $\zeta$-localization center of $\lambda$ with respect to $\bm{L}$, satisfying  
		\begin{flalign} 
			\label{dn2} 
			\dist (\varphi, \mathcal{D}) \ge (\log N)^3.
		\end{flalign}
		
		\noindent Then there exists an eigenvalue $\tilde{\lambda} \in \eig \tilde{\bm{L}}$ such that
		\begin{flalign}
			\label{lambdalambda} 
			|\lambda - \tilde{\lambda}| \le e^{(\log N)^2} (\delta^{1/8} + e^{-c \dist (\varphi, \mathcal{D})}), 
		\end{flalign}
		
		\noindent and $\varphi$ is an $N^{-1} \zeta$-localization center for $\tilde{\lambda}$ with respect to $\tilde{\bm{L}}$. 
		
	\end{cor}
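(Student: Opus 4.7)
The plan is to reduce the statement to an application of \Cref{abgh} by controlling the diagonal resolvent difference $|G_{\varphi\varphi}(z)-\tilde{G}_{\varphi\varphi}(z)|$ at a carefully chosen spectral parameter $z=\lambda+\mathrm{i}\eta$. \Cref{lgdifference} provides exactly such a bound uniformly in the strip $\Omega$, so the main work is to calibrate $\eta$ correctly and then verify the three conditions in \eqref{deltaetachi}. Throughout I restrict to the intersection of the high-probability events produced by \Cref{lgdifference} and by \Cref{l0eigenvalues} (applied with $A=\log N$); both hold with probability at least $1-c^{-1}e^{-c(\log N)^2}$, so their intersection does as well.

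Let $c_0$ be the constant from \Cref{lgdifference}, set $\epsilon = \delta^{1/4}+e^{-c_0\dist(\varphi,\mathcal{D})}$, and choose $\eta = 2e^{(\log N)^2}\epsilon\zeta^{-2}$. Using the assumptions $\delta\le e^{-10(\log N)^2}$, $\dist(\varphi,\mathcal{D})\ge(\log N)^3$, and $\zeta\ge e^{-200(\log N)^{3/2}}$, a direct computation shows $\eta\in[\delta,1]$; and on $\mathsf{BND}_{\bm{L}}(\log N)$ we have $|\lambda|\le\log N$, so $z\in\Omega$. \Cref{lgdifference} then gives
\begin{flalign*}
|G_{\varphi\varphi}(z)-\tilde{G}_{\varphi\varphi}(z)| \le e^{(\log N)^2}\eta^{-2}\epsilon = (2\eta)^{-1}\zeta^2,
\end{flalign*}
which is the third hypothesis of \Cref{abgh} with $\delta_0 = (2\eta)^{-1}\zeta^2$. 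The remaining hypotheses follow from our choice of $\eta$ and from the assumption that $\varphi$ is a $\zeta$-localization center for some unit eigenvector $\bm{u}$ of $\lambda$.

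Applying \Cref{abgh} with $\bm{A}=\bm{L}$ and $\bm{B}=\tilde{\bm{L}}$ produces $\tilde{\lambda}\in\eig\tilde{\bm{L}}$ together with a unit eigenvector $\bm{v}$ satisfying
\begin{flalign*}
|\lambda-\tilde{\lambda}|\le 3N\zeta^{-2}\eta = 6N\zeta^{-4}e^{(\log N)^2}\epsilon, \qquad |v(\varphi)|\ge(6N)^{-1/2}\zeta \ge N^{-1}\zeta,
\end{flalign*}
the second of which already gives the localization-center assertion. It then remains to repackage the eigenvalue estimate into the form \eqref{lambdalambda}. The prefactor $N\zeta^{-4}\le e^{800(\log N)^{3/2}+\log N}$ is absorbed into $e^{C(\log N)^2}$ for large $N$; writing $e^{C(\log N)^2}\delta^{1/4}=\delta^{1/8}\cdot e^{C(\log N)^2}\delta^{1/8}$ and using $\delta^{1/8}\le e^{-5(\log N)^2/4}$ bounds this contribution by $e^{(\log N)^2}\delta^{1/8}$, while the $e^{-c_0\dist(\varphi,\mathcal{D})}$ contribution is bounded by $e^{(\log N)^2}e^{-c\dist(\varphi,\mathcal{D})}$ with a slightly shrunken $c$, using $\dist(\varphi,\mathcal{D})\ge(\log N)^3\gg(\log N)^2$.

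The main obstacle is the delicate calibration of $\eta$: too small a choice violates the inequality $\delta_0\le(2\eta)^{-1}\zeta^2$ required by \Cref{abgh}, while too large a choice dilutes the resulting eigenvalue bound $3n\zeta^{-2}\eta$ beyond the permitted error in \eqref{lambdalambda}. The essentially forced balance $\eta\sim e^{(\log N)^2}\epsilon\zeta^{-2}$ is also what causes the halving of the exponent from $\delta^{1/4}$ (coming from \Cref{lgdifference}) down to the $\delta^{1/8}$ that appears in the conclusion, and it is the reason why the hypothesis $\delta\le e^{-10(\log N)^2}$ enters with precisely this strength.
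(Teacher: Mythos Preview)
Your proof is correct and follows essentially the same route as the paper: bound the diagonal resolvent difference via \Cref{lgdifference}, then feed it into \Cref{abgh}. The only difference is the calibration of $\eta$: the paper takes $\eta=(e^{(\log N)^2}\epsilon)^{1/2}$, which makes the resolvent bound exactly $1$ and yields an eigenvalue error proportional to $\epsilon^{1/2}$, whereas you take $\eta=2e^{(\log N)^2}\epsilon\zeta^{-2}$, which saturates the hypothesis $\delta_0\le(2\eta)^{-1}\zeta^2$ and yields an error linear in $\epsilon$ but with prefactor $\zeta^{-4}$ instead of $\zeta^{-2}$; both estimates are then repackaged into \eqref{lambdalambda} in the same way.
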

	
	\begin{proof}
		
		Let $\mathfrak{c} \in (0, 1)$ denote the constant $c$ from \Cref{lgdifference}, and set
		\begin{flalign}
			\label{etadelta}
			\delta_0 = \delta^{1/4} + e^{-\mathfrak{c} \dist (\varphi, \mathcal{D})}; \qquad \eta = (e^{(\log N)^2} \delta_0)^{1/2}. 
		\end{flalign}
		
		\noindent Recalling the event $\mathsf{E}$ from \eqref{eprobability}, also define event 
		\begin{flalign}
			\label{eventf} 
			\begin{aligned}
				\mathsf{F} & = \mathsf{E} \cap \bigg\{ \displaystyle\sup_{E \in [-N, N]} \big| G_{\varphi \varphi} (E + \mathrm{i} \eta) - \tilde{G}_{\varphi \varphi} (E + \mathrm{i}\eta) \big| \le e^{(\log N)^2} \eta^{-2} \delta_0 \bigg\}.
			\end{aligned}
		\end{flalign}
		
		\noindent By \eqref{ll} and the min-max principle, on $\mathsf{F}$, we have $\eig \bm{L} \cup \eig \tilde{\bm{L}} \subseteq [-N, N]$ for sufficiently large $N$.
		
		By \eqref{eprobability} and \Cref{lgdifference}, we may restrict to the e vent $\mathsf{F}$. We then apply \Cref{abgh}, with the parameters $(\varphi; \eta, \zeta, \delta; \bm{A}, \bm{B})$ there equal to $(\varphi; \eta, \zeta, e^{(\log N)^2} \eta^{-2} \delta_0; \bm{L}, \tilde{\bm{L}})$ here. To verify \eqref{deltaetachi}, observe that the first estimate there follows for sufficiently large $N$ from the fact that 
		\begin{flalign}
			\label{eta0} 
			(2\eta) \cdot e^{(\log N)^2} \eta^{-2} \delta_0 \le 2e^{(\log N)^2/2} \cdot (\delta^{1/4} + e^{-\mathfrak{c} \dist (\varphi, \mathcal{D})})^{1/2} \le e^{-400 (\log N)^{3/2}} \le \zeta^2,
		\end{flalign}
		
		\noindent where in the first bound we used \eqref{etadelta}; in the second we used the facts that $\delta \le e^{-10(\log N)^2}$, that $\dist (\varphi, \mathcal{D}) \ge (\log N)^3$, and that $N$ is sufficiently large; and in the third we used the fact that $\zeta \ge e^{-200 (\log N)^{3/2}}$. The second estimate in \eqref{deltaetachi} follows from the fact that $\varphi$ is a $\zeta$-localization center for $\lambda$ with respect to $\bm{L}$, and the third follows from our restriction to $\mathsf{F}$. 
		
		Thus, \Cref{abgh} yields some $\tilde{\lambda} \in \eig \tilde{\bm{L}}$ such that $|\lambda - \tilde{\lambda}| \le 3N \zeta^{-2} \eta$, which by \eqref{etadelta} (and again the fact that $\zeta \ge e^{-200 (\log N)^{3/2}}$) confirms \eqref{lambdalambda}. Since $(6N)^{-1/2} \zeta \ge N^{-1} \zeta$, \Cref{abgh} also indicates that $\varphi$ is an $N^{-1} \zeta$-localization center $\tilde{\lambda}$ with respect to $\tilde{\bm{L}}$, establishing the corollary. 
	\end{proof}

	\begin{cor}
		
		\label{lleigenvalues} 
		
		There exists a constant $c>0$ such that the following holds with overwhelming probability. Adopt \Cref{lmatrixl}; assume that $\delta \le e^{-10(\log N)^2}$; and let $\zeta \ge N e^{-200(\log N)^{3/2}}$ be a real number. Fix $\tilde{\lambda} \in \eig \tilde{\bm{L}}$, and let $\tilde{\varphi} \in \llbracket N_1, N_2 \rrbracket$ denote a $\zeta$-localization center of $\tilde{\lambda}$ with respect to $\tilde{\bm{L}}$; suppose that 
		\begin{flalign} 
			\label{dn} 
			\dist (\tilde{\varphi}, \mathcal{D}) \ge (\log N)^3.
		\end{flalign}
		
		\begin{enumerate} 
			
			\item There exists a unique eigenvalue $\lambda \in \eig \bm{L}$ such that $|\lambda - \tilde{\lambda}| \le e^{(\log N)^2} (\delta^{1/8} + e^{-c \dist (\tilde{\varphi}, \mathcal{D})})$.
			
			\item We have that $\tilde{\varphi}$ is an $N^{-1} \zeta$-localization center of $\lambda$ with respect to $\bm{L}$, and any $N^{-1} \zeta$-localization center $\varphi \in \llbracket N_1, N_2 \rrbracket$ satisfies $|\varphi - \tilde{\varphi}| \le (\log N)^2 / 2$. 
			
		\end{enumerate} 
		
	\end{cor}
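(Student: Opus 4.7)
The plan is to mirror the proof of \Cref{lleigenvalues2} with the roles of $\bm{L}$ and $\tilde{\bm{L}}$ exchanged in the application of \Cref{abgh}, and then supplement it with two additional ingredients: the Minami-type separation estimate \Cref{eigenvalues0} to obtain uniqueness, and the exponential localization bound \Cref{bijectionl} to control competing localization centers. Set $\delta_0 = \delta^{1/4} + e^{-\mathfrak{c}\dist(\tilde{\varphi},\mathcal{D})}$ and $\eta = (e^{(\log N)^2}\delta_0)^{1/2}$, exactly as in the proof of \Cref{lleigenvalues2}, where $\mathfrak{c}$ is the constant from \Cref{lgdifference}. Using \Cref{l0eigenvalues} and \Cref{lgdifference}, we restrict to the high-probability event on which $\bm{L}$ and $\tilde{\bm{L}}$ have spectra in $[-N,N]$ and on which $\sup_{E \in [-N,N]}|G_{\tilde{\varphi}\tilde{\varphi}}(E + \mathrm{i}\eta) - \tilde{G}_{\tilde{\varphi}\tilde{\varphi}}(E + \mathrm{i}\eta)| \le e^{(\log N)^2}\eta^{-2}\delta_0$.

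To produce $\lambda$, apply \Cref{abgh} with the parameters $(\varphi;\eta,\zeta,\delta;\bm{A},\bm{B})$ there equal to $(\tilde{\varphi};\eta,\zeta,e^{(\log N)^2}\eta^{-2}\delta_0;\tilde{\bm{L}},\bm{L})$ here. The first condition in \eqref{deltaetachi} is verified by repeating the calculation \eqref{eta0} (using $\delta \le e^{-10(\log N)^2}$, $\dist(\tilde{\varphi},\mathcal{D}) \ge (\log N)^3$, and $\zeta \ge Ne^{-200(\log N)^{3/2}}$ in place of $\zeta \ge e^{-200(\log N)^{3/2}}$); the second holds since $\tilde{\varphi}$ is a $\zeta$-localization center of $\tilde{\lambda}$; and the third is exactly our restriction. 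This produces $\lambda \in \eig\bm{L}$ with $|\lambda-\tilde{\lambda}| \le 3N\zeta^{-2}\eta \le e^{(\log N)^2}(\delta^{1/8} + e^{-c\dist(\tilde{\varphi},\mathcal{D})})$ for an appropriate constant $c > 0$, and also gives that $\tilde{\varphi}$ is a $(6N)^{-1/2}\zeta \ge N^{-1}\zeta$-localization center of $\lambda$ with respect to $\bm{L}$, establishing the first claim of part (2) together with the existence assertion in part (1).

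To obtain uniqueness in part (1), apply \Cref{eigenvalues0} with separation parameter $\delta_{\mathrm{sep}} = e^{-(\log N)^2/6}$ to deduce (on an event of probability $1 - O(e^{-(\log N)^2/7})$) that distinct eigenvalues of $\bm{L}$ differ by at least $\delta_{\mathrm{sep}}$. Since $\delta \le e^{-10(\log N)^2}$ and $\dist(\tilde{\varphi},\mathcal{D}) \ge (\log N)^3$, our error bound on $|\lambda - \tilde{\lambda}|$ is at most $e^{-(\log N)^2/5}$, which is much smaller than $\delta_{\mathrm{sep}}/2$; hence at most one eigenvalue of $\bm{L}$ can lie within the stated distance of $\tilde{\lambda}$, and it must be the $\lambda$ produced above.

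Finally, for the $(\log N)^2/2$ bound on $|\varphi - \tilde{\varphi}|$ in part (2): having established that $\tilde{\varphi}$ is an $N^{-1}\zeta$-localization center of $\lambda$ with respect to $\bm{L}$, and observing that $N^{-1}\zeta \ge e^{-200(\log N)^{3/2}}$ so that the localization threshold required by \Cref{bijectionl} is met, the second statement of \Cref{bijectionl} applied to the unit eigenvector of $\bm{L}$ with eigenvalue $\lambda$ (taking the localization center there to be $\tilde{\varphi}$) forces any other $N^{-1}\zeta$-localization center $\varphi$ of the same eigenvector to satisfy $|\varphi - \tilde{\varphi}| < (\log N)^2/2$. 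The main technical obstacle is calibrating the Minami separation threshold $\delta_{\mathrm{sep}}$ so that it simultaneously dominates the approximation error from \Cref{abgh} and keeps the Minami failure probability of order $e^{-(\log N)^2/C}$; the assumption $\delta \le e^{-10(\log N)^2}$ is tailored precisely to allow this interpolation.
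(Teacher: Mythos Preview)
Your proposal is correct and follows essentially the same approach as the paper's proof: both swap the roles of $\bm{L}$ and $\tilde{\bm{L}}$ in \Cref{abgh} (with the same choice of $\delta_0$ and $\eta$), invoke \Cref{eigenvalues0} for uniqueness, and finish with the second part of \Cref{bijectionl}. The only cosmetic difference is your separation threshold $\delta_{\mathrm{sep}} = e^{-(\log N)^2/6}$ versus the paper's $e^{-(\log N)^2/2}$; both calibrations work, and the paper compares the intermediate bound $6N\zeta^{-2}\eta$ (rather than the final error $e^{(\log N)^2}(\delta^{1/8}+e^{-c\dist(\tilde\varphi,\mathcal D)})$) against its threshold, but this is immaterial.
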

	
	\begin{proof}
		
		Let $\mathfrak{c} \in (0, 1)$ denote the constant $c$ from \Cref{lgdifference}, and set $\delta_0$ and $\eta$ as in \eqref{etadelta}. Recalling \Cref{adelta} and the event $\mathsf{F}$ from \eqref{eventf}, define the event $\mathsf{G} = \mathsf{F} \cap \mathsf{SEP}_{\bm{L}} (e^{-(\log N)^2/2})$. By \eqref{eprobability}, \Cref{eigenvalues0}, and \Cref{lgdifference}, $\mathsf{G}$ is overwhelmingly probable. By \eqref{ll} and the min-max principle we have on $\mathsf{G}$ that $\eig \bm{L} \cup \eig \tilde{\bm{L}} \subseteq [-N, N]$ for sufficiently large $N$.
		
		Restricting to the event $\mathsf{G}$, we then apply \Cref{abgh}, with the parameters $(\varphi; \eta, \zeta, \delta; \bm{A}, \bm{B})$ there equal to $(\tilde{\varphi}; \eta, \zeta, e^{(\log N)^2} \eta^{-2} \delta_0; \tilde{\bm{L}}, \bm{L})$ here. As in the proof of \Cref{lleigenvalues2}, the first bound in \eqref{deltaetachi} is verified by \eqref{eta0}; the second  by the fact that $\tilde{\varphi}$ is a $\zeta$-localization center for $\tilde{\lambda}$ with respect to $\tilde{\bm{L}}$; and the third by our restriction to $\mathsf{G}$. Thus, \Cref{abgh} gives some $\lambda \in \eig \bm{L}$ such that $|\lambda - \tilde{\lambda}| \le 3N \zeta^{-2} \eta$, which by \eqref{etadelta} (and again the bound $\zeta \ge e^{-200 (\log N)^{3/2}}$) yields some $\lambda$ satisfying the first statement of the corollary. As any two distinct eigenvalues of $\bm{L}$ differ by at least $e^{-(\log N)^2/2}$ by our restriction to $\mathsf{G}$, and $6 N \zeta^{-2} \eta \le e^{-(\log N)^2}$ (by \eqref{etadelta}, \eqref{dn}, and the facts that $\zeta \ge e^{-200(\log N)^{3/2}}$ and $\delta \le e^{-10 (\log N)^2}$), such an eigenvalue $\lambda \in \eig \bm{L}$ satisfying these hypotheses is unique. 
		
		Since $(6N)^{-1/2} \zeta \ge N^{-1} \zeta$, \Cref{abgh} also indicates that $\varphi$ is an $N^{-1} \zeta$-localization center $\tilde{\lambda}$ with respect to $\tilde{\bm{L}}$. Thus, the second part of \Cref{bijectionl} yields that, with overwhelming probability, any $N^{-1} \zeta$-localization center $\varphi \in \llbracket N_1, N_2 \rrbracket$ of $\lambda$ with respect to $\bm{L}$ satisfies $|\varphi - \tilde{\varphi}| \le (\log N)^2/2$. This confirms the second statement of the corollary.
	\end{proof}

	\section{Eigenvector Analysis for Lax Matrices}
	
	\label{Gamma} 
	
	In this section we analyze the eigenvectors of random Lax matrices. We begin in \Cref{ExponentMatrix} by proving a result, \Cref{uk12}, that approximates the exponential rate of decay of the first entries of these eigenvectors, by reducing it to \Cref{ltmatrix} below; the latter indicates that eigenvalues of the Lax matrix are close to those of certain truncations of it. We then show \Cref{ltmatrix} in \Cref{MatrixEigenvalue0} and \Cref{ProofMatrix0}. Throughout this section, we adopt \Cref{lbetaeta}.

	\subsection{Eigenvector Decay for Lax Matrices}
	
	\label{ExponentMatrix}

	The following proposition estimates the behavior of the first entries of eigenvectors of the Lax matrix $\bm{L}(t)$. Recall that we adopt \Cref{lbetaeta} throughout.

	\begin{prop}
		
		\label{uk12} 
		
		For any real number $t \in [0, T]$, the following holds with overwhelming probability. For any integer $k \in \llbracket 1, N \rrbracket$ satisfying 
		\begin{flalign}
			\label{n1k} 
			N_1 + T (\log N)^5 + N^{1/100} \le \varphi_t(k) \le N_2 - T (\log N)^5 - N^{1/100},
		\end{flalign}
		
		\noindent we have 
		\begin{flalign}
			\label{ukn1t}
			\Bigg| \log | u_k (N_1; t) | - \displaystyle\sum_{j=N_1}^{\varphi_t(k)-1} \log L_{j,j+1} (t) + \displaystyle\sum_{i : \varphi_t (i) < \varphi_t(k)} \log |\lambda_i - \lambda_k| \Bigg| \le (\log N)^6 .
		\end{flalign}
		
	\end{prop}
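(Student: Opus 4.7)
The plan is to reduce \eqref{ukn1t} to a comparison between eigenvalues of $\bm{L}(t)$ and its $(\varphi_t(k) - N_1)$-dimensional principal submatrix via the transfer matrix identities of \Cref{smatrixu} and \Cref{sij}. Abbreviate $\varphi = \varphi_t(k)$, and let $\bm{L}' = \bm{L}(t)^{[N_1, \varphi - 1]}$. Taking $i = N_1$, $j = \varphi - 1$, $E = \lambda_k$ in \Cref{smatrixu} (so that $\bm{S}_{\llbracket N_1, \varphi - 1 \rrbracket}(\lambda_k) \bm{w}_{N_1} = \bm{w}_\varphi$, where $\bm{w}_{N_1} = (0, u_k(N_1;t))^{\mathsf{T}}$), and reading off the bottom-right entry of this transfer matrix from \Cref{sij}, yields the exact identity
$$u_k(\varphi; t) = u_k(N_1; t) \cdot \prod_{j = N_1}^{\varphi - 1} L_{j, j+1}(t)^{-1} \cdot \prod_{\mu \in \eig \bm{L}'} (\lambda_k - \mu).$$
Taking logarithms of absolute values (note $L_{j,j+1}(t) = a_j(t) > 0$), the left side of \eqref{ukn1t} becomes
$$\Bigg| \log|u_k(\varphi; t)| + \sum_{i : \varphi_t(i) < \varphi} \log|\lambda_k - \lambda_i| - \sum_{\mu \in \eig \bm{L}'} \log|\lambda_k - \mu| \Bigg|,$$
which I must bound by $(\log N)^6$.

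The first term is immediate: since $\varphi$ is a $\zeta$-localization center for $\bm{u}_k(t)$, $|u_k(\varphi; t)| \in [\zeta, 1]$, and the assumed lower bound $\zeta \ge e^{-100(\log N)^{3/2}}$ yields $|\log|u_k(\varphi; t)|| \le 100(\log N)^{3/2}$. The heart of the proof is to match the two remaining sums.

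To this end, I will use approximate locality (\Cref{lleigenvalues2} and \Cref{lleigenvalues}) to construct a correspondence between $\eig \bm{L}'$ and $\{\lambda_i : \varphi_t(i) < \varphi\}$. Since these results are stated for a Lax matrix at thermal equilibrium (corresponding to $\bm{L}(0)$), I first invoke \Cref{ltl0} to produce a coupling with a matrix $\bm{M}$ distributed as $\bm{L}(0)$ satisfying $|L_{ij}(t) - M_{ij}| \le e^{-K/5}$ on the bulk $\llbracket N_1 + K, N_2 - K \rrbracket^2$, with $K$ a small multiple of $T \log N$; the assumption $\varphi \ge N_1 + T(\log N)^5$ in \eqref{n1k} comfortably places $\varphi$ (and the margin of width $(\log N)^3$ below it) inside this bulk. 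Setting $\mathcal{D} = \llbracket \varphi, N_2 \rrbracket$ and defining $\tilde{\bm{M}}$ to equal $\bm{M}$ on $\llbracket N_1, \varphi - 1 \rrbracket^2$ and zero elsewhere, \Cref{lleigenvalues2} and \Cref{lleigenvalues} (applied with $\delta = 0$, since $\bm{M}$ and $\tilde{\bm{M}}$ agree exactly off $\mathcal{D}$) yield a bijection between those eigenvalues of $\bm{M}$ and of $\bm{M}^{[N_1, \varphi - 1]}$ whose localization centers lie at distance $\ge (\log N)^3$ from $\mathcal{D}$; the matched partners differ by at most $e^{(\log N)^2 - c(\log N)^3}$, which is superpolynomially small. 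Transferring this correspondence to $\bm{L}(t)$ and $\bm{L}'$ using the coupling (which introduces an additional superpolynomially small error by another application of \Cref{lleigenvalues2}/\Cref{lleigenvalues} treating $\bm{L}(t)$ as a perturbation of $\bm{M}$), and using \Cref{bijectionm} and the $(\log N)^2$ uniqueness in \Cref{bijectionl} to rule out double-counting, produces the desired matching up to $O((\log N)^3)$ unmatched eigenvalues on each side, concentrated near $\varphi$.

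For the matched pairs, the Minami-type separation \Cref{eigenvalues0} (with $\delta = e^{-(\log N)^2/2}$) gives $|\lambda_k - \lambda_i| \ge e^{-(\log N)^2/2}$, so each pair contributes at most $N^{-10}$ to the difference of the two log-sums, for a negligible total. The $O((\log N)^3)$ unmatched eigenvalues on each side each contribute at most $O((\log N)^2)$ in absolute value, using \Cref{l0eigenvalues} to bound $|\lambda_k|, |\mu|$ by $O(\log N)$ from above (noting that eigenvalues of $\bm{L}'$ are dominated by those of $\bm{L}(t)$ via Weyl interlacing, cf.\ \Cref{mmj}) and \Cref{eigenvalues0} for a matching lower bound on $|\lambda_k - \lambda_i|$ and (transferred) $|\lambda_k - \mu|$. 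The resulting total boundary error is $O((\log N)^5)$, which combined with the $O((\log N)^{3/2})$ contribution from $\log|u_k(\varphi; t)|$ gives $O((\log N)^5) \le (\log N)^6$. The main obstacle is Paragraph 3: orchestrating the three layers of perturbation (thermal near-invariance via \Cref{ltl0}, truncation via \Cref{lleigenvalues2}, and the reconciliation of localization-center bijections between $\bm{L}(t)$ and $\bm{L}'$) into a single clean bijective correspondence with quantitative control near the ``seam'' at $\varphi$, where the logarithm is most singular and the analysis is most delicate.
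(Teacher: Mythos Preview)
Your transfer-matrix identity and the reduction to matching $\eig \bm{L}'$ with $\{\lambda_i : \varphi_t(i) < \varphi\}$ are correct, and this is exactly how the paper begins (its \Cref{uk1lower}). The gap is in your treatment of the unmatched boundary eigenvalues $\mu \in \eig \bm{L}'$ near the seam at $\varphi$. To bound the difference of the two log-sums from \emph{both} sides, you need $|\lambda_k - \mu| \ge e^{-C(\log N)^2}$ for these $\mu$. But \Cref{eigenvalues0} only separates eigenvalues \emph{within} $\eig \bm{L}$; it says nothing about the distance between $\lambda_k \in \eig \bm{L}(t)$ and $\mu \in \eig \bm{L}'$. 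Your proposed ``transfer'' would match such a $\mu$ to some $\lambda_j \in \eig \bm{L}(t)$ via \Cref{lleigenvalues}, but that corollary requires the localization center of $\mu$ to satisfy $\dist(\tilde\varphi,\mathcal{D}) \ge (\log N)^3$ with $\varphi \in \mathcal{D}$, which is precisely what fails for the seam eigenvalues. Nothing rules out one of these $\mu$ being, say, $e^{-N^{1/200}}$-close to $\lambda_k$, which would contribute an uncontrolled $-N^{1/200}$ to the log-sum.

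The paper circumvents this obstacle by not attempting a two-sided matching at all. It proves only the \emph{lower} bounds on $\log|u_k(N_1;t)|$ and $\log|u_k(N_2;t)|$ (\Cref{uk1lower}); in that direction one only needs the upper bound $\log|\mu-\lambda_k| \le \log(2\log N)$ for unmatched $\mu$, which is trivial, together with the Minami lower bound on $|\lambda_i-\lambda_k|$ for unmatched $\lambda_i$, which \emph{is} available. The two-sided estimate \eqref{ukn1t} then follows by combining these two lower bounds with the exact Thouless identity \Cref{n1n2u}, which expresses $\log|u_k(N_1;t)| + \log|u_k(N_2;t)|$ as $\sum_j \log L_{j,j+1}(t) - \sum_{i\ne k}\log|\lambda_i-\lambda_k|$: if \eqref{ukn1t} failed, adding the lower bound at $N_2$ would violate this identity. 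This symmetry trick is what lets the paper avoid ever needing a lower bound on $|\lambda_k - \mu|$.
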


	To establish \Cref{uk12}, we  use the following lower bounds on $\log | u_k (N_1; t) |$ and $\log | u_k (N_2; t) |$. 
	
	\begin{lem}
		
		\label{uk1lower}
		
		Adopting the notation and assumptions of \Cref{uk12}, we have with overwhelming probability that
		\begin{flalign}
			\label{ukn1n2} 
			\begin{aligned}
				& \log | u_k (N_1; t) | \ge  \displaystyle\sum_{j=N_1}^{\varphi_t (k) - 1} \log L_{j,j+1} (t) - \displaystyle\sum_{i:\varphi_t(i) < \varphi_t(k)} \log |\lambda_i - \lambda_k| - \displaystyle\frac{1}{2} \cdot (\log N)^6; \\ 
				& \log | u_k (N_2; t) | \ge \displaystyle\sum_{j = \varphi_t(k)}^{N_2-1} \log L_{j,j+1} (t) - \displaystyle\sum_{i:\varphi_t(i) > \varphi_t(k)} \log |\lambda_i - \lambda_k| - \displaystyle\frac{1}{2} \cdot (\log N)^6.
			\end{aligned} 
		\end{flalign}
	\end{lem}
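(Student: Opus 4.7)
The strategy is to combine the exact transfer-matrix representation of \Cref{smatrixu}--\Cref{sij} with the approximate locality of eigenvalues (\Cref{lleigenvalues2} and \Cref{lleigenvalues}). Setting $\varphi = \varphi_t(k)$, I would apply \Cref{smatrixu} to $\bm{M} = \bm{L}(t)$ at energy $\mu = \lambda_k$ over $\llbracket N_1, \varphi-1\rrbracket$. Using the convention $u_k(N_1-1;t) = 0$, so that $\bm{w}_{N_1} = (0, u_k(N_1;t))^{\mathsf{T}}$, and reading off the second coordinate of $\bm{w}_\varphi = \bm{S}_{\llbracket N_1,\varphi-1\rrbracket}(\lambda_k) \cdot \bm{w}_{N_1}$ through the formula in \Cref{sij} (noting that the $(2,1)$-entry of the transfer matrix acts on a $0$ and so drops out) yields the exact identity
\begin{flalign*}
u_k(\varphi;t) = u_k(N_1;t) \cdot \prod_{j=N_1}^{\varphi-1} L_{j,j+1}(t)^{-1} \cdot \prod_{h=1}^{\varphi-N_1} (\lambda_k - \mu_h),
\end{flalign*}
where $(\mu_h) = \eig \bm{L}(t)^{[N_1,\varphi-1]}$. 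Rearranging, taking absolute values and logarithms, and using $|u_k(\varphi;t)| \ge \zeta$ (the defining property of the $\zeta$-localization center) with $\log\zeta \ge -100(\log N)^{3/2}$, it suffices to establish the upper bound
\begin{flalign*}
\sum_{h=1}^{\varphi-N_1} \log|\lambda_k-\mu_h| \le \sum_{i:\varphi_t(i) < \varphi} \log|\lambda_k-\lambda_i| + \tfrac{1}{3}(\log N)^6.
\end{flalign*}

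To produce this comparison I would pair the two collections, which have the same cardinality $\varphi - N_1$. Embedding the truncation into $\SymMat_{\llbracket N_1, N_2\rrbracket}$ by zero-padding yields a tridiagonal matrix $\tilde{\bm{L}}$ whose disagreement set with $\bm{L}(t)$ is contained in the pair $\mathcal{D} = \{\varphi-1,\varphi\}$. Invoking \Cref{lleigenvalues2} with $\delta = 0$ pairs every $\lambda_i$ with $\varphi_t(i) \le \varphi - (\log N)^3$ to a $\mu_{h(i)}$ with $|\lambda_i - \mu_{h(i)}| \le e^{(\log N)^2 - c(\log N)^3}$, of which $\varphi_t(i)$ is an $N^{-1}\zeta$-localization center with respect to $\tilde{\bm{L}}$; conversely, \Cref{lleigenvalues} pairs every $\mu_h$ whose $N^{-1}\zeta$-localization center lies at distance $\ge(\log N)^3$ from $\varphi$ to a unique $\lambda_{i(h)}$ of $\bm{L}(t)$, with the same eigenvalue estimate. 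Since by \Cref{bijectionl} (applied to both $\bm{L}(t)$ and $\tilde{\bm{L}}$) localization centers are unique up to $(\log N)^2$, these two maps are mutually inverse on a subset whose complement on each side has size at most $O((\log N)^3)$.

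For a paired term $(\mu_h,\lambda_i)$ with $i \ne k$, the Minami estimate \Cref{eigenvalues0} at scale $\delta = e^{-(\log N)^2/2}$ gives $|\lambda_k-\lambda_i| \ge e^{-(\log N)^2/2}$ with high probability, hence $\big|\log|\lambda_k-\mu_h| - \log|\lambda_k-\lambda_i|\big| \le e^{-c(\log N)^3/2}$; summed over $\le N$ pairs this is negligible. For the $O((\log N)^3)$ unpaired $\mu_h$ or $\lambda_i$, \Cref{l0eigenvalues} bounds each $|\lambda_k-\mu_h|, |\lambda_k-\lambda_i| \le 2\log N$, contributing at most $O((\log N)^4) \ll (\log N)^6$. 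The second estimate in \eqref{ukn1n2} follows from the symmetric transfer-matrix argument applied over $\llbracket \varphi, N_2-1\rrbracket$ with $u_k(N_2+1;t)=0$. The chief subtlety is the ``shadow'' of $\lambda_k$ itself in the truncation: because $\varphi - 1$ is adjacent to $\lambda_k$'s own localization center, some $\mu_h$ may equal $\lambda_k$ to exponentially small precision, producing an arbitrarily negative $\log|\lambda_k-\mu_h|$ on the left. Happily, this only strengthens the desired lower bound (it enlarges $-\sum_h \log|\lambda_k-\mu_h|$); however, one must still verify via the Minami estimate that no \emph{other} $\mu_h$ is abnormally close to $\lambda_k$, which is where the separation hypothesis $\mathsf{SEP}_{\bm{L}(t)}$ enters decisively.
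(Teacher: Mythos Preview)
Your transfer-matrix setup and the reduction to the eigenvalue-comparison inequality
\[
\sum_{h} \log|\lambda_k-\mu_h| \le \sum_{i:\varphi_t(i) < \varphi} \log|\lambda_k-\lambda_i| + \tfrac{1}{3}(\log N)^6
\]
are exactly what the paper does. The gap is in how you justify this comparison. You propose to invoke \Cref{lleigenvalues2} and \Cref{lleigenvalues} with the pair $(\bm{L},\tilde{\bm{L}})$ taken to be $\bm{L}(t)$ and its truncation. But \Cref{lmatrixl}, on which both corollaries rest, requires the matrix $\bm{L}$ to be sampled from thermal equilibrium $\mu_{\beta,\theta;N-1,N}$; it is the \emph{time-zero} Lax matrix. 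For $t>0$ the law of $\bm{L}(t)$ is only approximately thermal (cf.\ \Cref{ltl0}), so neither corollary applies directly, and the same objection hits your appeal to \Cref{bijectionl} for $\tilde{\bm{L}}$.

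The paper closes this gap by interposing \Cref{ltmatrix}: one first invokes \Cref{ltl0} to produce a genuine thermal-equilibrium matrix $\bm{Q}$ that agrees with $\bm{L}(t)$ in the bulk up to $e^{-K/5}$, applies \Cref{lleigenvalues}/\Cref{lleigenvalues2} to $\bm{Q}$ (and its truncations), and then transfers the conclusions back to $\bm{L}(t)$ via a further resolvent comparison. This is the content of \Cref{ltmatrix2} and \Cref{ltmatrix3}, and it also handles the boundary case where the localization center $\Phi$ of $\mu_h$ sits within $T(\log N)^3$ of $N_1$ (where \eqref{n1n2center} fails), via the cruder bound $|\varphi-\Phi|\le T(\log N)^3$ of part~(2). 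Your sketch does not account for either the time-$t$ coupling or this boundary regime; both are needed to make the pairing argument rigorous. (A smaller point: zero-padding $\bm{L}(t)^{[N_1,\varphi-1]}$ into $\SymMat_{\llbracket N_1,N_2\rrbracket}$ produces disagreement on all of $\llbracket \varphi,N_2\rrbracket$, not just $\{\varphi-1,\varphi\}$; the paper instead works with $\bm{P}^{(\ell)}$ for $\ell = \varphi_t(k)$, which has $\mathcal{D}=\{\ell\}$.)
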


	Given \Cref{uk1lower}, we can quickly establish \Cref{uk12} using \Cref{n1n2u}.
	
	\begin{proof}[Proof of \Cref{uk12}]
		
		Let $\mathsf{F}$ denote the event on which \eqref{ukn1n2} holds. By \Cref{uk1lower}, $\mathsf{F}$ is overwhelmingly probable, so it suffices to show that \eqref{ukn1t} holds for sufficiently large $N$ on the event $\mathsf{F}$; assume to the contrary that this is false. Since the first bound in \eqref{ukn1n2} holds but \eqref{ukn1t} does not hold, we must have 
		\begin{flalign*}
			\log | u_k (N_1; t) | \ge \displaystyle\sum_{j=N_1}^{\varphi_t(k)-1} \log L_{j,j+1} (t) - \displaystyle\sum_{i: \varphi_t(i) < \varphi_t(k)} \log |\lambda_i - \lambda_k| + (\log N)^6.
		\end{flalign*}
		
		\noindent Together with the second bound in \eqref{ukn1n2}, this yields
		\begin{flalign*}
			\log | u_k (N_1; t) | + \log | u_k (N_2; t) | \ge \displaystyle\sum_{j=N_1}^{N_2-1} \log L_{j,j+1} (t) - \displaystyle\sum_{i \ne k} \log |\lambda_i - \lambda_k| + \displaystyle\frac{1}{2} \cdot (\log N)^6,
		\end{flalign*}
		
		\noindent which contradicts \Cref{n1n2u}. This verifies the proposition.
	\end{proof}

	To prove \Cref{uk1lower}, we will make use of \Cref{smatrixu}, which expresses the eigenvector entries of $\bm{L}(t)$ in terms of transfer matrices. The latter admit an explicit form, given by \Cref{sij}, which involves the eigenvalues of truncations of $\bm{L}(t)$. So, we require an estimate on how the eigenvalues of the Lax matrix $\bm{L}(t)$ change after setting one of its rows and columns to $0$. This is provided by the first part of the following proposition. Its second part states that the localization centers of these eigenvalues cannot differ by too much, and its third improves the bound on this difference (making it independent of $T$) if one of the localization centers is not too close to an endpoint of the domain $\llbracket N_1, N_2 \rrbracket$. The proof of this proposition will appear in \Cref{ProofMatrix0} below. 
	
	\begin{prop}
		
		\label{ltmatrix}
		
		For any real number $t \in [0, T]$, the following holds with overwhelming probability, for some constant $c>0$. Let $\ell \in \llbracket N_1, N_2 \rrbracket$ be an integer, and assume that 
		\begin{flalign}
			\label{tn1n2}
			N_1 + T (\log N)^4 + N^{1/100} \le \ell \le N_2 - T (\log N)^4 - N^{1/100}.
		\end{flalign}
		
		\noindent Set $\bm{P} = \bm{L}(t)$ and $\bm{M} = \bm{P}^{(\ell)}$. Let $\mu \in \eig \bm{M}$ be any eigenvalue of $\bm{M}$; let $\Phi \in \llbracket N_1, N_2 \rrbracket$ be any $\zeta$-localization center for $\mu$ with respect to $\bm{M}$. Suppose that $|\Phi - \ell| \ge (\log N)^4$.
		
		\begin{enumerate} 
			
			\item There exists a unique eigenvalue $\lambda \in \eig \bm{P}$ such that $|\lambda - \mu| \le e^{- c \min \{ |\Phi-\ell|, N^{1/100} \}}$. 
			\item If $\varphi \in \llbracket N_1, N_2 \rrbracket$ is a $\zeta$-localization center for $\lambda$ with respect to $\bm{P}$, then we have $|\varphi - \Phi| \le T (\log N)^3$. 
			\item Assume that 
			\begin{flalign}
				\label{n1n2center} 
				N_1 + T (\log N)^3 + \displaystyle\frac{1}{2} \cdot N^{1/100} \le \Phi \le N_2 - T (\log N)^3 - \displaystyle\frac{1}{2} \cdot N^{1/100}.
			\end{flalign}
			
			\noindent If $\varphi \in \llbracket N_1, N_2 \rrbracket$ is a $\zeta$-localization center for $\lambda$ with respect to $\bm{P}$, then $|\varphi - \Phi| \le (\log N)^2$.
		\end{enumerate} 
	\end{prop}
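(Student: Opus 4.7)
The plan is to reduce \Cref{ltmatrix} to the resolvent-based eigenvalue comparisons \Cref{lleigenvalues} and \Cref{lleigenvalues2} applied to the pair $(\bm{P}, \bm{M})$, which agrees off of the single row and column indexed by $\ell$. Because those comparison results presuppose a Lax matrix at exact thermal equilibrium and we are at time $t$, the first step is to use \Cref{ltl0} with boundary-strip width $K = N^{1/100}$ to produce a matrix $\bm{L}' \in \SymMat_{\llbracket N_1, N_2 \rrbracket}$ with the law of the thermal-equilibrium Lax matrix, coupled to $\bm{L}(t)$ so that $|L_{ij}(t) - L'_{ij}| \le e^{-N^{1/100}/5}$ for all $i,j \in \llbracket N_1+K, N_2-K \rrbracket$, on an event of probability at least $1 - c^{-1}e^{-c(\log N)^2}$. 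Setting $\bm{M}' = (\bm{L}')^{(\ell)}$, the matrices $\bm{L}'$ and $\bm{M}'$ agree exactly off of row and column $\ell$.

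For Part (1), I would apply \Cref{lleigenvalues} with $\bm{L} = \bm{L}'$, $\tilde{\bm{L}} = \bm{M}$, index set $\mathcal{D} = \{\ell\} \cup \llbracket N_1, N_1+K\rrbracket \cup \llbracket N_2-K, N_2\rrbracket$, and $\delta = e^{-N^{1/100}/5}$. Outside $\mathcal{D}$, the two matrices differ only by the coupling error, so the second bound in \eqref{ll} holds; inside $\mathcal{D}$, the first bound in \eqref{ll} holds on the event supplied by \Cref{l0eigenvalues}. From \eqref{tn1n2} together with $|\Phi-\ell| \ge (\log N)^4$, one checks that $\dist(\Phi, \mathcal{D}) \ge \min\{|\Phi-\ell|, N^{1/100}\} \ge (\log N)^3$, so \Cref{lleigenvalues} produces the unique $\lambda \in \eig \bm{L}' = \eig \bm{P}$ with $|\lambda-\mu| \le e^{(\log N)^2}(\delta^{1/8} + e^{-c\dist(\Phi,\mathcal{D})})$, which after rescaling $c$ is precisely the claimed bound.

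For Part (2), the very same application of \Cref{lleigenvalues} additionally identifies $\Phi$ as an $N^{-1}\zeta$-localization center of $\lambda$ with respect to $\bm{P}$. Since $N^{-1}\zeta \ge e^{-200(\log N)^{3/2}}$ for large $N$, \Cref{centert} applies to both $\varphi$ and $\Phi$, viewed as localization centers of the eigenvector $\bm{u}_k(t)$ of $\bm{P}$ with $\lambda_k = \lambda$: each lies within $T(\log N)^2$ of the time-$0$ center $\varphi_k$ of $\bm{u}_k(0)$, so $|\varphi - \Phi| \le 2T(\log N)^2 \le T(\log N)^3$. For Part (3), the reinforced hypothesis \eqref{n1n2center} places $\Phi$ at distance at least $N^{1/100}/2$ from each boundary strip, so upon choosing $K$ comparable to $N^{1/100}/3$ the condition $\dist(\Phi,\mathcal{D}) \ge (\log N)^3$ holds independently of $T$; the second conclusion of \Cref{lleigenvalues} then yields $|\varphi' - \Phi| \le (\log N)^2/2$ for any $N^{-1}\zeta$-center $\varphi'$ of $\lambda$ with respect to $\bm{L}'$, and a further application of \Cref{lleigenvalues2} transfers this back to $\bm{P}$ at the cost of at most one more $(\log N)^2/2$ shift, giving $|\varphi-\Phi| \le (\log N)^2$.

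The main obstacle is managing the boundary: when $\Phi$ lies close to an endpoint of $\llbracket N_1, N_2\rrbracket$, the distance $\dist(\Phi, \mathcal{D})$ is governed by $N^{1/100}$ rather than by $|\Phi-\ell|$, and it is exactly this phenomenon that accounts for the $N^{1/100}$ term in Part (1). One must also verify, at each successive application of \Cref{lleigenvalues2} or \Cref{lleigenvalues}, that the $N^{-1}$-degraded localization thresholds introduced by each comparison remain above $e^{-200(\log N)^{3/2}}$, so that both \Cref{bijectionl} and \Cref{centert} continue to apply throughout the chain of comparisons that links $(\bm{M}, \bm{M}')$, $(\bm{L}', \bm{M}')$, and $(\bm{L}', \bm{P})$.
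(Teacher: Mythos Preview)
Your proposal has a genuine gap in the boundary case. The coupling from \Cref{ltl0} requires $K \ge T\log N$; since \Cref{lbetaeta} permits $T$ as large as $N(\log N)^{-7}$, your choice $K = N^{1/100}$ (or $K \sim N^{1/100}/3$) is far too small in general. Taking the mandatory $K \ge T\log N$, your claimed inequality $\dist(\Phi,\mathcal{D}) \ge \min\{|\Phi-\ell|, N^{1/100}\}$ fails whenever $\Phi$ is close to an endpoint of $\llbracket N_1, N_2 \rrbracket$: nothing in the hypotheses prevents, say, $\Phi = N_1$, in which case $\Phi \in \mathcal{D}$ and $\dist(\Phi,\mathcal{D}) = 0$, so \eqref{dn} is violated and \Cref{lleigenvalues} does not apply. (Only $\ell$, not $\Phi$, is forced into the bulk by \eqref{tn1n2}.) A separate slip is the assertion $\eig \bm{L}' = \eig \bm{P}$: these eigenvalue sets are merely close, so an additional comparison step is required even in the bulk --- you allude to the chain at the end but skip it in the body.

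The paper resolves this by splitting into two regimes. When \eqref{n1n2center} holds (\Cref{ltmatrix2}), a chain $\bm{M} \to \bm{Q}^{(\ell)} \to \bm{Q} \to \bm{P}$ is run with $K$ as in \eqref{kdeltae}, which carries a $\tfrac{T}{2}(\log N)^3$ term precisely to meet the hypothesis of \Cref{ltl0}; the condition \eqref{n1n2center} is exactly what keeps $\Phi$ outside every discrepancy set along that chain. When \eqref{n1n2center} fails, the coupling route is abandoned entirely: one observes that $\ell$ being in the bulk and $\Phi$ near the boundary together force $|\Phi - \ell| \ge \tfrac{1}{2}\big(T(\log N)^4 + N^{1/100}\big)$, and then \Cref{centert} directly bounds $\sum_k |u_k(\Phi;t)\,u_k(i;t)|$ for $i$ near $\ell$, which via \eqref{gsum} and \eqref{ab} controls $\check{G}_{\Phi\Phi} - \breve{G}_{\Phi\Phi}$ without any appeal to a thermal-equilibrium matrix (this is \Cref{zomega}). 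That direct resolvent argument for the boundary is the ingredient your plan is missing.
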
 
	
	Given \Cref{ltmatrix}, we can establish \Cref{uk1lower}.
	
	\begin{proof}[Proof of \Cref{uk1lower}]
		
		We only establish the first bound in \eqref{ukn1n2}, as the second would then follow from symmetry. Throughout, we set $\bm{P} = [P_{ij}] = \bm{L}(t)$ and $v_j = u_k (j; t)$ for each $j \in \llbracket N_1, N_2 \rrbracket$. 
		
		Recall from \Cref{EigenvectorM} the notation on the transfer matrices $\bm{S}_{\mathcal{K}}$, defined by \eqref{ks} and \eqref{k2s}; abbreviate $\bm{S} = [S_{ij}] = \bm{S}_{\llbracket N_1, \varphi_t(k) \rrbracket} ( \lambda_k; \bm{P}) \in \Mat_{2 \times 2}$, whose entries are indexed by $i, j \in \{ 1, 2 \}$. Also denote $\bm{Q} = \bm{P}^{[N_1, \varphi_t (k)-1]}$, where we recall from above \Cref{sij} that this is the matrix obtained by restricting $\bm{P}$ to rows and columns indexed by $\llbracket N_1, \varphi_t (k) - 1 \rrbracket$. Then \Cref{smatrixu} yields $\bm{S} \cdot (0, v_{N_1}) = (v_{\varphi_t(k)}, v_{\varphi_t(k)+1})$, so $S_{12} \cdot v_{N_1} = v_{\varphi_t(k)}$. We thus deduce for sufficiently large $N$ that
		\begin{flalign*}
			\log |v_{N_1}| & = \log |v_{\varphi_t(k)}| - \log |S_{12}| \\ 
			& \ge \log \zeta - \log |S_{12}| \ge \displaystyle\sum_{j=N_1}^{\varphi_t(k)-1} \log P_{j,j+1} - \displaystyle\sum_{\mu \in \eig \bm{Q}} \log |\mu - \lambda_k| - (\log N)^2,
		\end{flalign*}
		
		\noindent where in the second statement we used the fact that $\varphi_t$ is a $\zeta$-localization center bijection for $\bm{P}$, and in the third we used \Cref{sij} and the fact that $\zeta \ge e^{-100 (\log N)^{3/2}} \ge e^{-(\log N)^2}$. To verify the first bound in \eqref{ukn1n2}, it therefore suffices to show that with overwhelming probability we have 
		\begin{flalign}
			\label{muq} 
			\displaystyle\sum_{\mu \in \eig \bm{Q}} \log |\mu - \lambda_k| \le \displaystyle\sum_{i : \varphi_t(i) < \varphi_t(k)} \log |\lambda_i - \lambda_k| + \displaystyle\frac{1}{3} \cdot (\log N)^6.
		\end{flalign}
		
		\noindent To that end, recalling \Cref{adelta}, define the event $\mathsf{E}_1 = \mathsf{BND}_{\bm{P}} (\log N) \cap \mathsf{SEP}_{\bm{P}} (e^{-(\log N)^2})$. Then $\mathsf{E}_1$ is overwhelmingly probable, due to \Cref{l0eigenvalues} and \Cref{eigenvalues0}. 
		
		We next apply \Cref{ltmatrix}, with the $\ell$ there equal to $\varphi_t(k)$ here; observe that the estimate \eqref{tn1n2} assumed in that proposition holds by \eqref{n1k}. Thus, \Cref{ltmatrix} yields an overwhelmingly probable event $\mathsf{E}_2$, on which the following holds. Let $\mu \in \eig \bm{Q}$ be any eigenvalue that admits a $\zeta$-localization center $\Phi \in \llbracket N_1, \varphi_t(k) - 1 \rrbracket$ with respect to $\bm{Q}$ satisfying $\Phi \le \varphi_t (k) - (\log N)^4$. 
		\begin{enumerate} 
			\item There exists a unique eigenvalue $\lambda = \lambda(\mu) \in \eig \bm{P}$ such that $|\lambda - \mu| \le e^{-(\log N)^3}$. 
			\item Letting $\lambda = \lambda_{\psi}$ for some $\psi \in \llbracket 1, N \rrbracket$, we have $| \varphi_t(\psi) - \Phi | \le T (\log N)^3$.
			\item If $\Phi \ge N_1 + T (\log N)^3 + N^{1/100} / 2$, then we further have $| \varphi_t(\psi) - \Phi | \le (\log N)^2$.
		\end{enumerate}
		
		\noindent In what follows, we restrict to $\mathsf{E}_1 \cap \mathsf{E}_2$ and show \eqref{muq} holds. 
		
		To that end, denote $\eig \bm{Q} = (\mu_1, \mu_2, \ldots , \mu_{\varphi_t(k) - N_1})$, and let $\Phi : \llbracket 1, \varphi_t (k) - N_1 \rrbracket \rightarrow \llbracket N_1, \varphi_t (k) - 1 \rrbracket$ denote a $(2N)^{-1}$-localization center bijection for $\bm{Q}$, which is guaranteed to exist by \Cref{bijectionm}. Let $\mathcal{J}$ denote the set of indices $j \in \llbracket 1, \varphi_t (k) - N_1 \rrbracket$ such that $\Phi (j) \le \varphi_t (k) - (\log N)^4$. For each $j \in \mathcal{J}$, let $\psi_j \in \llbracket 1, N \rrbracket$ denote the unique index such that 
		\begin{flalign}
			\label{lambdapsimu} 
			|\lambda_{\psi_j} - \mu_{\Phi(j)}| \le e^{-(\log N)^3}, \qquad \text{where $j$ satisfies} \qquad \varphi_t (\psi_j) < \varphi_t (k).
		\end{flalign}
		
		\noindent Such an index $\psi_j$ satisfying the first bound in \eqref{lambdapsimu} exists and is unique by our restriction to $\mathsf{E}_2$. To verify that $\psi_j$ satisfies the second observe that, if $\Phi(j) \ge N_1 + T (\log N)^3 + N^{1/100} / 2$, then 
		\begin{flalign*} 
			\varphi_t(\psi_j) \le \Phi(j) + (\log N)^2 \le \varphi_t(k) - (\log N)^4 + (\log N)^2 < \varphi_t(k),
		\end{flalign*} 
		
		\noindent the first bound since we restricted to $\mathsf{E}_2$, the second since $j \in \mathcal{J}$, and the third since $N > 1$. If instead $\Phi (j) < N_1 + T (\log N)^3 + N^{1/100} / 2$, then 
		\begin{flalign*} 
			\varphi_t (\psi_j) \le \Phi (j) + T (\log N)^3 \le N_1 + 2T(\log N)^4 + \displaystyle\frac{1}{2} \cdot N^{1/100}  < \varphi_t (k), 
		\end{flalign*} 
		
		\noindent the first bound by our restriction to $\mathsf{E}_2$, the second by our assumption on $\Phi(j)$, and the third by \eqref{n1k}. This confirms \eqref{lambdapsimu}. We also have that 
		\begin{flalign}
			\label{muhlambdak}
			\begin{aligned}
				& \log |\mu_h - \lambda_k| \le \log N \qquad \qquad \qquad \qquad \qquad \quad \text{for all $h \in \llbracket 1, \varphi_t (k) - N_1 - 1 z\rrbracket$}; \\
				& \log |\mu_j - \lambda_k| \le \log |\lambda_{\psi_j} - \lambda_k| + e^{-(\log N)^2}, \qquad \text{for all $j \in \mathcal{J}$}.
			\end{aligned}
		\end{flalign}
		
		\noindent Here, the first bound holds since $|\mu_i - \lambda_i| \le 2 \log N$ (by our restriction to $\mathsf{E}_1$ and the fact that $\mathsf{E}_1 \subseteq \mathsf{BND}_{\bm{Q}} (\log N)$, by \Cref{mmj}); the second holds since $|\lambda_{\psi_j} - \mu_j| \le e^{-(\log N)^3}$, with the bound $|\lambda_{\psi_j} - \lambda_k| \ge e^{-(\log N)^2}$ (by our restriction to $\mathsf{E}_1$ and the fact from \eqref{lambdapsimu} that $\psi_j \ne k$). Therefore,  
		\begin{flalign*}
			\displaystyle\sum_{\mu \in \bm{Q}} \log |\mu - \lambda_k| & \le \displaystyle\sum_{j \in \mathcal{J}} \log |\mu_h - \lambda_k| + (\log N)^5 \\
			& \le  \displaystyle\sum_{j \in \mathcal{J}} \log |\lambda_{\psi_j} - \lambda_k| + 2 (\log N)^5 \\ 
			& \le \displaystyle\sum_{i : \varphi_t (i) < \varphi_t(k)} \log |\lambda_i - \lambda_k| + 3 (\log N)^5.
		\end{flalign*}
		
		\noindent Here, in the first estimate, we used the fact that at most $(\log N)^4$ indices $h \notin \mathcal{J}$ exist, with the first bound in \eqref{muhlambdak} for each such $h$. In the second, we used the second bound in \eqref{muhlambdak}. In the third we used the fact that $\varphi_t(\psi_j) < \varphi_t(k)$ for each $j \in \mathcal{J}$ (by \eqref{lambdapsimu}); the bound $\log |\lambda_i - \lambda_k| \le \log (2 \log N) \le \log N$ for all indices $i$ with $\varphi_t (i) \in \llbracket N_1, \varphi_t (k) - 1 \rrbracket$ but that are not of the form $\psi_j$ for some $j \in \mathcal{J}$; and the fact that at most $(\log N)^4$ such indices $i$ exist (as is quickly verified from the injectivity of $\varphi_t$ and $\psi$, the second statement in \eqref{lambdapsimu}, and the fact that there are at most $(\log N)^4$ indices not in $\mathcal{J}$). This establishes the second bound in \eqref{muq} and thus the lemma.
	\end{proof}

	\subsection{Eigenvalues of Truncated Lax Matrices}
	
	\label{MatrixEigenvalue0} 
	
	In this section we begin the proof of \Cref{ltmatrix}; we adopt the notation of that proposition throughout. We first address it when \eqref{n1n2center} holds, that is, when $\Phi$ is not too close to $N_1$ or $N_2$. 
	
	\begin{prop}
		
		\label{ltmatrix2} 
		
		Proposition \ref{ltmatrix} holds assuming \eqref{n1n2center}. 
		
	\end{prop}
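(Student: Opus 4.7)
The plan is to introduce a proxy matrix $\hat{\bm{L}}$ whose law coincides with that of $\bm{L}(0)$ and which agrees with $\bm{P}=\bm{L}(t)$ throughout the bulk of $\llbracket N_1,N_2\rrbracket$, and then transfer spectral and localization information between $\bm{M}$, $\hat{\bm{L}}$, and $\bm{P}$ using \Cref{lleigenvalues2} and \Cref{lleigenvalues}. Set $K=\lfloor N^{1/100}/3\rfloor$, $\delta=e^{-K/5}$, and $\mathcal{D}=\{\ell\}\cup\llbracket N_1,N_1+K-1\rrbracket\cup\llbracket N_2-K+1,N_2\rrbracket$, and apply \Cref{ltl0} to produce $\hat{\bm{L}}\in\SymMat_{\llbracket N_1,N_2\rrbracket}$ with $|L_{ij}(t)-\hat L_{ij}|\le\delta$ for all $i,j$ in the bulk $\llbracket N_1+K,N_2-K\rrbracket$. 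Intersecting this with $\mathsf{BND}_{\bm{L}(t)}(\log N)$ (from \Cref{l0eigenvalues}) and $\mathsf{SEP}_{\bm{L}}(e^{-(\log N)^2})$ (from \Cref{eigenvalues0}, transferred to $\bm{P}$ via \Cref{ltt}), both pairs $(\hat{\bm{L}},\bm{M})$ with exceptional set $\mathcal{D}$ and $(\hat{\bm{L}},\bm{P})$ with exceptional set $\mathcal{D}\setminus\{\ell\}$ fit the framework of \Cref{lmatrixl} with perturbation parameter $\delta$, since deleting row and column $\ell$ of $\bm{P}$ does not affect entries indexed outside $\mathcal{D}^2$.

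For the existence claim in part~(1), the hypothesis $|\Phi-\ell|\ge(\log N)^4$ combined with \eqref{n1n2center} yields $\dist(\Phi,\mathcal{D})\ge\min\{|\Phi-\ell|,N^{1/100}/6\}\ge(\log N)^3$. Applying \Cref{lleigenvalues} to $(\hat{\bm{L}},\bm{M})$ with the pair $(\mu,\Phi)$ produces a unique $\lambda^\star\in\eig\hat{\bm{L}}$ with
\[
|\lambda^\star-\mu|\le e^{(\log N)^2}\bigl(\delta^{1/8}+e^{-c\min\{|\Phi-\ell|,N^{1/100}\}}\bigr),
\]
and with $\Phi$ an $N^{-1}\zeta$-localization center of $\lambda^\star$ with respect to $\hat{\bm{L}}$. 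Applying \Cref{lleigenvalues2} next to $(\hat{\bm{L}},\bm{P})$ with this $\lambda^\star$ and $\Phi$ produces $\lambda\in\eig\bm{P}$ with $|\lambda-\lambda^\star|\le e^{-cN^{1/100}}$ and $\Phi$ an $N^{-2}\zeta$-localization center of $\lambda$ with respect to $\bm{P}$. Summing the two errors gives the bound in part~(1); uniqueness follows since the total error is far below the $e^{-(\log N)^2}$ separation gap.

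For parts~(2) and~(3), write $\lambda=\lambda_k$ and fix any $(2N)^{-1}$-localization center $\varphi_k$ of $\bm{u}_k(0)$, guaranteed by \Cref{bijectionm}. \Cref{centert} yields $|u_k(m;t)|\le e^{-cT(\log N)^2}$ whenever $|m-\varphi_k|\ge T(\log N)^2$; for large $N$ this threshold is below both $N^{-2}\zeta$ (the localization threshold for $\Phi$) and $\zeta$ (for any given $\zeta$-localization center $\varphi$ of $\lambda$ with respect to $\bm{P}$). Hence $|\varphi-\varphi_k|,|\Phi-\varphi_k|\le T(\log N)^2$, giving the preliminary bound $|\varphi-\Phi|\le 2T(\log N)^2$; this already establishes part~(2), and together with \eqref{n1n2center} shows $\dist(\varphi,\mathcal{D}\setminus\{\ell\})\ge(\log N)^3$. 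For the sharpening in part~(3), apply \Cref{lleigenvalues} to $(\hat{\bm{L}},\bm{P})$ with the pair $(\lambda,\varphi)$: it produces a unique $\tilde\lambda\in\eig\hat{\bm{L}}$ near $\lambda$, which by separation must equal $\lambda^\star$, and asserts that any $N^{-1}\zeta$-localization center of $\lambda^\star$ with respect to $\hat{\bm{L}}$ lies within $(\log N)^2/2$ of $\varphi$. Since $\Phi$ is one such center by the forward transfer, $|\varphi-\Phi|\le(\log N)^2/2\le(\log N)^2$.

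The main obstacle is the reverse transfer in part~(3): the perturbation lemma \Cref{lleigenvalues} may be invoked only once one knows $\varphi$ sits in the bulk of $\llbracket N_1,N_2\rrbracket$, yet a priori $\varphi$ could lie anywhere. The coarser $T(\log N)^2$-scale bound extracted from \Cref{centert} is precisely what places $\varphi$ in the bulk and unlocks the sharper argument; without the bulk hypothesis \eqref{n1n2center} this softer bound is essentially the best available, which is consistent with the weaker $T(\log N)^3$ tolerance stated in part~(2) of \Cref{ltmatrix} for the general case.
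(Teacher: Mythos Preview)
Your overall strategy is sound and in fact cleaner than the paper's. The paper separates the argument into two stages: first it proves \Cref{ltmatrix3} by passing to the corner matrices $\bm{P}',\bm{Q}'$ on $\llbracket N_1,\ell-1\rrbracket$, transferring $\mu$ to an eigenvalue $\nu\in\eig\bm{Q}'\subseteq\eig\bm{R}=\bm{Q}^{(\ell)}$, and then from $\bm{R}$ to $\bm{Q}$; only afterwards does it pass from $\bm{Q}$ to $\bm{P}$. You collapse the first two transfers into a single application of \Cref{lleigenvalues} by absorbing $\{\ell\}$ directly into the exceptional set $\mathcal{D}$, which is a legitimate simplification and avoids the detour through corners. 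The handling of parts~(2) and~(3)---using \Cref{centert} for a coarse $O(T(\log N)^2)$ bound to place $\varphi$ in the bulk, then applying \Cref{lleigenvalues} in the reverse direction for the sharp $(\log N)^2$ estimate---matches the paper's argument essentially verbatim.

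There is one concrete gap: your choice $K=\lfloor N^{1/100}/3\rfloor$ does not satisfy the hypothesis $K\ge T\log N$ required by \Cref{ltl0} unless $T\le N^{1/100}/(3\log N)$, whereas \Cref{lbetaeta} permits $T$ as large as $N(\log N)^{-7}$. Without this condition the claimed bound $|L_{ij}(t)-\hat L_{ij}|\le\delta$ on the bulk does not follow. The fix is straightforward: take instead $K=T(\log N)^2+\lfloor N^{1/100}/4\rfloor$ (the paper similarly includes a $\tfrac{T}{2}(\log N)^3$ term in its $K$; see \eqref{kdeltae}). With this adjustment one still has $\dist(\Phi,\mathcal D)\ge\min\{|\Phi-\ell|,N^{1/100}/6\}$ from \eqref{n1n2center}, the $\delta^{1/8}$ term is $\le e^{-N^{1/100}/160}$, and the coarse bound $|\varphi-\Phi|\le 2T(\log N)^2$ combined with \eqref{n1n2center} still places $\varphi$ at distance at least $N^{1/100}/6$ from the boundary layers, so every subsequent step goes through unchanged.
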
 
	
	To prove \Cref{ltmatrix2}, we first apply \Cref{ltl0}, which yields a random matrix $\bm{Q} = [Q_{ij}] \in \SymMat_{\llbracket N_1, N_2 \rrbracket}$ with the same law as $\bm{L}(0)$, and an overwhelmingly probable event $\mathsf{E}$, on which we have   
	\begin{flalign}
		\label{pq} 
		\displaystyle\max_{i, j \in \llbracket N_1 + K, N_2 - K \rrbracket} | P_{ij} - Q_{ij} | \le e^{-K/5}, \qquad \text{for any $K \ge T \log N$}.
	\end{flalign}
	
	\begin{lem} 
		
		\label{ltmatrix3}
		
		If \eqref{n1n2center} holds, then there exists a constant $c>0$ such that the following two statements hold with overwhelming probability. 
		
		\begin{enumerate} 
			\item There exists an eigenvalue $\kappa \in \eig \bm{Q}$ such that $|\mu - \kappa| \le e^{-c \min \{ |\Phi - \ell|, N^{1/100} \}}$. 
			\item The index $\Phi \in \llbracket N_1, N_2 \rrbracket$ is an $N^{-2} \zeta$-localization center for $\kappa$ with respect to $\bm{Q}$. 
		\end{enumerate}
	\end{lem}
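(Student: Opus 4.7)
The plan is to combine the coupling from \Cref{ltl0}, which relates $\bm{P} = \bm{L}(t)$ to the matrix $\bm{Q}$ of the same law as $\bm{L}(0)$, with the approximate-locality estimate \Cref{lleigenvalues} applied to the pair $(\bm{Q}, \bm{M})$, where $\bm{M} = \bm{P}^{(\ell)}$. To apply \Cref{lleigenvalues} we must find a set $\mathcal{D} \subseteq \llbracket N_1, N_2 \rrbracket$ such that $|M_{ij} - Q_{ij}|$ is exponentially small whenever $i, j \notin \mathcal{D}$; this $\mathcal{D}$ must contain both the boundary strip where the \Cref{ltl0} coupling degrades and the site $\ell$ where $\bm{M}$ and $\bm{P}$ disagree by construction.

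Restrict to the event $\mathsf{E}$ from \Cref{ltl0} intersected with $\mathsf{BND}_{\bm{L}(t)}(\log N)$ from \Cref{l0eigenvalues}; this has probability at least $1 - c^{-1} e^{-c (\log N)^2}$. Set
\[
K = \max\bigl\{ \lceil T \log N \rceil + (\log N)^3,\ \lfloor N^{1/100}/4 \rfloor \bigr\}, \qquad \mathcal{D} = \{ i : \min(i - N_1, N_2 - i) < K \} \cup \{\ell\}.
\]
For $i, j \notin \mathcal{D}$ we have $M_{ij} = P_{ij}$ (since $\ell \notin \{i,j\}$) and $i, j \in \llbracket N_1 + K, N_2 - K \rrbracket$, so $|M_{ij} - Q_{ij}| \le e^{-K/5} =: \delta$ by \eqref{pq}; for $i$ or $j$ in $\mathcal{D}$ the entry $M_{ij}$ is either $0$ or satisfies $|M_{ij}| = |P_{ij}| \le \log N$. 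These verify \Cref{lmatrixl} for $(\bm{Q}, \bm{M})$, with $\bm{Q}$ carrying the thermal equilibrium law demanded there. The remaining hypotheses of \Cref{lleigenvalues} are immediate: $\delta \le e^{-(\log N)^3/5} \le e^{-10(\log N)^2}$, and $\zeta \ge e^{-100(\log N)^{3/2}} \ge N e^{-200(\log N)^{3/2}}$ for large $N$.

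The key geometric check is $\dist(\Phi, \mathcal{D}) \ge (\log N)^3$. The hypothesis $|\Phi - \ell| \ge (\log N)^4$ handles the $\{\ell\}$ contribution. For the boundary, \eqref{n1n2center} gives $\Phi - N_1 \ge T(\log N)^3 + N^{1/100}/2$, and a short case analysis on whether $T\log N$ dominates $N^{1/100}$ shows that regardless of the case, $\Phi - (N_1 + K) \ge N^{1/100}/4$; the symmetric bound holds at the right endpoint. Hence $\dist(\Phi, \mathcal{D}) \ge \min\{|\Phi - \ell|, N^{1/100}/4\}$. Applying \Cref{lleigenvalues} with $(\bm{L}, \tilde{\bm{L}}, \tilde{\lambda}, \tilde{\varphi}) = (\bm{Q}, \bm{M}, \mu, \Phi)$ then produces a unique $\kappa \in \eig \bm{Q}$ satisfying
\[
|\mu - \kappa| \le e^{(\log N)^2} \bigl( \delta^{1/8} + e^{-c_1 \dist(\Phi, \mathcal{D})} \bigr) \le e^{(\log N)^2} \bigl( e^{-K/40} + e^{-c_1 \min\{|\Phi - \ell|, N^{1/100}/4\}} \bigr),
\]
while also certifying that $\Phi$ is an $N^{-1}\zeta$-localization center (hence in particular an $N^{-2}\zeta$-localization center) for $\kappa$ with respect to $\bm{Q}$. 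Since $K \ge N^{1/100}/4$ and $|\Phi - \ell| \ge (\log N)^4 \gg (\log N)^2$, the prefactor $e^{(\log N)^2}$ is absorbed and the bound collapses to $|\mu - \kappa| \le e^{-c \min\{|\Phi - \ell|, N^{1/100}\}}$ for a smaller constant $c > 0$, completing both conclusions.

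The main obstacle is selecting $K$ correctly: it must be at least $T\log N$ for the \Cref{ltl0} coupling to give exponentially small discrepancy; comparable to $N^{1/100}$ so that $\delta^{1/8}$ reaches the target decay rate demanded by the lemma; yet small enough that the bulk constraint \eqref{n1n2center} keeps $\Phi$ at distance $\gtrsim N^{1/100}$ from the boundary strip in $\mathcal{D}$.
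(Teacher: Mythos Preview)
Your proof is correct and takes a genuinely more direct route than the paper's. The paper proceeds in two stages: assuming $\Phi < \ell$, it first restricts to the corner $\llbracket N_1, \ell-1 \rrbracket$ and applies \Cref{lleigenvalues} to the pair $(\bm{Q}', \bm{P}')$ with $\mathcal{D}$ equal only to a left boundary strip, producing an eigenvalue $\nu \in \eig \bm{Q}' \subseteq \eig \bm{R}$ (where $\bm{R} = \bm{Q}^{(\ell)}$) close to $\mu$; it then applies \Cref{lleigenvalues} a second time to $(\bm{Q}, \bm{R})$ with $\mathcal{D} = \{\ell\}$ to pass from $\nu$ to $\kappa \in \eig \bm{Q}$. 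You instead absorb both perturbations---the boundary coupling error from \eqref{pq} and the $\ell$-deletion---into a single $\mathcal{D}$ and invoke \Cref{lleigenvalues} once on the full-size pair $(\bm{Q}, \bm{M})$. Your approach is more economical and avoids the intermediate matrix $\bm{R}$ and the block restriction; the paper's decomposition has the minor expository advantage of isolating the two perturbation mechanisms, and by working on the corner in its first step it only has to control one boundary, but this yields no quantitative improvement. Your case analysis verifying $\dist(\Phi, \mathcal{D}) \ge N^{1/100}/4$ from \eqref{n1n2center} is sound, and the absorption of the $e^{(\log N)^2}$ prefactor via $\min\{|\Phi - \ell|, N^{1/100}\} \ge (\log N)^4$ is legitimate.
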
 
	
	\begin{proof} 
		
		Define $\bm{R} = [R_{ij}] \in \SymMat_{\llbracket N_1, N_2 \rrbracket}$ by setting $\bm{R} = \bm{Q}^{(\ell)}$. Throughout this proof, we assume that $\Phi \le \ell$, as the proof when $\Phi \ge \ell$ is entirely analogous. Then denote by $\bm{P}' = [P_{ij}']$ and $\bm{Q}' = [Q_{ij}']$ the top $(\ell-N_1) \times (\ell-N_1)$ corners of $\bm{P}$ and $\bm{Q}$, respectively. In this way, their rows and columns are indexed by $i, j \in \llbracket N_1, \ell-1 \rrbracket$, and we have $P_{ij}' = P_{ij} = M_{ij}$ and $Q_{ij}' = Q_{ij} = R_{ij}$ for each such $(i, j)$; observe that $\eig \bm{P}' \subseteq \eig \bm{M}$ and $\eig \bm{Q}' \subseteq \eig \bm{R}$. Since $\Phi \le \ell$, we also have $\mu \in \eig \bm{P}'$. 
		
		To establish the lemma, we will first use \Cref{lleigenvalues} to show $\mu$ is close to some $\nu \in \eig \bm{R}$. To that end, recalling \Cref{adelta}, define the quantities $K, \delta > 0$ and event $\mathsf{E}_1$ by 
		\begin{flalign}
			\label{kdeltae} 
			\begin{aligned}
				& K = \displaystyle\frac{1}{10} \cdot \min  \{ |\Phi - \ell| , N^{1/100} \} + \frac{T}{2} \cdot (\log N)^3; \\ 
				& \quad \delta = e^{-K/4}; \quad \mathsf{E}_1 = \bigcap_{s \ge 0} \mathsf{BND}_{\bm{L}(s)} \Big( \displaystyle\frac{\log N}{600} \Big). 
			\end{aligned}
		\end{flalign} 
		
		\noindent By \Cref{l0eigenvalues}, $\mathsf{E}_1$ holds with overwhelming probability, so we restrict to $\mathsf{E}_1$ in what follows.
		
		Applying \Cref{lleigenvalues} requires (through \Cref{lmatrixl}) a bound on $|P_{ij}' - Q_{ij}'|$ for $i,j$ sufficiently far from $N_1$. This will follow from \eqref{pq}; indeed, observe that $N_2 - \ell \ge T(\log N)^4 + N^{1/100} \ge K$, where the first bound holds by \eqref{tn1n2} and the second holds by \eqref{kdeltae}. Thus $N_2 - K \ge \ell$, so since $(P_{ij}', Q_{ij}') = (P_{ij}, Q_{ij})$ for $i,j \le \ell$, it follows from \eqref{pq} that 
		\begin{flalign}
			\label{pq2} 
			\displaystyle\max_{i,j \in \llbracket N_1+K, \ell \rrbracket} |P_{ij}' - Q_{ij}'| \le \delta.
		\end{flalign} 
		
		Using this, we next verify \Cref{lmatrixl} with the $(\bm{L}, \tilde{\bm{L}})$ there given by $(\bm{Q}', \bm{P}')$ here. To that end, observe that $|P_{ij}| \le (\log N) / 50 \le 2 \log (\ell - N_1)$, where the first bound follows from \Cref{abltt} and our restriction to $\mathsf{E}_1$, and the second follows from \eqref{tn1n2}. This, together with \eqref{pq2} and the fact that $\bm{Q}$ has the same law as $\bm{L}(0)$, implies that \Cref{lmatrixl} holds, with the $(\bm{L}, \tilde{\bm{L}}; \mathcal{D}; N)$ there equal to $(\bm{Q}', \bm{P}'; \llbracket N_1, N_1 + K-1 \rrbracket; \ell-N_1 )$ here. Since \eqref{n1n2center} and \eqref{kdeltae} together yield
		\begin{flalign}
			\label{n1k2} 
			\Phi - N_1 \ge T (\log N)^3 + \displaystyle\frac{1}{2} \cdot N^{1/100} \ge 2K + (\log N)^3,
		\end{flalign} 	
		
		\noindent we deduce that $\Phi \ge N_1 + K + (\log N)^3$, verifying \eqref{dn}. Since $\mu \in \eig \bm{P}'$, the first statement of \Cref{lleigenvalues} yields a constant $c_1 > 0$ such that, with overwhelming probability, there exists an eigenvalue $\nu \in \eig \bm{Q}' \subseteq \eig \bm{R}$ satisfying 
		\begin{flalign}
			\label{numu} 
			|\nu - \mu| \le e^{(\log N)^2} (\delta^{1/16} + e^{-2c_1 |\Phi - N_1 - K|}) \le e^{-c_1 K},
		\end{flalign} 
		
		\noindent where in the last inequality we used \eqref{n1k2} and the bound $K \ge (\log N)^4 / 10$ (as $|\Phi - \ell| \ge (\log N)^4$). The second statement of \Cref{lleigenvalues} further impies that $\Phi$ is an $N^{-1} \zeta$-localization center of $\nu$ with respect to $\bm{Q}'$, and thus with respect to $\bm{R}$. We further restrict to $\mathsf{E}_2$ in what follows.
		
		We now use \Cref{lleigenvalues} again to show that $\nu$ is close to some $\kappa \in \eig \bm{Q}$; this will proceed similarly to above. Since $R_{ij} = Q_{ij}$ unless $\ell \in \{ i, j \}$ (and $\bm{Q}$ has the same law as $\bm{L}(0)$), we have by our restriction to $\mathsf{E} \cap \mathsf{E}_1$ that \Cref{lmatrixl} holds with the $(\bm{L}, \tilde{\bm{L}}; \delta; \mathcal{D})$ there equal to $( \bm{Q}, \bm{R}; 0; \{ \ell \} )$ here. Since $|\Phi - \ell| \ge  (\log N)^4$, the first statement of \Cref{lleigenvalues} yields a constant $c_2 > 0$ such that, with overwhelming probability, there exists an eigenvalue $\kappa \in \eig \bm{Q}$ such that 
		\begin{flalign}
			\label{kappanu} 
			|\kappa - \nu| \le e^{(\log N)^2} \cdot e^{-2c_2 |\Phi - \ell|} \le e^{-c_2 |\Phi - \ell|},
		\end{flalign} 
		
		\noindent where in the last inequality we used the fact that $|\Phi - \ell| \ge (\log N)^4$. Thus, \eqref{numu} and \eqref{kappanu} imply the first statement of the lemma. The second statement of \Cref{lleigenvalues} further implies that $\Phi$ is an $N^{-2}\zeta$-localization center for $\kappa$ with respect to $\bm{Q}$, verifying the second statement of the lemma.
	\end{proof}
	
	\begin{proof}[Proof of \Cref{ltmatrix2}]
		
		Throughout this proof, we adopt the notation and assumptions from \Cref{ltmatrix3} and its proof. In particular, we recall the quantities $K, \delta > 0$ and the event $\mathsf{E}_1$ from \eqref{kdeltae}. Recalling \Cref{adelta}, we further define the event
		\begin{flalign}
			\label{eventf1} 
			\mathsf{F}_1 = \mathsf{SEP}_{\bm{P}} (e^{-(\log N)^2}) \cap \mathsf{SEP}_{\bm{Q}} (e^{-(\log N)^2}).
		\end{flalign}
		
		\noindent By \Cref{ltt}, \Cref{eigenvalues0}, and a union bound, $\mathsf{F}_1$ is overwhelmingly probable. We restrict to $\mathsf{E}_1 \cap \mathsf{F}_1$, and further to the event that \Cref{ltmatrix3} holds, in what follows. 
		
		We will use \Cref{lleigenvalues2} to show that $\kappa$ is close to some $\lambda \in \eig \bm{P}$. To that end, first observe from \eqref{pq} and our restriction to $\mathsf{E} \cap \mathsf{E}_1$ (with the fact that $\bm{Q}$ has the same law as $\bm{L}(0)$) that \Cref{lmatrixl} holds, with the $(\bm{L}, \tilde{\bm{L}}; \mathcal{D})$ there equal to $( \bm{Q}, \bm{P}; \llbracket N_1, N_2 \rrbracket \setminus \llbracket N_1 + K, N_2 + K \rrbracket)$ here. Also, by \eqref{n1n2center}, $\min \{ \Phi - N_1 - K, N_2 - \Phi - K \} \ge N^{1/100} / 2 \ge (\log N)^3$, which verifies \eqref{dn2}. Therefore, \Cref{lleigenvalues2} applies and yields a constant $c_1 > 0$ and an overwhelmingly probable event $\mathsf{F}_2$, on which the following holds. There exists $\lambda \in \eig \bm{P}$ such that $\Phi$ is a $N^{-3} \zeta$-localization center of $\lambda$ with respect to $\bm{P}$, and 
		\begin{flalign}
			\label{lambdakappa} 
			|\lambda - \kappa| \le e^{(\log N)^2} (\delta^{1/8} + e^{-2c_1 \min \{ \Phi - N_1-K, N_2-K-\Phi \}}) \le e^{-c_1 K},
		\end{flalign}
		
		\noindent where in the last inequality we used the facts that $\min \{ \Phi - N_1, N_2 - \Phi \} \ge T(\log N)^3 + N^{1/100} / 2 \ge 2K + (\log N)^3$ and $|\Phi - \ell| \ge (\log N)^4$. Together with the first statement in \Cref{ltmatrix3}, this implies the existence of $\lambda$ satisfying the conditions in the first statement of \Cref{ltmatrix}, upon additionally restricting to $\mathsf{F}_2$. The uniqueness of such a $\lambda \in \eig \bm{P}$ follows follows from our restriction to the event $\mathsf{F}_1$ from \eqref{eventf1} (with the fact that $|\Phi - \ell| \ge (\log N)^4$).
		
		It remains to establish the third part of the proposition (which implies the second, as we have assumed that \eqref{n1n2center} holds), to which we must use \Cref{lleigenvalues} (instead of \Cref{lleigenvalues2}). Fix any $\zeta$-localization center $\varphi \in \llbracket N_1, N_2 \rrbracket$ for $\lambda$ with respect to $\bm{P}$. Let a $\bm{u}(t) = ( u(N_1; t), u(N_1+1; t), \ldots , u(N_2; t) ) \in \mathbb{R}^N$ be a unit eigenvector of $\bm{P}$ with eigenvalue $\lambda$. Then, by the definition of $\varphi$ and \Cref{lleigenvalues2}, we have 
		\begin{flalign}
			\label{upsi} 
			| u(\varphi; t) | \ge \zeta, \qquad \text{and} \qquad | u(\Phi; t) | \ge N^{-3} \zeta.
		\end{flalign}
		
		\noindent We must show \eqref{dn}, to which end we claim that, with high probability,
		\begin{flalign}
			\label{n1n2k}  
			\min \{ \varphi - N_1, N_2 - \varphi \} \ge K + (\log N)^3.
		\end{flalign} 
		
		\noindent To do so, observe that \Cref{centert} yields an overwhelmingly probable event $\mathsf{F}_3$, on which the following holds. An index $m \in \llbracket N_1, N_2 \rrbracket$ can only be a $N^{-3} \zeta$-localization center for $\lambda$ with respect to $\bm{P}$ if $|m-\varphi| \le T (\log N)^2$. Restricting to $\mathsf{F}_3$, the second bound in \eqref{upsi} then implies that we must have $|\Phi - \varphi| < T  (\log N)^2$ for sufficiently large $N$. Together with the estimate  (by \eqref{n1n2center} and \eqref{kdeltae})
		\begin{flalign*} 
			\min \{ \Phi - N_1, N_2 - \Phi \} & \ge  T (\log N)^3 + \displaystyle\frac{1}{2} \cdot N^{1/100} \ge  K + T(\log N)^2 + (\log N)^3,
		\end{flalign*} 
		
		\noindent this confirms \eqref{n1n2k}. 
		
		The bound \eqref{n1n2k} (with \eqref{pq}) verifies \eqref{dn}, with the parameters $(\bm{L}, \tilde{\bm{L}}; \tilde{\lambda}, \tilde{\varphi}; \delta; \mathcal{D})$ there equal to $( \bm{Q}, \bm{P}; \lambda, \varphi; \delta; \llbracket N_1, N_2 \rrbracket \setminus \llbracket N_1+K, N_2-K \rrbracket )$ here. Thus, \Cref{lleigenvalues} applies and yields a constant $c_2 > 0$ such that, with overwhelming probability, there exists $\kappa' \in \eig \bm{Q}$ with 
		\begin{flalign*} 
			|\kappa' - \lambda| \le e^{(\log N)^2} (\delta^{1/8} + e^{-c_2 \min \{ \varphi - N_1 - K, N_2 - K - \varphi \}}) \le e^{-c_2 (\log N)^3},
		\end{flalign*} 
		
		\noindent where in the last inequality we used the definition \eqref{kdeltae} of $\delta$ (with the fact that $|\Phi-\ell| \ge (\log N)^4$) with \eqref{n1n2k}. Together with \eqref{lambdakappa}, the fact that $|\Phi-\ell| \ge (\log N)^4$, and our restriction to the event $\mathsf{F}_1$ from \eqref{eventf1}, this implies that $\kappa = \kappa'$. Hence, the second part of \Cref{lleigenvalues} implies that $\varphi$ is a $N^{-1} \zeta$-localization center for $\kappa$ with respect to $\bm{Q}$. Since $\Phi$ is an $N^{-2} \zeta$-localization center for $\kappa$ with respect to $\bm{Q}$ (by \Cref{ltmatrix3}), it also implies that $|\varphi - \Phi| \le (\log N)^2$, verifying the third (and thus also the second) statement of the proposition.
	\end{proof}

	\subsection{Proof of \Cref{ltmatrix}} 
	
	\label{ProofMatrix0} 
	
	In this section we establish \Cref{ltmatrix}, adopting its notation and assumptions throughout. For any $z \in \mathbb{C}$, denote the resolvents of $\bm{M}$ and $\bm{P}$ by $\breve{\bm{G}} (z) = [ \breve{G}_{ij} (z) ] = (\bm{M} - z)^{-1}$ and $\check{\bm{G}} (z) = [ \check{G}_{ij} (z) ] = (\bm{P} - z)^{-1}$, respectively. The following lemma estimates the $(\Phi, \Phi)$ entry of $\check{\bm{G}} - \breve{\bm{G}}$, assuming that \eqref{n1n2center} does not hold. Its proof is similar to that of \Cref{lgdifference} (by using \Cref{centert} in place of \Cref{gijexponential}), so we only outline it.

	\begin{lem}
		
		\label{zomega} 
		
		There exists a constant $c > 0$ so that the following holds with overwhleming probability. Suppose that \eqref{n1n2center} does not hold, and let $\eta \in \mathbb{R}$ satisfy $e^{-cN^{1/100}} \le \eta \le 1$. Denoting $\Omega = \{ z \in \mathbb{C} : -N \le \Real z \le N, \eta \le \Imaginary z \le 1 \}$, we have
		\begin{flalign}
			\label{zomega2} 
			\displaystyle\sup_{z \in \Omega} \big| \check{G}_{\Phi \Phi} (z) - \breve{G}_{\Phi \Phi} (z) \big| \le e^{-c N^{1/100}}.
		\end{flalign}
		
	\end{lem}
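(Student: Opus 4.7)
The plan is to mimic the proof of \Cref{lgdifference}, but replace the exponential decay of resolvent entries from \Cref{gijexponential} (which only applies to $\bm{L}(0)$) with the time-evolved eigenvector decay of \Cref{centert}. Starting from the resolvent identity $\check{\bm{G}}(z) - \breve{\bm{G}}(z) = \check{\bm{G}}(z)(\bm{M}-\bm{P})\breve{\bm{G}}(z)$, the key structural simplification is that $\bm{M} = \bm{P}^{(\ell)}$ has a zero row and column at $\ell$, so $\breve{G}_{\ell, j}(z) = 0 = \breve{G}_{j, \ell}(z)$ for every $j$. Since moreover $\bm{M}-\bm{P}$ is supported on the cross at index $\ell$ and $\bm{P}$ is tridiagonal, the $(\Phi,\Phi)$-entry of the identity collapses to
\[
\check{G}_{\Phi\Phi}(z) - \breve{G}_{\Phi\Phi}(z) = \check{G}_{\Phi, \ell}(z) \bigl( P_{\ell, \ell-1} \breve{G}_{\ell-1, \Phi}(z) + P_{\ell, \ell+1} \breve{G}_{\ell+1, \Phi}(z) \bigr).
\]
Bounding $P_{\ell, \ell \pm 1}$ by $\log N$ via \Cref{abltt} and \Cref{l0eigenvalues}, it suffices to show that the two resolvent factors each decay like $e^{-c N^{1/100}}$. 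Observe that the failure of \eqref{n1n2center} combined with \eqref{tn1n2} forces $|\Phi - \ell| \ge T(\log N)^4/2 + N^{1/100}/2$, which in particular comfortably exceeds $2T(\log N)^2$.

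To bound $\check{G}_{\Phi, \ell}(z)$, I would expand $\check{G}_{\Phi, \ell}(z) = \sum_k u_k(\Phi; t) u_k(\ell; t) / (\lambda_k - z)$ using the orthonormal eigenbasis of $\bm{P} = \bm{L}(t)$. For each $k$, letting $\varphi_k$ be a localization center of $\bm{u}_k(0)$, the triangle inequality gives $\max\{|\varphi_k - \Phi|, |\varphi_k - \ell|\} \ge |\Phi - \ell|/2 \ge T(\log N)^2$, so \Cref{centert} applies to the worse-decayed of $u_k(\Phi; t)$ and $u_k(\ell; t)$, yielding $|u_k(\Phi; t) u_k(\ell; t)| \le e^{-c|\Phi - \ell|/2}$. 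Summing $N$ terms and bounding $|\lambda_k - z|^{-1} \le \eta^{-1}$ gives $|\check{G}_{\Phi, \ell}(z)| \le N\eta^{-1} e^{-c|\Phi - \ell|/2}$. For $\eta \ge e^{-c' N^{1/100}}$ with $c'$ small enough compared to the decay constant, this is $\le e^{-c'' N^{1/100}}$.

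For $\breve{G}_{\ell \pm 1, \Phi}(z)$, I would exploit the fact that $\bm{M}$ decouples into the blocks $\bm{M}_L = \bm{P}^{[N_1, \ell-1]}$ and $\bm{M}_R = \bm{P}^{[\ell+1, N_2]}$ (plus the isolated zero at $\ell$), so one of $\breve{G}_{\ell-1, \Phi}(z)$ or $\breve{G}_{\ell+1, \Phi}(z)$ vanishes (the one where $\ell \pm 1$ lies in a different block from $\Phi$), while the other equals a resolvent entry of $\bm{M}_L$ or $\bm{M}_R$. To show exponential decay of the surviving entry, I would bootstrap eigenvector localization from $\bm{P}$ to the restricted matrix using \Cref{lleigenvalues2}: each eigenvector of the restriction whose localization center lies at distance $\ge (\log N)^3$ from $\ell$ corresponds to an eigenvector of $\bm{P}$ with nearby localization center, and the latter is exponentially localized by \Cref{centert}. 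Together with the pointwise bound, a mesh argument identical to the one following \eqref{gkkzz} in \Cref{lgdifference} (a polynomial-size $\delta_0$-net in $\Omega$, resolvent Lipschitz continuity $|G_{ij}(z) - G_{ij}(z')| \le \eta^{-2} N|z - z'|$, and a union bound) upgrades the pointwise estimate to the supremum over $z \in \Omega$.

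The main obstacle is the transfer of exponential localization from $\bm{P}$ to its restrictions $\bm{M}_L, \bm{M}_R$: unlike in the proof of \Cref{ltmatrix2}, where the boundary location of $\Phi$ let one apply \Cref{lleigenvalues2} directly in the bulk, here $\Phi$ is close to an endpoint, so care is needed to ensure that the eigenvectors of $\bm{M}_L$ (or $\bm{M}_R$) whose localization centers lie near $\Phi$ are genuinely represented among the eigenvectors of $\bm{P}$ to which \Cref{centert} applies. I would handle this by writing $\bm{M}_L$ as a perturbation of $\bm{P}$ supported on the tail indices $\llbracket \ell, N_2 \rrbracket$ (which is at distance $\ge (\log N)^4$ from $\Phi$) and invoking \Cref{lleigenvalues2} to produce the desired correspondence.
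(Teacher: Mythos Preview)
Your core step is the same as the paper's: expand $\check G_{\Phi\Phi}-\breve G_{\Phi\Phi}$ via the resolvent identity, then bound the $\check G_{\Phi,\ell}$ (or $\check G_{\Phi,i}$ for $|i-\ell|\le 1$) factor using the spectral expansion \eqref{gsum} together with \Cref{centert}. Your observation that $\breve G_{\ell,\Phi}=0$ collapses the identity to two terms is a nice simplification the paper does not make (it sums over all nine pairs $(i,j)$ with $|i-\ell|,|j-\ell|\le 1$), but the substance is identical. The mesh argument at the end also matches.

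Where you diverge is in the treatment of the $\breve G$-factor, and here you work much harder than necessary. The paper simply bounds $|\breve G_{j,\Phi}(z)|\le \eta^{-1}$ by \eqref{gijeta} and is done: since your own bound already gives $|\check G_{\Phi,\ell}(z)|\le N\eta^{-1}e^{-c|\Phi-\ell|/2}\le N\eta^{-1}e^{-cN^{1/100}/4}$, the product is at most $2N\eta^{-2}(\log N)\,e^{-cN^{1/100}/4}$, which is $\le e^{-c'N^{1/100}}$ once $\eta\ge e^{-c''N^{1/100}}$ for $c''$ small enough relative to $c$. Your ``main obstacle''---transferring eigenvector localization from $\bm P=\bm L(t)$ to the truncations $\bm M_L,\bm M_R$---is therefore entirely avoidable. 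It is also not straightforward as you sketch it: \Cref{lleigenvalues2} is stated for perturbations of the \emph{time-zero} Lax matrix, so applying it to $\bm P$ would require first invoking \Cref{ltl0}; and even then the corollary delivers eigenvalue and localization-center proximity, not the entrywise eigenvector decay you would need to bound $\breve G_{\ell\pm 1,\Phi}$ directly. Drop that entire paragraph and use the trivial bound on $\breve G$; the rest of your argument goes through.
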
 
	
	\begin{proof}[Proof of \Cref{zomega} (Outline)]
		
		Throughout this proof, recalling \Cref{adelta}, we set
		\begin{flalign}
			\label{eevent} 
			\mathsf{F} = \mathsf{BND}_{\bm{P}} (\log N), \qquad \text{so that} \qquad \mathbb{P} [ \mathsf{F}^{\complement} ] \le c_1^{-1} e^{-c_1 (\log N)^2},
		\end{flalign}
		
		\noindent for some constant $c_1 \in (0, 1)$, by \Cref{l0eigenvalues}. We begin by using \Cref{centert} to estimate the eigenvectors of $\bm{P}$. To that end, first observe, by \Cref{ltt} and the fact that $\bm{P} = \bm{L}(t)$, that $\eig \bm{P} = (\lambda_1, \lambda_2, \ldots , \lambda_N)$. For each $j \in \llbracket 1, N \rrbracket$, recall that $\bm{u}_j (t) = (u_j (N_1; t), u_j (N_1+1;t), \ldots , u_j (N_2;t)) \in \mathbb{R}^N$ denotes the nonnegatively normalized, unit eigenvector of $\bm{P}$ with eigenvalue $\lambda_j$. For each $j \in \llbracket 1, N \rrbracket$, also let $\psi_j \in \llbracket N_1, N_2 \rrbracket$ denote a $N^{-1/2}$-localization center for the eigenvector $\bm{u}_j$ of $\bm{L}$ of eigenvalue $\lambda_j$ (which is guaranteed to exist, as $\bm{u}$ is a unit vector). Since \eqref{n1n2center} does not hold, \eqref{tn1n2} implies for sufficiently large $N$ that 
		\begin{flalign*}
			|\ell - \Phi| \ge K_0, \qquad \text{where} \qquad K_0 = \displaystyle\frac{1}{2} \cdot \big( T(\log N)^4 + N^{1/100} \big).
		\end{flalign*} 
		
		\noindent Thus, for any $i \in \{ \ell-1, \ell, \ell+1\}$ and $k \in \llbracket N_1, N_2 \rrbracket$, we either have $|\Phi - \psi_j| \ge K_0 / 2 \ge T (\log N)^2$ or $|i - \psi_j| \ge K_0 / 2 \ge T (\log N)^2$. Therefore, \Cref{centert} yields a  constant $\mathfrak{c} \in (0, c_1 / 25)$ and an event $\mathsf{E}_0$ with $\mathbb{P} [ \mathsf{E}_0^{\complement} ] \le \mathfrak{c}^{-1} e^{-25\mathfrak{c} (\log N)^2}$, such that 
		\begin{flalign}
			\label{vv} 
			\mathbbm{1}_{\mathsf{E}_0} \cdot \displaystyle\max_{i:|i-\ell| \le 1} \displaystyle\sum_{k \in \llbracket 1, N \rrbracket} | u_k (\Phi; t) | \cdot | u_k (i; t) | \le N \cdot e^{-25 \mathfrak{c} K_0} \le e^{-12 \mathfrak{c} N^{1/100}}.		
		\end{flalign} 
		
		\noindent Set $\mathsf{E} = \mathsf{E}_0 \cap \mathsf{F}$, which satisfies $\mathbb{P} [ \mathsf{E}^{\complement} ] \le \mathfrak{c}^{-1} e^{-20 \mathfrak{c} (\log N)^2}$, by \eqref{eevent} and a union bound.
		
		Now, let us show that \eqref{zomega2} holds with high probability for a fixed point $z \in \Omega$. Fix $z_0 \in \Omega$; abbreviate $\check{G}_{ij} = \check{G}_{ij} (z_0)$ and $\breve{G}_{ij} = \breve{G}_{ij} (z_0)$ for each $i, j \in \llbracket N_1, N_2 \rrbracket$. For $\eta \ge e^{-\mathfrak{c} N^{1/100}}$, we have
		\begin{flalign}
			\label{ge}
			\begin{aligned} 
				\mathbb{E} \big[ \mathbbm{1}_{\mathsf{E}} \cdot |\check{G}_{\Phi \Phi} - \breve{G}_{\Phi \Phi} | \big] & \le \displaystyle\sum_{\substack{i: |i-\ell| \le 1 \\ j: |j-\ell| \le 1}} \mathbb{E} \big[ \mathbbm{1}_{\mathsf{E}} \cdot | \check{G}_{\Phi i} \cdot P_{ij} \cdot \breve{G}_{j\Phi}| \big] \\
				& \le 9 \eta^{-1} \cdot \log N \cdot \displaystyle\max_{i : |i-\ell| \le 1} \mathbb{E} [ \mathbbm{1}_{\mathsf{E}} \cdot |\check{G}_{\Phi i}| ] \\
				& \le 9 \eta^{-2} \cdot \log N \cdot \displaystyle\max_{i: |i-\ell| \le 1} \displaystyle\sum_{k=1}^N \mathbb{E} \big[ | u_k (\Phi; t) | \cdot | u_k (i; t) | \big] \\
				& \le 9 \eta^{-2} \cdot \log N \cdot e^{- 12 \mathfrak{c} N^{1/100}} \le e^{-9\mathfrak{c} N^{1/100}},
			\end{aligned} 
		\end{flalign}
		
		\noindent where the first inequality follows from \eqref{ab}, together with the fact that $P_{ij} - M_{ij} \in \{ P_{ij}, 0 \}$ with $P_{ij} - M_{ij} = 0$ unless $i, j \in \{ \ell - 1, \ell, \ell + 1 \}$; the second from \eqref{gijeta} (with the fact that $\Imaginary z_0 \ge \eta$, as $z_0 \in \Omega$) and the definition \eqref{eevent} of $\mathsf{F}$; the third from \eqref{gsum}, together with the bound $|\lambda_k - z_0| \ge \Imaginary z_0 \ge \eta$ for any $k \in \llbracket 1, N \rrbracket$; the fourth from \eqref{vv}; and the fifth from the fact that $\eta \ge e^{-\mathfrak{c} N^{1/100}}$. The estimate \eqref{ge}, together with a Markov bound, verifies \eqref{zomega2} at $z = z_0$. 
		
		To extend it to all $z \in \Omega$, we first use a union bound to apply \eqref{ge} on an $e^{-3 \mathfrak{c} N^{1/100}}$-mesh $\Omega_0 \subset \Omega$. Then, using \eqref{ab} and \eqref{gijeta}, we approximate the resolvents of $\bm{M}$ and $\bm{P}$ at an arbitrary points $z \in \Omega$ by those at the nearest point $z_0 \in \Omega_0$. This is very similar to what was done at the end of the proof of \Cref{lgdifference}, so we omit further details.
	\end{proof}

	\begin{proof}[Proof of \Cref{ltmatrix}] 
		
		In what follows, we assume that \eqref{n1n2center} does not hold, as \Cref{ltmatrix2} establishes the proposition when it does. Letting $\mathfrak{c} > 0$ denote the constant $c / 10$ from \Cref{zomega}, set $\eta = e^{-2\mathfrak{c} N^{1/100}}$ and define the event (recalling \Cref{adelta})
		\begin{flalign*}
			\mathsf{G} & = \mathsf{BND}_{\bm{P}} (\log N) \cap \mathsf{SEP}_{\bm{P}} (e^{-(\log N)^2}) \\
			& \qquad \cap  \bigg\{ \displaystyle\sup_{E \in [-N, N]} \big| \check{G}_{\Phi \Phi} (E + \mathrm{i} \eta) - \breve{G}_{\Phi \Phi} (E + \mathrm{i} \eta) \big| \le e^{-10\mathfrak{c} N^{1/100}} \bigg\}.
		\end{flalign*}
		
		\noindent By \Cref{l0eigenvalues}, \Cref{eigenvalues0} (with \Cref{ltt}), and \Cref{zomega}, $\mathsf{G}$ is overwhelmingly probable. 
		
		Restricting to $\mathsf{G}_1$, we next apply \Cref{abgh} with the parameters $(\lambda, \varphi; \eta, \zeta, \delta; \bm{A}, \bm{B})$ there equal to $(\mu, \Phi; \eta, \zeta, e^{-10\mathfrak{c} N^{1/100}}; \bm{M}, \bm{P})$ here. The first bound in \eqref{deltaetachi} holds for sufficiently large $N$ by the facts that $\zeta \ge e^{-100 (\log N)^{3/2}}$ and $\eta = e^{-2\mathfrak{c} N^{1/100}}$; the second holds since $\Phi$ is a $\zeta$-localization center of $\mu$ with respect to $\bm{M}$; and the third holds by our restriction to $\mathsf{G}_1$. Hence, \Cref{abgh} yields an eigenvalue $\lambda \in \eig \bm{P}$ such that $|\lambda-\mu| \le 3N \zeta^{-2} \eta \le e^{-\mathfrak{c} N^{1/100}}$ (again as $\zeta \ge e^{-100 (\log N)^{3/2}}$ and $\eta = e^{-2\mathfrak{c} N^{1/100}}$). This shows the first statement of the proposition; it remains to verify the second.
		
		To that end, observe that \Cref{abgh} further indicates that $\Phi$ is an $N^{-1}\zeta$-localization center for $\lambda$ with respect to $\bm{P}$. Letting $\Psi \in \llbracket N_1, N_2 \rrbracket$ denote an $N^{-1} \zeta$-localization center of $\lambda$ with respect to $\bm{L}$, \Cref{centert} implies that the following holds with overwhelming probability. We have $|m - \Psi| \le T (\log N)^2$ for any $N^{-1} \zeta$-localization center $m \in \llbracket N_1, N_2 \rrbracket$ of $\lambda$ with respect to $\bm{P}$. Applying this for $m \in \{ \Phi, \varphi \}$ then yields $|\varphi - \Phi| \le 2T (\log N)^2 \le  T (\log N)^3$ for sufficiently large $N$; this confirms the second statement of the proposition. 
	\end{proof}

	\section{Properties of Localization Centers}
	
	\label{ProofsCenter}

	We establish \Cref{centerdistance} in \Cref{ProofCenterDistance} and \Cref{currentestimate} in \Cref{ProofCurrent}, after showing several estimates on spacings between particles in the Toda lattice in \Cref{SpacingQ}. Throughout this section, we adopt \Cref{lbetaeta}.

	\subsection{Proof of \Cref{centerdistance}} 
	
	\label{ProofCenterDistance}
	
	In this section we show \Cref{centerdistance} by proving the following generalization of it. Recall we adopt \Cref{lbetaeta} throughout.

	\begin{prop} 
		
		\label{centerdistance2} 
		
		Assume more generally that $\zeta \ge N^3 e^{-200 (\log N)^{3/2}}$. For any real number $\mathfrak{d}>0$, there exists a constant $c = c(\mathfrak{d}) >0$ such that the following holds with probability at least $1-c^{-1} e^{-c(\log N)^2}$. Fix real numbers $t, \tilde{t} \in [0, T]$; an eigenvalue $\lambda \in \eig \bm{L}$; and $N^{-1} \zeta$-localization centers $\varphi, \tilde{\varphi} \in \llbracket N_1, N_2 \rrbracket$ of $\lambda$ with respect to $\bm{L}(t)$ and $\bm{L}(\tilde{t})$, respectively. If $\varphi$ satisfies \eqref{tzeta} and $|t-\tilde{t}| \le e^{-\mathfrak{d}(\log N)^2}$ holds, then $|\varphi  - \tilde{\varphi} | \le (\log N)^3$. 
		
	\end{prop}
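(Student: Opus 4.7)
The plan is to reduce the statement to the single-time case at $\tilde{t}$ by exploiting the extremely slow evolution of eigenvectors over the tiny interval $|t - \tilde{t}| \le e^{-\mathfrak{d}(\log N)^2}$. Following the setup in the proof of \Cref{centert}, I will write $\bm{u}_\lambda(\tilde{t}) = \bm{W}(t, \tilde{t}) \, \bm{u}_\lambda(t)$ for the orthogonal matrix $\bm{W}(t, \tilde{t})$ generated by the Lax-pair skew-symmetric matrix $\bm{P}(s)$. On the high-probability event $\bigcap_{s \ge 0} \mathsf{BND}_{\bm{L}(s)}(\log N)$ (which holds with probability at least $1 - C e^{-c(\log N)^2}$ by \Cref{ltt} and \Cref{l0eigenvalues}), the Dyson-type expansion as in \eqref{vijs} yields $|W_{ii}(t, \tilde{t}) - 1| \le C(|t - \tilde{t}|\log N)^2$ and $|W_{ij}(t, \tilde{t})| \le (|t - \tilde{t}|\log N)^{|i - j|}$ for $|i - j| \ge 1$. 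A Cauchy--Schwarz bound then gives $|u_\lambda(\varphi; \tilde{t}) - u_\lambda(\varphi; t)| \le C|t - \tilde{t}|\log N$, which is far smaller than $|u_\lambda(\varphi; t)| \ge N^{-1} \zeta \ge N^2 e^{-200(\log N)^{3/2}}$. Hence $\varphi$ remains a $(2N)^{-1}\zeta$-localization center of $\lambda$ with respect to $\bm{L}(\tilde{t})$; the remaining task is thus to show that two $(2N)^{-1}\zeta$-localization centers of $\lambda$ with respect to $\bm{L}(\tilde{t})$, namely $\varphi$ (in the bulk) and $\tilde{\varphi}$, must lie within $(\log N)^3$ of each other.

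For this single-time problem I will couple $\bm{L}(\tilde{t})$ to a thermal-equilibrium matrix via \Cref{ltl0}. Applying that proposition at time $\tilde{t}$ with $K = T\log N$ produces a random matrix $\bm{M} \in \SymMat_{\llbracket N_1, N_2 \rrbracket}$ with the law of $\bm{L}(0)$ and agreeing with $\bm{L}(\tilde{t})$ to entrywise error $e^{-K/5}$ off the boundary set $\mathcal{D} = \llbracket N_1, N_1 + K - 1 \rrbracket \cup \llbracket N_2 - K + 1, N_2 \rrbracket$. The bulk hypothesis \eqref{tzeta} gives $\dist(\varphi, \mathcal{D}) \ge T((\log N)^3 - \log N) \ge (\log N)^3$, so \Cref{lleigenvalues} (with $\bm{L} \leftarrow \bm{M}$, $\tilde{\bm{L}} \leftarrow \bm{L}(\tilde{t})$), applied to $(\lambda, \varphi)$, yields a unique $\mu \in \eig \bm{M}$ with $|\mu - \lambda|$ super-polynomially small, together with the statement that any $(2N^2)^{-1}\zeta$-localization center of $\mu$ with respect to $\bm{M}$ lies within $(\log N)^2/2$ of $\varphi$.

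The proof then concludes via a dichotomy on $\tilde{\varphi}$. If $\dist(\tilde{\varphi}, \mathcal{D}) \ge (\log N)^3$, a second invocation of \Cref{lleigenvalues} at $(\lambda, \tilde{\varphi})$ produces a unique $\mu' \in \eig \bm{M}$ close to $\lambda$, which by the eigenvalue separation of \Cref{eigenvalues0} (combined with \Cref{ltt}) must coincide with $\mu$; hence $\tilde{\varphi}$ is also a $(2N^2)^{-1}\zeta$-localization center of $\mu$ with respect to $\bm{M}$, and the uniqueness clause above forces $|\varphi - \tilde{\varphi}| \le (\log N)^2$. Otherwise $\tilde{\varphi}$ lies within $T\log N + (\log N)^3$ of the boundary of $\llbracket N_1, N_2 \rrbracket$, whence $|\varphi - \tilde{\varphi}| \ge T(\log N)^3/2$ once $T$ exceeds an absolute constant. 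Since two applications of \Cref{centert} at times $t$ and $\tilde{t}$, pivoted through a common time-$0$ localization center (which are within $(\log N)^2/2$ of each other by \Cref{bijectionl}), furnish the incompatible a priori upper bound $|\varphi - \tilde{\varphi}| \le 2T(\log N)^2 + (\log N)^2$, this subcase is impossible for $\log N$ sufficiently large; for $T$ below the absolute constant, the \Cref{centert} estimate itself already lies below $(\log N)^3$. The main obstacle is book-keeping the successive degradation of the localization threshold (from $N^{-1}\zeta$ at $\bm{L}(t)$ to $(2N)^{-1}\zeta$ at $\bm{L}(\tilde{t})$ to $(2N^2)^{-1}\zeta$ at $\bm{M}$) and verifying that the hypothesis $\zeta \ge N^3 e^{-200(\log N)^{3/2}}$ leaves enough slack so that each subsequent application of \Cref{lleigenvalues}, \Cref{bijectionl}, and \Cref{centert} stays within its stated range of validity.
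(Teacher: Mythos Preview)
Your proposal follows essentially the same strategy as the paper: couple $\bm{L}(\tilde t)$ to a thermal-equilibrium matrix $\bm M$ via \Cref{ltl0}, then invoke \Cref{lleigenvalues} to pin both localization centers near a common localization center of $\bm M$. Your use of the Lax-pair evolution $\bm W(t,\tilde t)$ to transfer $\varphi$ from time $t$ to time $\tilde t$ is a legitimate alternative to the paper's resolvent perturbation via \Cref{abgh}; both accomplish the same short-time eigenvector stability on the event $\bigcap_{s\ge 0}\mathsf{BND}_{\bm L(s)}(\log N)$, and your dichotomy on $\dist(\tilde\varphi,\mathcal D)$ is equivalent to the paper's preliminary use of \Cref{centert} to force $\tilde\varphi$ into the bulk.

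There is, however, a genuine gap concerning uniformity. The proposition asserts a \emph{single} high-probability event on which the conclusion holds for \emph{all} pairs $t,\tilde t\in[0,T]$ with $|t-\tilde t|\le e^{-\mathfrak d(\log N)^2}$. Your argument applies \Cref{ltl0} and \Cref{lleigenvalues} at the specific time $\tilde t$; each of these furnishes an event that depends on $\tilde t$, and one cannot union-bound over the uncountably many choices of $\tilde t$. The paper resolves this by first laying down an $e^{-\mathfrak d(\log N)^2}$-mesh $\mathcal T\subset[0,T]$, applying \Cref{ltl0} and \Cref{lleigenvalues} only at mesh times $s\in\mathcal T$ (so the union bound runs over $|\mathcal T|\lesssim Te^{\mathfrak d(\log N)^2}$ events, which is absorbed by choosing $\mathfrak d<\min\{c_1,c_2\}/2$), and then, for arbitrary $t,\tilde t$, passing through the nearest mesh point $s$ via the short-time stability step. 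Once you insert this mesh step before invoking \Cref{ltl0}, everything else in your outline---the dichotomy, the threshold bookkeeping from $N^{-1}\zeta$ down to $(2N^2)^{-1}\zeta$, and the appeal to \Cref{centert}---goes through.
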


	\begin{proof} 
		
		We may assume in what follows that $T \ge 3$ and that $\mathfrak{d}$ is sufficiently small, in a way to be determined later. We first restrict to several events.
		
		Let $\mathcal{T} \subseteq [0, T]$ denote an $e^{-\mathfrak{d}(\log N)^2}$-mesh of $[0,T]$, and fix $s \in \mathcal{T}$. By \Cref{ltl0}, there exists a constant $c_1>0$; an event $\mathsf{E}_1(s)$ with $\mathbb{P}[\mathsf{E}_1 (s)^{\complement}] \le c_1^{-1} e^{-c_1 (\log N)^2}$; and a random matrix $\bm{M} (s) = [M_{ij} (s)] \in \SymMat_{\llbracket N_1, N_2 \rrbracket}$ with the same law as $\bm{L}(0)$, such that the following holds on $\mathsf{E}_1 (s)$. Denoting $N_1' = N_1 + T (\log N)^{5/2}$ and $N_2' = N_2 - T(\log N)^{5/2}$, we have 
		\begin{flalign}
			\label{mijlijs} 
			\displaystyle\max_{i,j \in \llbracket N_1', N_2' \rrbracket}  |M_{ij} (s) - L_{ij}(s)| \le e^{-(\log N)^{5/2}/5}.
		\end{flalign} 
		
		\noindent Corollary \ref{lleigenvalues} (with the $(\bm{L}, \tilde{\bm{L}}; \mathcal{D}; \delta; \tilde{\lambda})$ there equal to $(\bm{M}, \bm{L}(s); \llbracket N_1, N_2 \rrbracket \setminus \llbracket N_1', N_2' \rrbracket; e^{-(\log N)^{5/2}/5}; \lambda)$ here) therefore yields a constant $c_2 > 0$ and an event $\mathsf{E}_2 (s)$ for each $s \in \mathcal{T}$, with $\mathbb{P}[\mathsf{E}_2 (s)^{\complement}] \le c_2^{-1} e^{-c_2(\log N)^2}$, such that the following holds on $\mathsf{E}_2 (s)$. There exists a unique eigenvalue $\mu = \mu(s) \in \eig \bm{M}(s)$ so that, for any $N^{-2} \zeta$-localization center $\varphi_s \in \llbracket N_1' + (\log N)^3, N_2' - (\log N)^3 \rrbracket$ for $\lambda$ with respect to $\bm{L}(s)$ and any $N^{-3} \zeta$-localization center $\psi_s \in  \llbracket N_1, N_2 \rrbracket$ for $\mu(s)$ with respect to $\bm{M}(s)$, we have $|\varphi_s - \psi_s| \le (\log N)^2$.
		
		Set $\mathfrak{d}' = \mathfrak{d} / 8$. Recalling \Cref{adelta}, let $\mathsf{E}_3 = \bigcap_{r \ge 0} \mathsf{BND}_{\bm{L}(r)} (\log N) \cap \mathsf{SEP}_{\bm{L}(0)} (e^{-\mathfrak{d}'(\log N)^2})$, so \Cref{l0eigenvalues} and \Cref{eigenvalues0} give a constant $c_3 = c_3 (\mathfrak{d}) >0$ with $\mathbb{P}[\mathsf{E}_3^{\complement}] \le c_3^{-1} e^{-c_3 (\log N)^2}$. Further let $\mathsf{E}_4$ denote the event on which \Cref{centert} holds, which satisfies $\mathbb{P}[\mathsf{E}_4^{\complement}] \le c_4^{-1} e^{-c_4 (\log N)^2}$ for some $c_4 > 0$. We restrict to the event $\mathsf{E} = \bigcap_{s \in \mathcal{T}} (\mathsf{E}_1 (s) \cap \mathsf{E}_2 (s)) \cap \mathsf{E}_3 \cap \mathsf{E}_4$ in what follows, which for $\mathfrak{d} < \min \{ c_1/2, c_2/2 \}$ satisfies $\mathbb{P} [\mathsf{E}^{\complement}] \le c_5^{-1} e^{-c_5 (\log N)^2}$ for some $c_5 >0$, by a union bound.
		
		By \Cref{centert} (and our restriction to $\mathsf{E}_4$), we have for any $N^{-1} \zeta$-localization center $\varphi_0$ for $\lambda$ with respect to $\bm{L}(0)$ that $|\varphi - \varphi_0| \le T(\log N)^2$ and $|\tilde{\varphi} - \varphi_0| \le T(\log N)^2$. Therefore, $|\varphi-\tilde{\varphi}| \le 2T(\log N)^2$, so \eqref{tzeta} implies that 
		\begin{flalign}
			\label{tzeta2} 
			N_1 + \displaystyle\frac{T}{2} \cdot (\log N)^3 \le  \tilde{\varphi} \le N_2 - \displaystyle\frac{T}{2} \cdot (\log N)^3.
		\end{flalign}
		
		Now, fix $t \in [0, T]$, and let $s \in \mathcal{T}$ satisfy $|s-t| \le e^{-\mathfrak{d} (\log N)^2}$, so $|s-t'| \le 2 e^{-\mathfrak{d} (\log N)^2}$. We will first apply \Cref{abgh}, with the $(\bm{A}; \bm{B})$ there equal to $(\bm{L}(t); \bm{L}(s))$ here, to show that $\varphi$ and $\tilde{\varphi}$ are localization centers for $\lambda$ with respect to $\bm{L}(s)$. Then, we will apply our restriction to $\mathsf{E}_2 (s)$ to deduce that $\varphi$ and $\tilde{\varphi}$ are close to a localization center of $\mu(s)$ with respect to $\bm{M}(s)$, and thus are close to each other.
		
		To implement this, we require the third estimate in \eqref{deltaetachi}. So, for any $r \in [0, T]$ and $z \in \mathbb{C}$, denote the resolvent $G(z;r) = [G_{ij} (z;r)] = (\bm{L}(r)-z)^{-1} \in \Mat_{\llbracket N_1, N_2 \rrbracket}$. Observe that 
		\begin{flalign*}
			\displaystyle\max_{i,j \in \llbracket N_1, N_2 \rrbracket} | L_{ij} (s) - L_{ij} (t) | \le 2e^{-\mathfrak{d} (\log N)^2} \cdot \displaystyle\max_{i,j \in \llbracket N_1, N_2 \rrbracket} | L_{ij}' (t) | \le  e^{-7\mathfrak{d}' (\log N)^2},
		\end{flalign*}
		
		\noindent where in the second inequality we used \eqref{qtpt}, \eqref{derivativepa}, \Cref{matrixl}, and our restriction to $\mathsf{E}_3$ (which together imply that $|L_{ij}' (t)| \le 2 (\log N)^2$). With \eqref{gijeta} and \eqref{ab}, this implies that 
		\begin{flalign*}
			\displaystyle\sup_{z \in \Omega} \displaystyle\max_{i,j \in \llbracket N_1, N_2 \rrbracket} | G_{ij} (z;t) - G_{ij} (z;s) | \le N^2 \cdot e^{4\mathfrak{d}' (\log N)^2} \cdot e^{-7 \mathfrak{d}' (\log N)^2} \le e^{-2\mathfrak{d}' (\log N)^2},
		\end{flalign*}
		
		\noindent where we have denoted $\Omega = \{ z \in \mathbb{C} : e^{-2\mathfrak{d}' (\log N)^2} \le \Imaginary z \le 1 \}$.

		Now we apply \Cref{abgh}, with the parameters $(\bm{A},\bm{B}; \lambda; \varphi; \eta, \delta)$ there equal to the parameters $(\bm{L}(t),\bm{L}(s); \lambda; \varphi; e^{-2\mathfrak{d}' (\log N)^2}, e^{-2\mathfrak{d}'(\log N)^2})$ here. This yields an eigenvalue $\lambda' \in \eig \bm{L}(s) = \eig \bm{L}(t)$ such that $|\lambda - \lambda'| \le 3N^2 \zeta^{-1} e^{-2\mathfrak{d}' (\log N)^2} < e^{-\mathfrak{d}' (\log N)^2}$ and $\varphi$ is a $N^{-2} \zeta$-localization center for $\lambda'$ with respect to $\bm{L}(s)$. Due to our restriction to $\mathsf{E}_3 \subseteq \mathsf{SEP}_{\bm{L}(s)} (e^{-\mathfrak{d}' (\log N)^2})$, we have that $\lambda = \lambda'$, meaning that $\varphi$ is an $N^{-2} \zeta$-localization center for $\lambda$ with respect to $\bm{L}(s)$. By similar reasoning, $\tilde{\varphi}$ is an $N^{-2}\zeta$-localization center for $\lambda$ with respect to $\bm{L}(s)$. 
		
		By our restriction to $\mathsf{E}_2 (s)$, and the fact from \eqref{tzeta} and \eqref{tzeta2} that $\varphi, \tilde{\varphi} \in \llbracket N_1' + (\log N)^3, N_2' - (\log N)^3 \rrbracket$, it follows that for any $N^{-3} \zeta$-localization center $\psi_s \in \llbracket N_1, N_2 \rrbracket$ of $\mu(s)$ with respect to $\bm{M}(s)$ we have $|\varphi - \psi_s| \le (\log N)^2$ and $|\tilde{\varphi} - \psi_s| \le (\log N)^2$. Hence, $|\varphi - \tilde{\varphi}| \le 2(\log N)^2$, confirming the proposition.
	\end{proof}

	\begin{proof}[Proof of \Cref{centerdistance}]
		
		This follows from the $\tilde{t}=t$ case of \Cref{centerdistance2}.
	\end{proof}

	\subsection{Spacing Bounds for the Toda Particles} 
	
	\label{SpacingQ}
	
	In this section we prove the following lemma approximating the distances between the Toda particles $q_j(s)$ under thermal equilibrium. Recall we adopt \Cref{lbetaeta} throughout.

	\begin{lem} 
		
		\label{qijsalpha} 
		
		The following two statements hold with overwhelming probability. 
		
		\begin{enumerate} 
			
			\item For any $s \in [0,T]$ and $i, j \in \llbracket N_1 + T(\log N)^3, N_2 - T (\log N)^3 \rrbracket$, we have
			\begin{flalign}
				\label{qiqjs4}
				\big| q_i (s) - q_j (s) - \alpha (i-j) \big| \le |i-j|^{1/2} (\log N)^2.
			\end{flalign}
			
			\item For any $s \in [0,T]$ and $i, j \in \llbracket N_1, N_2 \rrbracket$ with $|i-j| \ge T (\log N)^5$, we have 
			\begin{flalign}
				\label{qiqjs5}
				\big(  q_i (s) - q_j (s) \big) \cdot \sgn (\alpha i - \alpha j) \ge \displaystyle\frac{|\alpha|}{2} \cdot |i-j|.
			\end{flalign}
			
		\end{enumerate} 
		
	\end{lem}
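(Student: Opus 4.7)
The plan is to leverage two complementary ingredients. The first is the telescoping identity
\begin{flalign*}
q_j(s) - q_i(s) = q_j(0) - q_i(0) + \int_0^s \big(b_j(r) - b_i(r)\big)\, dr,
\end{flalign*}
obtained by differentiating $q_{k+1}(s) - q_k(s) = -2\log a_k(s)$ and summing the Flaschka evolution $\partial_r \log a_k = (b_k - b_{k+1})/2$ (from \eqref{derivativepa}) over $k \in \llbracket i, j-1 \rrbracket$. The second is \Cref{ltl0}, which for each $s \in [0, T]$ couples $\bm{L}(s)$ to a matrix $\bm{M}_s$ distributed as $\bm{L}(0)$ in the bulk of $\llbracket N_1, N_2 \rrbracket$.

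To prove the coarser bound \eqref{qiqjs5}, one first restricts to a high-probability event via \Cref{l0eigenvalues} at $A = (\log N)^{3/2}$ (ensuring $\max_{k, r} |b_k(r)| \leq (\log N)^{3/2}$) and a union-bounded application of \Cref{qij} at $R = |i-j|^{1/2}(\log N)^{3/2}$ (ensuring $|q_j(0) - q_i(0) - \alpha(j-i)| \leq |i-j|^{1/2}(\log N)^{3/2}$ simultaneously for every pair $(i, j)$). The telescoping identity then yields
\begin{flalign*}
|q_j(s) - q_i(s) - \alpha(j-i)| \leq |i-j|^{1/2}(\log N)^{3/2} + 2T(\log N)^{3/2},
\end{flalign*}
and a direct verification shows that for $|i-j| \geq T(\log N)^5$ and $N$ large, each of these two terms is bounded by $|\alpha|\cdot|i-j|/4$, whence \eqref{qiqjs5} follows.

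The finer bulk estimate \eqref{qiqjs4} requires more care, because the telescoping identity alone is too lossy when $|i-j|$ is small (the momentum integral scales linearly in $T$ while the target error scales like $|i-j|^{1/2}$). Instead, fix $s \in [0, T]$ and apply \Cref{ltl0} with $K = \max\{T\log N, (\log N)^2\}$, producing a matrix $\bm{M}_s$ distributed as $\bm{L}(0)$ with $|L_{k,k+1}(s) - M_{s;k,k+1}| \leq e^{-K/5}$ for all $k \in \llbracket N_1 + K, N_2 - K \rrbracket$ (a range that contains the bulk window $\llbracket N_1 + T(\log N)^3, N_2 - T(\log N)^3 \rrbracket$). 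Gamma tail bounds plus a union bound yield $M_{s;k,k+1} \geq e^{-(\log N)^{3/2}}$ uniformly in $k$ with probability $1 - e^{-c(\log N)^2}$; together with the coupling, this gives $|\log L_{k,k+1}(s) - \log M_{s;k,k+1}| \leq e^{-(\log N)^2/10}$, so summing over $k \in \llbracket i, j-1 \rrbracket$ produces
\begin{flalign*}
\big|(q_j(s) - q_i(s)) - (\widetilde q_j^{(s)} - \widetilde q_i^{(s)})\big| \leq 1,
\end{flalign*}
where $\widetilde q_{k+1}^{(s)} - \widetilde q_k^{(s)} = -2\log M_{s;k,k+1}$. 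Since $\bm{M}_s$ has the law of $\bm{L}(0)$, \Cref{qij} applied to $\widetilde q^{(s)}$ supplies the Gaussian-scale concentration $|\widetilde q_j^{(s)} - \widetilde q_i^{(s)} - \alpha(j-i)| \leq |i-j|^{1/2}(\log N)^{3/2}$, which combined with the previous display gives \eqref{qiqjs4} at the fixed time $s$.

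Finally, to upgrade from a fixed $s$ to all $s \in [0, T]$ simultaneously, one discretizes $[0, T]$ with mesh size $N^{-10}$, applies the above argument and a union bound over the $N^{O(1)}$ mesh points, and uses $|\partial_r q_k(r)| = |b_k(r)| \leq (\log N)^{3/2}$ to control $q_k$ between mesh points with negligible loss. The main obstacle is the delicate trade-off in the coupling argument for \eqref{qiqjs4}: one must take $K$ large enough that $e^{-K/5}$ is dominated by the Gamma lower tail $e^{-(\log N)^{3/2}}$ on $M_{s;k,k+1}$, yet small enough that $\llbracket N_1 + K, N_2 - K \rrbracket$ still contains the bulk window $\llbracket N_1 + T(\log N)^3, N_2 - T(\log N)^3 \rrbracket$ appearing in the hypothesis, a compatibility that is what forces the particular form of the boundary width.
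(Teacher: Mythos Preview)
Your approach is essentially the same as the paper's: reduce to a fixed time via a polynomial mesh and the bound $|q_k'|=|b_k|\le \log N$; prove \eqref{qiqjs5} by combining \Cref{qij} at time $0$ with the integrated momentum $\int_0^s b_k$; and prove \eqref{qiqjs4} via the coupling of \Cref{ltl0}, a lower tail on the off-diagonal entries, and \Cref{qij} applied to the coupled matrix.

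There is, however, a quantitative slip in your choice of exponents. The threshold $(\log N)^{3/2}$ in the Gamma lower tail is too small: since $\mathbb{P}\big[M_{s;k,k+1} < e^{-(\log N)^{3/2}}\big]\asymp e^{-2\theta(\log N)^{3/2}}$, the union bound over $k$ and over the $N^{O(1)}$ mesh times yields failure probability $e^{-c(\log N)^{3/2}}$, not the required $e^{-c(\log N)^2}$. The same issue arises with $R=|i-j|^{1/2}(\log N)^{3/2}$ in \Cref{qij} when $|i-j|$ is small. The paper instead uses threshold $e^{-(\log N)^2}$ and $R=\tfrac14|i-j|^{1/2}(\log N)^2$, which do give the target probability. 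That change in turn forces $e^{-K/5}\ll e^{-(\log N)^2}$, so your $K=\max\{T\log N,(\log N)^2\}$ is just too small when $T=O(1)$; the paper takes $K=T(\log N)^3$ (recall $T\ge 1$), which suffices for the $\log$-comparison step. With these constants adjusted, your argument goes through verbatim.
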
 
	
	\begin{proof}
		
		Observe that it suffices to show for any fixed $s \in [-T, T]$ that the below two statements hold with overwhelming probability. First, for any $i, j \in \llbracket N_1 + T (\log N)^3, N_2 - T(\log N)^3 \rrbracket$,  
		\begin{flalign}
			\label{qiqjs2} 
			|q_i (s) - q_j (s) - \alpha(i-j) | \le \displaystyle\frac{1}{2} \cdot |i-j|^{1/2} (\log N)^2.
		\end{flalign} 
		
		\noindent Second, for any $i \in \llbracket N_1, N_2 \rrbracket$ with $|i-j| \ge T (\log N)^5$,  
		\begin{flalign}
			\label{qiqjs3} 
			\big(q_i (s) - q_j (s) \big) \cdot \sgn (\alpha i - \alpha j) \ge \displaystyle\frac{3|\alpha|}{4} \cdot |i-j|.
		\end{flalign} 
		
		\noindent Indeed, if this were true then applying a union bound, we may restrict to the event $\mathsf{E}_1$ on which \eqref{qiqjs2} and \eqref{qiqjs3} hold for all $s \in \mathcal{T}$, where $\mathcal{T} \subset [-T,T]$ is an $N^{-20}$-mesh of $[-T,T]$. Recalling \Cref{adelta}, we may further by \Cref{l0eigenvalues} restrict to the event $\mathsf{E}_2 = \bigcap_{s \ge 0} \mathsf{BND}_{\bm{L}(s)} (\log N)$. In view of \eqref{qtpt}, \eqref{abr}, \Cref{matrixl}, and our restriction to $\mathsf{E}_2$, we have $|q_k'(s)| \le \log N$ for all $s \ge 0$ and $k \in \llbracket N_1, N_2 \rrbracket$. Then, the two statements of the lemma hold for any $s \in [-T, T]$. Indeed, letting $s' \in \mathcal{T}$ be such that $|s-s'| \le N^{-20}$, we have from \eqref{qiqjs2} and the above bound on $|q_k'(s)|$ that
		\begin{flalign*}
			| q_i (s) - q_j (s) - \alpha(i-j) | & \le | q_i (s') - q_j (s') - \alpha (i-j) | + 2 N^{-20} \log N \\
			& \le \displaystyle\frac{1}{2} \cdot |i-j|^{1/2} (\log N)^2 + 2N^{-20} \log N \le |i-j|^{1/2} (\log N)^2,
		\end{flalign*} 
		
		\noindent which confirms \eqref{qiqjs4}. The verification of \eqref{qiqjs5} from \eqref{qiqjs3} is entirely analogous and thus omitted.
		
		Hence, it remains to show \eqref{qiqjs2} and \eqref{qiqjs3}. Throughout the remainder of this proof, we set $K = T (\log N)^3$. By \Cref{ltl0}, there is a family of random variables $\tilde{\bm{a}} = (\tilde{a}_{N_1}, \tilde{a}_{N_1+1}, \ldots , \tilde{a}_{N_2-1}) \in \mathbb{R}^{N-1}$ and $\tilde{\bm{b}} = (\tilde{b}_{N_1}, \tilde{b}_{N_1+1}, \ldots , \tilde{b}_{N_2}) \in \mathbb{R}^N$, such that $(\tilde{\bm{a}}; \tilde{\bm{b}})$ has the same law as $(\bm{a}; \bm{b})$ and the following holds. There exists an overwhelmingly probable event $\mathsf{E}_1$, on which 
		\begin{flalign}
			\label{iaa} 
			\displaystyle\max_{i \in \llbracket N_1 + K, N_2 - K \rrbracket} | a_i (s) - \tilde{a}_i | + \displaystyle\max_{i \in \llbracket N_1 + K, N_2 - K \rrbracket} | b_i (s) - \tilde{b}_i | \le 2e^{-(\log N)^3/5}.
		\end{flalign}
		
		\noindent We restrict to $\mathsf{E}_1$ in what follows and let $(\tilde{\bm{p}}; \tilde{\bm{q}}) \in \mathbb{R}^N \times \mathbb{R}^N$ denote the Toda state space initial data associated with $(\tilde{\bm{a}}; \tilde{\bm{b}})$, as described in \Cref{Open}. Further define event 
		\begin{flalign*}
			\mathsf{E}_2 = \bigcap_{i, i' \in \llbracket N_1, N_2 \rrbracket} \Big\{ | q_i (0) - q_{i'} (0) - \alpha (i-i') | + | \tilde{q}_i - \tilde{q}_{i'} - \alpha (i-i')| \le \displaystyle\frac{1}{4} \cdot |i-i'|^{1/2} (\log N)^2 \Big\},
		\end{flalign*} 
		
		\noindent observing that $\mathsf{E}_2$ is overwhelmingly probable, by \Cref{qij}. Recalling \Cref{adelta}, we also define the event 
		\begin{flalign*} 
			\mathsf{E}_3 =  \bigcap_{r \ge 0} \mathsf{BND}_{\bm{L}(r)} (\log N) \cap \bigcap_{i=N_1}^{N_2} \{ a_i (0) \ge e^{-(\log N)^2} \} \cap \{ \tilde{a}_i \ge e^{-(\log N)^2} \},
		\end{flalign*} 
		
		\noindent which is overwhelmingly probable, by \Cref{l0eigenvalues} and the explicit form of the density $\mu_{\beta,\theta;N-1,N}$ for $\bm{a}(0)$ (from \Cref{mubeta2}). Denoting the event $\mathsf{E} = \mathsf{E}_1 \cap \mathsf{E}_2 \cap \mathsf{E}_3$, we restrict to $\mathsf{E}$ in what follows; it then suffices to show that \eqref{qiqjs2} and \eqref{qiqjs3} hold. 
		
		To that end, first observe that, for any $i, i' \in \llbracket N_1+K, N_2-K \rrbracket$, we have
		\begin{flalign*}
			| q_i (s) - q_{i'} (s) - (\tilde{q}_i - \tilde{q}_{i'}) | & \le 2 \displaystyle\sum_{k=i}^{i'-1} | \log a_i (s) - \log \tilde{a}_i | \\
			& \le 2N \cdot \displaystyle\max_{k \in \llbracket i, i' \rrbracket} | a_i (s) - \tilde{a}_i | \cdot ( | a_i (s) |^{-1} + |\tilde{a}_i|^{-1} ) \le N^{-1}, 
		\end{flalign*}
		
		\noindent where in the first bound we used \eqref{q00}, and in the second and third we used \eqref{iaa} and our restriction to $\mathsf{E}_2 \cap \mathsf{E}_3$. This, together with our restriction to the event $\mathsf{E}_2$, establishes \eqref{qiqjs2} and thus the first statement of the lemma. To show the second, we suppose that $\alpha > 0$ and $i \ge j + T (\log N)^5$, as the proof when $\alpha < 0$ or $i \le j - T (\log N)^5$ is entirely analogous. Then, observe 
		\begin{flalign*}
			q_i (s) - q_j (s) & \ge q_i (0) - q_j (0) - | q_i (s) - q_i (0) | - | q_j (s) - q_j (0) | \\ 
			& \ge \alpha (i-j) - |i-j|^{1/2} (\log N)^2 - 2 s \cdot \displaystyle\max_{|s'| \le s} | b_i (s) | \\
			& \ge \displaystyle\frac{4\alpha}{5} \cdot (i-j) - 2 T \log N \ge \displaystyle\frac{\alpha}{2} \cdot (i-j),
		\end{flalign*}
		
		\noindent where in the first and second statements we used our restriction to $\mathsf{E}_2 \cap \mathsf{E}_3$, the first equality in \eqref{qtpt}, and the fact that $p_i (t) = b_i (t)$ by \eqref{abr}; in the third we used the fact that $i-j \ge T (\log N)^5$ and our restriction to $\mathsf{E}_2$; and in the fourth we again used the fact that $i-j \ge T (\log N)^5$. This confirms \eqref{qiqjs3} and thus the second statement of the lemma when $i \ge j + T (\log N)^5$; since (as mentioned above) the proof when $i \le j - T (\log N)^5$ is entirely analogous, this establishes the lemma.
	\end{proof} 
	
	\subsection{Proof of \Cref{currentestimate}}
	
	\label{ProofCurrent} 
	
	In this section we establish \Cref{currentestimate}. We begin with the following lemma proving a variant of \eqref{currentsum}, when the second sum appearing there is over $i$ such that $\varphi_t (i)$ (as opposed to $q_i (t)$ or $Q_i (t)$) is in a prescribed interval. Recall we adopt \Cref{lbetaeta} throughout.
	
	\begin{prop}
		
		\label{currentestimate2} 
		
		There exists a constant $c >0$ such that the following holds with probability at least $1 - c^{-1} e^{-c(\log N)^2}$. Let $N_1', N_2' \in \llbracket N_1, N_2 \rrbracket$ be indices satisfying
		\begin{flalign}
			\label{n1n20} 
			N_1 + T (\log N)^5 + N^{1/100} \le N_1' \le N_2' \le N_2 - T (\log N)^5 - N^{1/100}.
		\end{flalign}
		
		\noindent Then, the following two statements hold for any real number $t \in [0, T]$ and integer $m \in \llbracket 0, \log N \rrbracket$. 
		
		\begin{enumerate}
			\item We have 
			\begin{flalign}
				\label{currentestimate3} 
				\Bigg| \displaystyle\sum_{i = N_1'}^{N_2'} \mathfrak{k}_i^{[m]} (t) - \displaystyle\sum_{i : \varphi_t (i) \in \llbracket N_1', N_2' \rrbracket} \lambda_i^m \Bigg| \le 12 (\log N)^{m+3}.
			\end{flalign}
			\item For any $k \in \llbracket N_1', N_2' \rrbracket$, we have 
			\begin{flalign}
				\label{currentestimate4}
				\Bigg| \displaystyle\sum_{i < k} \mathfrak{k}_i^{[m]} (t) - \displaystyle\sum_{i : \varphi_t (i) \le k} \lambda_i^m \Bigg| \le 12 (\log N)^{m+3}.
			\end{flalign}
		\end{enumerate} 
	\end{prop}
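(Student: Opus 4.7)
My strategy is to truncate $\bm{L}(t)$ to the interval $\llbracket N_1', N_2' \rrbracket$, observe that the partial trace of $\bm{L}(t)^m$ closely approximates the full trace of the truncated matrix raised to the $m$-th power, and then use the approximate locality of the Lax matrix eigenvalues from \Cref{lleigenvalues2} and \Cref{lleigenvalues} to match eigenvalues of the submatrix with the $\lambda_i$'s whose localization centers lie in $\llbracket N_1', N_2' \rrbracket$.

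For \eqref{currentestimate3}, set $\mathcal{D} = \llbracket N_1, N_2 \rrbracket \setminus \llbracket N_1', N_2' \rrbracket$ and $\tilde{\bm{L}} = \bm{L}(t)^{(\mathcal{D})}$. Since $\bm{L}(t)$ is tridiagonal, a length-$m$ walk from $i$ to $i$ stays within distance $m$ of $i$, so $[\bm{L}(t)^m]_{ii} = [\tilde{\bm{L}}^m]_{ii}$ for $i \in \llbracket N_1' + m, N_2' - m \rrbracket$; on the high-probability event $\mathsf{BND}_{\bm{L}(t)}(\log N)$ from \Cref{l0eigenvalues} (applied via \Cref{ltt}), the remaining $2m$ boundary diagonal terms each have absolute value at most $(\log N)^m$. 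Hence
\[
\Bigg| \sum_{i=N_1'}^{N_2'} \mathfrak{k}_i^{[m]}(t) - \Tr(\tilde{\bm{L}}^m) \Bigg| \le 2m(\log N)^m.
\]
Because $\tilde{\bm{L}}$ is block-diagonal with nontrivial block $\bm{L}(t)^{[N_1', N_2']}$, the trace $\Tr(\tilde{\bm{L}}^m)$ equals $\sum_{j}\tilde{\lambda}_j^m$, where $\{\tilde{\lambda}_j\} = \eig(\bm{L}(t)^{[N_1', N_2']})$.

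The heart of the proof is to pair each interior $\tilde{\lambda}_j$ with a unique interior $\lambda_{\sigma(j)}$, with closely matched localization centers. Because \Cref{lleigenvalues2} and \Cref{lleigenvalues} are stated for $\bm{L} = \bm{L}(0)$ under thermal equilibrium, I would first invoke \Cref{ltl0} to obtain a random matrix $\bm{M}$ with the law of $\bm{L}(0)$ that is superpolynomially close to $\bm{L}(t)$ in the bulk, and set $\bm{M}' = \bm{M}^{(\mathcal{D})}$. Applying \Cref{lleigenvalues2} and \Cref{lleigenvalues} to the pair $(\bm{M}, \bm{M}')$ produces a bijection, with exponentially small eigenvalue mismatches, between the eigenvalues of $\bm{M}$ and $\bm{M}'$ whose localization centers lie in $\llbracket N_1' + (\log N)^3, N_2' - (\log N)^3 \rrbracket$. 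Using the separation event $\mathsf{SEP}_{\bm{L}(t)}$ from \Cref{eigenvalues0} (transferred via \Cref{ltt}) together with a resolvent-perturbation step along the lines of \Cref{abgh}, this bijection transfers to $(\bm{L}(t), \tilde{\bm{L}})$: each interior index $i$ (meaning $\varphi_t(i) \in \llbracket N_1' + (\log N)^3, N_2' - (\log N)^3 \rrbracket$) is paired with a unique interior $\tilde{\lambda}_{\sigma(i)}$ satisfying $|\lambda_i - \tilde{\lambda}_{\sigma(i)}| \le e^{-c(\log N)^2}$, and conversely.

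Unmatched ``boundary'' indices on either side have localization centers within $(\log N)^3$ of $\{N_1', N_2'\}$; by the injectivity of $\varphi_t$ (and an analogous fact for the $\tilde{\bm{L}}$-side, coming from the second clause of \Cref{lleigenvalues}), there are at most $2(\log N)^3$ of them on each side, each contributing at most $(\log N)^m$ to its sum. Combined with the truncation error $2m(\log N)^m$, this gives the stated bound $12(\log N)^{m+3}$ for \eqref{currentestimate3}. For \eqref{currentestimate4}, I would rerun the same argument with $\mathcal{D}' = \llbracket k, N_2 \rrbracket$ and submatrix $\bm{L}(t)^{[N_1, k-1]}$: no locality issue arises at the left endpoint since $N_1$ is not zeroed out, so only a $(\log N)^3$-band near $k$ produces unmatched eigenvalues, and the ``$\le k$'' versus ``$< k$'' discrepancy in the statement contributes at most one extra term, absorbed in the error. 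The principal obstacle is arranging the bijection in the middle step to be genuinely two-sided: the uniqueness clause of \Cref{lleigenvalues}, combined with the eigenvalue separation from \Cref{eigenvalues0}, prevents two distinct $\lambda_i$'s from collapsing onto a single $\tilde{\lambda}_j$, while the $N^3$ factor in the hypothesis $\zeta \ge N^3 e^{-200(\log N)^{3/2}}$ is exactly what is needed to absorb the successive $N^{-1}$-degradation of the localization threshold across the two applications of the approximate-locality results and the subsequent comparison back to $\bm{L}(t)$.
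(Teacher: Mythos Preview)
Your plan for \eqref{currentestimate3} matches the paper's approach closely: the paper also passes through \Cref{ltl0} to obtain a thermal-equilibrium matrix $\bm{P}$ close to $\bm{L}(t)$ in the bulk, restricts it to $\llbracket N_1',N_2'\rrbracket$, and uses \Cref{lleigenvalues2} (combined with \Cref{centerdistance2}) to match the restricted eigenvalues $\mu_j$ to eigenvalues $\lambda_{\kappa(j)}$ of $\bm{L}(t)$. The paper only needs a one-sided injection $\kappa$ together with a cardinality count (since $\varphi_t$ and the localization bijection $\psi$ for the restricted matrix are both bijections, the source and target have sizes $N'-2(\log N)^3$ and $N'$); your two-sided bijection is more than necessary but not incorrect.

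There are, however, two genuine gaps. First, the proposition asserts the bound simultaneously for every $t\in[0,T]$ on a single high-probability event, whereas \Cref{ltl0} produces an auxiliary matrix depending on $t$, so your matching is built for one $t$ at a time. The paper handles this by first proving the fixed-$t$ estimate, then taking a mesh $\mathcal{T}\subset[0,T]$ of spacing $e^{-\mathfrak{c}(\log N)^2}$ and invoking \Cref{centerdistance2} to control $|\varphi_s(i)-\varphi_t(i)|$ between mesh points; your outline omits this step entirely.

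Second, your route for \eqref{currentestimate4} does not work as written. When you ``rerun the same argument'' with $\mathcal{D}'=\llbracket k,N_2\rrbracket$, the transfer from $(\bm{M},\bm{M}')$ back to $\big(\bm{L}(t),\bm{L}(t)^{[N_1,k-1]}\big)$ via \Cref{abgh} requires $\bm{M}$ and $\bm{L}(t)$ to be entrywise close near the relevant localization centers. But \Cref{ltl0} only guarantees closeness on $\llbracket N_1+T\log N,\,N_2-T\log N\rrbracket$; near $N_1$ the entries of $\bm{M}-\bm{L}(t)$ can be of order one. Hence eigenvalues with localization centers in $\llbracket N_1,\,N_1+T\log N\rrbracket$ --- of which there may be order $T\log N$ --- are not covered by your matching, and the resulting error $T(\log N)^{m+1}$ overwhelms the target $(\log N)^{m+3}$. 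Your intuition that ``no locality issue arises at the left endpoint since $N_1$ is not zeroed out'' is correct at the level of the pair $\big(\bm{L}(t),\bm{L}(t)^{(\{k\})}\big)$ itself, but to make it rigorous one needs a direct resolvent comparison for that pair, without routing through an auxiliary thermal matrix near $N_1$. This is precisely what the paper packages as \Cref{ltmatrix}: its proof uses your $\bm{M}$-route in the bulk, but near the boundary it instead invokes \Cref{zomega}, which bounds $|G_{\Phi\Phi}^{\bm{L}(t)}-G_{\Phi\Phi}^{\bm{L}(t)^{(k)}}|$ via the eigenvector speed bound of \Cref{centert} rather than via closeness to thermal equilibrium.
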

	
	\begin{proof} 
		
		The proofs of \eqref{currentestimate3} and \eqref{currentestimate4} are very similar, so we only focus on the former. We may assume in what follows that $(N_1', N_2')$ is fixed, by a union bound. Let us first establish \eqref{currentestimate3} for any deterministic $t \in [0, T]$, and then we will show it holds for all $t \in [0, T]$. 
		
		So, fix $t \in [0, T]$. We first apply \Cref{ltl0} to deduce the existence of a random matrix $\bm{P} = [P_{ij}] \in \SymMat_{\llbracket N_1, N_2 \rrbracket}$ with the same law as $\bm{L}(0)$, and an overwhelmingly probable event $\mathsf{E}_1$, on which we have 
		\begin{flalign}
			\label{ltp} 
			\displaystyle\max_{i,j \in \llbracket N_1' - m, N_2' + m \rrbracket} | P_{ij} - L_{ij} (t) | \le e^{-(\log N)^3}.
		\end{flalign}
		
		\noindent where we used the fact that $\llbracket N_1' - m, N_2' + m \rrbracket \subseteq \llbracket N_1 + T(\log N)^4, N_2 - T (\log N)^4 \rrbracket$. Further set $N' = N_2' - N_1' + 1$ and define the $N' \times N'$ symmetric matrix $\bm{Q} = [Q_{ij}] \in \SymMat_{\llbracket N_1, N_2' \rrbracket}$ by setting $Q_{ij} = P_{ij}$ whenever $i, j \in \llbracket N_1', N_2' \rrbracket$. Set $\eig \bm{Q} = (\mu_1, \mu_2, \ldots , \mu_{N'})$, and let $\psi : \llbracket 1, N' \rrbracket \rightarrow \llbracket N_1', N_2' \rrbracket$ denote a $\zeta$-localization center bijection for $\bm{Q}$. Recalling \Cref{adelta}, we define the event $\mathsf{E}_2 = \mathsf{BND}_{\bm{L}(t)} (\log N) \cap \mathsf{BND}_{\bm{P}} (\log N) \cap \mathsf{BND}_{\bm{Q}} (\log N) \cap \mathsf{SEP}_{\bm{L}(0)} (e^{-(\log N)^2})$, which by \Cref{l0eigenvalues} and \Cref{eigenvalues0} is overwhelmingly probable. We restrict to $\mathsf{E}_1 \cap \mathsf{E}_2$ in the below. 
		
		We will proceed by comparing the first sum on the left side of \eqref{currentestimate3} to $\Tr \bm{Q} = \sum_{\mu \in \eig \bm{Q}} \mu^m$; use \Cref{lleigenvalues} to approximate the eigenvalues of $\bm{Q}$ by those of $\bm{L}(t)$; and compare the resulting expression to the second sum on the left side of \eqref{currentestimate3}. To implement the first task, observe that 
		\begin{flalign}
			\label{sumkappalt} 
			\begin{aligned}
				\Bigg| \displaystyle\sum_{i = N_1'}^{N_2'} \mathfrak{k}_i^{[m]} (t) - \Tr \bm{Q}^m \Bigg| & = \Bigg| \displaystyle\sum_{i=N_1'}^{N_2'} [\bm{L}(t)^m]_{ii} - \Tr \bm{Q}^m \Bigg| \\
				& \le \Bigg| \displaystyle\sum_{i=N_1'}^{N_2'} [\bm{P}^m]_{ii} - \Tr \bm{Q}^m \Bigg| + N e^{-(\log N)^3} m (\log N)^m,
			\end{aligned} 
		\end{flalign}
		
		\noindent where we have denoted the $(i, j)$ entry of any matrix $\bm{M}$ by $[\bm{M}]_{ij}$; here, the first statement holds by \Cref{current} and the second by \eqref{ltp} and our restriction to $\mathsf{E}_2$. Next, since $\bm{P}$ and $\bm{Q}$ are tridiagonal and satisfy $P_{ij} = Q_{ij}$ whenever $i, j \in \llbracket N_1', N_2' \rrbracket$, observe that $[\bm{P}^m]_{ii} = [\bm{Q}^m]_{ii}$ for each $i \in \llbracket N_1' +m, N_2' - m \rrbracket$. We further have by our restriction to $\mathsf{E}_2$ that each entry of $\bm{P}$ and $\bm{Q}$ is bounded by $\log N$, meaning (again since $\bm{P}$ and $\bm{Q}$ are tridiagonal) that each entry of $\bm{P}^m$ and $\bm{Q}^m$ is bounded by $(3 \log N)^m$. It follows that 
		\begin{flalign}
			\label{sumpq} 
			\Bigg| \displaystyle\sum_{i=N_1'}^{N_2'} [\bm{P}^m]_{ii} - \Tr \bm{Q}^m \Bigg| \le \displaystyle\sum_{\substack{i \in \llbracket N_1', N_2' \rrbracket \\ i \notin \llbracket N_1'+m, N_2'-m \rrbracket}} \big( |(\bm{P}^m)_{ii}| + |(\bm{Q}^m)_{ii}| \big) \le 4m (3 \log N)^m.
		\end{flalign} 
		
		\noindent Next, denote $N_1'' = N_1' + (\log N)^3$ and $N_2'' = N_2' - (\log N)^3$. Then,
		\begin{flalign}
			\label{sumqmu} 
			\Bigg| \Tr \bm{Q}^m - \displaystyle\sum_{j: \psi (j) \in \llbracket N_1'', N_2'' \rrbracket} \mu_{j}^m \Bigg| \le 2 (\log N)^3 \cdot \displaystyle\max_{\mu \in \eig \bm{Q}} |\mu|^m \le 2 (\log N)^{m+3},
		\end{flalign}
		
		\noindent where in the first bound we used that $\Tr \bm{Q}^m = \sum_{\mu \in \eig \bm{Q}} \mu^m$ and $|\llbracket N_1', N_2' \rrbracket \setminus \llbracket N_1'', N_2'' \rrbracket| \le 2(\log N)^3$, and in the second we used the fact that $|\mu| \le \log N$ for any $\mu \in \eig \bm{Q}$ (by our restriction to $\mathsf{E}_2$). 
		
		Now, we apply \Cref{centerdistance2} and \Cref{lleigenvalues2}, the latter with the $(\bm{L}; \tilde{\bm{L}}; \delta; \mathcal{D})$ there equal to $(\bm{Q}; \bm{L}(t); e^{-(\log N)^3}; \llbracket N_1, N_2 \rrbracket \setminus \llbracket N_1', N_2' \rrbracket)$ here. This by \eqref{ltp} yields a constant $c > 0$, such that the following holds with overwhelming probability. There exists a function $\kappa : \llbracket 1, N' \rrbracket \rightarrow \llbracket 1, N \rrbracket$ such that, for each $j \in \llbracket 1, N' \rrbracket$ with $\psi(j) \in \llbracket N_1'', N_2'' \rrbracket$, 
		\begin{flalign}
			\label{mulambdakappa} 
			|\mu_j - \lambda_{\kappa(j)}| \le e^{-c_1 (\log N)^3}, \quad \text{and} \quad \big| \psi (j) - \varphi_t (\kappa(j)) \big| \le (\log N)^3.
		\end{flalign} 
		
		\noindent The second statement in \eqref{mulambdakappa} holds since \Cref{lleigenvalues2} implies that $\psi(j)$ is an $N^{-1} \zeta$-localization center of $\lambda_{\kappa(j)}$ with respect to $\bm{L}(t)$; therefore, since $\varphi_t (\kappa(j))$ is as well,  \Cref{centerdistance2} implies (as $\psi(j) \in \llbracket N_1'', N_2'' \rrbracket$) that $|\psi(j) - \varphi_t (\kappa(j))| \le (\log N)^3$. Further observe that, due our restriction to $\mathsf{E}_2$, we have $|\lambda_i - \lambda_{i'}| \ge e^{-(\log N)^2} \ge 2e^{-c_1 (\log N)^3}$ for any distinct $i, i' \in \llbracket 1, N \rrbracket$; thus, $\kappa(j) \ne \kappa (j')$ for any distinct $j, j' \in \llbracket 1, N' \rrbracket$ with $\psi (j), \psi(j') \in \llbracket N_1'', N_2'' \rrbracket$. Therefore,   
		\begin{flalign}
			\label{summulambda2} 
			\begin{aligned} 
				\Bigg| & \displaystyle\sum_{j: \psi (j) \in \llbracket N_1'', N_2'' \rrbracket} \mu_{j}^m - \displaystyle\sum_{j : \varphi_t (j) \in \llbracket N_1', N_2' \rrbracket} \lambda_j^m \Bigg| \\ 
				&  \le \displaystyle\sum_{j: \psi (j) \in \llbracket N_1'', N_2'' \rrbracket} |\mu_{j}^m - \lambda_{\kappa(j)}^m| + 2 (\log N)^3 \cdot \displaystyle\max_{\lambda \in \eig \bm{L}(t)} |\lambda|^m \le 4 (\log N)^{m+3},
			\end{aligned} 
		\end{flalign} 
		
		\noindent where the first statement holds by changing variables from $j$ to $\kappa(j)$ in the second sum there whenever $\psi (j) \in \llbracket N_1'', N_2'' \rrbracket$, and using \eqref{mulambdakappa} with the facts that $\llbracket N_1'' - (\log N)^3, N_2 + (\log N)^3 \rrbracket = \llbracket N_1', N_2' \rrbracket$ and that $|\llbracket N_1', N_2' \rrbracket \setminus \llbracket N_1'', N_2'' \rrbracket| \le 2 (\log N)^3$; the second follows from \eqref{mulambdakappa} with the fact that $|\lambda| \le \log N$ for each $\lambda \in \eig \bm{L}(t)$ (as we restricted to $\mathsf{E}_2$). Summing \eqref{sumkappalt}, \eqref{sumpq}, \eqref{sumqmu}, and \eqref{summulambda2}, we deduce that 
		\begin{flalign}
			\label{sum1} 
			\Bigg|\displaystyle\sum_{i=N_1'}^{N_2'} \mathfrak{k}_i^{[m]} (t) - \displaystyle\sum_{i:\varphi_t(i) \in \llbracket N_1', N_2' \rrbracket} \lambda_i^m \Bigg| \le 7 (\log N)^{m+3}.
		\end{flalign} 
		
		This verifies that \eqref{currentestimate3} holds with overwhelming probability for a fixed $t \in [0, T]$; it remains to show it holds with high probability for all $t \in [0, T]$ simultaneously. To that end, for any $t \in [0, T]$, let $\mathsf{E}(t)$ denote the event on which \eqref{sum1} holds. We then have that $\mathbb{P}[\mathsf{E}(t)^{\complement}] \le c_2^{-1} e^{-c_2 (\log N)^2}$ for some constant $c_2 > 0$. Denoting $\mathfrak{c} = c_2/2$, it follows from \Cref{centerdistance2} that there exists a constant $c_3>0$ and an event $\mathsf{E}_4$ with $\mathbb{P}[\mathsf{E}_4^{\complement}] \le c_3^{-1} e^{-c_3 (\log N)^2}$ such that the following holds on $\mathsf{E}_4$. For any $s, s' \in [0, T]$ with $|s-s'| \le e^{-\mathfrak{c}(\log N)^2}$, we have $|\varphi_s (i) - \varphi_{s'} (i)| \le (\log N)^3$, whenenever $\varphi = \varphi_s (i)$ satisfies \eqref{tzeta}. Letting $\mathcal{T}$ denote an $e^{-\mathfrak{c}(\log N)^2}$-mesh of $[0,T]$, we restrict to the event $\bigcap_{s \in \mathcal{T}} \mathsf{E}(s) \cap \mathsf{E}_4 \cap \bigcap_{r \ge 0} \mathsf{BND}_{\bm{L}(r)} (\log N)$, which we may by a union bound (and \Cref{l0eigenvalues}).
		
		Now let $t \in [0, T]$ be arbitrary, and let $s \in \mathcal{T}$ satisfy $|t-s| \le e^{-\mathfrak{c} (\log N)^2}$. Then, we claim that 
		\begin{flalign}
			\label{sum3}
			\begin{aligned}
				& \Bigg| \displaystyle\sum_{i=N_1'}^{N_2'} \big| \mathfrak{k}_i^{[m]} (t) - \mathfrak{k}_i^{[m]} (s) \big| \Bigg| \le 1; \\
				& \Bigg| \displaystyle\sum_{i: \varphi_t (i) \in \llbracket N_1', N_2' \rrbracket} \lambda_i^m - \displaystyle\sum_{i : \varphi_s (i) \in \llbracket N_1', N_2' \rrbracket} \lambda_i^m \Bigg| \le 4 (\log N)^{m+3}.  
			\end{aligned} 
		\end{flalign}
		
		\noindent which together with \eqref{sum1} (with the $t$ there equal to $s$ here) would imply the proposition. 
		
		To verify the first bound in \eqref{sum3}, observe for any $i, j \in \llbracket N_1, N_2 \rrbracket$ that 
		\begin{flalign*} 
			|L_{ij} (t) - L_{ij} (s)| \le |s-t| \cdot \displaystyle\sup_{r \in [s,t]} |L_{ij}' (r)| \le e^{-\mathfrak{c}(\log N)^2} \cdot 2 (\log N)^2 \le e^{-\mathfrak{c} (\log N)^2 / 2},
		\end{flalign*} 
		
		\noindent where in the second inequality we used \Cref{matrixl}, \eqref{abr}, \eqref{derivativepa}, and our restriction to the event $\bigcap_{r \ge 0} \mathsf{BND}_{\bm{L}(r)} (\log N)$. Since the same event implies that each entry of $\bm{L}(s)$ and $\bm{L}(t)$ is bounded by $\log N$, and both of these matrices are tridiagonal, it follows for any $i \in \llbracket N_1, N_2 \rrbracket$ that 
		\begin{flalign*}
			\big| \mathfrak{k}_i^{[m]} (t) - \mathfrak{k}_i^{[m]} (s) \big| & = \big| [\bm{L}(t)^m]_{ii} - [\bm{L}(s)^m]_{ii} \big|  \\
			& \le \displaystyle\max_{i,j \in \llbracket N_1, N_2 \rrbracket} |L_{ij} (s) - L_{ij} (t)| \cdot m (3 \log N)^m \\
			& \le e^{-\mathfrak{c}(\log N)^2/2} \cdot m (3 \log N)^m \le N^{-1}.
		\end{flalign*}
		
		\noindent Summing over $i \in \llbracket N_1', N_2' \rrbracket$ confirms the first bound in \eqref{sum3}. The second bound in \eqref{sum3} follows from the fact that $|\lambda| \le \log N$ for any $\lambda \in \eig \bm{L}(t)$ (by our restriction to $\mathsf{BND}_{\bm{L}(t)} (\log N)$), with the fact that there are at most $4 (\log N)^3$ indices $i$ for which $\varphi_s (i) \in \llbracket N_1', N_2' \rrbracket$ and $\varphi_t (i) \notin \llbracket N_1', N_2' \rrbracket$ or for which $\varphi_s (i) \notin \llbracket N_1', N_2' \rrbracket$ and $\varphi_t (i) \in \llbracket N_1', N_2' \rrbracket$ (as $|\varphi_t (i) - \varphi_s (i)| \le (\log N)^3$, by our restriction to $\mathsf{E}_4$). This proves \eqref{sum3} and thus \eqref{currentestimate3}.
		
		As mentioned previously, the proof of \eqref{currentestimate4} is entirely analogous, and is obtained by replacing the application of \Cref{ltl0} and \Cref{lleigenvalues} above with \Cref{ltmatrix} (with the $\ell$ there equal to $k$ here); we omit further details.
	\end{proof} 
	
	We can now prove \Cref{currentestimate}.

	\begin{proof}[Proof of \Cref{currentestimate}]
		
		We assume that $\alpha > 0$ in what follows, as the proof when $\alpha < 0$ is entirely analogous. The proofs of \eqref{currentsum} and \eqref{currentsum2} are very similar, so we only focus on the former. Recalling \Cref{adelta}, define the event $\mathsf{E}_1 = \bigcap_{s \in [0, t]} \mathsf{BND}_{\bm{L}(s)} (\log N)$. By \Cref{l0eigenvalues}, $\mathsf{E}_1$ holds with overwhelming probability, so we restrict to $\mathsf{E}_1$ in what follows. We further restrict to the event $\mathsf{E}_2$ on which \Cref{qijsalpha} holds, with the $T$ there both equal to $0$ and $T$ here (we may take the $T$ there to be $0$ by the $R = (\log N)^2 |i-j|^{1/2}$ case of \Cref{qij}). 
		
		Fix $t \in [0, T]$, and let $N_1' = \min \{ i : q_i (t) \in \mathcal{J} \}$ and $N_2' = \max \{ i : q_i (t) \in \mathcal{J} \}$. We claim that $(N_1', N_2')$ satisfy \eqref{n1n20}. To verify this, assume to the contrary, so that either $N_1' < N_1 + T (\log N)^5 + N^{1/100}$ or $N_2' > N_2 - T(\log N)^5 - N^{1/100}$ holds. The analysis of these two cases is entirely analogous, so we assume the former. Then, 
		\begin{flalign*} 
			q_{N_1'} (0) - q_0 (0) \ge q_{N_1'} (t) - T \log N & \ge \alpha N_1 + (T + |N_1|^{1/2} ) (\log N)^5 - T \log N \\
			&  \ge \alpha N_1 + T (\log N)^5 + \displaystyle\frac{1}{2} \cdot |N_1|^{1/2}  (\log N)^5,
		\end{flalign*} 
		
		\noindent where the first statement holds since $q_0 (0) = 0$ and $|q_i'(t)| = |p_i(t)| \le \log N$ (by \eqref{qtpt}, \eqref{abr}, \Cref{matrixl}, and our restriction to $\mathsf{E}_1$); the second holds by \eqref{j} and the inclusion $q_{N_1'} (t) \in \mathcal{J}$; and the third holds for $N$ sufficiently large, by \eqref{n1n2zetat}. This contradicts \eqref{qiqjs4} (at $s=T=0$), confirming \eqref{n1n20}. 
		
		Therefore, \Cref{currentestimate2} applies and implies that  \eqref{currentestimate3} holds with overwhelming probability. Then, since $q_i (t) \in \mathcal{J}$ implies that $i \in \llbracket N_1', N_2' \rrbracket$, it suffices (by \eqref{qjs2}) to show that 
		\begin{flalign*}
			\displaystyle\sum_{i \in \llbracket N_1', N_2' \rrbracket : q_i (t) \notin \mathcal{J}} |\mathfrak{k}_i^{[m]} (t)|  + \displaystyle\sum_{i : \varphi_t (i) \in \llbracket N_1', N_2' \rrbracket, q_{\varphi_t(i)} (t) \notin \mathcal{J}} |\lambda_i|^m \le (3 \log N)^{m+5}.
		\end{flalign*} 
		
		\noindent Changing variables in the second sum from $i$ to $\varphi_t^{-1} (i)$, this is equivalent to 
		\begin{flalign*}
			\displaystyle\sum_{i \in \llbracket N_1', N_2' \rrbracket : q_i (t) \notin \mathcal{J}} |\mathfrak{k}_i^{[m]} (t)|  + \displaystyle\sum_{i \in \llbracket N_1', N_2' \rrbracket: q_{i} (t) \notin \mathcal{J}} |\lambda_{\varphi_t^{-1} (i)}|^m \le (3 \log N)^{m+5}.
		\end{flalign*} 
		
		\noindent Using the facts that each entry of $\bm{L}(t)^m$ is bounded by $(3 \log N)^m$ (as $\bm{L}(t)$ is tridiagonal and each of its entries is bounded by $\log N$) and that $|\lambda| \le \log N$ for each $\lambda \in \eig \bm{L}(t)$, it remains to verify 
		\begin{flalign}
			\label{qin1n2} 
			& \# \{ i \in \llbracket N_1', N_2' \rrbracket : q_i (t) \notin \mathcal{J} \} \le 2 (\log N)^{9/2}. 
		\end{flalign}
		
		To show this, observe that if $i \in \llbracket N_1', N_2' \rrbracket$ satisfies $q_i (t) \notin \mathcal{J}$, then either $q_i (t) < q_{N_1'} (t)$ or $q_i (t) > q_{N_2'} (t)$. In the former case, we have 
		\begin{flalign*} 
			0 > q_i (t) - q_{N_1'} (t) \ge \alpha \cdot (i -N_1') - |i-N_1'|^{1/2} \cdot (\log N)^2,
		\end{flalign*}
		
		\noindent where the last inequality holds by our restriction to $\mathsf{E}_2$. Hence $i < N_1' + (\log N)^{9/2}$, and there are at most $(\log N)^{9/2}$ such indices in $\llbracket N_1', N_2' \rrbracket$. By similar reasoning, there are at most $(\log N)^{9/2}$ indices $i \in \llbracket N_1', N_2' \rrbracket$ such that $q_i (t) > q_{N_2'} (t)$. Summing these bounds yields \eqref{qin1n2} and thus \eqref{currentsum}.
		
		As mentioned previously, the proof of \eqref{currentsum2} is entirely analogous (using \eqref{currentestimate4} in place of \eqref{currentestimate3}).
	\end{proof}

	\section{Proof of the Asymptotic Scattering Relation} 
	
	\label{InverseAsymptotic}

	\subsection{Proof of \Cref{ztlambda}} 
	
	\label{YEstimate}
	
	In this section we prove \Cref{ztlambda}, which will follow from the following variant of it that replaces the $Q_t (i)$-dependent sums in \eqref{lambdak22} with ones that depend on $\varphi_t (i)$. The latter will be shown in \Cref{ProofQ2} below.  We adopt \Cref{lbetaeta} throughout.

	\begin{thm}
		
		\label{ztlambda2} 
		
		The following holds with overwhelming probability. Let $k \in \llbracket 1, N \rrbracket$ satisfy \eqref{n1n2k02}. Then, for any $t \in [0, T]$, we have
		\begin{flalign}
			\label{lambdak2}
			\Bigg| \lambda_k t - Q_k(t) + Q_k (0) - 2 \displaystyle\sum_{i: \varphi_t (i) < \varphi_t (k)}  \log |\lambda_{k} - \lambda_{i}| + 2 \displaystyle\sum_{i: \varphi_0 (i) < \varphi_0 (k)} \log |\lambda_k - \lambda_i| \Bigg| \le 2 (\log N)^{12}.
		\end{flalign} 
		
	\end{thm}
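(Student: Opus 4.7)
The plan is to combine the explicit evolution of the first eigenvector entries from \Cref{uknt} with the discrete Thouless-type identity from \Cref{uk12}, after identifying the $k$-independent constant in \Cref{uknt} with $q_{N_1}(t) - q_{N_1}(0)$ \emph{exactly}.

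First, I would rewrite \Cref{uknt} on the log scale as
\begin{flalign*}
2 \log |u_k(N_1;t)| - 2 \log |u_k(N_1;0)| = -\lambda_k t + \mathfrak{C}(t), \qquad \mathfrak{C}(t) = -\log \sum_{j=1}^N e^{-\lambda_j t} u_j(N_1;0)^2,
\end{flalign*}
noting that $\mathfrak{C}(0) = 0$ by the orthonormality $\sum_j u_j(N_1;0)^2 = 1$. Differentiating in $t$ and substituting $u_j(N_1;t)^2 = e^{\mathfrak{C}(t) - \lambda_j t} u_j(N_1;0)^2$ gives
\begin{flalign*}
\partial_t \mathfrak{C}(t) = \sum_{j=1}^N \lambda_j u_j(N_1;t)^2 = L_{N_1,N_1}(t) = b_{N_1}(t) = p_{N_1}(t) = \partial_t q_{N_1}(t),
\end{flalign*}
so integration yields the exact identity $\mathfrak{C}(t) = q_{N_1}(t) - q_{N_1}(0)$.

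Next, I would apply \Cref{uk12} at both $s = 0$ and $s = t$. The hypothesis \eqref{n1k} at $s=0$ follows directly from \eqref{n1n2k02} for large $N$; at $s = t$ it follows from the speed bound \Cref{centert}, which (on a high-probability event) forces $|\varphi_t(k) - \varphi_0(k)| \le T(\log N)^2$, since $\varphi_t(k)$ is a $\zeta$-localization center for $\bm{u}_k(t)$. Using the Flaschka identification $L_{j,j+1}(s) = a_j(s) = e^{(q_j(s) - q_{j+1}(s))/2}$ from \eqref{abr} to telescope the diagonal sum as $2\sum_{j=N_1}^{\varphi_s(k)-1}\log L_{j,j+1}(s) = q_{N_1}(s) - Q_k(s)$, \Cref{uk12} becomes
\begin{flalign*}
\Bigl| 2 \log|u_k(N_1;s)| - q_{N_1}(s) + Q_k(s) + 2 \sum_{i : \varphi_s(i) < \varphi_s(k)} \log|\lambda_i - \lambda_k| \Bigr| \le 2(\log N)^6.
\end{flalign*}
Subtracting this inequality at $s=0$ from its $s=t$ version cancels the $q_{N_1}$ contributions (since $2\log|u_k(N_1;t)| - 2\log|u_k(N_1;0)| = -\lambda_k t + q_{N_1}(t) - q_{N_1}(0)$ by the first step) and yields exactly \eqref{lambdak2}, with total error $4(\log N)^6 \le 2(\log N)^{12}$.

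The only real subtlety, which I expect to be the main (but routine) obstacle, is upgrading from fixed $t$ to uniformity over $t \in [0,T]$: the above argument uses the event of \Cref{uk12} at both endpoints, which is obtained via a single union bound, but the conclusion must hold simultaneously for all $t$. I would handle this by the same mesh strategy used in the proof of \Cref{currentestimate2}: apply the fixed-$t$ bound on an $e^{-\mathfrak{d}(\log N)^2}$-mesh $\mathcal{T} \subset [0,T]$, and for $t \notin \mathcal{T}$ interpolate via the nearest mesh point. On the event $\bigcap_{s \ge 0} \mathsf{BND}_{\bm{L}(s)}(\log N)$ one has $|\partial_t q_i(t)| \le \log N$, so $Q_k$ and $q_{N_1}$ vary by at most $N^{-10}$ across a mesh interval; meanwhile \Cref{centerdistance2} forces $\varphi_t(k)$ and the ordering set $\{i : \varphi_t(i) < \varphi_t(k)\}$ to be stable up to an $O((\log N)^3)$-size symmetric difference, contributing at most $O((\log N)^4)$ to the logarithmic sum. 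All such fluctuations are comfortably absorbed in the $2(\log N)^{12}$ budget, completing the proof.
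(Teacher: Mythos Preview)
Your exact identification of the normalizing constant is correct and is a genuine simplification over the paper's route. The paper defines $\mathfrak{C}(s) = \log \sum_j e^{-\lambda_j s} u_j(N_1;0)^2$ (the opposite sign to yours) and only proves $|q_{N_1}(0) - q_{N_1}(t) - \mathfrak{C}(t)| \le (\log N)^{11}$ approximately, via Lemma~\ref{ctqt}: it sums the estimate \eqref{qkt2} over all $k$ and uses cancellations in the total, an averaging argument that forces the auxiliary hypothesis $T^2 \le N$ (Proposition~\ref{ztlambda3}). That hypothesis is then removed by embedding the system into a larger Toda lattice of size $\tilde N \ge T^2 N$ and transferring the estimate back (proof of Theorem~\ref{ztlambda4}). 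Your differentiation argument, using $\partial_t \mathfrak{C}(t) = \sum_j \lambda_j u_j(N_1;t)^2 = [\bm{L}(t)]_{N_1,N_1} = p_{N_1}(t)$, shows the quantity in Lemma~\ref{ctqt} is \emph{exactly} zero and bypasses both Lemma~\ref{ctqt} and the $T^2 \le N$ restriction entirely. The mesh argument you sketch for uniformity in $t$ is essentially the same as the paper's.

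There is, however, a real gap. You assert that the hypothesis \eqref{n1k} of \Cref{uk12} at $s=0$ ``follows directly from \eqref{n1n2k02} for large $N$'', but this is false when $T$ is small: \eqref{n1n2k02} gives $\varphi_0(k) \ge N_1 + T(\log N)^6$, whereas \eqref{n1k} requires $\varphi_0(k) \ge N_1 + T(\log N)^5 + N^{1/100}$, and for $T=1$ one has $(\log N)^6 \ll N^{1/100}$. So your argument as written only yields \eqref{lambdak2} under the stronger hypothesis \eqref{n1n2k022}. In the paper, the embedding step in the proof of Theorem~\ref{ztlambda4} serves a second purpose besides removing $T^2 \le N$: it is also what relaxes \eqref{n1n2k022} to \eqref{n1n2k02}, since in the enlarged lattice $\tilde N^{1/100}$ is dominated by the distance from $\varphi_0(k)$ to the boundary. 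You still need that embedding (now only for this $N^{1/100}$ issue, hence in a lighter form), or some substitute, to obtain the theorem as stated.
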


	\begin{proof}[Proof of \Cref{ztlambda}]
		
		Throughout this proof, we assume that $\alpha > 0$, as the proof when $\alpha < 0$ is entirely analogous; we also fix $k \in \llbracket 1, N \rrbracket$ satisfying \eqref{n1n2k02}. We begin by restricting to several events. Recalling \Cref{adelta}, first restrict to the event $\mathsf{E}_1 = \mathsf{BND}_{\bm{L}(0)} (\log N) \cap \mathsf{SEP}_{\bm{L}(0)} (e^{-(\log N)^2})$, which we may by \Cref{l0eigenvalues}, \Cref{eigenvalues0}, and a union bound. Further restrict to the event $\mathsf{E}_2$ on which \Cref{ztlambda2} holds; to the event $\mathsf{E}_3$ on which \Cref{qijsalpha} holds; and to the event $\mathsf{E}_4$ on which \Cref{centert} holds. 
		
		In view of \eqref{lambdak2} (and our restriction to $\mathsf{E}_2$), we must show that 
		\begin{flalign*}
			2 \displaystyle\sum_{i = N_1}^{N_2} \big| (\mathbbm{1}_{\varphi_t (i) < \varphi_t (k)} - \mathbbm{1}_{\varphi_0 (i) < \varphi_0 (k)}) - (& \mathbbm{1}_{Q_i (t) < Q_k (t)} - \mathbbm{1}_{Q_i (0) < Q_k (0)} ) \big| \cdot \big| \log |\lambda_k - \lambda_i| \big| \\
			& \le (\log N)^{15} - 2 (\log N)^{12}.
		\end{flalign*}
		
		\noindent Since by our restriction to $\mathsf{E}_1$ we that $|\log |\lambda_k - \lambda_i|| \le (\log N)^2$ for each $i \ne k$, it therefore suffices to show that 
		\begin{flalign}
			\label{sum0}
			\displaystyle\sum_{i = N_1}^{N_2} \big| (\mathbbm{1}_{\varphi_t (i) < \varphi_t (k)} - \mathbbm{1}_{\varphi_0 (i) < \varphi_0 (k)}) - (\mathbbm{1}_{Q_i (t) < Q_k (t)} - \mathbbm{1}_{Q_i (0) < Q_k (0)}) \big|  \le (\log N)^8.
		\end{flalign}
		
		We first claim that the summand on the left side of \eqref{sum0} is equal to $0$ if $i$ satisfies $|\varphi_0 (i) - \varphi_0 (k)| > 2 T (\log N)^5$. Indeed, fix such an index $i$, and assume $\varphi_0 (i) < \varphi_0 (k) - 2T (\log N)^5$, as the verification in the alternative case is entirely analogous. Then, \eqref{ujms} (with our restriction to $\mathsf{E}_4$) gives $|\varphi_t (i) - \varphi_0 (i)| \le T (\log N)^2$ and $|\varphi_t (k) - \varphi_0 (k)| \le T (\log N)^2$, so for sufficiently large $N$ we have $\varphi_t (i) < \varphi_t (k) - T (\log N)^5$. As such, $\mathbbm{1}_{\varphi_t (i) < \varphi_t (k)} = 1 = \mathbbm{1}_{\varphi_0 (i) < \varphi_0 (k)}$. By \eqref{qiqjs5} (and our restriction to $\mathsf{E}_3$) whose assumption holds by the above bounds, we have for each $s \in \{ 0, t \}$ that
		\begin{flalign*}
			Q_k (s) - Q_i (s) = q_{\varphi_s (k)} - q_{\varphi_s (i)} \ge \displaystyle\frac{\alpha}{2} \cdot \big( \varphi_s (k) - \varphi_s (i) \big) > 0.
		\end{flalign*}
		
		\noindent So, it follows that $\mathbbm{1}_{Q_i (t) < Q_k (t)} = 1 = \mathbbm{1}_{Q_i (0) < Q_k (0)}$, meaning that the summand on the left side of \eqref{sum0} associated with $i$ is equal to $0$. 
		
		Therefore, we may restrict the sum on the left side of \eqref{sum0} to indices $i$ satisfying $|\varphi_0 (i) - \varphi_0 (k)| \le 2T (\log N)^5$. Fix such an index $i$; we claim for each $s \in \{ 0, t \}$ that
		\begin{flalign}
			\label{ik2} 
			\mathbbm{1}_{\varphi_s (i) < \varphi_s (k)} = \mathbbm{1}_{Q_i (s) < Q_k (s)}, \qquad \text{unless $|\varphi_s (i) - \varphi_s (k)| \le (\log N)^5$},
		\end{flalign} 
		
		\noindent which would imply \eqref{sum0} and thus the theorem. To confirm \eqref{ik2}, fix $s \in \{ 0, t \}$, and assume that $|\varphi_s (i) - \varphi_s (k)| > (\log N)^5$; we will further suppose that $\varphi_s (i) < \varphi_s (k) - (\log N)^5$, as the proof in the alternative case is entirely analogous. By our assumptions \eqref{n1n2k02} and $|\varphi_0 (i) - \varphi_0 (k)| \le 2T (\log N)^5$, observe (by \eqref{ujms}, using our restriction to $\mathsf{E}_4$) that $\varphi_s (i), \varphi_s (k) \in \llbracket N_1 + T(\log N)^5, N_2 - T(\log N)^5 \rrbracket$, so \eqref{qiqjs4} holds (by our restriction to $\mathsf{E}_4$) with the $(i, j)$ there equal to $(\varphi_s (i), \varphi_s (k))$ here. Hence, 
		\begin{flalign*} 
			Q_k (s) - Q_i (s) & = q_{\varphi_k (s)} (s) - q_{\varphi_i (s)} (s) \\
			& \ge \alpha \cdot \big( \varphi_k (s) - \varphi_i (s) \big) - |\varphi_k (s) - \varphi_i (s)|^{1/2} \cdot (\log N)^2 > 0,
		\end{flalign*} 
		
		\noindent where the last bound holds since $\varphi_k (s) - \varphi_i (s) \ge (\log N)^5$. Therefore, $\mathbbm{1}_{Q_i (s) < Q_k (s)} = 1 = \mathbbm{1}_{\varphi_s (i) < \varphi_s (k)}$, proving \eqref{ik2} and thus the theorem.
	\end{proof}
	
	\subsection{Proof of \Cref{ztlambda2} if $T^2 \le N$}
	
	\label{ProofQ}
	
	In this section we establish the following case of \Cref{ztlambda2}, which assumes that $T^2 \le N$ (and a slightly stronger condition \eqref{n1n2k022} on $\varphi_0 (k)$ than imposed in \eqref{n1n2k02}) and addresses a fixed time $t \in [0, T]$. Recall we adopt \Cref{lbetaeta} throughout.

	\begin{prop}
		
		\label{ztlambda3}
		
		For any $t \in [0, T]$, the following holds with overwhelming probability. Let $k \in \llbracket 1, N \rrbracket$ satisfy 
		\begin{flalign}
			\label{n1n2k022} 
			N_1 + T (\log N)^6 + N^{1/100} \le \varphi_0 (k) \le N_2 - T (\log N)^6 - N^{1/100}.
		\end{flalign}
		
		\noindent If $T^2 \le N$, then we have
		\begin{flalign}
			\label{lambdak3}
			\Bigg| \lambda_k t - Q_k(t) + Q_k (0) - 2 \displaystyle\sum_{i: \varphi_t (i) < \varphi_t (k)}  \log |\lambda_{k} - \lambda_{i}| + 2 \displaystyle\sum_{i: \varphi_0 (i) < \varphi_0 (k)} \log |\lambda_k - \lambda_i| \Bigg| \le 2 (\log N)^{11}.
		\end{flalign} 
		
	\end{prop}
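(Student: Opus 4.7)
The plan is to combine three inputs: Moser's evolution formula \eqref{uklambdak} (from \Cref{uknt}) for the first eigenvector entry, the eigenvector decay estimate \Cref{uk12}, and the deterministic identity
\begin{align*}
\mathfrak{C}(t) = q_{N_1}(t) - q_{N_1}(0),
\end{align*}
which I will prove as a preliminary step.

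First I would establish the above identity. Writing $Z(t) = \sum_j e^{-\lambda_j t} u_j(N_1;0)^2$ so that $\mathfrak{C}(t) = -\log Z(t)$, I use \Cref{uknt} to obtain $u_k(N_1;t)^2 = Z(t)^{-1} e^{-\lambda_k t} u_k(N_1;0)^2$. Combining $\dot q_{N_1}(t) = p_{N_1}(t) = b_{N_1}(t) = L_{N_1,N_1}(t)$ (from \eqref{qtpt}, \eqref{abr}, and \Cref{matrixl}) with the spectral expansion $L_{N_1,N_1}(t) = \sum_k \lambda_k u_k(N_1;t)^2$, I get
\begin{align*}
\dot q_{N_1}(t) = Z(t)^{-1} \sum_k \lambda_k e^{-\lambda_k t} u_k(N_1;0)^2 = -\partial_t \log Z(t).
\end{align*}
Integrating from $0$ to $t$ and using $Z(0) = 1$ yields $q_{N_1}(t) - q_{N_1}(0) = -\log Z(t) = \mathfrak{C}(t)$ exactly.

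Next I would apply \Cref{uk12} at both $s=0$ and $s=t$. Its hypothesis \eqref{n1k} at $s=t$ follows from the assumption \eqref{n1n2k022} on $\varphi_0(k)$ together with the bound $|\varphi_t(k) - \varphi_0(k)| \le T(\log N)^2$ provided by \Cref{centert}. The Flaschka identity $2 \log L_{j,j+1}(s) = q_j(s) - q_{j+1}(s)$ telescopes to $q_{N_1}(s) - q_{\varphi_s(k)}(s) = q_{N_1}(s) - Q_k(s)$, so \Cref{uk12} (multiplied by two) becomes
\begin{align*}
\Bigl| 2 \log | u_k(N_1;s) | - \bigl( q_{N_1}(s) - Q_k(s) \bigr) + 2 \displaystyle\sum_{i : \varphi_s(i) < \varphi_s(k)} \log | \lambda_k - \lambda_i | \Bigr| \le 2 (\log N)^6.
\end{align*}
Subtracting the $s=0$ version from the $s=t$ one and using \eqref{uklambdak} to simplify the resulting difference of the left-hand sides to $-\lambda_k t + \mathfrak{C}(t)$, then rearranging, gives exactly the expression whose modulus appears in \eqref{lambdak3}, equal to $\mathfrak{C}(t) - \bigl(q_{N_1}(t) - q_{N_1}(0)\bigr)$ up to a total accumulated error of at most $4(\log N)^6$. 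The identity from the first step cancels the remaining difference, yielding the bound $2(\log N)^{11}$ (with plenty of slack).

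The main expected obstacle is the bookkeeping: ensuring that the events underlying \Cref{uk12}, \Cref{centert}, \Cref{bijectionl}, and \Cref{l0eigenvalues} all hold simultaneously on a single event of probability at least $1 - c^{-1} e^{-c(\log N)^2}$, and that the buffer gap between \eqref{n1n2k022} and the interior requirement \eqref{n1k} is wide enough to absorb the $O(T(\log N)^2)$ shift of $\varphi_t(k)$ from $\varphi_0(k)$. The hypothesis $T^2 \le N$ appears to play only a supporting role, likely entering through auxiliary quantitative controls (for instance to apply \Cref{ltl0} or \Cref{qijsalpha} when approximating $\bm{L}(t)$ by a thermal Lax matrix and when bounding particle spacings), rather than through the scattering identity itself.
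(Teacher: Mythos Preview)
Your approach is correct and in fact cleaner than the paper's on the key step. Both arguments split the proof into (i) the content of \Cref{ytlambda} (apply \Cref{uk12} at $s=0$ and $s=t$, subtract, and insert Moser's evolution \Cref{uknt}) and (ii) the identification of $\mathfrak{C}(t)$ with $q_{N_1}(0)-q_{N_1}(t)$. For (i) your argument matches the paper's proof of \Cref{ytlambda} essentially verbatim. For (ii) the paper proves the approximate statement \Cref{ctqt} by an averaging trick: it sums the conclusion of \Cref{ytlambda} over all $k$, uses that $\sum_k(Q_k(t)-Q_k(0))=t\sum_j\lambda_j$ and that the two logarithmic double sums cancel, and bounds the contribution of indices violating \eqref{n1n2k022} crudely; dividing by $N$ yields $|q_{N_1}(0)-q_{N_1}(t)-\mathfrak{C}(t)|\le(\log N)^{11}$, and this division is where the hypothesis $T^2\le N$ enters. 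Your direct computation---$\dot q_{N_1}=L_{N_1,N_1}(t)=\sum_k\lambda_k u_k(N_1;t)^2=-\partial_t\log Z(t)$, hence $q_{N_1}(0)-q_{N_1}(t)=\log Z(t)$---shows this quantity is \emph{exactly} zero, with no probabilistic event and no use of $T^2\le N$. (Be aware of a sign: with the paper's definition \eqref{ct} one has $\mathfrak{C}(t)=+\log Z(t)$, so the exact identity is $q_{N_1}(0)-q_{N_1}(t)=\mathfrak{C}(t)$; the introductory display \eqref{uklambdak} uses the opposite convention.) Your route therefore removes the $T^2\le N$ assumption from \Cref{ztlambda3} entirely, which would also streamline the subsequent bootstrap in \Cref{ProofQ2}.
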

	
	Proposition \ref{ztlambda3} is a quick consequence of the below two lemmas. In what follows, we denote
	\begin{flalign}
		\label{ct} 
		\mathfrak{C}(s) = \log \displaystyle\sum_{j=1}^N e^{-\lambda_j s} u_j (N_1; 0)^2.
	\end{flalign}

	\begin{lem} 
		
		\label{ytlambda}
		
		For any $t \in [0, T]$, the following holds with overwhelming probability. For any $k \in \llbracket 1, N \rrbracket$ satisfying \eqref{n1n2k022}, we have 
		\begin{flalign}
			\label{qkt2} 
			\begin{aligned} 
				\Bigg| Q_k(t) -  Q_k (0) + 2 \displaystyle\sum_{i: \varphi_t (i) < \varphi_t (k)} \log | & \lambda_k - \lambda_i|  - 2 \displaystyle\sum_{i: \varphi_0 (i) < \varphi_0 (k)} \log |\lambda_k - \lambda_i| - \lambda_k t \\
				& - \mathfrak{C}(t) - q_{N_1} (t) + q_{N_1} (0) \Bigg| \le 2 (\log N)^6.
			\end{aligned} 
		\end{flalign}
		
	\end{lem}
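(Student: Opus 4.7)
The plan is to combine three ingredients: the inverse scattering identity \Cref{uknt} for the first eigenvector entry, the approximation \Cref{uk12} for that entry in terms of localization-center data, and the telescoping identity $2 \log L_{j,j+1}(s) = q_j(s) - q_{j+1}(s)$ that comes from the definitions \eqref{abr} and \Cref{matrixl}. The overall strategy is to express $2\log |u_k(N_1;s)|$ in two different ways, one evolving linearly in $t$ through \Cref{uknt} and one involving $Q_k(s)$ and the eigenvalue sums through \Cref{uk12}, and equate them at $s \in \{0, t\}$.

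First, I would take the logarithm of the identity in \Cref{uknt} to obtain, deterministically,
\begin{flalign*}
2 \log | u_k(N_1;t) | - 2 \log |u_k (N_1;0)| = -\lambda_k t - \mathfrak{C}(t).
\end{flalign*}
Next, I would apply \Cref{uk12} at each of $s \in \{0, t\}$ to rewrite each $\log|u_k(N_1;s)|$ via the right side of \eqref{ukn1t}, up to an additive error of at most $(\log N)^6$. At $s = 0$ the hypothesis \eqref{n1k} of \Cref{uk12} follows directly from \eqref{n1n2k022}. At $s = t$, I would invoke \Cref{centert} to assert, on an event of probability at least $1 - c^{-1} e^{-c(\log N)^2}$, that $|\varphi_t(k) - \varphi_0(k)| \le T (\log N)^2$; combined with \eqref{n1n2k022}, this forces $\varphi_t(k)$ to satisfy \eqref{n1k} as well, since $T(\log N)^6 - T(\log N)^2 \ge T(\log N)^5 + N^{1/100}$ for large $N$.

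Third, since \Cref{matrixl} gives $L_{j,j+1}(s) = a_j(s)$ and \eqref{abr} gives $2 \log a_j(s) = q_j(s) - q_{j+1}(s)$, the diagonal sum in \eqref{ukn1t} telescopes:
\begin{flalign*}
2 \displaystyle\sum_{j=N_1}^{\varphi_s(k) - 1} \log L_{j,j+1}(s) = \displaystyle\sum_{j=N_1}^{\varphi_s(k)-1} \big( q_j(s) - q_{j+1}(s) \big) = q_{N_1}(s) - q_{\varphi_s(k)}(s) = q_{N_1}(s) - Q_k(s),
\end{flalign*}
where the last equality is the definition \eqref{qjs2} of $Q_k(s)$. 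Substituting this into \eqref{ukn1t} for $s \in \{0, t\}$, taking the difference, and inserting the result into the logarithmic inverse scattering identity above gives \eqref{qkt2} after rearrangement, with total error at most $2(\log N)^6$ (the $(\log N)^6$ coming from each application of \Cref{uk12}).

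There is no substantial obstacle; the only nontrivial point is the propagation of the bulk condition from time $0$ to time $t$, which is handled by the already-established speed bound \Cref{centert}, and the combination of probabilistic events occurs via a single union bound.
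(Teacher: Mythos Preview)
Your proposal is correct and follows essentially the same route as the paper: take logarithms in \Cref{uknt}, apply \Cref{uk12} at $s\in\{0,t\}$ (using \Cref{centert} to propagate the bulk hypothesis \eqref{n1n2k022} to \eqref{n1k} at time $t$), telescope $2\sum_{j=N_1}^{\varphi_s(k)-1}\log L_{j,j+1}(s)=q_{N_1}(s)-Q_k(s)$, and subtract. One small arithmetic slip: the inequality you need is only $T(\log N)^6 - T(\log N)^2 \ge T(\log N)^5$ (the $N^{1/100}$ term cancels from both sides), not the stronger version you wrote.
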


	\begin{lem} 
		
		\label{ctqt} 
		
		Fix $t \in [0, T]$, and assume that $T^2 \le N$. There exists a constant $c>0$ such that 
		\begin{flalign*}
			\mathbb{P} \big[ | q_{N_1} (0) - q_{N_1} (t) - \mathfrak{C}(t) | \le (\log N)^{11} \big] \ge 1 - c^{-1} e^{-c(\log N)^2}.
		\end{flalign*}
		
	\end{lem}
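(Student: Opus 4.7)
The plan is to derive the stronger \emph{exact} deterministic identity $q_{N_1}(0) - q_{N_1}(t) = \mathfrak{C}(t)$, which instantly implies the lemma (in fact with $0$ in place of $(\log N)^{11}$ and without needing the hypothesis $T^2 \le N$ or any probabilistic input). The argument is a single short chain of equalities that combines Moser's inverse scattering formula (\Cref{uknt}) with the Toda Hamiltonian equations.

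First I would use \eqref{qtpt} to write $q_{N_1}(t) - q_{N_1}(0) = \int_0^t p_{N_1}(s)\, ds$, and then express $p_{N_1}(s) = b_{N_1}(s) = L_{N_1,N_1}(s)$ via \eqref{abr} and \Cref{matrixl}. The spectral decomposition of the symmetric Lax matrix in its orthonormal eigenbasis $(\bm{u}_k(s))$ then gives the pointwise identity $L_{N_1,N_1}(s) = \sum_{k=1}^N \lambda_k\, u_k(N_1;s)^2$.

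Next, \Cref{uknt} allows me to transfer everything back to time $0$: using the definition \eqref{ct} of $\mathfrak{C}$, the Moser formula rewrites as $u_k(N_1;s)^2 = e^{-\lambda_k s}\, u_k(N_1;0)^2 \cdot e^{-\mathfrak{C}(s)}$, so
\begin{flalign*}
p_{N_1}(s) = e^{-\mathfrak{C}(s)} \sum_{k=1}^N \lambda_k\, e^{-\lambda_k s}\, u_k(N_1;0)^2.
\end{flalign*}
The key observation is that the sum on the right equals $-\frac{d}{ds}\sum_k e^{-\lambda_k s}\, u_k(N_1;0)^2 = -\frac{d}{ds} e^{\mathfrak{C}(s)} = -\mathfrak{C}'(s)\, e^{\mathfrak{C}(s)}$, giving $p_{N_1}(s) = -\mathfrak{C}'(s)$ exactly.

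Finally, integrating from $0$ to $t$ and using the orthonormality identity $\mathfrak{C}(0) = \log \sum_k u_k(N_1;0)^2 = \log 1 = 0$ yields $q_{N_1}(t) - q_{N_1}(0) = -\mathfrak{C}(t)$, i.e.\ the desired identity $q_{N_1}(0) - q_{N_1}(t) - \mathfrak{C}(t) = 0$. There is really no obstacle: the whole point is recognizing that $p_{N_1}$ is the logarithmic derivative of the ``partition function'' $e^{\mathfrak{C}(s)} = \sum_k e^{-\lambda_k s}\, u_k(N_1;0)^2$, so $\int_0^t p_{N_1}(s)\, ds$ telescopes to $-\mathfrak{C}(t)$. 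Both the assumption $T^2 \le N$ and the probabilistic slack $(\log N)^{11}$ are inert in this argument, and appear only to make the statement uniform with the other lemmas used in proving \Cref{ztlambda3}.
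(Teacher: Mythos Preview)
Your argument is correct and yields the exact identity $q_{N_1}(0)-q_{N_1}(t)=\mathfrak{C}(t)$, which is strictly stronger than what the lemma asserts. Each step checks: the spectral decomposition $L_{N_1 N_1}(s)=\sum_k\lambda_k u_k(N_1;s)^2$ holds because $(\bm{u}_k(s))_k$ is orthonormal; \Cref{uknt} rewrites each $u_k(N_1;s)^2$ as $e^{-\lambda_k s}u_k(N_1;0)^2 e^{-\mathfrak{C}(s)}$; and the resulting sum is $-\tfrac{d}{ds}e^{\mathfrak{C}(s)}\cdot e^{-\mathfrak{C}(s)}=-\mathfrak{C}'(s)$, so integrating and using $\mathfrak{C}(0)=\log\sum_k u_k(N_1;0)^2=\log 1=0$ (the rows of an orthogonal matrix have unit norm) finishes it.

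The paper proceeds very differently: it sums the approximate identity \eqref{qkt2} of \Cref{ytlambda} over all $k\in\llbracket 1,N\rrbracket$, observes that the double sums $\sum_k\sum_{i:\varphi_s(i)<\varphi_s(k)}\log|\lambda_i-\lambda_k|$ at $s=0$ and $s=t$ are both equal to $\sum_{i\ne j}\log|\lambda_i-\lambda_j|$ and hence cancel, and that $\sum_k(q_k(t)-q_k(0))=t\sum_j\lambda_j$ by trace conservation; what remains is $N\cdot(q_{N_1}(0)-q_{N_1}(t)-\mathfrak{C}(t))$ plus error terms of size $O(N(\log N)^6+T^2(\log N)^{10}+TN^{1/100}(\log N)^4)$, and dividing by $N$ forces the hypothesis $T^2\le N$. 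Your route is both shorter and removes that hypothesis entirely; the paper's averaging argument has the minor structural advantage of reusing \Cref{ytlambda} as a black box, but at the cost of the artificial constraint $T^2\le N$ that then has to be lifted separately in \Cref{ztlambda4} via an embedding into a larger interval.
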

	
	\begin{proof}[Proof of \Cref{ztlambda3}] 
		This follows from \Cref{ytlambda} and \Cref{ctqt}. 
	\end{proof} 
	
	We now show \Cref{ytlambda} and \Cref{ctqt}.

	\begin{proof}[Proof of \Cref{ytlambda}]

		Fix $k \in \llbracket 1, N \rrbracket$ satisfying \eqref{n1n2k022}, and observe by \Cref{uknt} that 
		\begin{flalign}
			\label{ukt0} 
			-2 \log | u_k (N_1; t) | = \lambda_k t - 2 \log | u_k (N_1; 0) | + \mathfrak{C} (t).
		\end{flalign}
		
		\noindent  By \Cref{centert}, there is an overwhelmingly probable event $\mathsf{E}_1$, on which we have $| \varphi_t (k) - \varphi_0 (k) | \le T (\log N)^2$. Observe on $\mathsf{E}_1$ that $N_1 + T (\log N)^5 + N^{1/100} \le \varphi_s (k) \le N_2 - T (\log N)^5 - N^{1/100}$ for each $s \in \{ 0, t \}$, by \eqref{n1n2k022}. We restrict to $\mathsf{E}_1$ in what follows.
		
		By \Cref{uk12}, there exists an overwhelmingly probable event $\mathsf{E}_2$, on which the following holds. For any real number $s \in \{ 0, t \}$, we have 
		\begin{flalign*}
			\Bigg| 2 \log | u_k (N_1; s) | - 2 \displaystyle\sum_{i=N_1}^{\varphi_s (k)-1} \log L_{i,i+1} (s) + 2 \displaystyle\sum_{i: \varphi_s (i) < \varphi_s (k)} \log |\lambda_{i} - \lambda_{k}| \Bigg| \le 2 (\log N)^6,
		\end{flalign*}
		
		\noindent where we used that $N_1 + T(\log N)^5 + N^{1/100} \le \varphi_s (k) \le N_2 - T(\log N)^5 - N^{1/100}$. We further restrict to $\mathsf{E}_2$ below. Since \Cref{matrixl}, \eqref{abr}, and \Cref{lbetaeta} together imply that 
		\begin{flalign*} 
			2 \displaystyle\sum_{i=N_1}^{\varphi_s (k)-1} \log L_{i,i+1} (s) = 2 \displaystyle\sum_{i=N_1}^{\varphi_s (k)-1} \log a_i (s) = q_{N_1} (s) - q_{\varphi_s (k)} (s) = q_{N_1} (s) -  Q_k (s),
		\end{flalign*} 
		
		\noindent we obtain by our restriction to $\mathsf{E}_1 \cap \mathsf{E}_2$ that, for each $s \in \{ 0 ,t \}$,   
		\begin{flalign*}
			\Bigg| 2 \log | u_{k} (N_1; s) | + Q_k (s) - q_{N_1} (s) + 2 \displaystyle\sum_{i: \varphi_s (i) < \varphi_s (k)} \log |\lambda_{i} - \lambda_{k}| \Bigg| \le (\log N)^6.
		\end{flalign*}
		
		\noindent Subtracting this bound at $s = 0$ from it at $s = t$, and applying \eqref{ukt0}, then yields the lemma.
	\end{proof}

	\begin{proof}[Proof of \Cref{ctqt}]
		
		We will establish this proposition by averaging an estimate on the left side of \eqref{qkt2} over all $k \in \llbracket N_1, N_2 \rrbracket$. To that end, let $\mathsf{E}_1$ denote the event on which \eqref{qkt2} holds for each $k \in \llbracket N_1, N_2 \rrbracket$ satisfying \eqref{n1n2k022}. By \Cref{ytlambda}, $\mathsf{E}_1$ is overwhelmingly probable. We will define additional events on which we can bound the left side of \eqref{qkt2} for $k \in \llbracket N_1, N_2 \rrbracket$ not necessarily satisfying \eqref{n1n2k022}. Specifically, recalling \Cref{adelta}, set 
		\begin{flalign*} 
			& \mathsf{E}_2 = \bigcap_{r \ge 0} \mathsf{BND}_{\bm{L}(r)} (\log N) \cap \mathsf{SEP}_{\bm{L}(0)} (e^{-(\log N)^2}); \\ 
			& \mathsf{E}_3 = \bigcap_{k=N_1}^{N_2} \big\{ | \varphi_t (k) - \varphi_0 (k) | \le T (\log N)^2 \big\}. 
		\end{flalign*} 
		
		\noindent Then $\mathsf{E}_2$ is overwhelmingly probable, by \Cref{l0eigenvalues} and \Cref{eigenvalues0}, and $\mathsf{E}_3$ is overwhelmingly probable, by \Cref{centert}. We further define the event
		\begin{flalign*}
			\mathsf{E}_4 = \bigcap_{|i-j| \le T (\log N)^2} \{ | q_i (0) - q_j (0) | \le T(\log N)^3 \},
		\end{flalign*}
		
		\noindent which by \Cref{qij} and a union bound is overwhelmingly probable. Defining the event $\mathsf{E} = \mathsf{E}_1 \cap \mathsf{E}_2 \cap \mathsf{E}_3 \cap \mathsf{E}_4$, we may by a union bound restrict to $\mathsf{E}$ in what follows.
		
		Then, for any $k \in \llbracket N_1, N_2 \rrbracket$, we have 
		\begin{flalign}
			\label{qkt3}
			\begin{aligned} 
				\Bigg| & Q_k (t) - Q_k (0) - q_{N_1} (t) + q_{N_1} (0) - \lambda_k t - \mathfrak{C}(t)  \\
				&  + 2 \displaystyle\sum_{i:\varphi_t (i) < \varphi_t (k)} \log |\lambda_i - \lambda_k| - 2 \displaystyle\sum_{i:\varphi_0 (i) < \varphi_0 (k)} \log |\lambda_i - \lambda_k| \Bigg| \\ 
				& \quad \le | q_{\varphi_t (k)} (0) - q_{\varphi_0 (k)} (0) | + | q_{\varphi_t (k)} (t) - q_{\varphi_t (k)} (0) | + | q_{N_1} (t) - q_{N_1} (0) | \\
				& \qquad + T \log N + |\mathfrak{C}(t)|  + 2 (\log N)^2 \cdot \# \{ i \in \llbracket N_1, N_2 \rrbracket : \varphi_t (i) < \varphi_t (k), \varphi_0 (i) > \varphi_0 (k) \} \\
				& \qquad  + 2 (\log N)^2 \cdot \# \{ i \in \llbracket N_1, N_2 \rrbracket : \varphi_t (i) > \varphi_t (k), \varphi_0 (i) < \varphi_0 (k) \},
			\end{aligned} 
		\end{flalign}
		
		\noindent where we used the definition $Q_k (s) = q_{\varphi_s (k)} (s)$ (from \eqref{qjs2}) and the facts that $|\lambda_k| \le \log N$ and that $| \log |\lambda_i - \lambda_k || \le (\log N)^2$ (both by our restriction to $\mathsf{E}_2$). We next bound the terms on the right side of \eqref{qkt3}. Observe by our restriction to $\mathsf{E}_4$ that 
		\begin{flalign}
			\label{qt4} 
			| q_{\varphi_t (k)} (0) - q_{\varphi_0 (k)} (0) | \le T (\log N)^3,
		\end{flalign} 
		
		\noindent since $| \varphi_t (k) - \varphi_0 (k) | \le T(\log N)^2$, by our restriction to $\mathsf{E}_3$. Additionally, 
		\begin{flalign}
			\label{qt5}
			\displaystyle\max_{j \in \llbracket N_1, N_2 \rrbracket} | q_j (t) - q_j (0) | \le  \displaystyle\int_0^t  |b_j (s)|  ds \le  T \log N,
		\end{flalign}
		
		\noindent where in the first inequality we used \eqref{qtpt} and \eqref{abr}; in the second we used the fact that $t \in [0, T]$ and our restriction to $\mathsf{E}_2$. Moreover,
		\begin{flalign}
			\label{qt6}
			- 2 T \log N \le -t \max_{\lambda \in \eig \bm{L}} |\lambda| -  \log N \le \mathfrak{C}(t) \le t \max_{\lambda \in \eig \bm{L}} |\lambda| +  \log N \le 2T \log N,
		\end{flalign}
		
		\noindent where the first and fourth inequalities follow from our restriction to $\mathsf{E}_2$; the second from the definition \eqref{ct} of $\mathfrak{C}(t)$, with the fact that there exists at least one index $j \in \llbracket 1, N \rrbracket$ for which $u_j (N_1; 0)^2 \ge N^{-1}$; and the third from the definition \eqref{ct} of $\mathfrak{C}(t)$, with the fact that $u_j (N_1;0)^2 \le 1$ for each $j \in \llbracket 1, N \rrbracket$. We further have that 
		\begin{flalign}
			\label{ki2}
			\begin{aligned}
				& \# \{ i \in \llbracket N_1, N_2 \rrbracket : \varphi_t (i) < \varphi_t (k), \varphi_0 (i) > \varphi_0 (k) \} \le 2 T (\log N)^2; \\ 
				& \# \{ i \in \llbracket N_1, N_2 \rrbracket: \varphi_t (i) > \varphi_t (k), \varphi_0 (i) < \varphi_0 (k) \} \le 2T (\log N)^2.
			\end{aligned}
		\end{flalign}
		
		\noindent Indeed, to verify the first bound in \eqref{ki2}, observe (by our restriction to the event $\mathsf{E}_3$) that if $\varphi_t (i) < \varphi_t (k) - 2T (\log N)^2$ then $\varphi_0 (i) \le \varphi_t (i) + T (\log N)^2 < \varphi_t (k) - T(\log N)^2 \le \varphi_0 (k)$. In particular, we can only have $\varphi_t (i) < \varphi_t (k)$ and $\varphi_0 (i) > \varphi_0 (k)$ if $\varphi_t (i) \in \llbracket \varphi_t (k) - 2T (\log N)^2, \varphi_t (k) - 1 \rrbracket$, meaning that there are at most $2T(\log N)^2$ such indices $i$. This shows the first statement in \eqref{ki2}; the second is confirmed entirely analogously.
		
		Inserting \eqref{qt4}, \eqref{qt5}, \eqref{qt6}, and \eqref{ki2} into \eqref{qkt3}, we obtain for any $k \in \llbracket N_1, N_2 \rrbracket$ that 
		\begin{flalign*}
			\Bigg| Q_k & (t) - Q_k (0) - q_{N_1} (t) + q_{N_1} (0) - \lambda_k t - \mathfrak{C}(t)  \\
			&  + 2 \displaystyle\sum_{i:\varphi_t (i) < \varphi_t (k)} \log |\lambda_i - \lambda_k| - 2 \displaystyle\sum_{i:\varphi_0 (i) < \varphi_0 (k)} \log |\lambda_i - \lambda_k| \Bigg| \\ 
			& \qquad \qquad \qquad \le T (\log N)^3 + 5 T \log N +8T (\log N)^4 \le 9T (\log N)^4.
		\end{flalign*} 
		
		\noindent Summing this inequality over $k \in \llbracket N_1, N_2 \rrbracket$ with $\varphi_0 (k) \notin \llbracket N_1 + T (\log N)^6 + N^{1/100}, N_2 - T (\log N)^6 - N^{1/100} \rrbracket$ with \eqref{qkt2} over $k \in \llbracket N_1, N_2 \rrbracket$ with $\varphi_0 \in \llbracket N_1 + T(\log N)^6 + N^{1/100}, N_2 - T (\log N)^6 - N^{1/100} \rrbracket$, we obtain 
		\begin{flalign}
			\begin{aligned}
				\label{qt7} 
				\Bigg| & \displaystyle\sum_{k=N_1}^{N_2} ( q_k (t) - q_k (0) ) - t \displaystyle\sum_{j=1}^N \lambda_j + 2 \displaystyle\sum_{k=N_1}^{N_2} \displaystyle\sum_{i:\varphi_t (i) < \varphi_t (k)} \log |\lambda_i - \lambda_k| \\
				& \quad - 2 \displaystyle\sum_{k=N_1}^{N_2} \displaystyle\sum_{i:\varphi_0 (i) < \varphi_t (k)} \log |\lambda_i - \lambda_k| + N \cdot \big( q_{N_1} (0) - q_{N_1} (t) -  \mathfrak{C}(t) \big) \Bigg| \\
				& \qquad \qquad \qquad \qquad \le 2N(\log N)^6 + 18 T^2 (\log N)^{10} + 18T N^{1/100} (\log N)^4.
			\end{aligned} 
		\end{flalign} 
		
		\noindent The difference between of the third and fourth terms on the left side of \eqref{qt7} is equal to $0$, since both are equal to $\sum_{i \ne j} \log |\lambda_i - \lambda_j|$. The difference between the first and the second is also equal to $0$, as 
		\begin{flalign*}
			\displaystyle\sum_{k=N_1}^{N_2} ( q_k (t) - q_k (0) ) = \displaystyle\sum_{k=N_1}^{N_2} \displaystyle\int_0^t p_k (s) ds = \displaystyle\int_0^t \Tr \bm{L}(s) ds = t \cdot \Tr \bm{L}(0) = t \displaystyle\sum_{j=1}^N \lambda_j,
		\end{flalign*} 
		
		\noindent where the first statement follows from \eqref{qtpt}; the second from \eqref{abr} and \Cref{matrixl}; and the third and fourth from \Cref{ltt}. It follows from \eqref{qt7} that 
		\begin{flalign*}
			\big| q_{N_1} & (0) - q_{N_1} (t) - \mathfrak{C} (t) \big| \\ 
			& \le 2 (\log N)^6 + 18T^2 N^{-1} (\log N)^{10} + 18T N^{-99/100} (\log N)^4 \le (\log N)^{11},
		\end{flalign*} 
		
		\noindent where we used the fact that $T^2 \le N$. This establishes the lemma.
	\end{proof}

	\subsection{Proof of \Cref{ztlambda2}}
	
	\label{ProofQ2}

	In this section we establish \Cref{ztlambda2}. We first prove the following variant of it that fixes the time $t \in [0, T]$. Recall we adopt \Cref{lbetaeta} throughout.

	\begin{thm}
		
		\label{ztlambda4}
		
		For any $t \in [0, T]$, the following holds with overwhelming probability. For any $k \in \llbracket 1, N \rrbracket$ satisfying \eqref{n1n2k02}, we have
		\begin{flalign}
			\label{lambdak4}
			\Bigg| \lambda_k t - Q_k(t) + Q_k (0) - 2 \displaystyle\sum_{i: \varphi_t (i) < \varphi_t (k)}  \log |\lambda_{k} - \lambda_{i}| + 2 \displaystyle\sum_{i: \varphi_0 (i) < \varphi_0 (k)} \log |\lambda_k - \lambda_i| \Bigg| \le (\log N)^{12}.
		\end{flalign} 
		
	\end{thm}
	
	\begin{proof} 
		
		By \Cref{ztlambda3}, \eqref{lambdak3} holds with overwhelming probability, if $T^2 \le N$. It suffices to show such a bound continues to hold for larger values of $T \le N (\log N)^{-7}$ (and to weaken the constraint \eqref{n1n2k022} on $k$ to \eqref{n1n2k02}). 
		
		To that end, we first apply \eqref{lambdak3} on a Toda lattice on a larger interval (at thermal equilibrium), and then use comparison estimates (such as \Cref{lleigenvalues} and \Cref{a2p2}) to approximate the original Toda lattice by the enlarged one. To implement this, first let $\tilde{N}_1 \le \tilde{N}_2$ be integers satisfying 
		\begin{flalign} 
			\label{nn} 
			\tilde{N}_1 + T^2 N \le N_1 \le N_2 \le \tilde{N}_2 - T^2 N, \quad \text{and} \quad \tilde{N} \le N^5,
		\end{flalign}  
		
		\noindent where $\tilde{N} = \tilde{N}_2 - \tilde{N}_1 + 1$. Let $(\tilde{\bm{a}} (s); \tilde{\bm{b}} (s)) \in \mathbb{R}^{\tilde{N}} \times \mathbb{R}^{\tilde{N}}$ denote the Flaschka variables for a Toda lattice on $\llbracket \tilde{N}_1, \tilde{N}_2 \rrbracket$; letting $\tilde{\bm{a}} (s) = (\tilde{a}_{\tilde{N}_1} (s), \tilde{a}_{\tilde{N}_1+1} (s), \ldots , \tilde{a}_{\tilde{N}_2} (s))$ and $\tilde{\bm{b}} (s) = (\tilde{b}_{\tilde{N}_1} (s), \tilde{b}_{\tilde{N}_1+1} (s), \ldots , \tilde{b}_{\tilde{N}_2}(s))$, they satisfy $\tilde{a}_{\tilde{N}_2} (s) = 0$, and \eqref{derivativepa} holds for each $(j, t) \in \llbracket \tilde{N}_1, \tilde{N}_2 \rrbracket \times \mathbb{R}$. We sample the initial data $(\tilde{\bm{a}}(0);\tilde{\bm{b}}(0))$ according to the thermal equilibrium $\mu_{\beta,\theta;\tilde{N}-1,\tilde{N}}$ of \Cref{mubeta2}; we couple $(\tilde{\bm{a}}(0);\tilde{\bm{b}}(0))$ with $(\bm{a}(0);\bm{b}(0))$ so that $(\tilde{a}_i(0),\tilde{b}_i(0)) = (a_i(0), b_i(0))$ for all $i \in \llbracket N_1, N_2 - 1 \rrbracket$. 
		
		For any $s \in \mathbb{R}$, denote the Lax matrix associated with $(\tilde{\bm{a}}(s);\tilde{\bm{b}}(s))$ (as in \Cref{matrixl}) by $\tilde{\bm{L}}(s) = [\tilde{L}_{ij}(s)] \in \SymMat_{\tilde{N}\times\tilde{N}}$. Set $\eig \tilde{\bm{L}}(s) = (\tilde{\lambda}_1, \tilde{\lambda}_2, \ldots , \tilde{\lambda}_{\tilde{N}})$, which does not depend on $s$ (by \Cref{ltt}). For each $s \in \mathbb{R}_{\ge 0}$, let $\tilde{\varphi}_s : \llbracket 1, \tilde{N} \rrbracket \rightarrow \llbracket \tilde{N}_1, \tilde{N}_2 \rrbracket$ denote a $\zeta$-localization center bijection for $\tilde{\bm{L}}(s)$. Further let $(\tilde{\bm{p}} (s); \tilde{\bm{q}}(s)) \in \mathbb{R}^{\tilde{N}} \times \mathbb{R}^{\tilde{N}}$ denote the Toda state space variables associated with $(\tilde{\bm{a}}(s); \tilde{\bm{b}}(s))$, as in \Cref{Open}, where we have indexed $\tilde{\bm{p}} (s) = (\tilde{p}_{\tilde{N}_1} (s), \tilde{p}_{\tilde{N}_1+1} (s), \ldots , \tilde{p}_{\tilde{N}_2} (s))$ and $\tilde{\bm{q}} (s) = (\tilde{q}_{\tilde{N}_1} (s), \tilde{q}_{\tilde{N}_1+1} (s), \ldots , \tilde{q}_{\tilde{N}_2} (s))$. For each $s \in \mathbb{R}$ and $i \in \llbracket 1, \tilde{N} \rrbracket$, denote $\tilde{Q}_i (s) = \tilde{q}_{\tilde{\varphi}_s (i)} (s)$.
		
		By \eqref{lambdak3} and the fact that $T^2 \le \tilde{N} \le N^5$, there exists an overwhelmingly probable event $\mathsf{E}_1$, on which we have 
		\begin{flalign}
			\label{lambda22} 
			\begin{aligned} 
				\Bigg| \tilde{\lambda}_m t - \tilde{Q}_m (t) + \tilde{Q}_m (0) - 2 \displaystyle\sum_{i: \tilde{\varphi}_t (i) < \tilde{\varphi}_t (m)} \log |\tilde{\lambda}_m - \tilde{\lambda}_i| + 2&  \displaystyle\sum_{i: \tilde{\varphi}_0 (i) < \tilde{\varphi}_0 (m)} \log |\tilde{\lambda}_m - \tilde{\lambda}_i| \Bigg| \\
				& \qquad \qquad \le 5^{12} (\log N)^{11},
			\end{aligned} 
		\end{flalign}
		
		\noindent for any $m \in \llbracket 1, \tilde{N} \rrbracket$ satisfying 
		\begin{flalign}
			\label{2m} 
			\tilde{N}_1 + T (\log \tilde{N})^6 + \tilde{N}^{1/100} \le \tilde{\varphi}_0 (m) \le \tilde{N}_2 - T(\log \tilde{N})^6 - \tilde{N}^{1/100}.
		\end{flalign}
		
		\noindent In what follows, we restrict to $\mathsf{E}_1$. We must therefore approximate the parameters $(\tilde{Q}_j (s), \tilde{\lambda}_j, \tilde{\varphi}_s (j))$ by $(Q_j (s), \lambda_j, \varphi_s (j))$, which we will do using \Cref{a2p2} and \Cref{lleigenvalues} (with the fact that $\bm{q}(0)$ and $\tilde{\bm{q}}(0)$ coincide near the origin). 
		
		To do so, we restrict to several additional events. Recalling \Cref{adelta}, restrict to the event 
		\begin{flalign*} 
			\mathsf{E}_2 & = \mathsf{SEP}_{\bm{L}(0)} (e^{-(\log N)^2}) \cap \mathsf{SEP}_{\tilde{\bm{L}}(0)} (e^{-(\log N)^2}) \\
			& \qquad \quad \cap \bigcap_{r \ge 0} \mathsf{BND}_{\bm{L}(r)} \Big( \displaystyle\frac{\log N}{1600} \Big) \cap \mathsf{BND}_{\tilde{\bm{L}}(r)} \Big( \displaystyle\frac{\log N}{1600} \Big),
		\end{flalign*} 
		
		\noindent which we may by \Cref{l0eigenvalues} and \Cref{eigenvalues0}. Further restrict to the event $\mathsf{E}_3$ on which \Cref{qijsalpha} holds, with the $\bm{q}$ there equal to both $\bm{q}$ here and $\tilde{\bm{q}}$ here. Additionally restrict to the event $\mathsf{E}_4$ on which \Cref{centert} holds, with the $(\bm{L};\varphi_j)$ there equal to both $(\bm{L}(0); \varphi_0 (j))$ and $(\tilde{\bm{L}}(0); \tilde{\varphi}_0 (j))$ here, and on which \Cref{centerdistance} holds, with the $\bm{L}(t)$ there equal to both $\bm{L}(t)$ and $\tilde{\bm{L}}(t)$ here.
		
		Now, setting $K = T(\log N)^3/2$, we have by (the $A = \log N / 400$ case of) \Cref{a2p2} that 
		\begin{flalign}
			\label{l11} 
			\displaystyle\sup_{t \in [0, T]} \displaystyle\max_{i \in \llbracket N_1 + K, N_2 - K \rrbracket} \big( |a_i (t) - \tilde{a}_i (t)| + |b_i (t) - \tilde{b}_i (t)| \big) \le e^{-(\log N)^3/10}. 
		\end{flalign}  
		
		\noindent Next, by \Cref{ltl0}, there are random matrices $\bm{M} = [M_{ij}] \in \SymMat_{\llbracket N_1, N_2 \rrbracket}$ and $\tilde{\bm{M}} = [\tilde{M}_{ij}] \in \SymMat_{\llbracket \tilde{N}_1, \tilde{N}_2 \rrbracket}$ with the same laws as $\bm{L}(0)$ and $\tilde{\bm{L}}(0)$, respectively, and an overwhelmingly probable event $\mathsf{E}_5$, on which we have
		\begin{flalign}
			\label{lm11}
			\begin{aligned}  
				& \displaystyle\max_{i,j \in \llbracket N_1+K, N_2-K \rrbracket} | M_{ij} - L_{ij} (t) | \le e^{-(\log N)^3 / 10}; \\
				& \displaystyle\max_{i,j \in \llbracket N_1+K, N_2-K \rrbracket} | \tilde{M}_{ij} - \tilde{L}_{ij} (t) | \le e^{-(\log N)^3/10}.
			\end{aligned}
		\end{flalign} 
		
		\noindent Restricting to $\mathsf{E}_5$, we may therefore (by \eqref{l11} and \eqref{lm11}) further restrict to the event $\mathsf{E}_6$ on which \Cref{lleigenvalues2} and \Cref{lleigenvalues} hold, with the $(\delta; \mathcal{D})$ there equal to $(2e^{-(\log N)^3/10}; \llbracket \tilde{N}_1, \tilde{N}_2 \rrbracket \setminus \llbracket N_1+K, N_2-K \rrbracket)$ here, and the $(\bm{L}, \tilde{\bm{L}})$ equal to any of $(\bm{L}(0), \tilde{\bm{L}}(0))$, $(\bm{M},\bm{L}(t))$, $(\tilde{\bm{M}},\tilde{\bm{L}}(t))$, and $(\tilde{\bm{M}}, \bm{M})$ here (where we view $\bm{M}$ and $\bm{L}(0)$ as $\tilde{N} \times \tilde{N}$ matrices by setting $M_{ij} = L_{ij} (0) = 0$ if $(i,j) \in \llbracket \tilde{N}_1, \tilde{N}_2 \rrbracket \setminus \llbracket N_1, N_2 \rrbracket$). 
		
		Now, let $i \in \llbracket 1, N \rrbracket$ be any index satisfying 
		\begin{flalign}
			\label{0j}  
			N_1 + 3 T(\log \tilde{N})^3 \le \varphi_0 (i) \le N_2 -  3 T (\log \tilde{N})^3.
		\end{flalign} 
		Our restriction to $\mathsf{E}_4$ implies from \eqref{ujms} that $N_1 +  2 T(\log \tilde{N})^3 \le \varphi_t (i) \le N_2 - 2 T(\log \tilde{N})^3$. Recalling that $T \ge N^{1/2} \ge 250$ (as otherwise the theorem follows from \Cref{ztlambda3}), this will enable us to use our restriction to $\mathsf{E}_6$ to apply \Cref{lleigenvalues} twice, first with the $(\bm{L}, \tilde{\bm{L}})$ there equal to $(\bm{M}; \bm{L}(t))$ here and then with them equal to $(\tilde{\bm{M}}; \bm{M})$, and also to apply \Cref{lleigenvalues2} with the $(\bm{L}; \tilde{\bm{L}})$ there equal to $(\tilde{\bm{M}}; \tilde{\bm{L}}(t))$. The first yields a constant $c_2>0$ and an eigenvalue $\mu \in \eig \bm{M}$ such that $|\mu - \lambda_i| \le c_2^{-1} e^{-c_2 (\log N)^3}$ and $\varphi_t (i)$ is an $N^{-1} \zeta$-localization center for $\mu$ with respect to $\bm{M}$. The second yields an eigenvalue $\tilde{\mu} \in \eig \tilde{\bm{M}}$ such that $|\mu - \tilde{\mu}| \le c_2^{-1} e^{-c_2 (\log N)^3}$ and $\varphi_t (i)$ is an $N^{-2} \zeta$-localization center for $\tilde{\mu}$ with respect to $\tilde{\bm{M}}$. The third yields an index $\sigma(i) \in \llbracket 1, \tilde{N} \rrbracket$ such that $|\tilde{\mu} - \tilde{\lambda}_{\sigma(i)}| \le c_2^{-1} e^{-c_2 (\log N)^3}$ and $\varphi_t (i) $ is an $N^{-3} \zeta$-localization center of $\tilde{\lambda}_{\sigma(i)}$ with respect to $\tilde{\bm{L}}(t)$. By \Cref{centerdistance} (and our restriction to $\mathsf{E}_4$), we have $|\varphi_t (i) - \tilde{\varphi}_t (\sigma(i))| \le (\log \tilde{N})^3$, and hence 
		\begin{flalign}
			\label{lambdalambdaj} 
			\begin{aligned} 
				|\lambda_i - \tilde{\lambda}_{\sigma(i)}  | \le 3c_2^{-1} & e^{-c_2 (\log N)^3}; \quad \big| \varphi_t (i) - \tilde{\varphi}_t (\sigma(i)) \big| \le  (\log \tilde{N})^3; \\
				& \big| \varphi_0 (i) - \tilde{\varphi}_0 (\sigma(i)) \big| \le  (\log \tilde{N})^3,
			\end{aligned}
		\end{flalign} 
		
		\noindent where the last bound follows again from similar reasoning to the second (taken at $t=0$). Observe by our restriction to $\mathsf{E}_2 \subseteq \mathsf{SEP}_{\tilde{\bm{L}}(t)} (e^{-(\log N)^2})$ and the first inequality in \eqref{lambdalambdaj} that the map from $i$ to $\sigma(i)$ is injective. 
		
		Then, by our restriction to $\mathsf{E}_1$ on which \eqref{lambda22} holds (taking the $m$ there to be $\sigma(k)$, which satisfies \eqref{2m} by \eqref{n1n2k02} and \eqref{lambdalambdaj}), and the fact that $\llbracket N_1 + (\log N)^6, N_2 - (\log N)^6 \rrbracket \subseteq \llbracket N_1 + 3 (\log \tilde{N})^3, N_2 - 3 (\log \tilde{N})^3 \rrbracket \subseteq \llbracket \tilde{N}_1 + T (\log \tilde{N})^6 + \tilde{N}^{1/100}, \tilde{N}_2 - T(\log \tilde{N})^6 - \tilde{N}^{1/100} \rrbracket$ (by \eqref{nn}), it suffices to show that 
		\begin{flalign}
			\label{2lambdak2} 
			\begin{aligned} 
				& \qquad \qquad \qquad \qquad \quad |\lambda_k - \tilde{\lambda}_{\sigma(k)}| \le T^{-1}; \\
				&  |Q_k (0) - \tilde{Q}_{\sigma(k)} (0)| \le (\log N)^5; \qquad |Q_k (t) - \tilde{Q}_{\sigma(k)} (t)| \le (\log N)^5,
			\end{aligned} 
		\end{flalign} 
		
		\noindent and 
		\begin{flalign}
			\label{sum2} 
			\begin{aligned}
				\Bigg| & \displaystyle\sum_{i = \tilde{N}_1}^{\tilde{N}_2} ( \mathbbm{1}_{\tilde{\varphi}_t (i) < \tilde{\varphi}_t (\sigma(k))} - \mathbbm{1}_{\tilde{\varphi}_0 (i) < \tilde{\varphi}_0 (\sigma(k))}) \cdot \log |\tilde{\lambda}_{\sigma(k)} - \tilde{\lambda}_i| \\ 
				& \qquad - \displaystyle\sum_{i = N_1}^{N_2} (\mathbbm{1}_{\varphi_t (i) < \varphi_t (k)} - \mathbbm{1}_{\varphi_0 (i) < \varphi_0 (k)}) \cdot \log |\lambda_k - \lambda_i| \Bigg| \le (\log N)^9.
			\end{aligned} 
		\end{flalign}

		Taking $i = k$ in the first statement of \eqref{lambdalambdaj} (using \eqref{n1n2k02} to verify \eqref{0j}) yields the first bound in \eqref{2lambdak2}. The proofs of the second and third are entirely analogous to each other, so we only verify the third. To do so, observe by \eqref{qtpt}, \eqref{abr}, and \eqref{l11} that, for any $i \in \llbracket N_1+K,N_2-K \rrbracket$, 
		\begin{flalign}
			\label{q3} 
			| q_i (t) - \tilde{q}_i (t) | \le \displaystyle\int_0^t | b_i (s) - \tilde{b}_i (s) | ds \le 2T \cdot e^{-(\log N)^3/10} \le 1,
		\end{flalign} 
		
		\noindent as $q_i (0) = \tilde{q}_i (0)$. Additionally, \eqref{qiqjs4} implies for $i, j \in \llbracket N_1 + T(\log N)^3, N_2 - T (\log N)^3 \rrbracket$ that 
		\begin{flalign}
			\label{alphaq} 
			| q_i (t) - q_j (t) | \le \alpha \cdot |i-j| + |i-j|^{1/2} \cdot (\log N)^2,
		\end{flalign} 
		
		\noindent by our restriction to $\mathsf{E}_3$. Combining this with \eqref{q3}, \eqref{lambdalambdaj}, and the definition \eqref{qjs2} of $Q_k$ yields 
		\begin{flalign*}
			| Q_k (t) - \tilde{Q}_{\sigma(k)} (t) | & \le | q_{\varphi_t (k)} (t) - q_{\tilde{\varphi}_t (\sigma(k))} (t) | + | \tilde{q}_{\varphi_t (k)} (t) - \tilde{q}_{\tilde{\varphi}_t (\sigma(k))} (t) | \\
			& \le \alpha \cdot | \varphi_t (k) - \tilde{\varphi}_t (\sigma(k))| +  | \varphi_t (k) - \tilde{\varphi}_t (\sigma(k)) |^{1/2} \cdot (\log N)^2 + 1 \\ 
			&  \le  (\log N)^5,
		\end{flalign*}
		
		\noindent which confirms the third bound in \eqref{2lambdak2}. 
		
		It therefore remains to verify \eqref{sum2}. To that end, let 
		\begin{flalign*} 
			& \mathcal{S} = \big\{ i \in \llbracket 1, N \rrbracket : \varphi_0 (i) \in \llbracket N_1 + 2T(\log N)^6, N_2 - 2T(\log N)^6 \rrbracket \big\}; \\
			& \tilde{\mathcal{S}} = \big\{ i \in \llbracket 1, \tilde{N} \rrbracket : \tilde{\varphi}_0 (i) \in \llbracket N_1 + 2T(\log N)^6, N_2 - 2T(\log N)^6 \rrbracket \big\},
		\end{flalign*} 
		
		\noindent and $\sigma (\mathcal{S}) = \{ \sigma(i): i \in \mathcal{S} \}$. First suppose that $i \in \llbracket 1, N \rrbracket \setminus \mathcal{S}$. Then, since \eqref{ujms} (and our restriction to $\mathsf{E}_4$) implies that $|\varphi_t (i) - \varphi_0 (i)| \le T (\log N)^2$, we have from \eqref{n1n2k02} that $\mathbbm{1}_{\varphi_t (i) < \varphi_t (k)} = \mathbbm{1}_{\varphi_0 (i) < \varphi_0 (k)}$. Hence, we may restrict the second sum in \eqref{sum2} to $i \in \mathcal{S}$. Similarly, we may restrict the first sum in \eqref{sum2} to $i \in \tilde{\mathcal{S}}$. Therefore, we have 
		\begin{flalign}
			\label{sum4}  
			\begin{aligned}
				& \Bigg| \displaystyle\sum_{i = \tilde{N}_1}^{\tilde{N}_2} ( \mathbbm{1}_{\tilde{\varphi}_t (i) < \tilde{\varphi}_t (\sigma(k))} - \mathbbm{1}_{\tilde{\varphi}_0 (i) < \tilde{\varphi}_0 (\sigma(k))}) \cdot \log |\tilde{\lambda}_{\sigma(k)} - \tilde{\lambda}_i| \\ 
				& \qquad \qquad -  \displaystyle\sum_{i = N_1}^{N_2} (\mathbbm{1}_{\varphi_t (i) < \varphi_t (k)} - \mathbbm{1}_{\varphi_0 (i) < \varphi_0 (k)}) \cdot \log |\lambda_k - \lambda_i| \Bigg|  \le A + B + C.
			\end{aligned} 
		\end{flalign} 
		
		\noindent where 
		\begin{flalign*} 
			A & =  \displaystyle\sum_{i \in \mathcal{S}} \big| \log |\tilde{\lambda}_{\sigma(k)} - \tilde{\lambda}_{\sigma(i)}| - \log |\lambda_k - \lambda_i| \big|; \\
			B & = \displaystyle\sum_{i \in \mathcal{S}} \big( \big| \log| \tilde{\lambda}_{\sigma(k)} - \tilde{\lambda}_{\sigma(i)}| \big| + \big| \log |\lambda_k - \lambda_i| \big| \big) \\
			& \qquad \qquad \qquad \times \big| \big( \mathbbm{1}_{\tilde{\varphi}_t (\sigma(i)) < \tilde{\varphi}_t (\sigma(k))} - \mathbbm{1}_{\varphi_0 (\sigma(i)) < \tilde{\varphi}_0 (\sigma(k))} \big) - \big( \mathbbm{1}_{\varphi_t (i) < \varphi_t (k)} - \mathbbm{1}_{\varphi_0 (i) < \varphi_0 (k)} \big) \big|; \\
			C & = \# \big( (\tilde{\mathcal{S}} \cup \sigma(\mathcal{S})) \setminus (\tilde{\mathcal{S}} \cap \sigma(\mathcal{S})) \big) \cdot \displaystyle\max_{i \ne \sigma(k)} \big|\log |\tilde{\lambda}_{\sigma(k)} - \tilde{\lambda}_i| \big|.
		\end{flalign*}
		
		By our restriction to the event $\mathsf{E}_2$, we have 
		\begin{flalign} 
			\label{lambda3} 
			e^{-(\log N)^2} \le  |\lambda_i - \lambda_j|  \le 2 \log N, \quad \text{and} \quad e^{-(\log N)^2} \le |\tilde{\lambda}_i - \tilde{\lambda}_j| \le 2 \log N, \quad \text{whenever $i \ne j$}.
		\end{flalign}
		
		\noindent  Together with the first bound in \eqref{lambdalambdaj}, this implies that $|\log |\tilde{\lambda}_{\sigma(k)} - \tilde{\lambda}_{\sigma(i)}| - \log |\lambda_k-\lambda_i|| \le e^{-(\log N)^2}$ for each $i \in \mathcal{S}$, and so 
		\begin{flalign} 
			\label{a} 
			A \le N e^{-(\log N)^2} \le 1. 
		\end{flalign} 
		
		\noindent By the second and third bounds in \eqref{lambdalambdaj}, that are at most $500 (\log N)^3$ indices $i \in \mathcal{S}$ for which 
		\begin{flalign*} 
			\big( \mathbbm{1}_{\tilde{\varphi}_t (\sigma(i)) < \tilde{\varphi}_t (\sigma(k))} - \mathbbm{1}_{\varphi_0 (\sigma(i)) < \tilde{\varphi}_0 (\sigma(k))} \big) - \big( \mathbbm{1}_{\varphi_t (i) < \varphi_t (k)} - \mathbbm{1}_{\varphi_0 (i) < \varphi_0 (k)} \big)  \ne 0.
		\end{flalign*} 
		
		\noindent Since this quantity is always bounded above by $2$, \eqref{lambda3} implies that 
		\begin{flalign}
			\label{b} 
			B \le 2000 (\log N)^5.
		\end{flalign} 
		
		\noindent By the last inequality in \eqref{lambdalambdaj} and the injectivity of $\sigma$, it is quickly verified that $ \# ( (\tilde{\mathcal{S}} \cup \sigma(\mathcal{S})) \setminus (\tilde{\mathcal{S}} \cap \sigma(\mathcal{S})) ) \le 500 (\log N)^3$, from which we deduce again by \eqref{lambda3} that $C \le 500 (\log N)^5$. This, with \eqref{sum4}, \eqref{a}, and \eqref{b}, yields \eqref{sum2} and thus the theorem.				
	\end{proof}
	
	Next, establish \Cref{ztlambda2}, which proves the bound in \Cref{ztlambda4} uniformly in $t \in [0, T]$. 
	
	\begin{proof}[Proof of \Cref{ztlambda2} (Outline) ]
		
		This theorem will follow from applying \Cref{ztlambda4} on a mesh of times $\mathcal{T} \subset [0, T]$, and using continuity bounds to show that it continues to hold on all of $[0, T]$. Since the proof is similar to the discussions at the end of the proofs of \Cref{ztlambda} and \Cref{ztlambda4}, we only outline it. 
		
		Let $\mathfrak{c}$ denote the constant $c$ from \Cref{ztlambda4} and, for any $t \in [0, T]$, let $\mathsf{E}(t)$ denote the event on which \eqref{lambdak4} holds for all $k \in \llbracket 1, N \rrbracket$ satisfying \eqref{n1n2k02}. Denoting $\mathfrak{c}' = \mathfrak{c}/2$, let $\mathcal{T} \subset [0, T]$ denote a $e^{-\mathfrak{c}'(\log N)^2}$-mesh of $[0, T]$; by a union bound, we may restrict to the event $\mathsf{F}_1 = \bigcap_{s \in \mathcal{T}} \mathsf{E}(s)$. We further restrict to the event $\mathsf{F}_2$ on which \Cref{centerdistance2} holds, with the $\mathfrak{d}$ there equal to $\mathfrak{c}'$ here; we additionally restrict to the intersection $\mathsf{F}_3$ of all of the events from the proof of \Cref{ztlambda4}.
		
		Now fix $t \in [0, T]$, and let $s \in \mathcal{S}$ be such that $|s-t| \le e^{-\mathfrak{c}' (\log N)^2}$. By our restriction to $\mathsf{F}_1$, we have from \eqref{lambdak4} that 
		\begin{flalign*}
			\Bigg| \lambda_k s - Q_k (s) + Q_k (0) - 2 \displaystyle\sum_{i: \varphi_s (i) < \varphi_s (k)} \log |\lambda_k - \lambda_i| + 2 \displaystyle\sum_{i: \varphi_0 (i) < \varphi_0 (k)} \log |\lambda_k - \lambda_i| \Bigg| \le  (\log N)^{12}.
		\end{flalign*}
		
		\noindent Since $|\lambda_k| \cdot |s-t| \le \log N \cdot e^{-\mathfrak{c}' (\log N)^2} \le 1$ and $|\log |\lambda_k-\lambda_i|| \le (\log N)^2$ for $i \ne k$ (both by our restriction to $\mathsf{BND}_{\bm{L}(0)} (\log N) \cap \mathsf{SEP}_{\bm{L}(0)} (e^{-(\log N)^2})$, from the event $\mathsf{E}_2$ in the proof of \Cref{ztlambda4}), it suffices to show that
		\begin{flalign}
			\label{q5} 
			| Q_k (t) - Q_k (s)| \le (\log N)^{10}; \qquad \displaystyle\sum_{i=N_1}^{N_2} |  \mathbbm{1}_{\varphi_t (i) < \varphi_t (k)} - \mathbbm{1}_{\varphi_s (i) < \varphi_s (k)}| \le (\log N)^8.
		\end{flalign} 
		
		The first bound in \eqref{q5} follows from the fact that 
		\begin{flalign*}
			|Q_k (t) - Q_k (s)| & \le | q_{\varphi_t (k)} (t) - q_{\varphi_s (k)} (t) | + | q_{\varphi_s (k)} (t) - q_{\varphi_s (k)} (s) | \\ 
			& \le  \alpha \cdot | \varphi_t (k) - \varphi_s (k) | + | \varphi_t (k) - \varphi_s (k) |^{1/2} \cdot (\log N)^2 + |s-t| \cdot  \displaystyle\max_{r \in [s, t]} | b_{\varphi_s (k)} (s) | \\
			& \le (\log N)^5. 
		\end{flalign*} 
		
		\noindent where the first statement follows from the definition \eqref{qjs2} of $Q_k$; the second follows from \eqref{alphaq}, \eqref{qtpt}, and \eqref{abr}; and the third follows from the fact that $|\varphi_t (k) - \varphi_s (k)| \le (\log N)^3$ (by \Cref{centerdistance2}, as we restricted to $\mathsf{F}_2$), \Cref{matrixl}, and our restriction to $\mathsf{F}_3 \subseteq \bigcap_{r \ge 0} \mathsf{BND}_{\bm{L}(r)} (\log N)$. The proof of the second bound in \eqref{q5} is entirely analogous to that of \eqref{sum0} (using the fact that $|\varphi_t (j) - \varphi_s (j)| \le (\log N)^3$ for $\varphi_s (j) \in \llbracket N_1 + T(\log N)^3, N_2 - T (\log N)^3 \rrbracket$, by \Cref{centerdistance2} and our restriction to $\mathsf{F}_2$). This establishes the theorem.
	\end{proof}

	\appendix

	\section{Proofs of Results From \Cref{MatrixLattice}}
	
	\label{Proof2}

	\subsection{Proofs of \Cref{aintegral} and \Cref{qij}}
	
	\label{ProofIntegral}

	\begin{proof}[Proof of \Cref{aintegral}]
		
		Observe that the random variable $\mathfrak{r} \in \mathbb{R}$ has density $\mathbb{P} [\mathfrak{r} \in (r,r+dr)] = \beta^{\theta} \cdot \Gamma(\theta)^{-1} \cdot e^{-\theta r - e^{-\beta e^{-r}}} dr$. Thus, denoting
		\begin{flalign}
			\label{ftheta} 
			F(\theta) =  \displaystyle\int_{-\infty}^{\infty} e^{-\theta r -\beta e^{-r}} dr,
		\end{flalign}
		
		\noindent we have 
		\begin{flalign}
			\label{fthetaderivative}
			F'(\theta) = - \displaystyle\int_{-\infty}^{\infty} r e^{-\theta r - \beta e^{-r}} dr = -F(\theta) \cdot \mathbb{E} [\mathfrak{r}].
		\end{flalign}
		
		\noindent Changing variables $u = \beta e^{-r}$ in \eqref{ftheta} yields
		\begin{flalign*}
			F(\theta) =  \beta^{-\theta} \displaystyle\int_0^{\infty} u^{\theta-1} e^{-u} du = \beta^{-\theta} \cdot \Gamma(\theta),
		\end{flalign*} 
		
		\noindent so the lemma follows from \eqref{fthetaderivative}, together with the fact (recall \eqref{alpha}) that $\alpha = -F'(\theta) \cdot F(\theta)^{-1}$.
	\end{proof}

	\begin{proof}[Proof of \Cref{qij}]
		
		We may assume that $i < j$; in what follows, we abbreviate $q_k = q_k (0)$ and $a_k = a_k (0)$ for each $k$.  By \eqref{q00}, we have $q_{m+1}  - q_m  = -2 \log a_m$ for any $m \in \llbracket i+1, j \rrbracket$. Since $(\bm{a}; \bm{b})$ is sampled from the measure $\mu_{\beta,\theta;N-1,N}$ from \Cref{mubeta2}, all such $a_m$ are mutually independent with the same law $\mathfrak{a}$ of density $\mathbb{P} [\mathfrak{a} \in (a,a+da)] = 2\beta^{\theta} \cdot \Gamma(\theta)^{-1} \cdot a^{2\theta-1} e^{-\beta a^2} da$. Hence, denoting $\mathfrak{r} = -2 \log \mathfrak{a}$, each $q_{m+1} - q_m$ has law $\mathfrak{r}$. By \Cref{aintegral}, we have $\alpha = \mathbb{E} [\mathfrak{r}]$, so we find for any $u \in (0, 1 / 2)$ that 
		\begin{flalign}
			\label{qiqjr}
			\mathbb{P} \big[ q_j  - q_i \ge \alpha(j-i) + R \big] \le e^{-uR} \cdot \mathbb{E} [e^{u(q_j-q_i)} \cdot e^{\alpha u (i-j)}] = \big( e^{-uR/(j-i)} \cdot \mathbb{E}[e^{u(\mathfrak{r}-\alpha)}] \big)^{j-i},
		\end{flalign}
		
		\noindent where the first statement follows from a Markov bound, and the second from the mutual independence of the $(q_{m+1}-q_m)$ with the finiteness of the moment generating function $\mathbb{E} [e^{u\mathfrak{r}}] = \mathbb{E} [\mathfrak{a}^{-2u}] < \infty$ for $u \in (0, 1 / 2)$. Setting $u = \min \{ cR / (j-i), c \}$ for some sufficiently small constant $c > 0$, and using the (quickly verified) fact that $e^{-uR/(j-i)} \cdot \mathbb{E}[e^{u(\mathfrak{r}-\alpha)}] \le e^{-c u^2}$ for this $u$, yields from \eqref{qiqjr} that 
		\begin{flalign*} 
			\mathbb{P} \big[ q_j - q_i \ge \alpha(j-i) + R \big] \le e^{-c u^2 |i-j|} \le e^{-c^3 R^2/|j-i|} + e^{-c^3 |i-j|}.
		\end{flalign*} 
		
		\noindent By similar reasoning, we have $\mathbb{P} [ q_j - q_i \le \alpha(j-i) - R ] \le e^{-c^3 R^2/|i-j|}+e^{-c^3|i-j|}$, and so the lemma follows by a union bound.
	\end{proof}

	\subsection{Proofs of Results From \Cref{LinearOpen} and \Cref{ResolventG}} 
	
	\label{ProofR}
	
	\begin{proof}[Proof of \Cref{abltt}]

		Let $\lambda \in \eig \bm{L} (t)$ denote the eigenvalue of $\bm{L}(t)$ with maximal absolute value; if more than one exists, we select one arbitrarily. Then, 
		\begin{flalign}
			\label{lambda01}
			|\lambda| \le \displaystyle\max_{i \in \mathscr{I}} \displaystyle\sum_{j \in \mathscr{I}} | L_{ij} (t') | \le 2 \big( A(t') + B(t') \big). 
		\end{flalign}
		
		\noindent where the first inequality holds since $\lambda \in \eig \bm{L}(t')$ (by \Cref{ltt}), and the second inequality holds by \Cref{matrixl}. Moreover, 	
		\begin{flalign}
			\label{lambda02}
			|\lambda| \ge \displaystyle\max_{i \in \mathscr{I}} | L_{ii} (t) | = B(t); \qquad |\lambda| \ge \displaystyle\frac{1}{2} \cdot \displaystyle\max_{i, j \in \mathscr{I}} \big| L_{ii} (t) + 2L_{ij} (t) + L_{jj} (t) \big| \ge A(t) - B(t),
		\end{flalign} 
		
		\noindent where, in both statements, the first inequality follows from applying the min-max principle (to $\bm{L}(t)$ if $|\lambda| \in \eig \bm{L}(t)$, and to $-\bm{L}(t)$ if $-|\lambda| \in \eig \bm{L}(t)$), and the second inequality follows from \Cref{matrixl}. Combining \eqref{lambda01} and \eqref{lambda02} yields $A(t) + B(t) \le 3 |\lambda| \le 6 ( A(t') + B(t') )$, confirming the lemma.
	\end{proof}

	\begin{proof}[Proof of \Cref{abgh}]
		
		Set $\eig \bm{A} = (\lambda_1, \lambda_2, \ldots , \lambda_n)$ and $\eig \bm{B} = (\mu_1, \mu_2, \ldots , \mu_n)$. Further let $(\bm{u}_j)$ and $(\bm{v}_j)$, for $j \in \llbracket 1, n \rrbracket$, denote orthonormal eigenbases for $\bm{A}$ and $\bm{B}$, respectively. In this way, $\bm{u}_j = ( u_j (i))_{i \in \mathscr{I}}$ is an eigenvector of $\bm{A}$ with eigenvalue $\lambda_j$, and $\bm{v}_j = ( v_j (i) )_{i \in \mathscr{I}}$ is an eigenvector of $\bm{B}$ with eigenvalue $\mu_j$, for each $j \in \llbracket 1, n \rrbracket$. Let $k \in \llbracket 1, n \rrbracket$ be such that $\lambda_k = \lambda$; we may assume that the eigenbasis $(\bm{u}_j)$ includes $\bm{u}$, in particular, that $\bm{u}_k = \bm{u}$. Then,
		\begin{flalign}
			\label{g2k}
			\Imaginary G_{\varphi \varphi} (z) = \displaystyle\sum_{j=1}^n u_j (\varphi)^2 \cdot \Imaginary (\lambda_j - z)^{-1} \ge u_k (\varphi)^2 \cdot \Imaginary (\lambda - z)^{-1} \ge \eta^{-1} \zeta^2. 
		\end{flalign}
		
		\noindent where in the first statement we used \eqref{gsum}; in the second we used the bound $\Imaginary (\lambda_j - z)^{-1} \ge 0$; and in the third we used the fact that $z = \lambda + \mathrm{i} \eta$ and the second estimate in \eqref{deltaetachi}. It follows that
		\begin{flalign}
			\label{sumvj2} 
			\Bigg| \displaystyle\sum_{j = 1}^n \displaystyle\frac{v_j (\varphi)^2}{\mu_j - z} \Bigg| = | H_{\varphi \varphi} (z) | \ge \big| G_{\varphi \varphi} (z) \big| - \delta \ge \eta^{-1} \zeta^2 - \delta \ge (2\eta)^{-1} \zeta^2,
		\end{flalign}
		
		\noindent where the first statement follows from \eqref{gsum}; the second from the third bound in \eqref{deltaetachi}; the third from \eqref{g2k}; and the fourth from the first bound in \eqref{deltaetachi}. 
		
		Therefore, there exists $m \in \llbracket 1, n \rrbracket$ such that $|\mu_m - \lambda| \le 3n \zeta^{-2} \eta$ and $| v_m (\varphi)| \ge (6n)^{-1/2} \zeta$. Indeed, assuming otherwise, we would have
		\begin{flalign*}
			(2 \eta)^{-1} \zeta^2 \le \Bigg| \displaystyle\sum_{j=1}^n \displaystyle\frac{v_j (\varphi)^2}{\mu_j - z} \Bigg| < n \cdot (3n \zeta^{-2} \eta)^{-1} + n \cdot \eta^{-1} \cdot \big( (6n)^{-1/2} \zeta \big)^2 = (2 \eta)^{-1} \zeta^2,
		\end{flalign*}
		
		\noindent where the first bound follows from \eqref{sumvj2} and the second from our assumption (with the fact that $z = \lambda + \mathrm{i} \eta$). This is a contradiction, which establishes the lemma.
	\end{proof}

	\subsection{Proofs of Results From \Cref{EigenvectorM}}
	
	\label{ProofMatrixS}
	
	\begin{proof}[Proof of \Cref{n1n2u}]
		
		For any $E \in \mathbb{R} \setminus \eig \bm{M}$, denote the resolvent $\bm{G} (E) = [ G_{ij} (E) ] = (\bm{M} - E)^{-1}$. By \eqref{gsum} (with the fact by \cite[Proposition 2.40(a)]{PRM} that all eigenvalues of $\bm{M}$ are mutually disjoint), we have 
		\begin{flalign}
			\label{un1un2}
			u_{N_1} \cdot u_{N_2} = \displaystyle\lim_{E \rightarrow \mu} (\mu - E) \cdot G_{N_1 N_2} (E). 
		\end{flalign}
		
		\noindent To evaluate the right side of \eqref{un1un2}, we let $\bm{C}(E) = [ C_{ij} (E) ]$ denote the cofactor matrix of $\bm{M} - E \cdot \Id$ and use the fact that $G_{N_1 N_2} (E) = (-1)^{N+1} \cdot C_{N_1 N_2} (E) \cdot (\det \bm{M} - E \cdot \Id)^{-1}$. Since removing the row of index $N_1$ and column of index $N_2$ from $\bm{M}$ yields an upper triangular $(N-1) \times (N-1)$ matrix with diagonal entries $(M_{i,i+1})_{N_1 \le i < N_2}$, we deduce that $C_{N_1 N_2} (E) = \prod_{i=N_1}^{N_2 - 1} M_{i,i+1}$. Hence,
		\begin{flalign}
			\label{gn1n2e} 
			\begin{aligned} 
				G_{N_1 N_2} (E) & = (-1)^{N+1} \cdot C_{N_1 N_2} (E) \cdot (\det \bm{M} - E \cdot \Id)^{-1} \\
				& = (-1)^{N+1} \cdot \displaystyle\prod_{i=N_1}^{N_2-1} M_{i,i+1} \cdot \displaystyle\prod_{\mu' \in \eig \bm{M}} (\mu' - E)^{-1}.
			\end{aligned} 
		\end{flalign}
		
		\noindent Combining \eqref{gn1n2e} with \eqref{un1un2} yields the lemma.
	\end{proof}
	
	\begin{proof}[Proof of \Cref{smatrixu}] 
		
		Since $\bm{M} \cdot \bm{u} = \mu \cdot \bm{u}$ and $\bm{M}$ is tridiagonal, we have for any index $k \in \llbracket N_1, N_2 - 1\rrbracket$ that $M_{k,k+1} \cdot u_{k+1} = (\mu - M_{k,k}) \cdot u_k - M_{k,k-1} \cdot u_{k-1}$. Since $M_{k,k-1} = M_{k-1,k}$, this is equivalent to $\bm{S}_k (\mu) \cdot \bm{w}_k = \bm{w}_{k+1}$, so the lemma follows by induction on $j-i$.
	\end{proof} 
	
	\begin{proof}[Proof of \Cref{sij}]
		
		It quickly follows from the explicit forms \eqref{ks} of $\bm{S}_k$ and \eqref{k2s} of $\bm{S}_{\mathcal{K}}$ that there exist monic polynomials $P_{\ell-1}$, $P_{\ell}$, $Q_{\ell-2}$, and $Q_{\ell-1}$ satisfying the following two properties. First, we have 
		\begin{flalign*}
			\bm{S}_{\llbracket i, j \rrbracket} (E) = \left[ \begin{array}{cc} -Q_{\ell-2} (E)  \cdot M_{i-1,i} \cdot \displaystyle\prod_{k=i}^{j-1} M_{k,k+1}^{-1} & P_{\ell-1} (E) \cdot \displaystyle\prod_{k=i}^{j-1} M_{k,k+1}^{-1}  \\
				-Q_{\ell-1} (E) \cdot M_{i-1,i} \cdot \displaystyle\prod_{k=i}^{j} M_{k,k+1}^{-1} & P_{\ell} (E) \cdot \displaystyle\prod_{k=i}^{j} M_{k,k+1}^{-1} \end{array} \right].
		\end{flalign*}
		
		\noindent Second, we have that $\deg Q_{\ell-2} = \ell-2$; that $\deg P_{\ell - 1} = \deg Q_{\ell-1} = \ell - 1$; and that $\deg P_{\ell} = \ell$. Thus, it suffices to show that 
		\begin{flalign}
			\label{ppqq} 
			\begin{aligned}
				& P_{\ell} (E) = \displaystyle\prod_{h=1}^{\ell} \big( E - \mu_h^{[i,j]} \big); \qquad \qquad P_{\ell-1} (E) = \displaystyle\prod_{h=1}^{\ell-1} \big( E - \mu_h^{[i,j-1]} \big); \\
				& Q_{\ell-1} (E) = \displaystyle\prod_{h=1}^{\ell-1} \big( E - \mu_h^{[i+1,j]} \big); \qquad Q_{\ell-2} (E) = \displaystyle\prod_{h=1}^{\ell-2} \big( E - \mu_h^{[i+1,j-1]} \big). 
			\end{aligned} 
		\end{flalign}
		
		To that end, we locate the zeroes of $P_{\ell}$. Fix any $\mu \in \eig \bm{M}^{[i,j]}$, and let $\bm{v} = (v_i, v_{i+1}, \ldots , v_j) \in \mathbb{R}^{\ell}$ denote an eigenvector of $\bm{M}^{[i,j]}$ with eigenvalue $\mu$. Then setting $\bm{w}_k = (v_{k-1}, v_k)$ for each $k \in \llbracket i, j \rrbracket$, with $v_{i-1} = 0$, we have from \Cref{smatrixu} that $\bm{S}_{\llbracket i, j-1 \rrbracket} (\mu) \cdot \bm{w}_i = \bm{w}_j$. Furthermore, the second coordinate of $S_j (\mu) \cdot \bm{w}_j \in \mathbb{R}^2$ is equal 
		\begin{flalign*} 
			M_{j,j+1}^{-1} \cdot \big(  (\mu - M_{j,j}) v_j - M_{j-1,j} v_{j-1} \big) = M_{j,j+1}^{-1} \cdot \big( \mu v_j - M_{j,j} v_j - M_{j,j-1} v_{j-1} \big) = 0,
		\end{flalign*}  
		
		\noindent where the first statement follows from the fact that $\bm{M}$ is symmetric and the second from the fact that $\bm{v}$ is an eigenvector of $\bm{M}^{[i,j]}$ with eigenvalue $\mu$. Therefore, the second coordinate of $\bm{S}_{\llbracket i, j \rrbracket} (\mu) \cdot \bm{w}_i$ is equal to $0$. Since the first coordinate of $\bm{w}_i$ is equal to $v_{i-1} = 0$ (and its second coordinate is nonzero), this implies that the $(2, 2)$-entry of $\bm{S}_{\llbracket i, j \rrbracket} (\mu)$ is equal to $0$. Hence, $P_{\ell} (\mu) = 0$ for each $\mu \in \eig \bm{M}^{[i,j]}$. Since $\deg P_{\ell} = \ell$ and $P_{\ell}$ is monic, this confirms the first statement in \eqref{ppqq}. 
		
		We can now quickly deduce the remaining equalities in \eqref{ppqq}. Indeed, the second statement in \eqref{ppqq} follows from the first, together with the fact that the $(1,2)$-entry of $\bm{S}_{\llbracket i, j \rrbracket} (E)$ is equal to the $(2,2)$-entry of $\bm{S}_{\llbracket i,j-1 \rrbracket} (E)$ (by \eqref{ks} and \eqref{k2s}). The third statement in \eqref{ppqq} follows from the first, together with the fact that the $(2,1)$-entry of $\bm{S}_{\llbracket i, j \rrbracket} (E)$ is equal to the $(2, 2)$-entry of $\bm{S}_{\llbracket i+1, j \rrbracket} (E)$ multiplied by $-M_{i,i+1}^{-1} M_{i-1,i}$ (again by \eqref{ks} and \eqref{k2s}). The fourth statement in \eqref{ppqq} follows from the third, again together with the fact that the $(1,1)$-entry of $\bm{S}_{\llbracket i, j \rrbracket} (E)$ is equal to the $(2,1)$-entry of $\bm{S}_{\llbracket i,j-1 \rrbracket} (E)$. This establishes the lemma. 
	\end{proof}

	\subsection{Proofs of Results From \Cref{Localization}}
	
	\label{ProofL}
	
	\begin{proof}[Proof of \Cref{l0eigenvalues}]
		
		By \Cref{abltt} and \Cref{ltt}, it suffices to verify the lemma at $t=0$, that is, to show $\mathbb{P} [ \mathsf{BND}_{\bm{L}(0)} (A) ] \ge 1 - c^{-1} N e^{-cA^2}$. To that end, observe by \Cref{mubeta2} (or \Cref{mubeta}) and a union bound, there is a constant $C > 1$ such that  
		\begin{flalign*}
			\mathbb{P} \Bigg[ \displaystyle\max_{a \in \bm{a} (0)} |a| + \displaystyle\max_{b \in \bm{b} (0)} |b| \ge \displaystyle\frac{A}{4} \Bigg] \le CN A^{2\theta} e^{-\beta A^2 / 2}.
		\end{flalign*}
		
		\noindent By the deterministic bound \eqref{lambda01}, this yields $\mathbb{P} [ \max_{\lambda \in \eig \bm{L}(0)} |\lambda| \ge A ] \le CN A^{2 \theta} e^{-\beta A^2/2}$. Together, these two estimates imply $\mathbb{P} [ \mathsf{BND}_{\bm{L}(0)} (A) ] \ge 1 - 2CN A^{2\theta} e^{-\beta A^2/2}$, which (as mentioned above) implies the lemma.
	\end{proof}

	\begin{proof}[Proof of \Cref{eigenvalues0}]
		
		Denote $\eig \bm{L} = (\lambda_1, \lambda_2, \ldots , \lambda_N)$, and define the events 
		\begin{flalign*} 
			\mathsf{E} = \bigcap_{i=1}^{N-1} \{ \lambda_i \ge \lambda_{i+1} + \delta \}; \qquad \mathsf{F} = \bigg\{ \displaystyle\max_{1 \le i \le N} |\lambda_i| \le 3N \bigg\}.
		\end{flalign*} 
		
		\noindent Let $\mathcal{J} = \llbracket -4 \delta^{-1} N, 4 \delta^{-1} N \rrbracket$. For each $j \in \mathcal{I}$, set $z_j = j \delta + \delta \mathrm{i}$, and let 
		\begin{flalign*}
			S =  \displaystyle\sum_{j \in \mathcal{J}} \displaystyle\sum_{1 \le i \ne k \le N} \Imaginary (\lambda_i - z_j)^{-1} \cdot  \Imaginary (\lambda_k - z_j)^{-1}. 
		\end{flalign*}
		
		\noindent Observe on $\mathsf{E}^{\complement} \cap \mathsf{F}$ that there exists some $j \in \mathcal{J}$ and $i \in \llbracket 1, N-1 \rrbracket$ such that $\Imaginary (\lambda_i - z_j)^{-1} \ge (2\delta)^{-1}$ and $\Imaginary (\lambda_{i+1} - z_j)^{-1} \ge (2 \delta)^{-1}$. Hence, on $\mathsf{E}^{\complement} \cap \mathsf{F}$ we have $S \ge (2\delta)^{-2}$, so taking expectations yields 
		\begin{flalign}
			\label{efs} 
			\mathbb{P} \big[ \mathsf{E}^{\complement} \cap \mathsf{F} \big] \le (2\delta)^2 \cdot \mathbb{E}[S].
		\end{flalign}
		
		\noindent Now, set $\bm{G} (z_j) = [ G_{ik} (z_j) ] = (\bm{L} - z_j)^{-1}$ for each $j \in \mathcal{J}$.  By \cite[Equations (2.64) and (2.65)]{LFSM},\footnote{The results there are stated for a different random matrix, but it is quickly verified that its proofs apply for the one $\bm{L}$ we study.} there exists a constant $C_0 > 1$ such that 
		\begin{flalign*}
			\mathbb{E}[S] = \displaystyle\sum_{j \in \mathcal{J}} \displaystyle\sum_{1 \le i, k \le N} \mathbb{E} \Bigg[ \det \bigg[ \begin{array}{cc} \Imaginary G_{ii} (z_j) & \Imaginary G_{ik} (z_j) \\ \Imaginary G_{ik} (z_j) & \Imaginary G_{kk} (z_j) \end{array} \bigg] \Bigg] \le C_0 N^2 |\mathcal{J}|.
		\end{flalign*}
		
		\noindent Together with \eqref{efs} and the fact that $|\mathcal{J}| \le 9 \delta^{-1} N$, this yields 
		\begin{flalign*}
			\mathbb{P} [ \mathsf{E}^{\complement} \cap \mathsf{F} ] \le 36 C_0 \delta N^3.
		\end{flalign*}
		
		\noindent Since \Cref{l0eigenvalues} yields constants $C_1 > 1 > c_1 > 0$ such that $\mathbb{P} [\mathsf{F}^{\complement} ] \le C_1 e^{-c_1 N^2}$, this with a union bound yields the lemma. 
	\end{proof}

	\section{Heuristics for Eigenvalue Velocities} 
	
	\label{0Speed}

	Throughout this section, we adopt \Cref{lbetaeta}. For each $k \in \llbracket 1, N \rrbracket$ and $t \ge 0$, define $v_k (t)$ of $Q_k$ by setting $Q_k (t) = Q_k (0) + t \cdot v_k (t)$. This quantity $v_k (t)$ can be thought of as the ``velocity'' of the eigenvalue $\lambda_k$ under the Toda lattice. In this section, following the physics literature (for example, \cite{GHCS}), we explain how \Cref{ztlambda} can be used to heuristically derive the large $T$ limit for $v_k (T)$, which will coincide with the predictions of \cite[Equation (90)]{GHCS} and \cite[Equation (6.19)]{HSIMS}. The discussion in this section is not entirely rigorous, and we do not know of a direct way to mathematically justify it completely; the sequel work \cite{P} will be devoted to establishing its output (when $\theta$ is sufficiently small) through another route. In what follows, we let $T \gg 1$ be large; abbreviate $v_k (T) = v_k$; and suppose for notational simplicity that $\alpha > 0$ (the case when $\alpha<0$ is entirely analogous). 
	
	Applying the asymptotic scattering relation \eqref{lambdak22}, we obtain with high probability that 
	\begin{flalign}
		\label{lambdav}
		\lambda_k  \approx v_k  + 2T^{-1} \displaystyle\sum_{i = 1}^{N} (\mathbbm{1}_{Q_i (0) < Q_k (0) + T(v_k - v_i)} - \mathbbm{1}_{Q_i (0) < Q_k (0)}) \cdot \log |\lambda_k - \lambda_i|.
	\end{flalign}
	
	\noindent Any eigenvalue $\lambda_i$ should (by \Cref{lleigenvalues}, for example) approximately only depend on the entries of the Lax matrix $\bm{L}$ with indices close to $\varphi_0 (i)$. Since the entries of $\bm{L}$ are independent under thermal equilibrium, this indicates that $\lambda_i$ should be approximately independent from most of the other $\lambda_j$, and from $Q_i(0)$ (where the latter holds since the $a$-entries in $\bm{L}$ depend on only the differences $q_{i+1} - q_i$). Now, let us assume that $v_k \approx v(\lambda_k)$ asymptotically only depends on $\lambda_k$ (that it, it is approximately independent from the other $(\lambda_j)_{j \ne k}$).\footnote{We are unaware of an (even heuristic) explanation for this, and so the proofs in \cite{P} will proceed differently (based on regularizing the indicator functions), instead.}  By the above approximate independence, this suggests that $\mathbbm{1}_{Q_i (0) < Q_k (0) + T (v_k - v_i)} - \mathbbm{1}_{Q_i (0) < Q_k (0)}$ should behave as $\mathbbm{1}_{Q_j (0) < Q_k (0) + T(v_k - v_i)} - \mathbbm{1}_{Q_j (0) < Q_k (0)}$, for a uniformly random index $j \in \llbracket 1, N \rrbracket$. Therefore, the sum on the right side of \eqref{lambdav} should approximate
	\begin{flalign}
		\label{sumq} 
		\displaystyle\sum_{i = 1}^{N} (\mathbbm{1}_{Q_i (0) < Q_k (0) + T(v_k - v_i)} - \mathbbm{1}_{Q_i (0) < Q_k (0)}) \cdot \log |\lambda_k - \lambda_i| \approx \displaystyle\sum_{i \ne k} \mathfrak{N}_i \cdot \log |\lambda_k - \lambda_i|,
	\end{flalign} 
	
	\noindent where 
	\begin{flalign*}
		\mathfrak{N}_i = N^{-1} \cdot \big( \# \{ j : Q_j (0) < Q_k (0) + T(v_k - v_i) \} - \# \{ j : Q_j (0) < Q_k (0)  \} \big). 
	\end{flalign*} 
	
	To approximate $\mathfrak{N}$, recall under thermal equilibrium that the $q_{i+1} - q_i$ are identically distributed under thermal equilibrium (due to \eqref{abr}, since the $a_i$ are, by \Cref{ajbjequation}) and that $\mathbb{E}[q_{i+1}-  q_i] = \alpha$ (by \Cref{aintegral}). Therefore, $\mathfrak{N}_i \approx N^{-1} \cdot \alpha^{-1} \cdot T(v_k  - v_i )$, which upon insertion into \eqref{sumq} yields 
	\begin{flalign}
		\label{sumq2} 
		\begin{aligned}
			\displaystyle\sum_{i = 1}^N (& \mathbbm{1}_{Q_i (0) < Q_k (0)} - \mathbbm{1}_{Q_i (0) < Q_k (0) + T (v_k  - v_i )}) \cdot \log |\lambda_k - \lambda_i| \\
			& \qquad \qquad \qquad \qquad  \approx T (\alpha N)^{-1} \displaystyle\sum_{i \ne k} (v_k - v_i) \cdot \log |\lambda_k - \lambda_i|.
		\end{aligned}
	\end{flalign}
	
	\noindent Now, the $(\lambda_j)$ are eigenvalues of the random Lax matrix $\bm{L}$. The empirical spectral distribution of the latter is known (from \cite[Lemma 4.3]{DSME}) to converge to an explicit probability density $\varrho(x) dx$. Since $v_i = v(\lambda_i)$ is a function of $\lambda_i$, we expect the above sum to likely concentrate, namely,
	\begin{flalign}
		\label{sumlambdav} 
		(\alpha N)^{-1} \displaystyle\sum_{i \ne k} (v_k - v_i) \cdot \log |\lambda_k - \lambda_i| \approx \alpha^{-1} \displaystyle\int_{-\infty}^{\infty} \big( v(\lambda_k) - v (\lambda) \big) \cdot \log |\lambda_k - \lambda| \varrho (\lambda) d \lambda.
	\end{flalign}
	
	Together, the approximations \eqref{lambdav}, \eqref{sumq2}, and \eqref{sumlambdav} yield
	\begin{flalign}
		\label{lambda2} 
		\lambda_k \approx v(\lambda_k)  + 2 \alpha^{-1} \displaystyle\int_{-\infty}^{\infty} \big( v(\lambda_k) - v(\lambda) \big) \cdot \log |\lambda_k - \lambda| \varrho (\lambda) d \lambda.
	\end{flalign}
	
	\noindent Replacing the approximation in \eqref{lambda2} with an equality yields a linear system of equations for $v$ that coincides with \cite[Equation (90)]{GHCS} and \cite[Equation (6.19)]{HSIMS}. It was explained in \cite[Equation (6.21)]{HSIMS} that this system has a specific solution $v(\lambda) = v_{\eff}(\lambda)$, given by \cite[Equation (6.20)]{HSIMS} (and \cite[Equation (93)]{GHCS}); presumably, this solution $v_{\eff}$ is unique. This would then indicate $v_k \approx v_{\eff} (\lambda_k)$.

\end{document}